\newcommand{\Astruct}{\mathbb{A}}
\newcommand{\Bstruct}{\mathbb{B}}
\newcommand{\Sstruct}{\mathbb{S}}
\newcommand{\Qbb}{\mathbb{Q}}
\newcommand{\emptyformula}{0}
\newcommand{\fullformula}{1}
\newcommand{\CSP}{\mathrm{CSP}}
\newcommand{\OPHP}{\mathrm{OPHP}}
\newcommand{\CNF}{\mathrm{CNF}}
\newcommand{\DNF}{\mathrm{DNF}}
\newcommand{\Res}{\mathrm{R}}
\newcommand{\INEQ}{\mathrm{INEQ}}
\newcommand{\EQ}{\mathrm{EQ}}
\newcommand{\E}{E}
\newcommand{\F}{F}
\newcommand{\Gg}{G}
\newtheorem{fact}{Fact}
\newtheorem{lemma}{Lemma}
\newtheorem{theorem}{Theorem}
\newtheorem{corollary}{Corollary}
\newtheorem{claim}{Claim}
\begin{document}

\title{{\bf Proof Complexity Meets Algebra}}

\author{Albert Atserias \\
Universitat Polit\`ecnica de Catalunya \\
Barcelona, Catalonia
\and
Joanna Ochremiak \\
Universit\'e Paris Diderot \\
Paris, France
}

\maketitle

\begin{abstract}
We analyse how the standard reductions between constraint satisfaction
problems affect their proof complexity. We show that, for the most
studied propositional, algebraic, and semi-algebraic proof systems,
the classical constructions of pp-interpretability, homomorphic
equivalence and addition of constants to a core preserve the proof
complexity of the CSP. As a result, for those proof systems, the
classes of constraint languages for which small unsatisfiability
certificates exist can be characterised algebraically.  We illustrate
our results by a gap theorem saying that a constraint language either
has resolution refutations of constant width, or does not have
bounded-depth Frege refutations of subexponential size. The former
holds exactly for the widely studied class of constraint languages of
bounded width. This class is also known to coincide with the class of
languages with refutations of sublinear degree in Sums-of-Squares and
Polynomial Calculus over the real-field, for which we provide
alternative proofs. We then ask for the existence of a natural proof
system with good behaviour with respect to reductions and
simultaneously small size refutations beyond bounded width. We give an
example of such a proof system by showing that bounded-degree
Lov\'asz-Schrijver satisfies both requirements. Finally, building on
the known lower bounds, we demonstrate the applicability of the method
of reducibilities and construct new explicit hard instances of the
graph 3-coloring problem for all studied proof systems.


\end{abstract}

\section{Introduction} \label{sec:introduction}

The notion of an efficient reduction lies at the heart of
computational complexity. However, in some of its subareas such as proof
complexity, even though the concept exists, it is much less
developed.
The study of the lengths of proofs has developed mostly by
studying combinatorial statements, each somewhat in isolation. There
is little theory, for instance, explaining why the best studied
families of propositional tautologies are encodings of the pigeonhole
principle or those derived from systems of linear equations over the
2-element field. Whether there is any connection between the two is an
even less explored mystery.

Luckily this fact is subject to revision, especially if proof
complexity exports its methods to the study of problems beyond
universal combinatorial statements. Consider the NP-hard optimization
problem called MAX-CUT. The objective is to find a partition of the
vertices of a given graph which maximizes the number of edges that
cross the partition. The best efficient approximation algorithm known
for this problem relies on certifying a bound on the optimum of its
semidefinite programming relaxation. Once the certificate for the
relaxation is in place, a rounding procedure gives an approximate
integral solution: at worst 87\% of the optimum in this case
\cite{GoemansWilliamson1995}.

In the example of the previous paragraph, the problem that is subject
to proof complexity analysis is that of certifying a bound on the
optimum of an arbitrary MAX-CUT instance. The celebrated Unique Games
Conjecture (UGC) can be understood as a successful approach to
explaining why current algorithms and proof complexity analyses stop
being successful where they do, and reductions play an important role
there \cite{Tulsiani2009}. One of the interesting open problems in
this area is whether the analysis of the Sums-of-Squares semidefinite
programming hierarchy of proof systems (SOS) could be used to improve
over the 87\% approximation ratio for MAX-CUT. Any improvement on this
would improve the approximation status of all problems that reduce to
it, and refute the UGC \cite{KhotKindlerMosselODonnell2007}. For the
constraint satisfaction problem, in which all constraints must be
satisfied, as well as for its optimisation version, the analogue question was resolved recently also by
exploiting the theory of reducibility: in that arena, low-degree SOS
unsatisfiability proofs exist only for problems of \emph{bounded
  width} \cite{DBLP:conf/lics/ThapperZ17,DBLP:conf/lics/DawarW17}.

The goal of this paper is to develop the standard theory of reductions
between constraint satisfaction problems in a way that it applies to
many of the proof systems from the literature, including but not
limited to Sums-of-Squares. Doing this requires a good amount of tedious work, but
at the same time has some surprises to offer that we discuss next.

Consider a constraint language $B$ given by a finite domain of values,
and relations over that domain. The instances of the constraint
satisfaction problem (CSP) over $B$ are given by a set of variables
and a set of constraints, each of which binds some tuple of the
variables to take values in one of the relations of $B$. The
literature on CSPs has focussed on three different types of conditions
that, if met by two constraint languages, give a reduction from the
CSP of one language to the CSP of the other. These conditions are a)
\emph{pp-interpretability}, b) \emph{homomorphic equivalence}, and c)
\emph{addition of constants to the core} (see
\cite{BulatovJeavonsKrokhin2005,DBLP:journals/corr/BartoOP15}).
What makes these three types of reductions important is that they
correspond to classical algebraic constructions at the level of the
\emph{algebras of polymorphisms} of the constraint languages. Indeed,
pp-interpretations correspond to taking homomorphic images,
subalgebras and powers. The other two types of reductions put together
ensure that the algebra of the constraint language is
idempotent. Thus, for any fixed algorithm, heuristic, or method
$\mathcal{M}$ for deciding the satisfiability of CSPs, if the class of
constraint languages that are solvable by $\mathcal{M}$ is closed
under these notions of reducibility, then this class admits a purely
algebraic characterization in terms of identities.

Our first result is that, for most proof systems $P$ in the
literature, each of these methods of reduction preserves the proof
complexity of the problem with respect to proofs in~$P$. Technically,
what this means is that if $B'$ is obtained from $B$ by a finite
number of constructions a), b) and c), then, for any appropriate
encoding scheme of the statement that an instance is unsatisfiable,
efficient proofs of unsatisfiability in $P$ for instances of $B$
translate into efficient proofs of unsatisfiability in $P$ for
instances of $B'$. Our results hold for a very general definition of an
appropriate encoding scheme that we call \emph{local}. The
propositional proof systems for which we prove these results include
DNF-resolution with terms of bounded size, Bounded-Depth Frege, and
(unrestricted) Frege. The algebraic and semi-algebraic proof systems
for which we prove it include Polynomial Calculus (PC) over any field,
Sherali-Adams~(SA), Lasserre/SOS, and Lov\'asz-Schrijver (LS) of bounded and
unbounded degree.
This is the object of Section~\ref{sec:ppdefinitions}.

Our second main result is an application: we obtain unconditional gap
theorems for the proof complexity of CSPs.
Building on the bounded-width theorem for CSPs \cite{BartoKozik2014,Bulatov09}, the
known correspondence between local consistency algorithms, existential
pebble games and bounded width resolution
\cite{KolaitisVardi2000,AtseriasDalmau2008}, the lower bounds for
propositional, algebraic and semi-algebraic proof systems
\cite{A88,KPW95,BIKPPW92,Ben02,DBLP:journals/tcs/Grigoriev01,BussGIP01,Chan2016},
and a modest amount of additional work to fill in the gaps, we prove
the following strong gap theorem:

\begin{theorem} \label{thm:stronggap} Let $B$ be a finite constraint
  language. Then, exactly one of the
  following holds:
\begin{enumerate} \itemsep=0pt
\item $B$ has resolution refutations of constant width,
\item $B$ has neither bounded-depth Frege refutations of
  subexponential size, nor PC over the reals,
  nor SOS refutations of sublinear degree.
\end{enumerate}
\end{theorem}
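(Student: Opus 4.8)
The plan is to reduce this dichotomy to the algebraic dichotomy between constraint languages of \emph{bounded width} and the rest, using the transfer results of Section~\ref{sec:ppdefinitions} as a black box. The first step is to identify alternative~(1) algebraically: by the classical correspondence between the $k$-consistency algorithm, the existential $k$-pebble game, and resolution width \cite{KolaitisVardi2000,AtseriasDalmau2008}, every unsatisfiable instance of $\CSP(B)$ has a resolution refutation of width bounded by a constant depending only on $B$ if and only if $\CSP(B)$ is decided by the $k$-consistency algorithm for some fixed $k$, that is, if and only if $B$ has bounded width. So alternative~(1) holds precisely when $B$ has bounded width. Next, one half of the ``exactly one'' is immediate: a width-$w$ resolution refutation has only polynomially many lines and is simulated line by line by bounded-depth Frege, so alternative~(1) already yields bounded-depth Frege refutations of polynomial, hence subexponential, size, contradicting the first clause of alternative~(2). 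It therefore remains to show that at least one alternative always holds, which we do by a case split on bounded width.

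Suppose first that $B$ has bounded width. Then alternative~(1) holds by the first step, and in fact alternative~(2) fails entirely: besides the polynomial-size bounded-depth Frege refutations just noted, the constant-width resolution refutations translate, with only a constant blow-up in degree, into refutations in PC over any field and in SOS, so all three systems appearing in alternative~(2) admit refutations of constant degree. For SOS and for PC over the reals this is the easy direction of the known coincidence of bounded width with sublinear-degree refutability \cite{DBLP:conf/lics/ThapperZ17,DBLP:conf/lics/DawarW17}, for which this paper supplies self-contained proofs.

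Suppose now that $B$ does not have bounded width. By the algebraic form of the bounded-width theorem \cite{BartoKozik2014,Bulatov09,DBLP:journals/corr/BartoOP15}, there is a prime $p$ such that the constraint language $\EthreeLIN(\Zbb_p)$ of three-variable linear equations over $\Zbb_p$ is obtained from $B$ by a finite sequence of pp-interpretations, homomorphic equivalences, and additions of constants to a core. By the first main result of the paper (Section~\ref{sec:ppdefinitions}), each of these constructions preserves proof complexity, under any \emph{local} encoding scheme, in bounded-depth Frege, in PC over the reals, and in SOS. On the other hand, $\EthreeLIN(\Zbb_p)$ has unsatisfiable instances --- random linear systems, or Tseitin-type encodings of the counting-mod-$p$ principle --- with no bounded-depth Frege refutations of subexponential size \cite{A88,KPW95,BIKPPW92,Ben02}, no PC-over-the-reals refutations of sublinear degree \cite{BussGIP01,DBLP:journals/tcs/Grigoriev01}, and no SOS refutations of sublinear degree \cite{DBLP:journals/tcs/Grigoriev01,Chan2016}. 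Transferring these lower bounds along the reductions gives alternative~(2) for $B$; and alternative~(1) fails for $B$, since by the first step it would force $B$ to have bounded width. Combining the two cases, exactly one of the two alternatives holds.

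The deep ingredients above --- the transfer theorems of Section~\ref{sec:ppdefinitions}, the bounded-width theorem, and the lower bounds for linear equations --- are quoted off the shelf, so the main obstacle is the bookkeeping at the two interfaces. On one side one must check that the concrete hard instance families witnessing the lower bounds for $\EthreeLIN(\Zbb_p)$ are genuinely instances of $\CSP(\EthreeLIN(\Zbb_p))$ presented through a local encoding, so that the transfer theorems apply to them verbatim, and that the affine template delivered by the algebra is pp-equivalent to the standard $\EthreeLIN(\Zbb_p)$, rather than merely some pp-construction of a module that still has to be massaged into it. On the other side one must verify that constant-width resolution refutations yield constant-degree refutations simultaneously in PC over the reals, in SOS, and in bounded-depth Frege, uniformly over all bounded-width $B$. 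Once these routine but somewhat laborious checks are in place, the dichotomy follows from the simple case split above.
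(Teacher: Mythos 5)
Your overall architecture coincides with the paper's: equate alternative~(1) with bounded width via the pebble-game/width correspondence (Theorem~\ref{thm:boundedwidth}), obtain the upper bounds by simulating constant-width resolution inside PC and SA/SOS (Lemma~\ref{lem:simulation}), and in the unbounded-width case pull lower bounds for linear equations back to $B$ through the closure theorems of Section~\ref{sec:ppdefinitions}. The genuine gap is where you declare the lower bounds for $3\mathrm{LIN}(\Zbb_p)$ to be ``off the shelf.'' For SOS this is essentially true via \cite{Chan2016}, and for PC over the reals via \cite{BussGIP01}, modulo encoding translations of the kind the paper carries out; but for bounded-depth Frege the sources you cite (\cite{A88,KPW95,BIKPPW92,Ben02}) cover only the pigeonhole principle and Tseitin contradictions over $\Zbb_2$. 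The algebra may hand you an odd prime $p$ (indeed, in the paper's formulation, an arbitrary non-trivial finite Abelian group), and no subexponential-size bounded-depth Frege lower bound for unsatisfiable $3\mathrm{LIN}(\Zbb_p)$ instances under a local encoding appears in those references. Supplying it --- Tseitin-style systems over $\Zbb_q$ on expanders, partitioned into ``bubbles'' so that Ben-Sasson's implicit-encoding reduction from the onto-pigeonhole principle still applies --- is exactly Theorem~\ref{thm:lowerbound} and takes up most of Section~\ref{sec:lowerbound}; it is the one place where the dichotomy needs new proof-complexity work rather than citations, and your ``routine bookkeeping at the interfaces'' hides it.

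A smaller caveat concerns the algebraic input. The results of \cite{BartoKozik2014,Bulatov09}, as used in Theorem~\ref{thm:bartokozik}, give a non-trivial finite Abelian group $G$ with $\Bstruct(G,3)$ pp-interpretable in the core with constants, not a prime $p$. Passing to $\Zbb_p$ genuinely requires the homomorphic-equivalence step of the pp-construction framework (for example, $3\mathrm{LIN}(\Zbb_p)$ is \emph{not} pp-interpretable in $3\mathrm{LIN}(\Zbb_p\times\Zbb_p)$: the polymorphism clone of the latter contains all matrix-affine operations, so every algebra reachable by H, S and finite powers has size a power of $p^2$), so you should cite it from the pp-constructibility literature \cite{DBLP:journals/corr/BartoOP15} rather than from the bounded-width theorem itself. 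The paper avoids this by keeping the general group $G$ and transferring the cyclic-group lower bounds to $G$ through a subgroup substitution (Lemma~\ref{lem:abeliangroups}). Either route is fine once the missing bounded-depth Frege lower bound for odd moduli is actually proved.
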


\noindent In Theorem~\ref{thm:stronggap} and below, the statement that
the constraint language $B$ has efficient proofs in proof system $P$
means that, for some and hence every local encoding scheme, all
unsatisfiable instances of $B$ have efficient refutations in $P$. 
Also, here and below, sublinear means $o(n)$, 
sublinear-exponential means $2^{o(n)}$, and 
subexponential means $2^{n^{o(1)}}$, where $n$ is the number of variables of the instance.

The proof of Theorem~\ref{thm:stronggap} actually shows that
case~\emph{1}\ happens precisely if $B$ has bounded width. As noted
earlier, the collapse of Lasserre/SOS to bounded width was already
known; here we give a different proof.  By a very recent result on the
simulation of Polynomial Calculus over the real-field by
Lasserre/SOS~\cite{DBLP:journals/eccc/Berkholz17}, the collapse of
Lasserre/SOS implies the collapse of Polynomial Calculus. The proof we
present does not depend on that. Instead we exploit directly the
theory of reducibility.

As an immediate corollary we get that
resolution is also captured by algebra, despite the fact that our
methods fall short to prove that it is closed under reductions. 

\begin{corollary} \label{cor:corollaryintro}
Let $B$ be a finite constraint language. The following are
equivalent:
\begin{enumerate} \itemsep=0pt
\item $B$ has bounded width,
\item $B$ has resolution refutations of constant width,
\item $B$ has resolution refutation of sublinear width,
\item $B$ has resolution refutations of polynomial size,
\item $B$ has resolution refutations of sublinear-exponential size,
\item $B$ has Frege refutations of bounded depth and polynomial size,
\item $B$ has Frege refutations of bounded depth and subexponential size,
\item $B$ has SA, SOS, and PC refutations over the reals of constant degree,
\item $B$ has SA, SOS, and PC refutations over the reals of sublinear degree.
\end{enumerate}
\end{corollary}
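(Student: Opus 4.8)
The plan is to derive the corollary from Theorem~\ref{thm:stronggap} together with the standard simulations among the proof systems involved, using the fact that bounded width is exactly the dividing line identified in the theorem's proof (as noted in the remark following the theorem). I would organise the nine conditions into a cycle of implications, passing through the ``strong'' endpoint (item~2: constant-width resolution) and the ``weak'' endpoint (the weakest upper bound among items 3--9), so that Theorem~\ref{thm:stronggap} closes the loop.

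First I would record the easy monotonicity implications that hold for every constraint language, independent of the dichotomy: constant width implies sublinear width (2~$\Rightarrow$~3); constant width implies polynomial size, since a width-$w$ resolution refutation on $n$ variables has at most $n^{O(w)}$ clauses (2~$\Rightarrow$~4); polynomial size implies sublinear-exponential size (4~$\Rightarrow$~5); and similarly 6~$\Rightarrow$~7 and 8~$\Rightarrow$~9 for bounded-depth Frege and for the degree-based semialgebraic systems. I would also use the textbook simulations: a constant-width resolution refutation is in particular a bounded-depth Frege refutation of polynomial size (2~$\Rightarrow$~6), and a constant-width resolution refutation translates into constant-degree SA/SOS/PC refutations over the reals (2~$\Rightarrow$~8), e.g.\ via the standard arithmetisation of clauses. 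This gives that item~2 implies all of items 3--9.

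Next comes the substance, which is closing the cycle back to item~2. Here I would invoke Theorem~\ref{thm:stronggap}: if $B$ does \emph{not} have constant-width resolution refutations, then case~\emph{1} fails, so case~\emph{2} holds, meaning $B$ has no bounded-depth Frege refutations of subexponential size, no PC-over-the-reals refutations of sublinear degree, and no SOS refutations of sublinear degree. Contrapositively, each of the ``weak'' hypotheses 5, 7, and 9 --- as well as the intermediate ones 3, 4, 6, 8, which are stronger --- forces case~\emph{1}, i.e.\ item~2. For the SA column of items 8 and 9 I would note that SA is simulated by SOS (a Sherali--Adams refutation of degree $d$ is an SOS refutation of degree $d$), so a sublinear-degree SA refutation yields a sublinear-degree SOS refutation, again triggering case~\emph{1}; and for sublinear-degree PC over the reals the hypothesis is literally one of the disjuncts ruled out in case~\emph{2}. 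Finally, the equivalence of item~1 with item~2 is exactly the strengthening of Theorem~\ref{thm:stronggap} recorded in the paragraph after it (``the proof actually shows that case~\emph{1} happens precisely if $B$ has bounded width''), so $1 \Leftrightarrow 2$ and the chain $2 \Rightarrow \{3,\dots,9\} \Rightarrow 2$ closes.

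The main obstacle is bookkeeping rather than mathematics: one must make sure every one of items 3--9 both follows from item~2 (via an explicit simulation with the claimed parameters) and implies item~2 (via Theorem~\ref{thm:stronggap}, after a simulation that downgrades it to one of the three ruled-out regimes without crossing the relevant threshold --- subexponential for Frege, sublinear degree for PC/SOS). The only slightly delicate point is checking that the simulations are parameter-faithful in the right direction: e.g.\ that the SOS-simulates-SA and SOS-simulates-PC-over-$\Qbb$ translations do not blow up the degree from sublinear to linear, and that the constant-width-to-constant-degree translation in the forward direction genuinely produces \emph{constant} (not merely sublinear) degree. These are all standard, so I would cite the relevant simulations from the literature \cite{Ben02,DBLP:journals/eccc/Berkholz17} and from Section~\ref{sec:ppdefinitions} and keep the argument short.
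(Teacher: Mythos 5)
Your cycle breaks at item~\emph{5}. You place ``resolution refutations of sublinear-exponential size'' among the hypotheses that can be fed directly into Theorem~\ref{thm:stronggap}, but case~\emph{2} of that theorem only excludes bounded-depth Frege refutations of \emph{subexponential} size, i.e.\ $2^{n^{o(1)}}$, and by the paper's conventions sublinear-exponential means $2^{o(n)}$, which is a strictly weaker size bound. A resolution refutation of size, say, $2^{n/\log n}$ is a depth-one Frege refutation of sublinear-exponential but not subexponential size, so the gap theorem does not rule it out and your contrapositive does not go through for item~\emph{5}. Nor does item~\emph{5} reduce to item~\emph{9} by any simulation you mention: turning a small resolution refutation into a \emph{low-degree} PC/SOS refutation requires a size--degree (or size--width) tradeoff, not a mere arithmetisation. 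The missing ingredient is exactly what the paper points to after Corollary~\ref{cor:coloring}: the resolution size claim ``does not follow directly from our main dichotomy theorem'' but from the degree lower bound for PC over the reals (Theorem~\ref{thm:lbpc} via Theorems~\ref{thm:bartokozik} and~\ref{thm:closurealgebraic}) combined with the size--degree tradeoff of \cite{ImpagliazzoPudlakSgall1999} and the fact that PC polynomially simulates resolution (Lemma~\ref{lem:simulation}); this yields a $2^{\Omega(n)}$ resolution size lower bound when $B$ has unbounded width, which is what refutes item~\emph{5}. (Equivalently one could invoke the resolution size--width tradeoff, but some such tradeoff must be cited.)

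A secondary, easily repaired vagueness: item~\emph{3} (sublinear width) is not ``stronger'' than item~\emph{5} by monotonicity --- a width-$o(n)$ refutation only gives size $n^{O(o(n))}$, which need not be $2^{o(n)}$ --- and it is not covered by case~\emph{2} of Theorem~\ref{thm:stronggap} either, since that case says nothing about resolution width. The correct route, consistent with the paper, is $3 \Rightarrow 9$ via Lemma~\ref{lem:simulation} (resolution width $k$ becomes PC/SA degree $O(k)$), and then $9 \Rightarrow 1$ by Theorem~\ref{thm:boundeddepth}. Your forward direction ($2$ implies everything else, and $1 \Leftrightarrow 2$ from Theorem~\ref{thm:boundedwidth}) and your handling of items~\emph{6}--\emph{9} are in line with the paper's intended argument.
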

\noindent The proof of this is the object of Sections~\ref{sec:upperbound} and~\ref{sec:lowerbound}.

Section~\ref{sec:upperboundinLS} is about proof systems that operate
with polynomial inequalities and that are stronger than
Lasserre/SOS. Theorem~\ref{thm:stronggap} raises the question of
identifying a proof system that is closed under reducibilities and that can
surpass bounded width. In other words: is there a natural proof system
for which the class of languages that have efficient unsatisfiability
proofs is closed under the standard reducibility methods for CSPs, and
that at the same time has efficient unsatisfiability proofs beyond
bounded width? By the bounded-width theorem for CSPs, one way, and
indeed the only way, of surpassing bounded width is by having
efficient proofs of unsatisfiability for systems of linear equations
over some finite Abelian group.  A straightforward answer to our
question is thus the following:
Polynomial Calculus over a field of non-zero characteristic $p$ has
efficient unsatisfiability proofs for systems of linear equations over
$\mathbb{Z}_p$. On the other hand, in view of the limitations of
Polynomial Calculus over the real-field, and of certain semi-algebraic
proof systems that are imposed by Theorem~\ref{thm:stronggap}, it is
perhaps a surprise that, as we show, bounded degree Lov\'asz-Schrijver
also satisfies both requirements.

\begin{theorem}
  Unsatisfiable systems of linear equations over the 2-element group
  have LS refutations of bounded degree and polynomial size.
\end{theorem}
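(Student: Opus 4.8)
The plan is to construct the refutations explicitly. First, using the closure of $\LS$ under pp-interpretability, homomorphic equivalence and addition of constants from Section~\ref{sec:ppdefinitions}, together with the fact that every constraint language of linear equations over $\Zbb_2$ reduces via these constructions to the one whose only non-constant relation is $\{(a,b,c):a\oplus b\oplus c=1\}$ and which has the two constants $0$ and $1$ added, it is enough to refute unsatisfiable instances of that fixed language --- or, working directly, unsatisfiable systems of parity constraints of some fixed bounded arity. Moreover, since the existence of small refutations is independent of the local encoding, we are free to use the one that is most convenient: say the encoding providing for each constraint $\bigoplus_{i\in S_j}x_i=b_j$ an explicit Boolean ``carry'' variable $y_j$ and the single linear equation $\sum_{i\in S_j}x_i-2y_j=b_j$. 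From any local encoding, $\LS$ derives this linear form of each constraint in constant size and bounded degree, using $x_i^2=x_i$.

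Second, I would internalise the usual counting proof of unsatisfiability. Because the system has no $\Zbb_2$-solution, there is a set $T$ of its constraints in which every variable occurs an even number of times and an odd number of which have right-hand side~$1$. Taking the corresponding \emph{rational} linear combination of the derived linear equations $\sum_{i\in S_j}x_i-2y_j=b_j$ over $j\in T$ is a legal $\LS$ step --- it only adds scaled linear inequalities, so the degree does not increase --- and it produces, after dividing by~$2$, a single equation $z=B/2$, where $B$ is odd and $z$ is an integer linear combination of the Boolean variables $x_i,y_j$ whose coefficients are bounded polynomially in the number of constraints. This is exactly the $\LS$ rendering of ``a sum of even numbers is never odd'', and the whole task is thereby reduced to deriving a contradiction from the single linear equation $z=B/2$ using only $x_i^2=x_i$.

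The remaining step is to refute $z=B/2$ within a constant degree, and this is the step I expect to be the main obstacle. One must resist simulating Gaussian elimination over $\Zbb_2$: the running sums of parity equations have unbounded support, and over the reals a parity constraint on many variables is not a bounded-degree polynomial condition, so any such simulation drives the degree up linearly --- this is precisely the obstruction responsible for the failure of the semi-algebraic systems of Theorem~\ref{thm:stronggap} on these very instances. Instead, the point is to perform \emph{no} elimination at all: all of the arithmetic has already been pushed into the single degree-preserving rational combination above, and only the one ``parity obstruction'' $z=B/2$ remains. Discharging it in constant degree relies on the fact that an $\LS$ refutation is a polynomially long \emph{sequence} of bounded-degree inequalities rather than a single static certificate --- one processes the variables of $z$ in a fixed order and carries forward, from line to line, just enough bounded-degree information about the attained values of the running partial sum to reach a numerical absurdity at the end, the derivation staying polynomial precisely because the coefficients produced by the linear combination are polynomially bounded.

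Carefully verifying this last point is the technical heart of the proof: one has to check both that the passage from an arbitrary local encoding to the linear ``carry'' form, and the bookkeeping used to discharge $z=B/2$, can be organised so that every line has degree bounded by a constant depending only on the arity of the language, and that the whole derivation has polynomially many lines. Once this is done, chaining the three parts together yields an $\LS$ refutation of bounded degree and polynomial size, as required.
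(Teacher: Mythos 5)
Your high-level counting idea (sum the equations of a $\mathbb{Z}_2$-dependency, divide by two, face a half-integer) is the right one, but the two load-bearing steps are not justified as written. First, the passage to the ``carry'' form $\sum_{i\in S_j}x_i-2y_j=b_j$ is illegitimate in LS: the system has no rule for introducing fresh variables, and an encoding with per-constraint auxiliary variables $y_j$ is not a local encoding scheme in the sense of Section~\ref{sec:encodings} (local schemes use exactly the variables $V(A,B)$), so Lemma~\ref{lem:encodings} does not entitle you to ``choose'' it; the claim that LS derives this linear form from any local encoding in constant size is vacuous because the variable $y_j$ does not exist there, and indeed a ternary parity constraint is not equivalent to any single linear equation in the $X$-variables alone --- which is exactly why the paper works with the $2^{k-1}$ inequalities $\mathcal{S}(E)$. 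The step can be repaired, but only by promoting the carries to instance elements of a richer template (the graph of $x_1+x_2+x_3-2y=b$), which pp-defines the parity relation by existentially quantifying $y$, and then transferring the refutation back via the closure result for existential quantification (Lemma~\ref{lem:existLS}, Theorem~\ref{thm:closuresemialgebraic}); you invoke the reduction machinery only for an unnecessary normalization of the language, not here, where it is actually needed.

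Second, the step you yourself identify as the heart --- refuting the single equation $z=B/2$, with $z$ an integer combination of Boolean variables and $B$ odd, in constant degree --- is left as a sketch. ``Carrying forward bounded-degree information about the running partial sum'' is precisely what the Grigoriev--Hirsch--Pasechnik integrality-gap lemma (Lemma~\ref{lem:GHP}) provides, and without citing or reproving it your argument is a promissory note at its crux; given that lemma the endgame is short, since from $z-B/2=0$ one gets $(z-B/2)L=0$ for any linear form $L$ by variable multiplications and additions, hence $(z-B/2)^2=0$, contradicting $D_{(B+1)/2}(z)=(z-B/2)^2-1/4\geq 0$. Note also that your claim that any simulation of Gaussian elimination must drive the degree up linearly is refuted by the paper's own proof of Theorem~\ref{thm:maintwoelementfield}: it \emph{is} a degree-$6$, polynomial-size simulation of Gaussian elimination, adding the equations of the witnessing set one at a time and carrying, in place of explicit carries, the quadratic gap statements $D_c(L_k)\geq 0$ about the running half-sums, advanced using the constraint inequalities multiplied by bounded extended monomials. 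With the two repairs above your route would become a correct and arguably more modular alternative to that induction, but as written both key steps are missing.
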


\noindent Proving this amounts to showing that Gaussian elimination
over $\mathbb{Z}_2$ can be simulated by reasoning with low-degree
polynomial inequalities over $\mathbb{R}$. The proof of this
counter-intuitive fact relies on earlier work in proof complexity for
reasoning about gaps of the type $(-\infty,c] \cup [c+1,+\infty)$, for
$c \in \mathbb{Z}$, through quadratic polynomial inequalities
\cite{GrigorievHirschPasechnik2002}.

It should be pointed out that another proof system that can
efficiently solve CSPs of bounded width, and that at the same time
goes beyond bounded width, is the proof system that operates with
ordered binary decision diagrams from \cite{AKV04}. Although it looks
unlikely that our methods could be used for this proof system, whether
it is closed under the standard CSP reductions is something that was
not checked, neither in \cite{AKV04}, nor here.

In Section~\ref{sec:threecoloring} we demonstrate the applicability of
our results. Consider the graph 3-coloring problem seen as the CSP of
a finite constraint language on a 3-element domain in the standard
way. Since it is known that 3-coloring has unbounded width,
Corollary~\ref{cor:corollaryintro} applies to it, and we get
3-coloring instances that are hard for all indicated proof systems. We
open the box of the method, and elaborate on that, in order to get
explicit 3-coloring instances that are hard for Polynomial Calculus
over all fields simultaneously.  This gives a new proof of the main
result in \cite{DBLP:conf/coco/LauriaN17}. Indeed, the same analysis
applies to all CSPs that are NP-complete and all proof systems that
are closed under reducibilities. This way we solve Open Problem~5.3 in
\cite{DBLP:conf/coco/LauriaN17} that asks for explicit 3-coloring
instances that are hard for Lasserre/SOS.

This article is an extended version of~\cite{DBLP:conf/icalp/AtseriasO17}. Except for providing 
full proof details, we generalise the main gap theorem to cover Polynomial Calculus over the reals
and apply our results to the 3-coloring problem, as explained in the paragraph above.

\section{Preliminaries} \label{sec:preliminaries}

\subsection{Propositional logic and propositional proofs}

\paragraph{Formulas.}  Fix a set of propositional
variables taking values \emph{true} or \emph{false}. A literal is a
variable~$X$ or the negation of a variable~$\overline{X}$. We write
propositional formulas out of literals using conjunctions $\wedge$,
disjunctions $\vee$, and parentheses, with the usual conventions on
parentheses. Also we implicitly think of $\wedge$ and $\vee$ as
commutative, associative and idempotent. Thus the formula $A \wedge A$
is viewed literally the same as $A$, the formula $A \wedge B$ is
viewed literally the same as $B \wedge A$, and the formula $(A \wedge
B) \wedge C$ is viewed literally the same as $A \wedge (B \wedge
C)$. The same applies to disjunctions. Negation is allowed only at the
level of literals, so our formulas are written in negation normal
form. If $A$ is a formula, we define its complement
$\overline{A}$ inductively: if $A$ is a variable $X$, then
$\overline{A} = \overline{X}$; if $A$ is a negated variable
$\overline{X}$, then $\overline{A} = X$; if $A$ is a conjunction $C
\wedge D$, then $\overline{A} = \overline{C} \vee \overline{D}$; if
$A$ is a disjunction $C \vee D$, then $\overline{A} = \overline{C}
\wedge \overline{D}$. The empty formula is denoted $\emptyformula$ and
is always false by convention. Its complement
$\overline{\emptyformula}$ is denoted $\fullformula$, and is always
true by convention. We think of $\emptyformula$ and $\fullformula$ as
the neutral elements of $\vee$ and $\wedge$, respectively, and the
absorbing elements of $\wedge$ and $\vee$, respectively.  Thus we view
the formulas $\emptyformula \vee A$ and $\fullformula \wedge A$ as
literally the same as $A$, and $\emptyformula \wedge A$ and
$\fullformula \vee A$ as literally the same as $\emptyformula$ and
$\fullformula$, respectively.  The size $s(A)$ of a formula $A$ is
defined inductively: if $A$ is $\emptyformula$ or $\fullformula$, then
$s(A) = 0$; if $A$ is a literal, then $s(A) = 1$; if $A$ is a
conjunction $C \wedge D$ or a disjunction $C \vee D$ with
non-absorbing and non-neutral $C$ and $D$, then $s(A) = s(C) + s(D) +
1$.

\paragraph{Propositional proof systems.}  We work with a Tait-style proof
system for propositional logic that we call Frege. The system
manipulates formulas in negation normal form and has the following
four rules of inference called \emph{axiom}, \emph{cut},
\emph{introduction of conjunction}, and \emph{weakening}:
\begin{equation}
\frac{}{A \vee \overline{A}} \;\;\;\;\;\;\;\;
\frac{C \vee A \;\;\;\; D \vee \overline{A}}{C \vee D} \;\;\;\;\;\;\;\;
\frac{C \vee A \;\;\;\; D \vee B}{C \vee D \vee (A \wedge B)} \;\;\;\;\;\;\;\;
\frac{C}{C \vee D}.
\end{equation}
In these rules, $C$ and $D$ could be the empty formula $\emptyformula$
or its complement $\fullformula$. In particular~$\fullformula$ is
an instance of an axiom rule. A Frege proof is called cut-free if it
does not use the cut rule. A Frege proof \emph{from} a set of formulas
$F$ is a proof in which the formulas in $F$ are allowed as additional
axioms. In case such a proof ends with the empty formula we call it a
Frege \emph{refutation} of $F$. As a proof system, Frege is sound and
implicationally complete, which means that if $A$ is a logical
consequence of $A_1,\ldots,A_m$, then there is a Frege proof of $A$
from $A_1,\ldots,A_m$. We will give a proof of this in
Section~\ref{sec:completenessFrege} that will apply also to certain
subsystems of Frege. If $\mathcal{C}$ is a class of formulas, a
$\mathcal{C}$-Frege proof is one that has all its formulas in the
class $\mathcal{C}$.  The size of a proof is the sum of the sizes of
the formulas in it. The length of a proof is the number of formulas in
it. 

\paragraph{Resolution, $k$-DNF Frege and Bounded Depth Frege.} A term is a conjunction of literals and a clause is a disjunction of
literals. A $k$-term or a $k$-clause is one with at most $k$ literals. A $k$-DNF is a disjunction of $k$-terms and a $k$-CNF is a conjunction
of $k$-clauses.

We define
the classes of $\Sigma_{t,k}$- and $\Pi_{t,k}$-formulas
inductively. For $t = 1$, these are just the classes of $k$-DNF and
$k$-CNF formulas, respectively. For $t \geq 2$, a formula is
$\Sigma_{t,k}$ if it is a disjunction of $\Pi_{t-1,k}$-formulas, and
it is $\Pi_{t,k}$ if it is a conjunction of $\Sigma_{t-1,k}$-formulas.

In this paper, we use the expression \emph{Frege proof of depth $t$
 and bottom fan-in $k$} to mean a $\Sigma_{t,k}$-Frege
proof. \emph{Bounded-depth Frege} means $\Sigma_{t,k}$-Frege for some
fixed $t$ and $k$. This coincides with other definitions in the literature. Frege of depth $t$ and bottom fan-in $k$, as a proof system, is sound
and implicationally complete for proving $\Sigma_{t,k}$-formulas from
$\Sigma_{t,k}$-formulas. A proof of this will follow from the general completeness
theorem below.

$\Sigma_{1,1}$ is the class of clauses. It is well-known that $\Sigma_{1,1}$-Frege and resolution proofs are basically
the same thing (the difference is that in $\Sigma_{1,1}$-Frege proofs
we allow clause axioms and weakening, but these can always be removed
at no cost). A resolution proof which uses only $l$-clauses is called a proof of \emph{width} $l$.

$\Sigma_{1,k}$-Frege, for $k \geq 2$, is the system $\Res(k)$
introduced by Krajicek \cite{Kr00}, also known as $\mathrm{Res}(k)$,
$k$-DNF resolution, and $k$-DNF Frege. This family of proof systems is
important for us because, by letting $k$ range over all constants
(i.e.,\ by considering $\Res(\mathrm{const})$), it is the weakest for
which we can prove closure under reductions.

\subsection{Completeness of Frege and its subsystems} \label{sec:completenessFrege}

The proof that Frege is implicationally complete is rather
standard. We give a detailed proof nonetheless because we want to have
concrete bounds.

\begin{theorem}[Quantitative Completeness] \label{thm:completeness}
Let $\mathcal{C}$ be a class of formulas that is closed under
subformulas and complementation, and let $\mathcal{C}'$ be the closure
of $\mathcal{C}$ under disjunctions. Let $A_1,\ldots,A_m$ and $A$ be
formulas in $\mathcal{C}'$. If $A_1,\ldots,A_m$ logically imply $A$,
then there is a $\mathcal{C}'$-Frege proof of $A$ from
$A_1,\ldots,A_m$. Moreover, if the formulas $A_1,\ldots,A_m$ and $A$
have $n$ variables and size at most $s$, then the size of the proof is
at most polynomial in $n$, $s$, $m$, $2^n$ and $s^m$.
\end{theorem}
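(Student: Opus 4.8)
The plan is to give a direct truth-table construction. Let $x_1,\dots,x_n$ be the variables occurring in $A_1,\dots,A_m,A$; since $\mathcal{C}$ is closed under subformulas and complementation, every literal on these variables lies in $\mathcal{C}$. For a total assignment $\alpha$ to $x_1,\dots,x_n$ write $\ell_i^\alpha$ for the literal on $x_i$ made true by $\alpha$, and set $R_\alpha:=\overline{\ell_1^\alpha}\vee\cdots\vee\overline{\ell_n^\alpha}$, the clause whose unique falsifying assignment is $\alpha$. The only closure fact I will use is that $R_\alpha$ is a disjunction of literals in $\mathcal{C}$, so every formula of the form $R_\alpha\vee G$ with $G\in\mathcal{C}'$ is again in $\mathcal{C}'$; keeping all intermediate formulas of this shape is what governs the bookkeeping below.

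First I would prove an evaluation lemma: for every $B\in\mathcal{C}$ and every $\alpha$ with $\alpha\models B$ there is a $\mathcal{C}'$-Frege proof of $R_\alpha\vee B$ of size polynomial in $n$ and $s(B)$. This goes by induction on the structure of $B$ (all subformulas of $B$ are in $\mathcal{C}$, and the required polarity is always ``satisfied''). If $B$ is a literal then $\alpha\models B$ forces $B=\ell_i^\alpha$ for some $i$, so $R_\alpha\vee B$ follows from the axiom $\overline{\ell_i^\alpha}\vee\ell_i^\alpha$ by weakening. If $B=C\wedge D$ then $\alpha\models C$ and $\alpha\models D$, and introduction of conjunction applied to the inductive proofs of $R_\alpha\vee C$ and $R_\alpha\vee D$ gives $R_\alpha\vee R_\alpha\vee(C\wedge D)=R_\alpha\vee B$. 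If $B=C\vee D$, say $\alpha\models C$, then weakening turns the inductive proof of $R_\alpha\vee C$ into $R_\alpha\vee C\vee D=R_\alpha\vee B$. Every formula appearing is either an axiom or of the form $R_\alpha\vee(\text{subformula in }\mathcal{C})$, hence in $\mathcal{C}'$.

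Next, for each of the $2^n$ assignments $\alpha$ I would produce a $\mathcal{C}'$-Frege proof of $R_\alpha\vee A$ from $A_1,\dots,A_m$, by two cases. If $\alpha$ satisfies all $A_i$, then $\alpha\models A$ by hypothesis; writing $A=B_1\vee\cdots\vee B_k$ with $B_j\in\mathcal{C}$, some $B_j$ is satisfied, so the evaluation lemma and one weakening give $R_\alpha\vee A$ (with no use of the hypotheses). If $\alpha$ falsifies some $A_i$, write $A_i=B'_{i,1}\vee\cdots\vee B'_{i,k_i}$ with $B'_{i,j}\in\mathcal{C}$; then $\alpha$ falsifies every $B'_{i,j}$, i.e.\ $\alpha\models\overline{B'_{i,j}}\in\mathcal{C}$, and the evaluation lemma gives proofs of $R_\alpha\vee\overline{B'_{i,j}}$ for all $j$. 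Starting from the hypothesis $A_i$ and cutting away its disjuncts $B'_{i,1},B'_{i,2},\dots$ one at a time against these, I reach $R_\alpha$, and weakening gives $R_\alpha\vee A$; the intermediate formulas are disjunctions of some $B'_{i,j}$'s with $R_\alpha$, hence in $\mathcal{C}'$. The main obstacle I anticipate is precisely this containment in $\mathcal{C}'$ (which is closed only under disjunctions, not conjunctions): it is resolved by never forming $\overline{A_i}$ wholesale — which would be a conjunction when $A_i$ is a proper disjunction — but only complementing individual disjuncts $B'_{i,j}\in\mathcal{C}$ and consuming $A_i$ through a chain of cuts.

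Finally I would assemble the $2^n$ proofs of $R_\alpha\vee A$ into one proof of $A$ by resolution along a binary tree over $x_1,\dots,x_n$: for a partial assignment $\rho$ to $x_1,\dots,x_j$ put $R_\rho:=\overline{\ell_1^\rho}\vee\cdots\vee\overline{\ell_j^\rho}$, and derive $R_\rho\vee A$ by a single cut on $x_{j+1}$ from $R_{\rho,0}\vee A=(R_\rho\vee A)\vee x_{j+1}$ and $R_{\rho,1}\vee A=(R_\rho\vee A)\vee\overline{x_{j+1}}$; for the empty $\rho$, $R_\rho=\emptyformula$ and $R_\rho\vee A=A$. Each formula here is again $R_\rho\vee A\in\mathcal{C}'$. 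For the size bound, each of the $2^n$ leaf proofs has size polynomial in $n$ and $s$ (with an extra factor $k_i\le s$ in the falsifying case), the assembly adds $O(2^n)$ formulas of size $O(n+s)$, and the hypotheses and conclusion have size at most $s$; the total is comfortably polynomial in $n$, $s$, $m$, $2^n$ and $s^m$.
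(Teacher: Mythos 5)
Your proof is correct, and it shares the paper's central device — the per-assignment clause $R_\alpha$ and the induction showing that a satisfied $\mathcal{C}$-formula $B$ has a small $\mathcal{C}'$-proof of $R_\alpha \vee B$, together with the same crucial trick of never complementing a hypothesis $A_i$ wholesale but only its individual $\mathcal{C}$-disjuncts — but the way you discharge the hypotheses and merge the $2^n$ pieces is genuinely different from the paper's. The paper keeps the hypotheses out of the picture until the very end: for each assignment it proves $S_b \vee A$ or $S_b \vee \overline{A_{jk}}$, combines these by $2^n-1$ cuts into proofs of $A \vee \overline{A_{1k_1}} \vee \cdots \vee \overline{A_{mk_m}}$, one for \emph{every} tuple $(k_1,\ldots,k_m) \in I_1 \times \cdots \times I_m$ of disjunct choices, and only then cuts against $A_1,\ldots,A_m$; this is exactly where the $s^m$ factor in the stated bound comes from. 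You instead consume the falsified hypothesis locally at each assignment $\alpha$, cutting its disjuncts $B'_{i,j}$ away one at a time against $R_\alpha \vee \overline{B'_{i,j}}$ to reach $R_\alpha$, so that every leaf already yields $R_\alpha \vee A$ from the hypotheses; the leaves are then merged by a binary tree of cuts on the variables, with every intermediate formula of the form $R_\rho \vee A \in \mathcal{C}'$. This buys a cleaner and quantitatively stronger bound — polynomial in $n$, $s$ and $2^n$ alone, with no $s^m$ term — which of course still satisfies the theorem as stated; the paper's version is no harder to verify but pays the $s^m$ price for deferring the hypothesis cuts. One small point worth making explicit in your write-up is the observation you already use implicitly, namely that every literal on the relevant variables lies in $\mathcal{C}$ (by closure under subformulas and complementation), since this is what guarantees that $R_\rho \vee A$ and all formulas $R_\alpha \vee B$ with $B \in \mathcal{C}$ belong to $\mathcal{C}'$.
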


\begin{proof}
Let $X_1,\ldots,X_n$ be the variables in $A_1,\ldots,A_m$ and $A$.
For $c \in \{0,1\}$, let $X_i^{(c)}$ denote the negative literal
$\overline{X_i}$ if $c = 0$, and the positive literal $X_i$ if $c =
1$. For a truth assignment $b = (b_1,\ldots,b_n) \in \{0,1\}^n$, let
$S_b$ be the formula $\bigvee_{i=1}^n X_i^{(1-b_i)}$. First we show
that, for each formula $B$ on the variables $X_1,\ldots,X_n$ and each
truth assignment $b = (b_1,\ldots,b_n) \in \{0,1\}^n$, if $b$
satisfies $B$, then there is a cut-free proof of $S_b \vee B$ from no
assumptions.  This is proved by induction on the size of $B$.  If $B$
is a literal, say $B = X_i$ or $B = \overline{X_i}$, then $S_b \vee B$
is obtained as the weakening of the axiom $X_i^{(1-b_i)} \vee
X_i^{(b_i)}$. If $B$ is a conjunction, say $B = C \wedge D$, then $b$
satisfies both $C$ and $D$, and by induction hypothesis there are
cut-free proofs of $S_b \vee C$ and $S_b \vee D$. A cut-free proof of
$S_b \vee B$ then follows from applying introduction of
conjunction. If $B$ is a disjunction, say $B = C \vee D$, then $b$
satisfies either $C$ or $D$, and by induction hypothesis there is a
cut-free proof of either $S_b \vee C$ or $S_b \vee D$. A cut-free
proof of $S_b \vee B$ then follows from applying weakening.  Note that
the length of the proof constructed this way is bounded by $s(B)$, and
since all the formulas in the proof have sizes bounded by $n + s(B)$,
the size of the proof is bounded by $(n + s(B))s(B)$.  Note also for
later use that, as a consequence of the assumption that $\mathcal{C}$
is closed under subformulas, the following holds: if $B$ is a
disjunction of formulas in $\mathcal{C}$, say $B = \bigvee_i B_i$,
then each formula in this proof is a disjunction of formulas in
$\mathcal{C}$, and if $B$ is a conjunction of formulas in
$\mathcal{C}$, say $B = \bigwedge_i B_i$, then the construction gives
a cut-free proof of $S_b \vee B_i$ for each $B_i$, and each formula in
the proof of $S_b \vee B_i$ is again a disjunction of formulas in
$\mathcal{C}$.

Now we assume that $A$ is a logical consequence of $A_1,\ldots,A_m$
and we build a proof of $A$ from $A_1,\ldots,A_m$. This proof will not
yet be guaranteed to have all its formulas in $\mathcal{C}'$. We will
deal with this issue later.  For each truth assignment $b \in
\{0,1\}^n$, the following hold: 1) if $b$ satisfies $A$, then the
previous paragraph gives a proof of $S_b \vee A$, and 2) if $b$
falsifies $A$, then it also falsifies some $A_j$ for some $j \in [m]$,
and the previous paragraph gives a proof of $S_b \vee
\overline{A_j}$. From these $2^n$ proofs, a sequence of $2^n - 1$ cuts
followed by one weakening gives a proof of $A \vee \overline{A_1} \vee
\cdots \vee \overline{A_m}$. From there a sequence of $m$ cuts with
the $m$ hypotheses $A_1,\ldots,A_m$ gives a proof of $A$. Finally we
argue how to turn this proof into one that uses only formulas in
$\mathcal{C}'$. For the proofs of the type $S_b \vee A$ there is no
issue because $A$ is a disjunction of formulas in $\mathcal{C}$ and
the previous paragraph argues that such proofs have all its formulas
in $\mathcal{C}'$. The problem comes from the proofs of the type $S_b
\vee \overline{A_j}$. However, since each $A_j$ is a disjunction of
formulas in $\mathcal{C}$, say $A_j = \bigvee_{k \in I_j} A_{jk}$, its
negation $\overline{A_j}$ is a conjunction of formulas in
$\mathcal{C}$, because $\mathcal{C}$ is closed under
complementation. This means that instead of using the proof of $S_b
\vee \overline{A_j}$, we could have used the proof of $S_b \vee
\overline{A_{jk}}$ for each $k \in I_j$. We do this for each choice of
$(k_1,\ldots,k_m) \in I_1 \times \cdots \times I_m$, and what we get
are proofs of $A \vee \overline{A_{1k_1}} \vee \cdots \vee
\overline{A_{mk_m}}$. These proofs now have all their formulas in
$\mathcal{C}'$. Combining these at most $s^m$ many proofs with the
hypotheses $A_1,\ldots,A_m$ in a sequence of at most $s^m$ cuts, we
get a proof of $A$ from $A_1,\ldots,A_m$, and all the formulas in this
proof are in $\mathcal{C}'$. The size is polynomial in $n$, $s$, $m$,
$2^n$ and $s^m$, and the proof is complete.
\end{proof}

The quantitative completeness theorem applies to $\Sigma_{1,k}$-Frege
($k$-DNF Frege and resolution) because if $\mathcal{C}$ is the class
of $k$-terms and $k$-clauses, then $\mathcal{C}$ is closed under
subformulas and complementation, and the closure of $\mathcal{C}$
under disjunctions is the class of $k$-DNFs. It also applies to
$\Sigma_{t,k}$-Frege, for $t \geq 2$, because the class
$\Sigma_{t-1,k} \cup \Pi_{t-1,k}$ is closed under subformulas and
complementation, and its closure under disjunctions is precisely
$\Sigma_{t,k}$.

\subsection{Polynomials and algebraic proofs} \label{sec:algproofs}

\paragraph{Polynomials.} We define everything for
the real field $\mathbb{R}$ for simplicity. For algebraic proofs the
field would not matter, but for semi-algebraic proofs we need an
ordered field such as~$\mathbb{R}$. Let $X_1,\ldots,X_n$ be $n$
algebraic commuting variables ranging over $\mathbb{R}$. We want to
define proof systems that manipulate equations of the form $P = 0$ and
inequalities of the form $P \geq 0$, where $P$ is a polynomial in
$\mathbb{R}[X_1,\ldots,X_n]$, the ring of polynomials with commuting
variables $X_1,\ldots,X_n$ and coefficients in $\mathbb{R}$. For our
purposes it will suffice to assume that the variables range over
$\{0,1\}$. Accordingly, it will also be convenient to introduce
\emph{twin} variables $\bar{X_1},\ldots,\bar{X_n}$ with the intended
meaning that $\bar{X_i} = 1 - X_i$ for $i = 1,\ldots,n$.
In all proof systems of this section, the following axioms will be
imposed on the variables:
\begin{equation}
\begin{array}{lllll}
X_i^2 - X_i = 0 & \;\;\; & 
\bar{X}_i^2 - \bar{X}_i = 0 & \;\;\; & X_i + \bar{X_i} - 1 = 0. 
\end{array}
\label{eqn:axioms}
\end{equation}
Observe that $X_i\bar{X_i} = 0$ follows from these axioms: multiply
$X_i + \bar{X_i} - 1 = 0$ by $X_i$ and subtract $X_i^2 - X_i = 0$. This
sort of reasoning is captured by the proof systems we are
about to define.

\paragraph{Algebraic and semi-algebraic proof systems.}  Let $P$ and 
$Q$ denote polynomials. In addition to the axioms
in~\eqref{eqn:axioms}, consider the following inference rules called \emph{addition} and \emph{multiplication}:
\begin{equation}
\frac{P = 0 \;\;\;\;\;\;\;\;\;\; Q = 0}{ P + Q = 0 } \;\;\;\;\;\;\;\;\;\;\;
\frac{P = 0}{PQ = 0}.
\label{eqn:algebraicrules}
\end{equation}
Clearly, these rules are sound: any assignment $f : \{ X_1,\ldots,X_n,
\bar{X}_1,\ldots,\bar{X}_n \} \rightarrow \mathbb{R}$
that satisfies the equations in the
premises, also satisfies the equation in the conclusions. For
semi-algebraic proofs we add the following axioms:
\begin{equation}
\begin{array}{lllllllll}
X_i \geq 0 & \;\;\; & \bar{X}_i \geq 0 & \;\;\; 
& 1-X_i \geq 0 & \;\;\; & 1-\bar{X}_i \geq 0 & \;\;\; &
1 \geq 0.
\end{array} \label{eqn:axiomssemi}
\end{equation}
and the following inference rules for
polynomial inequalities:
\begin{equation}
\frac{P \geq 0 \;\;\;\;\;\;\;\;\;\; Q \geq 0}{ P + Q \geq 0 } \;\;\;\;\;\;\;\;\;\;\;
\frac{P \geq 0 \;\;\;\;\;\;\;\;\;\; Q \geq 0}{ PQ \geq 0 } \;\;\;\;\;\;\;\;\;\;\;
\frac{}{P^2 \geq 0}.
\label{eqn:semialgebraicrules}
\end{equation}
These rules are called \emph{addition}, \emph{multiplication} and
\emph{positivity of squares} and are also sound for assignments $f :
\{X_1,\ldots,X_n,\bar{X}_1,\ldots,\bar{X}_n\} \rightarrow
\mathbb{R}$. One could also consider additional rules that link
equalities with inequalities, such as deriving $P \geq 0$ from $P =
0$, or deriving $P = 0$ from $P \geq 0$ and $-P \geq 0$, but if we
think of an equality as two inequalities, then they are not strictly
necessary.  On the other hand, some of the axioms are redundant, such
as $1 \geq 0$ which can be obtained from adding $X_i \geq 0$ and
$1-X_i\geq 0$, but for the sake of clarity in writing proofs we prefer
to keep them.

If $H$ denotes a system of polynomial equations $P_1 = 0,\ldots,P_r =
0$ and $P = 0$ is a further equation, an \emph{algebraic proof} of $P
= 0$ from $H$ is a sequence of polynomial equations ending with $P =
0$ where each equation in the proof is either a hypothesis equation
from $H$, or an axiom equation as in~\eqref{eqn:axioms}, or follows
from previous equations in the sequence by one of the inference rules
in \eqref{eqn:algebraicrules}. If $H$ in addition includes a system of
polynomial inequalities $Q_1 \geq 0,\ldots,Q_s \geq 0$, then a
\emph{semi-algebraic proof} of $Q \geq 0$ from $H$ is defined
analogously except that we think of each equation as two inequalities,
we use additionally the axioms in~\eqref{eqn:axiomssemi}, and we use
additionally the rules in~\eqref{eqn:semialgebraicrules}. Note that by
writing $Q = Q^+ - Q^-$, where $Q^+$ and $Q^-$ have only positive
coefficients, the rules in~\eqref{eqn:algebraicrules} are actually
easily simulated by the rules in~\eqref{eqn:semialgebraicrules} (for
the multiplication rule, this uses also the axioms
in~\eqref{eqn:axiomssemi}). If an algebraic proof ends with the
equation $1 = 0$, or similarly if a semi-algebraic proof ends with the
inequality $-1 \geq 0$, we call it a \emph{refutation} of $H$.

As proof systems for deriving new polynomial equations or
inequalities that follow from old ones on all evaluations of their
variables in $\{0,1\}$, both systems are sound and implicationally
complete (we note, however, that without some restrictions on the
domain of evaluation, such as $\{0,1\}$ in our case, the completeness
claim is not true). In Section~\ref{sec:completenessSA} below we will
prove implicational completeness for two subsystems of algebraic and
semi-algebraic proofs, and hence for algebraic and semi-algebraic
proofs themselves.

The main complexity measures for algebraic and semi-algebraic proofs
are size and degree. Size is measured by the number of symbols it
takes to write the representations of the polynomials in the proofs,
and degree is the maximum of the total degrees of the polynomials in
the proofs. Polynomials are typically represented as explicit sums of
monomials, or as algebraic formulas or circuits. Using formulas or
circuits as representations requires some additional technicalities in
the definitions of the rules, that we want to avoid (see
\cite{Pit97,GrigorievHirsch2003}). For all our examples below, we use the
representation of an explicit sum of monomials.

\paragraph{Some proof systems from the literature.}  The proofs in the
\emph{Polynomial Calculus} (PC) are algebraic proofs restricted in
such a way that the polynomial $Q$ in the multiplication rule
in~\eqref{eqn:algebraicrules} is either a scalar or a variable
\cite{CEI95}. In the literature, this has been called PCR for \emph{PC
  with resolution} (see \cite{ABRW00b}), due to the presence of twin
variables, but in recent works the shorter original name PC is used.
As pointed out earlier, algebraic proofs can be defined over arbitrary
scalar-fields $F$ beyond the real-field $\mathbb{R}$. A claim about algebraic proofs in which the field is
omitted is meant to hold for all fields simultaneously.
Whenever we need
to specify the field~$F$, we speak of algebraic and PC proofs
\emph{over} $F$. 

The proofs in the \emph{Lov\'asz-Schrijver} (LS) proof system are
semi-algebraic proofs for which the following restrictions apply: 1)
the polynomial $Q$ in the multiplication rule
in~\eqref{eqn:semialgebraicrules} is either a positive scalar or a
variable, and 2) the positivity-of-squares rule
in~\eqref{eqn:semialgebraicrules} is not allowed. When the
positivity-of-squares is also allowed, the system is called
\emph{Positive Semidefinite Lov\'asz-Schrijver} and is denoted
LS$^+$. Originally the Lov\'asz-Schrijver proof system was defined to
manipulate quadratic polynomials only (see
\cite{LovaszSchrijver1991,Pudlak1998}). We follow
\cite{GrigorievHirschPasechnik2002} and consider the extension to
arbitrary degree. For the original
Lov\'asz-Schrijver proof systems we use LS$_2$ and LS$_2^+$.
Degree-$d$ Lov\'asz-Schrijver and degree-$d$ Positive Semidefinite Lov\'asz-Schrijver are denoted LS$_d$ and LS$_d^+$,
respectively. For LS and LS$^+$ proofs, an important complexity
measure originally studied by Lov\'asz and Schrijver is their
\emph{rank}, which is the maximum nesting depth of multiplication by a
variable in the proof. Note that, due to possible cancellations, the
degree of an LS proof could in principle be much smaller than its
rank.

We define four additional proof systems called \emph{Nullstellensatz}
(NS), \emph{Sherali-Adams} (SA), \emph{Positive Semidefinite
  Sherali-Adams} (SA$^+$), and \emph{Lasserre/Sums-of-Squares}
(SOS). For NS, SA and SA$^+$, we define them as the subsystems of PC,
LS and LS$^+$, respectively, in which all applications of the
multiplication rule must \emph{precede} all applications of the
addition rule. Due to the structural restriction in which
multiplications precede additions, we can think of a proof from a set
$H$ of hypotheses as a \emph{static} polynomial identity of the form
\begin{equation}
\sum_{i=1}^r P_i \cdot c_i \prod_{j \in J_i} X_j \prod_{k \in K_i} \bar{X}_k = P,
\label{eqn:identity}
\end{equation}
where $P_1,\ldots,P_r$ are polynomials that either come from the set
$H$ of hypotheses, or they are axiom polynomials from the lists
in~\eqref{eqn:axioms} and~\eqref{eqn:axiomssemi} as appropriate (i.e.,
from~\eqref{eqn:axioms} for NS, and from both~\eqref{eqn:axioms}
and~\eqref{eqn:axiomssemi} for SA and SA$^+$), or are squares of
polynomials when they are allowed (i.e., for SA$^+$), and
$c_1,\ldots,c_r$ are scalars of the appropriate type (i.e., arbitrary
when the $P_i$ they multiply comes from an equation, or positive when
the $P_i$ they multiply comes from an inequality). Finally we define
Lasserre/Sums-of-Squares proof system as the subsystem of
semi-algebraic proofs to which the following restrictions apply: 1) the
polynomial $Q$ is arbitrary in the multiplication rule
in~\eqref{eqn:algebraicrules} and it is a square polynomial in the
multiplication rule in~\eqref{eqn:semialgebraicrules}, and 2) all
multiplications precede all additions. Thus, in terms of static
identities, these are proofs of the form
\begin{equation}
\sum_{i=1}^r P_i \cdot S_i = P, \label{eqn:sosproof}
\end{equation}
where $P_1,\ldots,P_r$ are polynomials that either come from the set
$H$ of hypotheses, or they are axiom polynomials from the
lists~\eqref{eqn:axioms} and~\eqref{eqn:axiomssemi}, or they are
squares, and $S_1,\ldots,S_r$ are arbitrary polynomials or square
polynomials as appropriate (i.e., arbitrary if the $P_i$ they multiply
comes from an equation, and squares if the $P_i$ they multiply comes
from an inequality).  Note that the size of an NS, SA, SA$^+$ or SOS
proof is polynomially related to the sum of the sizes of the non-zero
$c_i$'s and $S_i$'s in the corresponding static
identities~\eqref{eqn:identity} and~\eqref{eqn:sosproof}. Non-static
proofs are sometimes called \emph{dynamic}
\cite{GrigorievHirschPasechnik2002}. We will avoid using this term
here.

We close this section by noting the relationships between these proof
systems.  Clearly, every NS proof of degree $d$ is also a PC proof of
degree $d$. The converse is certainly not true, but what is true is
that every PC proof of degree $d$ and rank $k$ can be converted into
an NS proof of degree $d+k$, where the rank of a PC proof is the
analogue of the rank measure for LS proofs that we defined
earlier. The same relationships hold between SA and LS, and SA$^+$ and
LS$^+$. In all three cases, the conversions go by swapping the order
in which the addition and the multiplication rules are applied, when
they appear in the \emph{wrong} order.  Also, every NS proof over the
reals is an SA proof, which is an SA$^+$ proof. Finally, thanks to the
axioms~\eqref{eqn:axioms}, each SA$^+$ proof can be easily converted to an
SOS proof of twice the degree: replace each multiplication by a
variable $X$ by a multiplication by $X^2$, and subtract the
appropriate multiple of the axiom $X^2 - X = 0$ to effectively
simulate the multiplication by~$X$.  See \cite{Laurent2001} for a
related discussion.

\paragraph{Discussion on variants of NS, SA, SA$^+$ and SOS.}
The polynomial identity interpretations of NS, SA, SA$^+$ and SOS,
c.f.,~\eqref{eqn:identity} and~\eqref{eqn:sosproof}, are closely
related to the original definitions by Beame et
al. \cite{BIKPP96} for NS, and the
settings of Sherali and Adams \cite{SheraliAdams1990} and Lasserre
\cite{Lasserre2011} for SA and SOS, respectively. In most incarnations
of these proof systems the twin variables are not present; in some
others they are (e.g.,~\cite{AtseriasLN16}). If we care only about
degree, the presence of twin variables makes no difference at all for
Nullstellensatz since we can always simulate a multiplication by $\bar{X}_i$ by
subtracting a multiplication by $X_i$. Note, however, that this blows
up the size exponentially in the degree. In order to make sense of Sherali-Adams
without twin variables, we need to extend the definition to allow $Q$
in the multiplication rule to be, besides a positive scalar or a
variable~$X_i$, a linear polynomial of the form $1-X_i$.  The static
form of such a proof is an identity such as
\begin{equation}
\sum_{i=1}^r P'_i \cdot c_i \prod_{j \in J_i} X_j \prod_{k \in K_i}
(1-X_k) = P',
\label{eqn:identitywithoutbar}
\end{equation}
where $P'_1,\ldots,P'_r$ and $P'$ are polynomials as
in~\eqref{eqn:identity}, but without twin variables.  If
$P'_1,\ldots,P'_r$ and $P'$ denote the polynomials over
$X_1,\ldots,X_n$ that result from the polynomials $P_1,\ldots,P_r$ and
$P$ over $X_1,\ldots,X_n,\bar{X}_1,\ldots,\bar{X}_n$ when each twin
variable $\bar{X}_i$ is replaced by $1-X_i$, then any valid proof with
twin variables as in~\eqref{eqn:identity} transforms into a valid
proof without twin variables as
in~\eqref{eqn:identitywithoutbar}. Thus, if we care only about degree,
the versions of Sherali-Adams and Positive Semidefinite
  Sherali-Adams without twin variables simulate the
versions with twin variables, for polynomials without twin
variables. As for Nullstellensatz the size could blow up exponentially in the
degree. The same facts are true for Sums-of-Squares.

Two further comments are in order. For Nullstellensatz, one could consider an
alternative definition in which proofs are polynomial identities of
the form $\sum_i P_i \cdot R_i = P$, where the $P_i$ are hypotheses or
axiom polynomials, and the $R_i$ are arbitrary polynomials. However
this difference is minor since we can always write each $R_i$ as a
combination of monomials $\sum_j c_{ij} M_{ij}$ and split $P_i \cdot
R_i$ into $\sum_j P_i \cdot c_{ij} M_{ij}$. Second, one could consider
the version of Sums-of-Squares in which in addition to squares $S_i$ as
in~\eqref{eqn:sosproof}, one is also allowed multiplication by
variables. As noted earlier, such multiplications by a variable $X$
can be simulated by multiplications by their squares $X^2$, thanks to
the axioms $X^2 - X = 0$ from~\eqref{eqn:axioms}, at the cost of at
most doubling the degree, and blowing up the size at most
polynomially.

\subsection{Completeness of Nullstellensatz and Sherali-Adams} \label{sec:completenessSA}

In this section we prove the implicational completeness of Nullstellensatz and Sherali-Adams
with quantitative bounds. We start with two technical lemmas that
will be used to justify the elimination of twin variables.

\begin{lemma} \label{lem:technicalone} For every polynomial $P$ of
  degree $d$ and every variable $Y$, there are NS and SA proofs of $P
  \cdot (1 - Y - \bar{Y}) = 0$ and $P \cdot (Y^2 - Y) = 0$ of degree
  $d+1$ and $d+2$, respectively, and size polynomial in the size
  of~$P$.
\end{lemma}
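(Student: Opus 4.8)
The plan is to read off both equations from a single axiom in~\eqref{eqn:axioms}, multiplied by an appropriate polynomial. Since in NS and SA each multiplication step has a monomial (times a scalar) as multiplier, one first decomposes $P$ into its monomials, multiplies the relevant axiom by each of them, and sums the results; as no addition precedes these multiplications, the structural restriction defining NS and SA is respected, and the proof is just the corresponding static identity of the form~\eqref{eqn:identity}. Moreover, every NS proof over the reals is in particular an SA proof by the inclusions recorded at the end of Section~\ref{sec:algproofs}, so it is enough to exhibit the NS proofs.

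For $P \cdot (1 - Y - \bar{Y}) = 0$: writing $\bar{Y}$ for the twin of the variable $Y$, the axiom list~\eqref{eqn:axioms} contains $Y + \bar{Y} - 1 = 0$ in either case ($Y = X_i$ or $Y = \bar{X}_i$). Expand $-P = \sum_{\ell} c_\ell \mu_\ell$ as a sum of monomials $\mu_\ell$ of degree at most $d$ with scalar coefficients $c_\ell$. The static identity
\[
\sum_{\ell} (Y + \bar{Y} - 1) \cdot c_\ell \mu_\ell \;=\; -P \cdot (Y + \bar{Y} - 1) \;=\; P \cdot (1 - Y - \bar{Y})
\]
is then an NS proof of $P \cdot (1 - Y - \bar{Y}) = 0$: each summand is the axiom $Y + \bar{Y} - 1 = 0$ multiplied by a scalar times a monomial. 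Since $Y + \bar{Y} - 1$ has degree $1$, every polynomial appearing has degree at most $d+1$; and since the number of summands equals the number of monomials of $P$, the size is polynomial in the size of $P$.

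For $P \cdot (Y^2 - Y) = 0$: the axiom list~\eqref{eqn:axioms} contains $Y^2 - Y = 0$ for every variable $Y$ (it is $X_i^2 - X_i = 0$ when $Y = X_i$ and $\bar{X}_i^2 - \bar{X}_i = 0$ when $Y = \bar{X}_i$). Expanding $P = \sum_\ell c_\ell \mu_\ell$ into monomials again, the static identity $\sum_\ell (Y^2 - Y) \cdot c_\ell \mu_\ell = P \cdot (Y^2 - Y)$ is an NS proof of $P \cdot (Y^2 - Y) = 0$. Here $Y^2 - Y$ has degree $2$, so all polynomials have degree at most $d+2$, and the size is again polynomial in the size of $P$.

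There is essentially no obstacle here: the only points to be careful about are that NS and SA multiply by a monomial only in a single step — handled by the monomial decomposition of $P$ — and that the degree overhead of multiplying a degree-$d$ polynomial by a fixed axiom of degree $1$ or $2$ is merely additive. The lemma is exactly the bookkeeping needed downstream to eliminate a twin variable $\bar{Y}$ in favour of $1 - Y$ inside a proof while keeping the degree blow-up additive and the size blow-up polynomial.
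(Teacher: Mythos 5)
Your proposal is correct and follows essentially the same route as the paper: split $P$ into monomials, lift the relevant axiom from~\eqref{eqn:axioms} by each scalar-times-monomial, and add, observing that the resulting static identity respects the multiplications-before-additions restriction and that the degree overhead is $1$ or $2$. The only difference is presentational — you spell out the SA case via the NS-to-SA inclusion and the degree/size bookkeeping, which the paper leaves implicit.
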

 
\begin{proof}
  Split $P$ into a sum of monomials $\sum_j c_j M_j$, lift the axiom
  $1-Y-\bar{Y} = 0$ by $c_j M_j$, and add up together to get $P \cdot
  (1-Y-\bar{Y}) = 0$. 
\end{proof}

\noindent The second technical lemma that we need formalizes the elimination
of twin variables.

\begin{lemma} \label{lem:simulationmultiplication}
  For every polynomial $P$ of degree $d$, every scalar $c$ and
  every two subsets $J$ and $K$ of $[n]$, with $|J|+|K| = \ell$, there
  are NS and SA proofs of the equation
  \begin{equation}
  P \cdot c \prod_{j \in J} X_j \prod_{k \in K} (1-X_k) - 
  P \cdot c \prod_{j \in J} X_j \prod_{k \in K} \bar{X}_k = 0
  \end{equation} 
  of degree $d+\ell$ and size polynomial in $2^\ell$ and the size of
  $c$ and $P$.
\end{lemma}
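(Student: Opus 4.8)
The plan is to prove Lemma~\ref{lem:simulationmultiplication} by expanding the product $\prod_{k \in K}(1-X_k)$ and comparing it term-by-term with the corresponding expression in the twin variables, using Lemma~\ref{lem:technicalone} to patch up the discrepancies one variable at a time. The cleanest formulation is an induction on $|K|$. When $K = \emptyset$ there is nothing to prove, since the two products coincide. For the inductive step, fix some $k_0 \in K$ and write $K = K' \cup \{k_0\}$. The idea is to show that, modulo an NS/SA proof of the required degree and size, one can replace a single factor $(1-X_{k_0})$ by $\bar{X}_{k_0}$ inside the product. Concretely, the polynomial
\[
P \cdot c \prod_{j \in J} X_j \prod_{k \in K'} (1-X_k) \cdot \big((1-X_{k_0}) - \bar{X}_{k_0}\big)
\]
equals $Q \cdot (1 - X_{k_0} - \bar{X}_{k_0})$ where $Q = -\,P \cdot c \prod_{j \in J} X_j \prod_{k \in K'}(1-X_k)$, a polynomial of degree $d + |J| + |K'| = d + \ell - 1$. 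By Lemma~\ref{lem:technicalone} applied with $Y = X_{k_0}$ there is an NS (and SA) proof of $Q \cdot (1 - X_{k_0} - \bar{X}_{k_0}) = 0$ of degree $d + \ell$ and size polynomial in the size of $Q$. This is exactly an NS/SA proof of
\[
P \cdot c \prod_{j \in J} X_j \prod_{k \in K'}(1-X_k)(1-X_{k_0}) - P \cdot c \prod_{j \in J} X_j \prod_{k \in K'}(1-X_k)\bar{X}_{k_0} = 0.
\]

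Having swapped out one factor, I then invoke the induction hypothesis on the remaining product over $K'$, but now with $P$ replaced by the larger polynomial $P \cdot \bar{X}_{k_0}$ (of degree $d+1$) and with $K'$ in place of $K$: this yields an NS/SA proof that $P\bar{X}_{k_0} \cdot c \prod_{j\in J} X_j \prod_{k \in K'}(1-X_k) - P\bar{X}_{k_0}\cdot c\prod_{j\in J}X_j\prod_{k\in K'}\bar{X}_k = 0$ of degree $(d+1) + (\ell - 1) = d + \ell$. Adding this equation to the one obtained in the previous paragraph, the mixed term $P\bar{X}_{k_0}\cdot c\prod_{j\in J}X_j\prod_{k\in K'}(1-X_k)$ cancels telescopically, and we are left with exactly the desired equation for $K$. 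The degree stays at $d + \ell$ throughout, as promised.

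For the size and degree bookkeeping: at depth $i$ of the recursion the "base" polynomial has grown to $P \cdot \prod (\text{some } \bar{X}_k)$ of degree at most $d + i \le d + \ell$, so the degree bound $d + \ell$ is maintained uniformly. The recursion has depth $|K| \le \ell$ and branches once per level, so the total number of Lemma~\ref{lem:technicalone} invocations is $O(|K|)$; however, each invocation operates on a polynomial of size up to roughly $2^{\ell} \cdot \mathrm{size}(c) \cdot \mathrm{size}(P)$ because expanding $\prod_{k \in K'}(1-X_k)$ produces up to $2^{|K'|}$ monomials, which is where the claimed $2^\ell$ factor in the size bound comes from. Multiplying the per-step cost by the $O(\ell)$ steps keeps the total size polynomial in $2^\ell$, $\mathrm{size}(c)$ and $\mathrm{size}(P)$, as required.

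The main obstacle I anticipate is not conceptual but bookkeeping: one must be careful that the polynomial $Q$ fed into Lemma~\ref{lem:technicalone} is written as an explicit sum of monomials (the representation fixed in the paper), and that expanding $\prod_{k \in K'}(1-X_k)$ does not secretly blow up the size beyond the stated $2^\ell$ bound before it is multiplied by $P$ and $c$. A cleaner alternative that sidesteps the telescoping bookkeeping would be to do the substitution non-recursively: write $\prod_{k \in K}(1-X_k) - \prod_{k \in K}\bar{X}_k$ as a telescoping sum $\sum_{k_0 \in K}\big(\prod_{k: \text{ already swapped}} \bar{X}_k\big)\big((1-X_{k_0}) - \bar{X}_{k_0}\big)\big(\prod_{k: \text{ not yet swapped}}(1-X_k)\big)$ directly, multiply through by $P \cdot c \prod_{j\in J} X_j$, and apply Lemma~\ref{lem:technicalone} once to each of the $|K|$ summands; then a single sequence of additions assembles the result. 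Either way the degree never exceeds $d + \ell$ and the size is controlled by the one unavoidable exponential-in-$\ell$ cost of expanding the product.
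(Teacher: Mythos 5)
Your proof is correct and is essentially the paper's own argument: the paper writes $\prod_{k}(1-X_k)-\prod_k\bar{X}_k$ (multiplied by $cP\prod_{j\in J}X_j$) as a telescoping sum of terms $(1-X_k-\bar{X}_k)\cdot T_k$ and applies Lemma~\ref{lem:technicalone} to each, which is exactly what your induction unrolls to, and is literally the ``cleaner alternative'' you describe at the end. The only blemish is the harmless sign slip in defining $Q$ (it should be $+P\cdot c\prod_{j\in J}X_j\prod_{k\in K'}(1-X_k)$, not its negation), which does not affect anything since Lemma~\ref{lem:technicalone} applies to any polynomial.
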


\begin{proof}
  Assume without loss of generality that $K = [t]$ where $t \leq
  \ell$. Let $Q = c P \prod_{j \in J} X_j$. Define $R_j =
  Q\prod_{k=1}^j (1-X_k)\prod_{k=j+1}^t \bar{X}_k$ for all $j \in [t]
  \cup \{0\}$. Observe that the goal equation is $R_t - R_0 = 0$. For
  each $j \in [t]$, let $T_j = Q \prod_{k=1}^{j-1} (1-X_k)
  \prod_{k=j+1}^t \bar{X}_k$. Now:
  \begin{equation}
  (1-X_j-\bar{X}_j) T_j = R_j - R_{j-1} \label{eqn:telescope}
  \end{equation}
  for each $j \in [t]$. Lemma~\ref{lem:technicalone} gives proofs of
  $(1-X_j-\bar{X}_j) T_j = 0$ for every $j \in [t]$. Adding them all
  together gives $R_t - R_0 = 0$ by~\eqref{eqn:telescope} and we are
  done.
\end{proof}

We will need the following
definitions.  For every assignment $a : \{X_1,\ldots,X_n\} \rightarrow
\{0,1\}$, define
\begin{align*}
J_a & = \{ i \in [n] : a(X_i) = 1 \}, \\
K_a & = \{ i \in [n] : a(X_i) = 0 \}.
\end{align*}
Define its \emph{indicator polynomial}:
\begin{equation}
I_a(X_1,\ldots,X_n) := \prod_{j \in J_a} X_j \prod_{k \in K_a} (1-X_k).
\end{equation}
For every polynomial $P$, let $P(a)$ denote the evaluation of $P$ when
$X_i$ is assigned $a(X_i)$.

For a polynomial $P$ on the variables $X_1,\ldots,X_n$, its
\emph{multilinearization} is the unique multilinear polynomial that
agrees with $P$ on all assignments of values in $\{0,1\}$ to its
variables. The uniqueness of the multilinearization follows from the
fact that the collection of multilinear polynomials in
$\mathbb{R}[X_1,\ldots,X_n]$ forms a vector space of dimension $2^n$
for which the monomials make a basis. Note that this holds for any
field; not just $\mathbb{R}$.

\begin{lemma} \label{lem:multilinearization}
  For every polynomial $P$ on the variables $X_1,\ldots,X_n$, there
  are polynomials $Q_1,\ldots,Q_n$ such that the following identity
  holds:
  \begin{equation}
  P + \sum_{i=1}^n Q_i \cdot (X_i^2 - X_i)
  = P^*,
  \label{eqn:multi}
  \end{equation}
  where $P^*$ denotes the multilinearization of $P$.  Moreover,
  each $Q_i$ has size polynomial in the size~of~$P$.
\end{lemma}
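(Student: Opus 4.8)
The plan is to reduce the claimed identity to the standard single-variable reduction rule $X_i^2 \equiv X_i$ and iterate it monomial by monomial. First I would recall the simple observation that for a single variable $X_i$ and any exponent $e \geq 1$, we have $X_i^e - X_i = (X_i^{e-2} + X_i^{e-3} + \cdots + X_i + 1)(X_i^2 - X_i)$ when $e \geq 2$, and $X_i^e - X_i = 0$ identically when $e = 1$; so $X_i^e$ differs from $X_i$ by an explicit polynomial multiple of $X_i^2 - X_i$. Applying this to each variable in turn, for a single monomial $M = c \prod_i X_i^{e_i}$ one gets $M - c\prod_i X_i^{\min(e_i,1)} = \sum_i R_{M,i}\cdot (X_i^2 - X_i)$ for explicit polynomials $R_{M,i}$ of size polynomial in the size of $M$ (each step introduces a factor of size at most the current degree, and we do this at most $n$ times, but the number of \emph{monomials} stays bounded because at each step we are just substituting within a single monomial).

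The second step is to write $P = \sum_{j} c_j M_j$ as its explicit sum of monomials, apply the monomial-wise identity to each $M_j$, and sum up. Setting $Q_i := \sum_j R_{M_j,i}$ yields $P + \sum_i Q_i\cdot(X_i^2 - X_i) = \sum_j c_j \prod_i X_i^{\min(e_{ij},1)}$, i.e. the right-hand side is the multilinear polynomial obtained from $P$ by replacing every positive power of each variable by the first power. By the uniqueness of the multilinearization — already noted in the excerpt, using that multilinear monomials form a basis of the $2^n$-dimensional space of multilinear polynomials — this polynomial agrees with $P$ on all $\{0,1\}$-assignments and is multilinear, hence equals $P^*$. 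Finally, I would check the size bound: the number of monomials produced per monomial of $P$ is a single monomial (multilinearization never increases the monomial count), the degree never increases beyond $\deg(P)$ once a variable is multilinearized and during the intermediate steps stays at most $\deg(P)$, and the coefficients of the $R_{M,i}$ are all $1$; summing over the at most $\mathrm{size}(P)$ monomials of $P$ gives each $Q_i$ of size polynomial in $\mathrm{size}(P)$.

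The one mildly delicate point — the main obstacle, such as it is — is bookkeeping the size and degree through the iterated substitution: when reducing $X_i^{e_i}$ to $X_i$ inside a monomial that still contains unreduced high powers of other variables, the cofactor $R_{M,i}$ is multiplied by those other (still high-degree) factors, so one must order the reductions (say, reduce variable by variable, peeling off the cofactor at each stage and absorbing the already-reduced part) and argue that at every intermediate stage the polynomials in play have degree at most $\deg(P)$ and a number of monomials not exceeding what we started with. This is entirely routine but is the only place where a careless argument could accidentally claim an exponential blow-up; I would handle it by induction on the number of variables appearing with exponent $\geq 2$ in the monomial, which makes the cofactor accounting transparent.
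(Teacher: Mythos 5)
Your proposal is correct and takes essentially the same route as the paper's proof: reduce to a single monomial and peel off squared variables with explicit cofactors for the $X_i^2 - X_i$ terms, then identify the reduced multilinear polynomial with $P^*$. The only cosmetic difference is that the paper removes one power per induction step (inducting on the total degree), while you remove all excess powers of a variable at once via $X^e - X = (X^{e-2}+\cdots+X+1)(X^2-X)$; the size and degree bookkeeping is the same in both versions.
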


\begin{proof}
  Observe that it is enough to prove the lemma for the special case of monomials.
  Indeed, if $P$ is an arbitrary polynomial, we get the identity~\eqref{eqn:multi}
  by splitting $P$ into a sum of monomials, applying the lemma to each monomial, and 
  adding up the obtained identities.
  
  Let $P$ be a monomial.
  We proceed by induction on the sum of the individual degrees of the variables.
  If all variables have individual degree one, there is nothing to
  prove. Otherwise, some variable must have individual degree at least
  two. Say this variable is $X_j$ and let $P'$ and $P''$ be such that
  $P = X_j P'$ and $P' = X_j P''$. Note that the multilinearizations
  of $P$ and $P'$ are the same, and in both $P'$ and $P''$ the
  sum of the individual degrees is strictly smaller.  The induction
  hypothesis applied to $P'$ gives polynomials $Q'_1,\ldots,Q'_n$ such
  that
  \begin{equation}
  P' + \sum_{i=1}^n Q'_i \cdot (X_i^2 - X_i) = {P'}^* = P^*.
  \label{eqn:already}
  \end{equation}
  Now the identity we want is obtained by defining $Q_i = Q'_i$
  for $i \not= j$, and $Q_j = Q'_j - P''$.
  Indeed:
  \begin{align}
  P + \sum_{i} Q_i \cdot (X_i^2 - X_i) & =
  P + (Q'_j - P'') \cdot (X_j^2 - X_j) + \sum_{i\not=j} Q'_i \cdot (X_i^2 - X_i)
  \\
  & = P + \sum_{i} Q'_i \cdot (X_i^2 - X_i)
  - P'' \cdot X_j^2 + P'' \cdot X_j = \\
  & = P' + \sum_{i} Q'_i \cdot (X_i^2 - X_i),
  \end{align}
  and we already proved in~\eqref{eqn:already} that this last thing is
  $P^*$.
\end{proof}

\begin{theorem}\label{thm:completeness_alg}
  Let $H$ be a system of polynomial equations, let $I$ be a system of
  polynomial inequalities, and let $P$ be a polynomial, all over the
  same $n$ variables.  If $P = 0$ follows from $H$ on all evaluations
  of its variables in $\{0,1\}$, then there is an NS proof of $P = 0$
  from $H$. Similarly, if $P \geq 0$ follows from $H \cup I$ on all
  evaluations of its variables in $\{0,1\}$, then there is an SA proof
  of $P \geq 0$ from $H \cup I$. Moreover, in both cases the degree of
  the proof is at most $n+1$, and the size is polynomial in $2^n$ and
  in the size of $H$ and $H \cup I$, respectively.
\end{theorem}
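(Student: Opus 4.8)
The plan is to prove the two statements in parallel, treating an equation $P=0$ as the pair of inequalities $P\ge 0$ and $-P\ge 0$ throughout, so that it suffices to deal with the semi-algebraic case and read off the algebraic case by ignoring the axioms in~\eqref{eqn:axiomssemi} and the positivity-of-squares rule. Since an NS/SA proof is, by definition, a single static identity of the form~\eqref{eqn:identity}, the whole argument reduces to exhibiting such an identity. First I would reduce to the twin-variable-free setting: by Lemma~\ref{lem:simulationmultiplication} any identity of the form~\eqref{eqn:identitywithoutbar}, using the linear polynomials $1-X_k$ in place of the twin variables $\bar X_k$, can be converted into one of the form~\eqref{eqn:identity} at the cost of adding multiples of the axioms $1-X_i-\bar X_i=0$, with only a $2^{O(n)}$ blow-up in size and an $O(n)$ increase in degree; similarly, by Lemma~\ref{lem:multilinearization}, I may freely pass between a polynomial and its multilinearization by adding multiples of the axioms $X_i^2-X_i=0$. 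So it is enough to produce a multilinear identity over $X_1,\dots,X_n$ only.

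The core of the argument is an interpolation over the $2^n$ Boolean points using the indicator polynomials $I_a$ defined just before Lemma~\ref{lem:multilinearization}. The key facts I would establish (or recall) are: (i) $\sum_{a\in\{0,1\}^n} I_a = 1$ as a polynomial identity, provable by an easy induction on $n$ (expand $X_n+(1-X_n)=1$); (ii) for each $a$, the multilinearization of $I_a\cdot(X_i-a(X_i))$ is $0$ whenever $i$ lies in the support used to build $I_a$ — more usefully, $I_a$ times any axiom polynomial reduces, modulo the ideal generated by the $X_i^2-X_i$, to $\pm I_a$ times a scalar that is exactly the value of that axiom at $a$; (iii) for any multilinear polynomial $R$, the multilinearization of $R\cdot I_a$ equals $R(a)\cdot I_a$. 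Granting these, here is the identity for the equational case: since $P=0$ holds at every Boolean point consistent with $H$, for each $a$ either $P(a)=0$ — in which case, multilinearizing $P\cdot I_a$, the axioms $X_i^2-X_i=0$ let me write $P\cdot I_a$ as a combination of the axioms plus $P(a)I_a=0$ — or some hypothesis $H_j$ has $H_j(a)\ne 0$, in which case $I_a = \big(H_j(a)^{-1}\big)\,H_j\cdot I_a$ modulo the axioms, so $P\cdot I_a$ is a combination of $H_j$ and the $X_i^2-X_i$ axioms. Summing over all $a$ and using $\sum_a I_a=1$ gives $P = \sum_i Q_i(X_i^2-X_i) + \sum_j G_j H_j$, which after folding the $1-X_i-\bar X_i$ corrections back in is exactly a static NS proof of $P=0$ from $H$. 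For the inequality case one uses the same decomposition of $1=\sum_a I_a$, but now when $P(a)\ge 0$ we need to certify that $P\cdot I_a$ is (modulo the axioms) $P(a)$ times a product of $X_j$'s and $(1-X_k)$'s, which is a legal SA term with positive scalar $P(a)$, while at points violating $H\cup I$ some hypothesis inequality $Q_j(a)<0$ (or equality $H_j(a)\ne 0$) lets us absorb $I_a$ with an appropriately-signed scalar; crucially $I_a$ itself is a product of the $X_j$ and $1-X_k$ axiom polynomials, so it is available as a multiplier in~\eqref{eqn:identity}.

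For the quantitative bounds: the degree of each multilinearized $I_a$ is at most $n$, multiplying by $P$ and then re-inserting twin variables via Lemma~\ref{lem:simulationmultiplication} costs one more level, giving degree at most $n+1$; there are $2^n$ points $a$, each contributing a term of size polynomial in $2^n$ and in $\mathrm{size}(P)$ and $\mathrm{size}(H)$ (the $2^n$ coming from the twin-variable reinsertion and the expansion of $I_a$), so the total size is polynomial in $2^n$ and the sizes of $H$ and $H\cup I$, as claimed. The step I expect to be the main obstacle — or at least the one needing the most care — is the bookkeeping that shows the resulting expression genuinely has the \emph{static} shape~\eqref{eqn:identity}: one must verify that every multiplier attached to a hypothesis or axiom is of the allowed form (a scalar, of the correct sign for inequalities, times a product of variables and $1-X_k$'s), rather than an arbitrary polynomial, and that the multilinearization manipulations from Lemma~\ref{lem:multilinearization} and the twin-variable eliminations from Lemmas~\ref{lem:technicalone}--\ref{lem:simulationmultiplication} can be threaded through without secretly using a multiplication-after-addition. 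Handling the sign conditions uniformly for equalities-as-two-inequalities versus genuine inequalities in $I$ is the fiddly part, but it is exactly what facts (i)--(iii) above are designed to make routine.
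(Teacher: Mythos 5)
Your proposal is correct and follows essentially the same route as the paper's proof: interpolation over the Boolean cube via the indicator polynomials $I_a$, coefficients of the form $P(a)/P_i(a)$ attached to a violated hypothesis at each point, the multilinearization identity of Lemma~\ref{lem:multilinearization} to absorb the $X_i^2-X_i$ corrections, and Lemmas~\ref{lem:technicalone}--\ref{lem:simulationmultiplication} to restore the twin variables, with the field-independent choice of scalars giving the NS case. The only cosmetic difference is that you argue via the partition of unity $\sum_a I_a = 1$ where the paper invokes uniqueness of the multilinearization (its Claim~\ref{cl:cl}), which is the same bookkeeping in different clothing.
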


\begin{proof}
  Both proofs are essentially the same; first we give the proof for Sherali-Adams
  and then indicate how to adapt it to Nullstellensatz.  We prove the theorem when
  $P$ is multilinear and then we adapt it to the general case.  Assume
  $P$ is multilinear and let $H \cup I = \{ P_1 \geq 0,\ldots,P_m \geq
  0\}$, where we have written each equation in $H$ as two
  inequalities.  For every assignment $a : \{X_1,\ldots,X_n\}
  \rightarrow \{0,1\}$, let $c_{a,0},c_{a,1},\ldots,c_{a,m}$ be the
  real numbers defined by cases as follows. If $P(a) \geq 0$, let
  $c_{a,i} = P(a)$ for $i = 0$ and $c_{a,i} = 0$ for $i \in [m]$. If
  $P(a) < 0$, let $i^*$ be the smallest element in $[m]$ such that
  $P_{i^*}(a) < 0$, which must exist by the hypothesis, and define
  $c_{a,i} = P(a)/P_{i}(a)$ for $i = i^*$ and $c_{a,i} = 0$ for each
  $i \in ([m]\cup\{0\})\setminus\{i^*\}$. Observe that in all cases
  $c_{a,i}$ is non-negative. In the first case because $P(a)$ was
  non-negative, and in the second case because both $P_{i^*}(a)$ and
  $P(a)$ were negative, so their ratio is positive.  The choice of
  these reals guarantees that
  \begin{equation}
  c_{a,0} + \sum_{i=1}^m c_{a,i} P_i(a) = P(a). \label{eqn:weights}
  \end{equation}
  We need the following claim.

  \begin{claim} \label{cl:cl}
    For every assignment $a$ and every $i \in [m]$, the polynomial
    $P_i(a) \cdot I_a$ is the multi\-linearization of $P_i \cdot I_a$.
    In addition, $\sum_a \big( c_{a,0} \cdot I_a + \sum_{i=1}^m c_{a,i}
    \cdot P_i(a) \cdot I_a \big) = P$.
  \end{claim}

  \begin{proof}
    Since the multilinearization is unique and the polynomial $P_i(a)
    \cdot I_a$ is multilinear, it suffices to show that $P_i(a) \cdot
    I_a$ and $P_i \cdot I_a$ agree on all assignments of values in
    $\{0,1\}$ to their variables. But this is easy: they both evaluate
    to $P_i(a)$, or both evaluate to $0$, depending on whether the
    assignment is $a$, or different from $a$, respectively.  For the
    second claim we use the same argument, and add the additional fact
    that $P$ is itself multilinear: the big sum over $a$ is a
    multilinear polynomial and, by~\eqref{eqn:weights}, it agrees with
    $P$ on all assignments of values in $\{0,1\}$ to its
    variables. Hence, by the uniqueness of the multilinearization, and
    since $P$ is multilinear, it is $P$ itself.
  \end{proof}

  \noindent Back to the proof, by the first part of Claim~\ref{cl:cl},
  for every assignment $a$ and every $i \in [m]$, there exist
  polynomials $Q_{a,i}^1,\ldots,Q_{a,i}^n$ according to
  Lemma~\ref{lem:multilinearization} that make the following identities
  hold:
  \begin{equation}
    P_i \cdot I_a + \sum_{j=1}^n Q_{a,i}^j \cdot (X_j^2 - X_j) = 
    P_i(a) \cdot I_a.
  \label{eqn:summand}
  \end{equation}
  We are ready to build up the proof of $P \geq 0$ from $P_1 \geq
  0,\ldots,P_m \geq 0$. We claim that the following identity holds:
  \begin{equation}
   \sum_{a} \Big({c_{a,0} \cdot I_a + 
    \sum_{i=1}^m c_{a,i} \cdot \Big({ P_i \cdot I_a + 
    \sum_{j=1}^n Q^j_{a,i} \cdot (X_j^2 - X_j)}\Big)}\Big) = P.
  \label{eqn:identitytwo}
  \end{equation}
  First we claim that the left-hand side can be converted into a valid
  SA proof (with multiplications by $X_j$'s and $1-X_j$'s, which can
  be simulated in our definition of Sherali-Adams as discussed in
  Lemma~\ref{lem:simulationmultiplication}). To see this, just reorder
  the terms and apply Lemma~\ref{lem:technicalone} to replace
  $Q^j_{a,i} \cdot (X_j^2 - X_j)$ by proper SA proofs. It remains to
  see that the identity~\eqref{eqn:identitytwo} holds; this will show that
  it is an SA proof of $P \geq 0$ from $P_1 \geq 0,\ldots,P_m \geq 0$.

  In order to see that~\eqref{eqn:identitytwo} holds, first use
  equation~\eqref{eqn:summand} to rewrite its left-hand side:
  \begin{equation}
    \sum_a \Big( {c_{a,0} \cdot I_a + \sum_{i=1}^m c_{a,i} \cdot P_i(a) \cdot I_a}\Big). 
  \end{equation}
  And now use the second part of Claim~\ref{cl:cl} to complete the
  proof when $P$ is multilinear.

  When $P$ is not multilinear, it suffices to apply the above argument
  to get its multilinearization $P^*$, and then apply the
  \emph{reverse} identity in
  Lemma~\ref{lem:multilinearization}. Indeed,
  \begin{equation}
  P^* - \sum_{i=1}^n Q_i \cdot (X_i^2 - X_i) = P.
  \end{equation}
  To turn this into a proper SA proof we need to use
  Lemma~\ref{lem:technicalone} again.

  For Nullstellensatz, the argument is the same except that, in order to handle
  arbitrary fields besides the real field $\mathbb{R}$, the
  coefficients $c_{a,i}$ need to be redefined. Let $H = \{P_1 =
  0,\ldots,P_m = 0\}$. If $P_i(a) = 0$, define $c_{i,a} = 0$ for all
  $i \in [m]$. If $P_i(a) \not= 0$, let $i^*$ be the smallest element
  in $[m]$ such that $P_{i^*}(a) \not= 0$, which must exist by
  hypothesis, and define $c_{a,i} = P(a)/P_i(a)$ for $i = i^*$ and
  $c_{a,i} = 0$ for $i \in ([m] \cup \{0\}) \setminus \{i^*\}$. This
  choice is well-defined over any field and
  guarantees~\eqref{eqn:weights}. The rest of the proof is the same.
\end{proof}

\subsection{Constraint satisfaction problem}\label{sec:csp}

There are many equivalent definitions of the constraint satisfaction problem. Here we use the definition in terms of homomorphisms. Below we introduce the necessary terminology. A concrete example will be developed in Section~\ref{sec:threecoloring} where we apply the method of reducibilities to the graph $k$-coloring problem for $k \geq 3$.

\paragraph{CSPs and homomorphisms.}
A \emph{relational vocabulary} $L$ is a set of symbols; each symbol has an associated natural number called its \emph{arity}. A \emph{relational structure} $\Bstruct$ over $L$ (or an $L$-\emph{structure}) is a set $B$, called a \emph{domain} together with a set of relations over~$B$. For each natural number $r$ and each relation symbol $R \in L$ of arity $r$, there is a relation in $\Bstruct$ of arity $r$ denoted $R(\Bstruct)$, i.e., $R(\Bstruct) \subseteq B^r$. Sometimes we call it an \emph{interpretation} of~$R$ in~$\Bstruct$. We say that a relational structure is finite if its domain is finite and it has finitely many non-empty relations. 

Let $\Bstruct$ and $\Bstruct'$ be $L$-structures, for some relational vocabulary $L$. A \emph{homomorphism} from $\Bstruct$ to $\Bstruct'$ is a function $h \colon B \rightarrow B'$, which preserves all the relations, that is, for every natural number $r$ and each relation symbol $R \in L$ of arity $r$, if $(b_1, \ldots, b_r) \in R(\Bstruct)$, then $(h(b_1), \ldots, h(b_r)) \in R(\Bstruct')$.

For a fixed $L$-structure $\Bstruct$, the constraint satisfaction
problem of $\Bstruct$, denoted CSP($\Bstruct$), is the following
computational problem: given a finite $L$-structure $\Astruct$, decide
whether there exists a homomorphism from $\Astruct$ to $\Bstruct$. If
the anwser is positive we call the instance $\Astruct$
\emph{satisfiable}; otherwise we call it \emph{unsatisfiable}. The
\emph{size} of an instance $\Astruct$ is the number of elements in its
domain plus the number of tuples in all its relations. Note that if
the vocabulary $L$ is fixed and finite, then the size of $\Astruct$ is
polynomial in the number of elements of its domain which we denote by
$|A|$.  In the context of CSP the structure $\Bstruct$ is often called
a \emph{constraint language} or a \emph{template}. We usually assume
that the constraint language $\Bstruct$ is finite.

\paragraph{Bounded-width.}

The existential $k$-pebble game is played on two relational structures
$\Astruct$ and $\Bstruct$ over the same vocabulary by two players
called Spoiler and Duplicator. The players are given two corresponding
sets of pebbles $\{a_1, \ldots, a_k \}$ and $\{b_1, \ldots, b_k
\}$. In each round Spoiler picks one of the $k$ pebbles $a_1, \ldots,
a_k$, say $a_i$, and puts it on an element of the structure
$\Astruct$. Duplicator responds by picking the corresponding pebble
$b_i$ and placing it on some element of the structure $\Bstruct$. For
simplicity, in any given configuration of the game let us identify a
pebble with the element of the structure that it is placed on. Spoiler
wins if at any point during the game the partial function $f : A
\rightarrow B$ defined by $f(a_i) = b_i$, for each pebbled element
$a_i$ of $\Astruct$, is either not well defined (because there exist
indices $i,j \in [k]$ of two pebbled elements such that $a_i = a_j$
but $b_i \not= b_j$), or is not a partial homomorphism. Otherwise, the
Duplicator wins.

We say that a finite relational structure $\Bstruct$ has \emph{width}
$k$ if, for every finite structure $\Astruct$ of the same vocabulary
as $\Bstruct$, if there is no homomorphism from $\Astruct$ to
$\Bstruct$, then Spoiler wins the existential $k$-pebble game on
$\Astruct$ and $\Bstruct$.  The structure $\Bstruct$ has \emph{bounded
  width} if it has width $k$ for some $k$. Structures of bounded width are exactly those structures for which CSP($\Bstruct$) can be solved by a local consistency algorithm~\cite{KolaitisVardi2000}.

\subsection{Propositional and polynomial encodings} \label{sec:encodings}

To reason about proof systems for CSPs we encode the fact that a finite
structure $\Astruct$ maps homomorphically to a finite
structure $\Bstruct$, over the same vocabulary, as a CNF or a system of polynomial inequalities
or/and equations. In the proofs we will use concrete fixed encodings
but our results hold for a whole class of encodings which we call
\emph{local}.

\paragraph{Local encodings.} First let us fix some notation. In the
context of propositional proof systems, for any sets $A$ and $B$ by
$V(A,B)$ we denote a set of propositional variables: for every $a \in
A$ and every $b \in B$ there is a variable $X(a,b)$ in the set
$V(A,B)$. Truth valuations of the variables in $V(A,B)$ and relations
on $A \times B$ have a natural one-to-one correspondence: a variable
$X(a,b)$ is assigned the truth value $1$ if and only if the pair
$(a,b)$ belongs to the relation.  Recall that a function $f$ from $A$
to $B$ is a relation $\{(a, f(a)) : a \in A\}$ on $A \times B$. Hence,
a homomorphism from an $L$-structure $\Astruct$ to an $L$-structure
$\Bstruct$ is a relation on $A \times B$.

Fix a finite relational vocabulary $L$ and a finite structure
$\Bstruct$ over $L$.

A \emph{propositional encoding scheme} $\E$ for $\CSP(\Bstruct)$ is a
mapping which assigns to every $L$-structure $\Astruct$ a set of
clauses $\E(\Astruct)$ over the variables in $V(A,B)$ in such a way
that there is a one-to-one correspondence between the truth valuations
of the variables in $V(A,B)$ satisfying $\E(\Astruct)$ and
the homomorphisms from $\Astruct$ to $\Bstruct$.  

In the context of algebraic and semi-algebraic proof systems we
additionally assume the presence of twin variables. For every $a \in
A$ and every $b \in B$ there is both the algebraic variable $X(a,b)$
and the algebraic variable $\bar{X}(a,b)$ in the set $V(A,B)$, and an
analogous bijective correspondence holds between relations of $A
\times B$ and those evaluations of the variables from $V(A,B)$ in
$\{0,1\}$ which satisfy the axioms from~(\ref{eqn:axioms}): a pair
$(a,b)$ belongs to the relation if and only if the variable $X(a,b)$
is assigned the value $1$ if and only if the variable $\bar{X}(a,b)$
is assigned the value $0$.

An \emph{algebraic encoding scheme} $\E$ over a field $F$ for
$\CSP(\Bstruct)$ is a mapping which assigns to every $L$-structure
$\Astruct$ a set of polynomial equations $\E(\Astruct)$ over the
variables in $V(A,B)$ in such a way that there is a one-to-one
correspondence between the evaluations of the variables form $V(A,B)$
in $\{0,1\}$ satisfying $\E(\Astruct)$ and the axioms
from~(\ref{eqn:axioms}) over $F$, and the homomorphisms
from~$\Astruct$ to~$\Bstruct$. Finally, a \emph{semi-algebraic
  encoding scheme} $\E$ for $\CSP(\Bstruct)$ is a mapping which
assigns to every $L$-structure $\Astruct$ a set of polynomial
inequalities $\E(\Astruct)$ over the variables in $V(A,B)$ in such a
way that there is a one-to-one correspondence between the evaluations
of the variables form $V(A,B)$ in $\{0,1\}$ satisfying $\E(\Astruct)$
and the axioms from~(\ref{eqn:axioms}) and~(\ref{eqn:axiomssemi}), and
the homomorphisms from~$\Astruct$ to~$\Bstruct$. Observe that every
algebraic encoding scheme over the real-field is also a semi-algebraic
encoding scheme.

An encoding scheme $\E$ is \emph{invariant under isomorphisms} if,
whenever $f : A \rightarrow A'$ is an isomorphism from an
$L$-structure $\Astruct$ to an $L$-structure $\Astruct'$, it holds
that $\E(\Astruct') = f(\E(\Astruct))$, where $f(\E(\Astruct))$ is
obtained from $\E(\Astruct)$ by substituting each variable $X(a,b)$ by
$X(f(a),b)$ (and each variable $\bar{X}(a,b)$ by $\bar{X}(f(a),b)$ if
necessary).

Next we define the key notion of \emph{local} encoding scheme. We need two pieces of notation. If the structure $\Astruct$
has a single element and each of its relations is empty, we
denote the encoding $\E(\Astruct)$ by $\E(a)$. If the structure
$\Astruct$ has a single non-empty relation $R({\Astruct})$ with a
single tuple $(a_1,\ldots,a_r)$ in it, and its domain is
$\{a_1,\ldots,a_r\}$, then we denote $\E(\Astruct)$ by
$\E(R(a_1,\ldots,a_r))$. Since the vocabulary $L$ is finite, up to
isomorphism there are only finitely many structures of one of the
above-mentioned two kinds. Therefore, for any relational structure
$\Bstruct$ over a finite vocabulary $L$ and any encoding scheme $\E$
that is invariant under isomorphisms, the size of encodings of the
form $\E(a)$ or $\E(R(a_1,\ldots,a_r))$ is bounded by a constant. We
call it the \emph{local bound} of the encoding scheme.

An encoding scheme $\E$ in \emph{local} if it is invariant under
isomorphisms and, for every $L$-structure $\Astruct$, the encoding
$\E(\Astruct)$ is a sum of $\E(a)$ over all $a \in A$ and
$\E(R(a_1,\ldots,a_r))$ over all $R \in L$ and $(a_1,\ldots,a_r) \in
R({\Astruct})$.  For our purposes all local encodings of the same kind
(i.e., propositional, algebraic or semi-algebraic) are essentially
equivalent, as formalized by the following result.

\begin{lemma}\label{lem:encodings}
  Let $\Bstruct$ be a finite structure over a finite vocabulary $L$,
  and let $(\E,\E')$, $(\F,\F')$ and $(\Gg,\Gg')$ be pairs of local
  encoding schemes for $\Bstruct$ that are propositional, algebraic
  and semi-algebraic, respectively. There exists a positive integer
  $p$ such that for every finite $L$-structure $\Astruct$ it holds
  that:
\begin{enumerate}  \itemsep=0pt
\item every clause in $\E'(\Astruct)$ has a resolution proof from $\E(\Astruct)$ of size bounded by $p$,
\item every equation in $\F'(\Astruct)$ has an NS proof from $\F(\Astruct)$ of size and degree bounded by $p$,
\item every inequality in $\Gg'(\Astruct)$ has an SA proof from $\Gg(\Astruct)$ of size and degree bounded by $p$.
\end{enumerate}
\end{lemma}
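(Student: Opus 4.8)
The plan is to use locality to reduce the statement to a bounded number of ``atomic'' cases, and there to invoke the quantitative completeness theorems of Sections~\ref{sec:completenessFrege} and~\ref{sec:completenessSA}. Call a finite $L$-structure \emph{atomic} if it is either the one-element structure all of whose relations are empty, or a structure whose single non-empty relation contains a single tuple $(a_1,\dots,a_r)$ and whose domain is $\{a_1,\dots,a_r\}$. Since $L$ is finite there are, up to isomorphism, only finitely many atomic structures, and each has a domain of size at most $r := \max_{R\in L}\mathrm{arity}(R)$; hence the encoding of an atomic structure $\Sstruct$ mentions at most $r\cdot|B|$ of the variables of $V(\mathrm{dom}(\Sstruct),B)$ --- twice that many counting twin variables --- and, by invariance under isomorphisms, has size at most the local bound of the scheme. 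So the number of variables, the number of clauses (resp.\ equations, inequalities) of $\E(\Sstruct)$ and $\E'(\Sstruct)$, and their sizes, are all bounded by constants depending only on $L$, $\Bstruct$ and the schemes, uniformly over atomic $\Sstruct$.

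First I would treat the atomic case. Fix an atomic structure $\Sstruct$. By the defining property of a propositional encoding scheme, the truth valuations over $V(\mathrm{dom}(\Sstruct),B)$ satisfying $\E(\Sstruct)$ are exactly those which --- under the natural identification of valuations with relations on $\mathrm{dom}(\Sstruct)\times B$ --- are homomorphisms from $\Sstruct$ to $\Bstruct$, and the same is true of $\E'(\Sstruct)$. Thus $\E(\Sstruct)$ and $\E'(\Sstruct)$ have the same models, so every clause of $\E'(\Sstruct)$ is a logical consequence of $\E(\Sstruct)$, and Theorem~\ref{thm:completeness}, applied with $\mathcal{C}$ the class of literals (so that $\mathcal{C}'$ is the class of clauses), produces for each clause $C\in\E'(\Sstruct)$ a $\mathcal{C}'$-Frege proof of $C$ from $\E(\Sstruct)$, and hence at no extra cost a resolution proof; since all the parameters of its quantitative bound are constants by the previous paragraph, this proof has size at most a constant independent of $\Sstruct$. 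For the algebraic and semi-algebraic schemes I would argue the same way with Theorem~\ref{thm:completeness_alg}: the $\{0,1\}$-evaluations satisfying $\F(\Sstruct)$ together with the axioms~\eqref{eqn:axioms} coincide --- both being in bijection with the homomorphisms $\Sstruct\to\Bstruct$ --- with those satisfying $\F'(\Sstruct)$ together with~\eqref{eqn:axioms}, so every equation of $\F'(\Sstruct)$ follows from $\F(\Sstruct)$ and~\eqref{eqn:axioms} on all $\{0,1\}$-evaluations, and Theorem~\ref{thm:completeness_alg} yields an NS proof of it of degree at most one plus the number of variables and of size again bounded by a constant. Since NS proofs may use the axioms~\eqref{eqn:axioms} for free, this is an NS proof from $\F(\Sstruct)$; the semi-algebraic pair $(\Gg,\Gg')$ is handled identically, with inequalities and with the axioms~\eqref{eqn:axiomssemi} included as well, which SA may also use for free.

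Then I would assemble the general case. Let $p$ be the maximum, over the finitely many atomic structures and the three settings, of the constants found in the atomic case. For an arbitrary finite $L$-structure $\Astruct$, locality says that $\E'(\Astruct)$ is the union, over $a\in A$ and over $(a_1,\dots,a_r)\in R(\Astruct)$ for $R\in L$, of the encodings $\E'(a)$ and $\E'(R(a_1,\dots,a_r))$, and each summand is the $\E'$-encoding of an atomic structure. Hence every clause $C\in\E'(\Astruct)$ lies in $\E'(\Sstruct)$ for some atomic piece $\Sstruct$ of $\Astruct$, for which the atomic case supplies a resolution proof of $C$ from $\E(\Sstruct)$ of size at most $p$; and $\E(\Sstruct)\subseteq\E(\Astruct)$ again by locality, so this is in particular a resolution proof of $C$ from $\E(\Astruct)$, still of size at most $p$. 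The same reasoning, with the corresponding constants, gives NS proofs of the equations of $\F'(\Astruct)$ from $\F(\Astruct)$ and SA proofs of the inequalities of $\Gg'(\Astruct)$ from $\Gg(\Astruct)$. I do not expect any genuine mathematical obstacle; the delicate points are bookkeeping ones --- getting the hypothesis sets right in the ``logical consequence'' claims, in particular noticing that in the algebraic and semi-algebraic settings the domain axioms~\eqref{eqn:axioms} (and~\eqref{eqn:axiomssemi}) must be, and harmlessly can be, included because they are built into the proof systems, and observing that invariance under isomorphisms is exactly what lets a single constant $p$ work for all pieces of all instances at once. All the real content is already packaged in the quantitative completeness theorems of Sections~\ref{sec:completenessFrege} and~\ref{sec:completenessSA}.
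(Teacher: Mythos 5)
Your proposal is correct and follows essentially the same route as the paper's proof: exploit locality to reduce to the finitely many atomic pieces $\E(a)$ and $\E(R(a_1,\ldots,a_r))$, observe that the two encodings of such a piece have the same satisfying assignments (both corresponding to homomorphisms of the piece into $\Bstruct$), and invoke the quantitative completeness theorems (Theorem~\ref{thm:completeness} for resolution, Theorem~\ref{thm:completeness_alg} for NS and SA) to get proofs of constant size and degree, with the constant uniform by isomorphism-invariance. Your treatment is if anything slightly more explicit than the paper's about the role of the axioms~\eqref{eqn:axioms} and~\eqref{eqn:axiomssemi} in the algebraic and semi-algebraic cases, which the paper dispatches with ``the proofs of 2 and 3 are analogous.''
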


\begin{proof}
For 1, let~$s$ and~$s'$ be the local bounds of $E$ and $E'$, respectively. Take a clause $C$ from $\E'(\Astruct)$. The clause $C$ belongs to a subset of $\E'(\Astruct)$ of the form $\E'(a)$ or $\E'(R(a_1,\ldots,a_r))$, so the size of $C$ is bounded by $s'$. Without loss of generality suppose that $C$ belongs to a set $\E'(R(a_1,\ldots,a_r))$. The corresponding subset $\E(R(a_1,\ldots,a_r))$ of $\E(\Astruct)$ has size at most $s$. The satisfying truth valuations for $\E(R(a_1,\ldots,a_r))$ and $\E'(R(a_1,\ldots,a_r))$ are the same. Therefore, since $C$ is an element of $\E'(R(a_1,\ldots,a_r))$, we have that $\E(R(a_1,\ldots,a_r))$ logically implies $C$. It follows from the quantitative completeness theorem for resolution (cf. Theorem~\ref{thm:completeness}) that the clause $C$ has a resolution derivation from $\E(R(a_1,\ldots,a_r))$ of size bounded by a function of~$s$ and~$s'$.

The proofs of 2 and 3 are analogous. The completeness theorem for Nullstellensatz and Sherali-Adams (cf.~Theorem~\ref{thm:completeness_alg}) needs to be used instead of Theorem~\ref{thm:completeness}.
\end{proof}

\paragraph{Three specific examples.}
The results of this paper hold for arbitrary local encoding
schemes. However, in the proofs it is often convenient to be specific.
We now introduce three concrete encoding schemes that, in addition,
are defined uniformly with respect to the template $\Bstruct$.

For every structures $\Astruct$ and $\Bstruct$ over the same vocabulary, let
$\CNF(\Astruct,\Bstruct)$  be a set of clauses with:
\begin{enumerate} \itemsep=0pt
\item a clause $\bigvee_{b \in B} X(a,b)$ for each
$a \in A$,
\item a clause $\overline{X(a,b_0)} \vee
  \overline{X(a,b_1)}$ for each $a \in A$ and $(b_0,b_1) \in B^2$
  with $b_0 \not= b_1$,
\item a clause $\bigvee_{i \in [r]} \overline{X(a_i,b_i)}$ for each
  natural number $r$, each $R \in L$ of arity~$r$, each
  $(a_1,\ldots,a_r) \in R({\Astruct})$, and each $(b_1,\ldots,b_r) \in
  B^r \setminus R({\Bstruct})$.
\end{enumerate}
Note that the mapping that to an $L$-structure $\Astruct$ assigns
$\CNF(\Astruct,\Bstruct)$ is a local encoding scheme for
$\CSP(\Bstruct)$. Since this definition is uniform with respect to
$\Bstruct$ we call it simply the $\CNF$ encoding scheme. We use it to
reason about propositional proof systems for $\CSP(\Bstruct)$.

There are two standard ways of encoding a clause into a system of
inequalities: multiplicatively and additively. These give rise to two
local encoding schemes which we use to reason about algebraic and
semi-algebraic proof systems in the context of CSP.  Specifically, the
multiplicative and additive encodings of a clause $C = \overline{X_1}
\vee \cdots \vee \overline{X_\ell} \vee X_{\ell+1} \vee \cdots \vee
X_k$ are the following equation and inequality, respectively:
\begin{align*} 
X_1 \cdots X_\ell \bar{X}_{\ell+1}
\cdots \bar{X}_k = 0 \;\;\;\;\text{ and }\;\;\;\;
\bar{X}_1
+ \cdots + \bar{X}_\ell + X_{\ell+1} + \cdots + X_k - 1 \geq 0.
\end{align*}

Let $\EQ(\Astruct,\Bstruct)$ be the system of polynomial equations
that are multiplicative encodings of the clauses in
$\CNF(\Astruct,\Bstruct)$, that is:
\begin{enumerate} \itemsep=0pt
\item $\prod_{b \in B} \bar{X}(a,b) = 0$ for each $a \in A$,
\item $X(a,b_0) X(a,b_1) = 0$ for each $a \in A$ and $(b_0,b_1) \in B^2$
with $b_0 \not= b_1$,
\item $\prod_{i=1}^r X(a,b_i) = 0$ for each natural number $r$, each
  $R \in L$ of arity $r$, each $(a_1,\ldots,a_r) \in R({\mathbb{A}})$,
  and each $(b_1,\ldots,b_r) \in B^r \setminus R({\mathbb{B}})$.
\end{enumerate}
The mapping that to an $L$-structure $\Astruct$ assigns
$\EQ(\Astruct,\Bstruct)$ is a local encoding scheme for
$\CSP(\Bstruct)$. Note that this scheme makes sense over any field.
We call it the $\EQ$ encoding scheme.  It is used in
Section~\ref{sec:ppdefinitions} to reason both about algebraic and
semi-algebraic proof systems, and in Section~\ref{sec:lowerbound}
while discussing lower bounds for SOS.

Similarly, let $\INEQ(\Astruct,\Bstruct)$ be a system of of linear
inequalities that are additive encodings of the clauses in
$\CNF(\Astruct,\Bstruct)$, that is:
\begin{enumerate} \itemsep=0pt
\item $\sum_{b \in B} X(a,b) - 1 \geq 0$ for each $a \in A$,
\item $\bar{X}(a,b_0) + \bar{X}(a,b_1) - 1 \geq 0$ for each $a \in A$ and
  $(b_0,b_1) \in B^2$ with $b_0 \not= b_1$,
\item $\sum_{i=1}^r \bar{X}(a,b_i) - 1 \geq 0$ for each natural number $r$,
  each $R \in L$ of arity $r$, each $(a_1,\ldots,a_r) \in
  R({\mathbb{A}})$, and each $(b_1,\ldots,b_r) \in B^r \setminus
  R({\mathbb{B})}$.
\end{enumerate}
The mapping that to an $L$-structure $\Astruct$ assigns
$\INEQ(\Astruct,\Bstruct)$ is a local encoding scheme for
$\CSP(\Bstruct)$. We call it the $\INEQ$ encoding scheme.  It is used
in Section~\ref{sec:upperboundinLS} to reason about semi-algebraic
proof systems.

In Section~\ref{sec:threecoloring} we will discuss one more local
semi-algebraic encoding scheme that was used
in~\cite{DBLP:conf/coco/LauriaN17} to prove PC lower bounds for graph
coloring.

\section{General proof complexity facts} \label{sec:generalproofcomplexityfacts}

Substitutions will play a central role in showing that certain
propositional and semi-algebraic proof systems behave well with
respect to the classical CSP reductions. In the case of propositional
proof systems we will consider substitutions of variables by
bounded-DNF formulas with a bounded number of terms, and in the case
of algebraic and semi-algebraic proof systems we will use
substitutions by polynomials with bounded degree and a bounded number
of monomials. We now prove some key technical lemmas regarding such
substitutions.

\subsection{Substitutions in Frege}

In the case of propositional proof systems, a substitution is a mapping from variables to formulas. Applying a
substitution to a formula means replacing all variables by the
corresponding formulas, simultaneously all at once. Since our formulas
are in negation normal form, it is implicit that the
result of applying the substitution $X \mapsto F$ to a negative
literal $\overline{X}$ is the formula dual to $F$, i.e.,
$\overline{F}$.

\begin{lemma} \label{lem:substitution}
Let $k$, $d$ and $m$ be positive integers, let $A$ be a $k$-term and
let $A^+$ be the result of replacing each variable in $A$ by a
(possibly different) $d$-DNF with at most $m$ many terms. Then $A^+$
is logically equivalent to a $k(d+m)$-DNF with at most $m^k d^{km}$
many terms.
\end{lemma}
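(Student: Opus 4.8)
We have a $k$-term $A$ (a conjunction of at most $k$ literals). Each variable occurring in $A$ is to be replaced by some $d$-DNF with at most $m$ terms. Since $A$ is in negation normal form, a negative literal $\overline{X}$ is replaced by $\overline{F}$ where $F$ is the $d$-DNF substituted for $X$. We want to show $A^+$ (the result of this substitution) is logically equivalent to a $k(d+m)$-DNF with at most $m^k d^{km}$ terms.

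**Key structure.** So $A^+ = \bigwedge_{i=1}^{k} G_i$ where each $G_i$ is either a $d$-DNF $F_i = \bigvee_{j=1}^{m_i} T_{ij}$ (for a positive literal, with each $T_{ij}$ a $d$-term, $m_i \le m$) or the complement $\overline{F_i}$ of such a DNF (for a negative literal). The complement $\overline{F_i}$ is a CNF: $\overline{F_i} = \bigwedge_{j=1}^{m_i} \overline{T_{ij}}$ where each $\overline{T_{ij}}$ is a clause with at most $d$ literals (the negation of a $d$-term).

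**The plan.** First, convert each $\overline{F_i}$ (a $d$-CNF with $\le m$ clauses) into an equivalent DNF by distributing: picking one literal from each of the $\le m$ clauses gives a term with $\le m$ literals, and there are $\le d^m$ such choices. So $\overline{F_i}$ is equivalent to an $m$-DNF with at most $d^m$ terms. After this step, $A^+$ is equivalent to a conjunction $\bigwedge_{i=1}^{k} H_i$ where each $H_i$ is a DNF: either a $d$-DNF with $\le m$ terms (positive case) or an $m$-DNF with $\le d^m$ terms (negative case). Uniformly, each $H_i$ is a $\max(d,m)$-DNF — but more usefully, each term of $H_i$ has at most $\max(d, m) \le d + m$ literals, and $H_i$ has at most $\max(m, d^m) \le m \cdot d^m$... actually let me just bound: each $H_i$ has at most $d^m \cdot m$ terms, hmm, I want to be careful to hit the stated bound. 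Let me instead say each $H_i$ is a DNF with at most $m^{\,?}$... The cleanest route: each $H_i$ has at most $\max(m, d^m)$ terms; to get a clean common bound, note $m \le m^1 d^{0} $ and $d^m \le m^0 d^m$, and both are $\le m\cdot d^m$ when... Let me defer the exact arithmetic and just aim for the product bound.

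**Distributing the outer conjunction.** Now apply distributivity to $\bigwedge_{i=1}^k H_i$: a term of the resulting DNF is obtained by choosing one term from each $H_i$ and conjoining them. The number of choices is $\prod_{i=1}^k (\text{number of terms of } H_i)$. In the worst case each factor is the larger of the two possibilities. The stated bound $m^k d^{km}$ is exactly $(m \cdot d^m)^k$, i.e.\ the bound one gets if each $H_i$ is treated as having at most $m d^m$ terms — which is a valid (loose) upper bound on both $m$ and $d^m$. Each such conjoined term has at most $k \cdot \max(d, m) \le k(d+m)$ literals (actually $k\max(d,m)$, and $\max(d,m)\le d+m$), giving a $k(d+m)$-DNF. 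Logical equivalence is preserved throughout since each transformation (De Morgan, distributivity) preserves logical equivalence.

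**Main obstacle.** There is no real mathematical obstacle — this is a routine distributivity-and-De-Morgan computation. The only thing requiring a little care is the bookkeeping: making sure the width bound $k(d+m)$ and term-count bound $m^k d^{km} = (md^m)^k$ are each met simultaneously, and in particular choosing the uniform per-$H_i$ bound $md^m$ so that it dominates both the positive case ($\le m$ terms, width $\le d$) and the negative case ($\le d^m$ terms, width $\le m$). I would present it as: (1) rewrite negative-literal substitutands via De Morgan and distribute each into an $m$-DNF with $\le d^m$ terms; (2) observe every $H_i$ is then a DNF with at most $md^m$ terms, each of width at most $d+m$; (3) distribute the $k$-fold conjunction to get a DNF with at most $(md^m)^k = m^k d^{km}$ terms of width at most $k(d+m)$.
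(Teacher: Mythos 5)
Your proof is correct and follows essentially the same route as the paper's: apply De Morgan to the negated $d$-DNFs and then distribute the conjunction over the disjunctions, counting terms. The only difference is bookkeeping --- the paper splits the conjuncts by polarity and tracks the numbers $p$ and $n$ of positive and negative literals to get $m^p d^{nm}$ terms of width $pd+nm$, while you use the uniform per-conjunct bound of $md^m$ terms of width at most $d+m$; both give the stated bounds $m^k d^{km}$ and $k(d+m)$.
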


\begin{proof}
Let $p$ and $n$ be the numbers of positive and negative literals in
$A$, respectively.  After applying the substitution, the $k$-term
becomes a conjunction of $p$ many $d$-DNFs and $n$ many negations of
$d$-DNFs, where each $d$-DNF has at most $m$ many terms.  Applying the
De Morgan rules to the negated $d$-DNFs, what we get is a formula of
the following schematic form:
\begin{equation}
\left({\bigwedge^{p} \bigvee^m \bigwedge^d}\right) \wedge 
\left({\bigwedge^{n} \bigwedge^m \bigvee^d}\right).
\label{eqn:substituted}
\end{equation}
In the left subformula in~\eqref{eqn:substituted}, distributing the
outer conjunction over the disjunction gives a disjunction of at most
$m^{p}$ many $p d$-terms. In the right subformula
in~\eqref{eqn:substituted}, distributing the two outer conjunctions
over the disjunction gives a disjunction of at most $d^{nm}$ many
$nm$-terms. Schematically:
\begin{equation}
\left({\bigvee^{m^{p}} \bigwedge^{p} \bigwedge^d}\right)
\wedge
\left({\bigvee^{d^{nm}} \bigwedge^{nm}}\right).
\label{eqn:distributed}
\end{equation}
Finally, in formula~\eqref{eqn:distributed}, distributing the outer
conjunction over the disjunctions gives a disjunction of $m^p d^{nm}$
many $(pd+mn)$-terms:
\begin{equation}
\bigvee^{m^p d^{nm}} \bigwedge^{pd+mn}.
\end{equation}
Using $p+n \leq k$ we get the result.
\end{proof}

\begin{lemma} \label{lem:replace} 
Fix any positive integers $q$, $d$, $m$ and $p$. 
Let $F$ and $G$ be sets of clauses with at most $q$ variables each, and
let $\sigma$ be a substitution of the variables of $F$ into $d$-DNFs
with at most $m$ many terms on the variables of $G$. For any
positive integers $k$, $s$ and any $t \geq 2$, if $F$ has a Frege refutation of depth $t$, bottom fan-in $k$, and size $s$, and for each formula in $F$ its substitution is a logical
consequence of at most $p$ many clauses from $G$, then $G$ has a Frege refutation of depth $t$, bottom fan-in $k(d+m)$, and size polynomial in $2^k$ and $s$.
\end{lemma}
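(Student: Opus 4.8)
The plan is to take the given Frege refutation of $F$, apply the substitution $\sigma$ to every formula in it, and verify that each inference step survives — turning the result into a legal Frege refutation of $G$ with the stated depth, bottom fan-in, and size bounds.

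\medskip

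\noindent\textbf{Step 1: Control the syntactic shape of the substituted proof.} Each formula in the refutation of $F$ is a $\Sigma_{t,k}$-formula on the variables of $F$. When we substitute each variable by a $d$-DNF with at most $m$ terms, a bottom $k$-term or $k$-clause becomes, after pushing negations in via De Morgan, a formula that Lemma~\ref{lem:substitution} tells us is logically equivalent to a $k(d+m)$-DNF (dually, a $k(d+m)$-CNF) with a bounded number of terms. Replacing each bottom subformula by this equivalent bounded-fan-in normal form, a $\Sigma_{t,k}$-formula becomes a $\Sigma_{t,k(d+m)}$-formula. So I would first fix, for each variable $X$ of $F$, the $d$-DNF $\sigma(X)$ and its dual, and define $\sigma(\varphi)$ for a general $\Sigma_{t,k}$-formula $\varphi$ by substituting and then renormalising the bottom level via Lemma~\ref{lem:substitution}; crucially this only changes the bottom two levels, so depth is preserved and bottom fan-in grows by the factor $(d+m)$. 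The size blowup at each formula is polynomial in $2^k$ and the original formula size (the term count $m^k d^{km}$ is a constant since $k,d,m$ are fixed), so summing over the proof the total size is polynomial in $2^k$ and $s$.

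\medskip

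\noindent\textbf{Step 2: Simulate each inference rule after substitution.} For the axiom $A \vee \overline A$: its substitution is $\sigma(A) \vee \overline{\sigma(A)}$ up to logical equivalence, which is a tautology on the variables of $G$, and by the quantitative completeness theorem (Theorem~\ref{thm:completeness}) applied in the class of $\Sigma_{t,k(d+m)}$-formulas it has a Frege proof of the right depth and bounded size — here I use that this class is closed under subformulas and complementation and its closure under disjunction is again within the depth-$t$, fan-in-$k(d+m)$ class, so Theorem~\ref{thm:completeness} yields a $\mathcal C'$-Frege proof. For the cut, conjunction-introduction, and weakening rules: the substituted premises logically imply the substituted conclusion (substitution preserves logical consequence), and again by Theorem~\ref{thm:completeness} there is a bounded-size Frege derivation of the substituted conclusion from the substituted premises within the relevant formula class. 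For the additional axioms coming from $F$: this is where the hypothesis ``$\sigma(\varphi)$ is a logical consequence of at most $p$ clauses from $G$'' enters — the substituted $F$-axiom is derivable from at most $p$ clauses of $G$, which are available as axioms in a refutation of $G$, again via Theorem~\ref{thm:completeness} with $m=p$ hypotheses each of size at most the local bound, giving size polynomial in fixed constants. Finally the substitution of the empty formula $\emptyformula$ is again $\emptyformula$, so the refutation still ends correctly.

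\medskip

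\noindent\textbf{Step 3: Assemble and bound.} Concatenating the bounded-size gadget derivations for all the (linearly many in $s$) steps of the original proof yields a Frege refutation of $G$ of depth $t$ and bottom fan-in $k(d+m)$, with total size a sum of $O(s)$ pieces each of size polynomial in $2^k$ and the original formula sizes, hence polynomial in $2^k$ and $s$. The main obstacle I expect is bookkeeping at the bottom levels of the formulas: making precise that ``renormalising the bottom two levels via Lemma~\ref{lem:substitution}'' really keeps the formula inside $\Sigma_{t,k(d+m)}$ for all $t \geq 2$ (the case $t=2$ is the delicate one, since the newly created two-level blocks at the bottom must merge correctly with the level above), and checking that the term-count constants from Lemma~\ref{lem:substitution} do not compound across the depth of the formula. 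A secondary subtlety is that Theorem~\ref{thm:completeness} must be invoked in exactly the right formula class at each step so that the gadget proofs stay at depth $t$ and bottom fan-in $k(d+m)$ rather than drifting to larger depth; this is handled by the remark following Theorem~\ref{thm:completeness} that the theorem applies to each $\Sigma_{t,k}$-level.
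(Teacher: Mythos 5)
There is a genuine gap in Step 2. You propose to justify the substituted logical axioms $\sigma(A)\vee\overline{\sigma(A)}$ and the substituted instances of cut, introduction of conjunction, and weakening by invoking the quantitative completeness theorem (Theorem~\ref{thm:completeness}) and claiming ``bounded-size'' gadget derivations. But the formulas occurring in the refutation of $F$ are not of bounded size: they can mention arbitrarily many variables (up to roughly $s$ of them), and after substitution they mention correspondingly many variables of $G$. Theorem~\ref{thm:completeness} only gives proofs of size polynomial in $2^n$ and $s^m$, where $n$ is the number of variables of the formulas involved; applied to a cut whose premises are large substituted proof lines, this is exponential in the instance size, not polynomial in $2^k$ and $s$. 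So the assembly in Step 3 does not yield the claimed size bound. (The hypotheses ``at most $q$ variables per clause'' and ``at most $p$ clauses of $G$'' only bound the parameters for the $F$-axioms, not for arbitrary proof lines.)

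The missing idea, which is how the paper proceeds, is that no re-derivation of the rule steps is needed at all: define the substitution operator $D\mapsto D^+$ exactly as in your Step 1 (substitute at the bottom, renormalise via Lemma~\ref{lem:substitution}, take complements for $k$-CNFs, and replace maximal depth-$\le 2$ subformulas in deeper formulas), and then observe that it commutes with disjunction and with complementation, hence with conjunction: $(D\vee E)^+=D^+\vee E^+$ and $(\overline{D})^+=\overline{(D^+)}$. Consequently the image of every axiom instance and of every application of cut, introduction of conjunction, or weakening is literally an instance of the same rule, so $D_1^+,\ldots,D_t^+$ is already a valid $\Sigma_{t,k(d+m)}$-Frege derivation except at the non-logical axioms. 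Theorem~\ref{thm:completeness} is then invoked only for the substituted clauses of $F$, where it is legitimate: each such clause has at most $q$ variables and its substitution is implied by at most $p$ clauses of $G$, so all parameters ($n$, $s$, $m$) in the completeness theorem are constants depending only on $q,d,m,p$, giving constant-size derivations from $G$. Your treatment of the $F$-axioms is correct and matches the paper; it is the completeness-based simulation of the logical axioms and inference rules that must be replaced by the commutation argument.
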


\begin{proof}
Fix some positive integers $q$, $d$, $m$ and $p$. Assume that $F$ has a $\Sigma_{t,k}$-Frege refutation, for some $k$ and $t \geq 2$. Let $\ell = k(d+m)$.

We now define an operator which maps formulas in $\Sigma_{t,k}$ to formulas in $\Sigma_{t,\ell}$. If a formula $D$ is a variable or the negation of a variable, then we define $D^+$ simply as the $d$-DNF or $d$-CNF obtained by applying the substitution $\sigma$ to~$D$. For a $k$-DNF $D$, we put $D^+$ to be the $\ell$-DNF that one gets from applying Lemma~\ref{lem:substitution} to each $k$-term in $D$ with the
substitution $\sigma$ and then taking the disjunction of the resulting
DNFs. For a $k$-CNF $D$, we define $D^+$ as the complement of $(\overline{D})^+$. In this case $D^+$ is an $\ell$-CNF. Clauses and terms are treated as $1$-DNFs and $1$-CNFs, respectively. Finally, if $D$ is a formula from $\Sigma_{t,k}$ of depth at least $3$, then we define $D^+$ to be the formula constructed by replacing each maximal subformula $E$ of $D$ of depth at most $2$ by $E^+$. By Lemma~\ref{lem:substitution}, the size of $D^+$ is at most polynomial in $2^k$ and the size of $D$.

If $D$ and $E$ are both formulas in $\Sigma_{t,k}$, then $(D \vee E)^+ = D^+ \vee
E^+$. Moreover, for any $D$ it holds that $(\overline{D})^+ = \overline{(D^+)}$. Hence also $(D \wedge E)^+ = D^+ \wedge
E^+$. This means that the result of applying our operator to the premises and conclusion of any of the rules of Frege is an instance of the same rule.

Let $D_1,D_2,\ldots,D_t$ be a $\Sigma_{t,k}$-Frege refutation of $F$ of
size $s$. In order to transform the sequence of formulas $D_1^+,D_2^+,\ldots,D_t^+$ into a valid $\Sigma_{t,\ell}$-Frege refutation of $G$ we need to prove that for each non-logical axiom $D_i$, the formula $D_i^+$ has constant size $\Sigma_{t,\ell}$-Frege proof from $G$.

Each non-logical axiom $D_i$ is a $q$-clause $C$ from
$F$. By assumption, the substitution $\sigma(C)$ and hence also $C^+$
is a logical consequence of at most $p$ many $q$-clauses of
$G$. Moreover, the size of $C^+$ is bounded by a function of $d$, $m$
and $q$, and the total size of the $p$ many $q$-clauses of $G$ that
imply $C^+$ is bounded by a function of $p$ and $q$. The
quantitative completeness theorem for $\Sigma_{t,\ell}$-Frege does the rest:
$D_i^+$ has a $\Sigma_{t,\ell}$-Frege derivation from $G$ of size
bounded by a function of $d$, $m$, $p$ and~$q$.
\end{proof}

\subsection{Substitutions in algebraic and semi-algebraic proof systems}

In the case of algebraic and semi-algebraic proof systems, a
substitution is a mapping from variables to polynomials. Applying a
substitution to an equation or inequality means replacing all
variables by the corresponding polynomials, simultaneously all at
once.

For every set of polynomial equations $F$, by $\mathrm{Eq}(F)$ we denote the
union of $F$ and all the axiom polynomial equations from~(\ref{eqn:axioms}) for the
variables in $F$, i.e., for each variable $X$ or $\bar{X}$ appearing
in one of the equations from $F$, we add to $F$ the polynomial equations $X^2-X = 0$, $\bar{X}^2
- \bar{X} = 0$ and $X+\bar{X}-1=0$.

\begin{lemma} \label{lem:replacePC} Fix any positive integers $d$,
  $m$, $p$, and $q$.  Let $F$ and $G$ be sets of polynomial equations
  of the form $P = 0$, where $P$ is a monomial of degree at most $q$
  with coefficient~$1$, and let $\sigma$ be a substitution of the
  variables of $F$ into polynomials on the variables of $G$ of degree
  at most $d$, with at most $m$ many monomials and every coefficient
  equal $1$. For $\mathcal{P}$ being the Nullstellensatz or Polynomial
  Calculus proof system over any field, and for any positive integers
  $k$ and~$s$, if $F$ has a $\mathcal{P}$ refutation of degree~$k$,
  size $s$, and for each equation in $\mathrm{Eq}(F)$ its substitution
  follows from at most $p$ many equations from $G$ on all evaluations
  of its variables in $\{0,1\}$ over the underlying field, then $G$
  has a $\mathcal{P}$ refutation of degree linear in $k$ and size
  polynomial in $2^k$ and $s$.
\end{lemma}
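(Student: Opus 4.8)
The plan is to mirror the proof of Lemma~\ref{lem:replace}, with the DNF-substitution apparatus replaced by its algebraic analogue. Fix $d,m,p,q$ as in the statement and, for a variable $X$ of $F$, write $\pi_X := \sigma(X)$ for the assigned polynomial, of degree at most $d$, with at most $m$ monomials, all coefficients equal to $1$. For a polynomial $P$ over the variables of $F$, let $P^{+}$ be the polynomial over the variables of $G$ obtained by simultaneously replacing each $X$ by $\pi_X$ and expanding into a sum of monomials. Two elementary facts drive the argument: $(P+Q)^{+} = P^{+} + Q^{+}$ and, for a variable $X$, $(PX)^{+} = P^{+}\cdot\pi_X$ (the scalar-multiplication case being trivial); and if $P$ has degree at most $k$ and size at most $s$, then $P^{+}$ has degree at most $dk$ and size polynomial in $2^k$ and $s$, since a monomial of degree $e\le k$ becomes a product of $e$ polynomials having at most $m$ monomials each, i.e.\ a sum of at most $m^e\le m^k = (2^k)^{O(1)}$ monomials of degree at most $de\le dk$. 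As $d$ is fixed, every degree produced below is linear in $k$.

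For Polynomial Calculus I would do a line-by-line translation of the refutation of $F$, applying $^{+}$ to every polynomial. An addition step remains a legal addition. A multiplication step, which in PC multiplies a line $P=0$ by a variable $X$ of $F$, becomes the task of deriving $P^{+}\cdot\pi_X = 0$ from $P^{+}=0$: writing $\pi_X=\sum_l M_l$ as a sum of at most $m$ monomials, I derive each $P^{+}M_l=0$ by at most $d$ successive multiplications of $P^{+}=0$ by the variables of $M_l$, and then add the at most $m$ results — a legal PC derivation of size polynomial in $2^k$ and $s$ and of degree at most $dk+d$. Every non-logical axiom of the $F$-refutation is an equation $E\in\mathrm{Eq}(F)$; by hypothesis $E^{+}$ follows on $\{0,1\}$-evaluations over the field from at most $p$ equations of $G$, and since those equations are monomials of degree at most $q$, the whole derivation touches only $O(1)$ variables, so the quantitative completeness theorem for Nullstellensatz (Theorem~\ref{thm:completeness_alg}) supplies a derivation of $E^{+}=0$ from them, using Boolean axioms from $\mathrm{Eq}(G)$, of constant degree and constant size. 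Splicing these in front of the corresponding lines yields a PC refutation of $G$ of degree linear in $k$ and size polynomial in $2^k$ and $s$.

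For Nullstellensatz the translation above would intersperse additions among the multiplications and thereby violate the requirement that all multiplications precede all additions, so I would instead argue statically. A degree-$k$, size-$s$ Nullstellensatz refutation of $F$ is an identity $\sum_i R_i P_i = 1$ with $P_i\in\mathrm{Eq}(F)$, $\deg(R_iP_i)\le k$, and $\sum_i(\mathrm{size}(R_i)+\mathrm{size}(P_i))=O(s)$; applying $^{+}$ gives $\sum_i R_i^{+}P_i^{+} = 1$. By completeness each $P_i^{+}$ equals $\sum_j Q_{ij}G_{ij}$ for suitable $G_{ij}\in\mathrm{Eq}(G)$ and $Q_{ij}$ of constant degree and size (again because only $O(1)$ variables are relevant), whence $\sum_{i,j}(R_i^{+}Q_{ij})G_{ij}=1$ is a Nullstellensatz refutation of $G$, of degree at most $dk+O(1)$ and of size dominated by $\sum_i\mathrm{size}(R_i^{+})$ times an $O(1)$ factor, hence polynomial in $2^k$ and $s$.

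I expect the main obstacle to be exactly the mismatch introduced by the multiplication rule: after substitution, a legal multiplication by a variable turns into a multiplication by a polynomial, which must be simulated, and for Nullstellensatz this simulation is incompatible with the ``multiplications-before-additions'' discipline, forcing the static-identity viewpoint for that system. The remaining ingredients — that the degree grows only by the fixed factor $d$ (so that ``linear in $k$'' is genuinely obtained, not $k^2$), that the size blow-up is only the expected $m^k=(2^k)^{O(1)}$ coming from expanding substituted monomials, and that the substituted axioms admit constant-size and constant-degree derivations because they involve only $O(1)$ variables — are routine bookkeeping once the quantitative completeness theorem is in hand.
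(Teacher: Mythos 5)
Your proposal is correct and follows essentially the same route as the paper: you use the quantitative completeness theorem to obtain constant-degree, constant-size NS derivations from $G$ of the substituted equations of $\mathrm{Eq}(F)$, handle Nullstellensatz by substituting directly into the static identity and splicing in those derivations, and handle Polynomial Calculus by a line-by-line simulation in which multiplication by a variable becomes at most $md$ multiplications followed by at most $m-1$ additions. The quantitative bookkeeping (degree $\le dk+O(1)$, size blow-up $m^k$ which is polynomial in $2^k$ for fixed $m$) also matches the paper's accounting.
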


\begin{proof}
Let us fix some positive integers
  $d$, $m$, $p$, and $q$. 
  Let $F$ and $G$ be sets of polynomial equations of the form $P = 0$, where $P$ is a monomial of degree at most $q$ with coefficient~$1$, and let $\sigma$ be a
  substitution of the variables of $F$ into
  polynomials on the variables of $G$ of degree at most~$d$, with at most $m$
  many monomials and every coefficient equal $1$. 
  If for each equation in $\mathrm{Eq}(F)$
  its substitution follows from at most $p$ many equations from $G$ on all evaluations of its variables in $\{0,1\}$, then 
by Theorem~\ref{thm:completeness_alg} for every equation in $\mathrm{Eq}(F)$
  its substitution has an NS derivation from $G$. The size and degree of this derivation are bounded by some constants which depend on $d$, $m$, $p$, and $q$.

Suppose that $\mathcal{P}$ is the Nullstellensatz proof system and assume that for some positive integers $k$ and $s$, the set of equations $F$ has an NS refutation of degree $k$, size $s$.
The refutation of $F$ is of the form
\begin{equation}
\sum_{i=1}^r P_i \cdot c_i \prod_{j \in J_i} X_j \prod_{k \in K_i} \bar{X}_k = -1,
\end{equation}
where $P_1,\ldots,P_r$
are polynomials such that the equation $P_i = 0$ is in the
set $\mathrm{Eq}(F)$, and
$c_1,\ldots,c_r$ are scalars. 
We substitute the variables in the above equality according to $\sigma$ and substitute the polynomials from the set $\mathrm{Eq}(F)$ by their NS derivations. This way we obtain an NS refutation of $G$ of degree linear in $k$ and size polynomial in $2^k$ and $s$.

Suppose that $\mathcal{P}$ is the Polynomial Calculus proof system and assume that for some positive integers $k$ and $s$, the set of equations $F$ has a PC refutation of degree $k$, size~$s$.
The PC refutation of $G$ goes as follows: first for each equation in $\mathrm{Eq}(F)$ we derive its substitution in the Nullstellensatz proof system, and then we simulate the subsequent steps of the refutation of $F$. Applications of addition and multiplication by scalars remain as they were, and applications of multiplication by variables are simulated in several steps. Since after applying the substitution to the variables they become polynomials of degree at most $d$, with at most $m$ many monomials
  and every coefficient equal $1$, we can simulate multiplication by a variable by at most $md$ multiplication steps and at most $m-1$ additions. The substitution of variables causes a blow-up in size which is polynomial in $2^k$, and the simulation additionally increases the size by a constant factor. Altogether, the degree of the PC refutation of $G$ described above is linear in $k$ and its size is polynomial in $2^k$ and $s$.
\end{proof}

For every set of polynomial inequalities $F$, by $\mathrm{Ineq}(F)$ we denote the
union of $F$ and all the axiom polynomial inequalities and equations from~(\ref{eqn:axiomssemi}) and~(\ref{eqn:axioms}) for the
variables in $F$, i.e., for each variable $X$ or $\bar{X}$ appearing
in one of the equations from $F$, we add to $F$ the polynomial equations $X^2-X = 0$, $\bar{X}^2
- \bar{X} = 0$, $X+\bar{X}-1=0$, and inequalities $X \geq 0$, $\bar{X} \geq 0$, $1-X \geq 0$, $1-\bar{X} \geq 0$, $1 \geq 0$.

\begin{lemma} \label{lem:replaceLS}
Fix any positive integers
  $d$, $m$, $p$, and $q$.
  Let $F$ and $G$ be sets of polynomial equations of the form $P = 0$, where $P$ is a monomial of degree at most $q$ with coefficient~$1$, and let $\sigma$ be a
  substitution of the variables of $F$ into
  polynomials on the variables of $G$ of degree at most $d$, with at most $m$
  many monomials and every coefficient equal $1$. For $\mathcal{P}$ being the
Sherali-Adams, Positive Semidefinite
Sherali-Adams, Sums-of-Squares, Lov\'asz-Schrijver or Positive Semidefinite 
Lov\'asz-Schrijver
  proof system, and for any positive integers $k$ and $s$,
  if $F$ has a $\mathcal{P}$ refutation of degree~$k$, size $s$, and for each inequality and equation in $\mathrm{Ineq}(F)$
  its substitution follows from at most $p$ many equations from $G$ on all evaluations of its variables in $\{0,1\}$, then $G$ has
  a $\mathcal{P}$ refutation of degree linear in $k$ and size polynomial in $2^k$ and $s$.
\end{lemma}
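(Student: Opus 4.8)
The plan is to mirror the structure of the Nullstellensatz and Polynomial Calculus argument from Lemma~\ref{lem:replacePC}, adapting the simulation to handle inequalities and the positivity-of-squares rule. First I would fix the positive integers $d$, $m$, $p$, $q$ and assume $F$ has a $\mathcal{P}$ refutation of degree $k$ and size $s$. By Theorem~\ref{thm:completeness_alg}, the hypothesis that the substitution $\sigma(E)$ of each inequality or equation $E \in \mathrm{Ineq}(F)$ follows from at most $p$ many equations of $G$ on all $\{0,1\}$-evaluations yields an SA derivation of $\sigma(E)$ from $G$ whose size and degree are bounded by a constant depending only on $d$, $m$, $p$, $q$; this is because the polynomials involved in $\mathrm{Ineq}(F)$ have degree at most $q$ and the substitution polynomials have degree at most $d$ with at most $m$ monomials, so $\sigma(E)$ has bounded size. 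Since SA derivations are in particular valid derivations in all of SA$^+$, SOS, LS and LS$^+$, these constant-size derivations are available in every proof system $\mathcal{P}$ under consideration.

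The core of the proof is then a step-by-step simulation. I would apply the substitution $\sigma$ to each line of the refutation of $F$, obtaining a sequence of polynomial (in)equalities over the variables of $G$, and verify that each inference rule is preserved. For the hypothesis and axiom lines (i.e., lines coming from $\mathrm{Ineq}(F)$), I splice in the constant-size SA derivations from the previous paragraph. For the addition rule and multiplication by a positive scalar, the substituted inference is again a legal instance of the same rule, since substitution is a ring homomorphism and commutes with these operations. For the positivity-of-squares axiom $\frac{}{R^2 \geq 0}$ in SOS, SA$^+$ or LS$^+$, the substituted version is $\sigma(R)^2 \geq 0$, which is again an instance of the same axiom. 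For multiplication by a variable $Y$ (the case arising in LS, LS$^+$, and SA, SA$^+$ via the special form of $Q$): after substitution, $Y$ becomes a polynomial $\sigma(Y)$ of degree at most $d$ with at most $m$ monomials and unit coefficients, and multiplication by $\sigma(Y)$ must be simulated by legal steps. Here I would write $\sigma(Y)$ as a sum of at most $m$ monomials, simulate multiplication by each monomial as a sequence of at most $d$ multiplications by a single variable (each a legal step since the coefficient is $1$), and then combine with at most $m-1$ additions. For SOS and SA$^+$, where multiplications must precede additions and the multipliers attached to inequality hypotheses must be squares, I would instead note that each monomial $\prod X_j \prod \bar X_k$ of $\sigma(Y)$ equals a product of $X_j^2$'s and $\bar X_k^2$'s modulo the axioms $X^2 - X = 0$ (using Lemma~\ref{lem:multilinearization} or direct manipulation), so multiplication by such a monomial is simulated by multiplication by a square together with corrections via the algebraic axioms; unit coefficients and non-negativity of the monomial on $\{0,1\}$ make this sound. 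The structural constraint that multiplications precede additions is maintained because the substitution is applied uniformly and the splicing of axiom derivations, which are themselves static, does not reorder rule applications within the body of the refutation.

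Finally I would collect the degree and size bounds. Each multiplication by a variable in the original refutation is replaced by at most $md$ multiplications by a single variable and at most $m-1$ additions, so nesting depth of multiplications grows by a constant factor, giving degree linear in $k$; substituting variables by degree-$\leq d$ polynomials also only multiplies the degree by a constant, and the inserted SA axiom derivations have constant degree, so overall degree is $O(k)$. For size, replacing each variable by a polynomial with at most $m$ monomials of degree $\leq d$ can blow up each line by a factor polynomial in $2^k$ (since lines have degree $\leq k$, hence at most $\binom{2n}{\leq k}$-ish many monomials, but crucially the blow-up per line is $\mathrm{poly}(m^k, \ldots)$ which is polynomial in $2^k$ once $m$, $d$ are fixed constants), and each simulated multiplication step and each spliced axiom derivation adds only a constant factor; summing over the $\mathrm{poly}(s)$ lines gives size polynomial in $2^k$ and $s$. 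I expect the main obstacle to be the SOS and SA$^+$ cases: there one must be careful that the static multiplications-before-additions structure is respected and that multiplying an inequality hypothesis by a monomial coming from $\sigma(Y)$ can be recast as multiplication by a square (possibly of lower degree) plus a combination of the algebraic axioms with the correct signs, so that the result remains a bona fide SOS/SA$^+$ derivation rather than merely a semi-algebraic one. Handling this cleanly — essentially showing that ``multiplication by a $0/1$-valued unit-coefficient monomial'' is admissible in these static systems at the cost of a constant degree increase — is the delicate part; once it is in place, LS and LS$^+$ follow more directly since they permit multiplication by a variable outright, and SA is the special case where all such multiplications come before additions, which is automatically inherited.
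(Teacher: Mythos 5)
Your proposal is correct and follows essentially the same route as the paper: constant-size SA derivations of the substituted members of $\mathrm{Ineq}(F)$ via Theorem~\ref{thm:completeness_alg}, substitution into the refutation, simulation of each multiplication by a substituted variable using at most $md$ variable multiplications and $m-1$ additions for LS/LS$^+$, and the same degree-$O(k)$, size-$\mathrm{poly}(2^k,s)$ accounting. The only presentational difference is that the paper treats SA, SA$^+$ and SOS by substituting directly into the static identities \eqref{eqn:identity} and \eqref{eqn:sosproof} (leaning on the preliminaries' remark that multiplication by a variable can be traded for multiplication by its square using $X^2-X=0$), which sidesteps the multiplications-before-additions bookkeeping that you handle explicitly with your square-modulo-Boolean-axioms argument.
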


\begin{proof}
Let us fix some positive integers
  $d$, $m$, $p$, and $q$. Let $F$ and $G$ be sets of polynomial equations of the form $P = 0$, where $P$ is a monomial of degree at most $q$ with coefficient~$1$, and let $\sigma$ be a
  substitution of the variables of $F$ into
  polynomials on the variables of $G$ of degree at most~$d$, with at most $m$
  many monomials and every coefficient equal $1$. If for an inequality or equation in $\mathrm{Ineq}(F)$
  its substitution follows from at most $p$ many equations from $G$ on all evaluations of its variables in $\{0,1\}$, then 
by Theorem~\ref{thm:completeness_alg} 
  such substitution has an SA derivation from $G$. Moreover, the size and degree of those derivations are bounded by some constants which depend on $d$, $m$, $p$, and $q$.

Suppose that $\mathcal{P}$ is the Sherali-Adams or Positive Semidefinite
  Sherali-Adams proof system and assume that for some positive integers $k$ and $s$, the set of equations $F$ has an SA (or SA$^+$) refutation of degree $k$, size $s$.
The refutation of $F$ is of the form
\begin{equation}
\sum_{i=1}^r P_i \cdot c_i \prod_{j \in J_i} X_j \prod_{k \in K_i} \bar{X}_k = -1,
\end{equation}
where $c_1,\ldots,c_r$ are reals and $P_1,\ldots,P_r$
are polynomials such that the equation $P_i = 0$ or the inequality $P_i \geq 0$ is in the
set $\mathrm{Ineq}(F)$, or they are squares of
polynomials when they are allowed (i.e., for SA$^+$). We substitute the variables in the above equality according to $\sigma$ and substitute the polynomials from the set $\mathrm{Ineq}(F)$ by their SA derivations. This way we obtain an SA (or SA$^+$) refutation of $G$ of degree linear in $k$ and size polynomial in $2^k$ and $s$.

Suppose that $\mathcal{P}$ is the Sum-of-Squares proof system and assume that for some positive integers $k$ and $s$,
the set of equations $F$ has an SOS refutation of degree $k$, size $s$.
The refutation of $F$ is of the form
\begin{equation}
\sum_{i=1}^r P_i \cdot S_i = -1,
\end{equation}
where $P_1,\ldots,P_r$
are polynomials such that either the equation $P_i = 0$ or the inequality $P_i \geq 0$ is in the
set $\mathrm{Ineq}(F)$, or they are
squares, and $S_1,\ldots,S_r$ are arbitrary polynomials. 
We substitute the variables in the above equality according to $\sigma$ and substitute the polynomials from the set $\mathrm{Ineq}(F)$ by their SA derivations. This way we obtain an SOS refutation of degree linear in $k$ and size polynomial in $2^k$ and $s$.

Suppose that $\mathcal{P}$ is the Lov\'asz-Schrijver or Positive Semidefinite 
Lov\'asz-Schrijver proof system and assume that for some positive integers $k$ and $s$, the set of equations $F$ has an LS (or LS$^+$) refutation of degree $k$, size $s$.
The refutation of $G$ goes as follows: first for each equation and inequality in $\mathrm{Ineq}(F)$ we derive its substitution in the Sherali-Adams proof system, and then we simulate the subsequent steps of the refutation of $F$. Applications of addition and multiplication by positive reals as well as positivity-of-squares when it is allowed (i.e. for LS$^+$) remain as they were, and applications of multiplication by variables are simulated in several steps. Since after applying the substitution to the variables they become polynomials of degree at most $d$, with at most $m$ many monomials
  and every coefficient equal $1$, we can simulate multiplication by a variable by at most $md$ multiplication steps and at most $m-1$ additions. The substitution of variables causes a blow-up in size which is polynomial in $2^k$, and the simulation additionally increases the size by a constant factor. Altogether, the degree of the LS (or LS$^+$) refutation of $G$ described above is linear in $k$ and its size is polynomial in $2^k$ and $s$.
\end{proof}

\subsection{Simulations}\label{sec:simulations}

At a later section we will need to use the known fact that both Polynomial Calculus and
Sherali-Adams efficiently simulate resolution. If $C$ is the clause $\bigvee_{i
  \in I} \overline{X_i} \vee \bigvee_{j \in J} X_j$, let 
\begin{equation}
M(C) := \prod_{i \in I} X_i \prod_{j \in J} \bar{X}_j.
\end{equation}
Note that, under the axioms~\eqref{eqn:axioms}, the clause $C$ is
encoded by the equation $M(C) = 0$ or, in the context of
semi-algebraic proofs, by the pair of inequalities $M(C) \geq 0$ and
$-M(C) \geq 0$. In Section~\ref{sec:encodings} we called this the
multiplicative encoding of~$C$.

\begin{lemma}\label{lem:simulation}
  If $C$ is a clause that has a resolution derivation of width $k$ and
  size $s$ from clauses $C_1,\ldots,C_m$, then the equation $M(C) = 0$
  has a PC proof over any field from $M(C_1) = 0,\ldots,M(C_m) =
  0$ and an SA proof from $M(C_1) = 0,\ldots,M(C_m) = 0$ of degree
  linear in $k$ and size polynomial in $s$ and $k$.
\end{lemma}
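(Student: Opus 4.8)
The plan is to prove the two halves separately, and they behave very differently: Polynomial Calculus, being a dynamic system, admits a direct line-by-line simulation, whereas Sherali-Adams is static and needs an extra idea to keep the degree under control. For \textbf{Polynomial Calculus} I would replay the resolution derivation one inference at a time, freely reusing derived equations. A weakening from $C$ to $C\vee\ell$ is a single multiplication of $M(C)=0$ by the variable of $M(C\vee\ell)$ not already present in $M(C)$. For a resolution step deriving $C\vee D$ from $C\vee X$ and $D\vee\bar X$ one has $M(C\vee X)=M(C)\bar X$ and $M(D\vee\bar X)=M(D)X$; multiplying $M(C)\bar X=0$ by the at most $k$ variables occurring in $M(D)$ one at a time, and $M(D)X=0$ by the variables of $M(C)$, and then adding, yields $M(C)M(D)(X+\bar X)=0$; subtracting from this the axiom $X+\bar X-1=0$ multiplied by the monomial $M(C)M(D)$ gives $M(C)M(D)=0$; and applying the axioms $X_i^2-X_i=0$, $\bar X_i^2-\bar X_i=0$ exactly as in Lemma~\ref{lem:multilinearization} reduces this to the multilinear monomial $M(C\vee D)=0$ (the tautological case is immediate from $X_i\bar X_i=0$). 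Each line in this block has degree at most $2k$ and boundedly many monomials, the block has $O(k)$ lines, and only additions, scalar multiplications by $\pm1$, and multiplications by variables are used, so over the whole derivation we obtain a PC proof over any field of degree linear in $k$ and size polynomial in $s$ and $k$.

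For \textbf{Sherali-Adams} the naive line-by-line recipe fails: in a static proof one would have to \emph{substitute} the derivation of $M(C\vee X)=0$ into a later resolution step after multiplying it through by a monomial such as $M(D\setminus C)$, and iterating this makes the degree grow with the depth of the derivation DAG rather than with $k$. Instead I would route through the additive encodings. First, from each $M(C_i)=0$ derive, in degree at most $k$, the additive encoding $\ell(C_i)-1\ge 0$ of the same clause, where $\ell(C_i)$ denotes the sum of the literal-polynomials of $C_i$: this uses the Weierstrass product inequality $\ell(C_i)-1+M(C_i)\ge 0$, which has an SA proof of degree at most $k$ obtained by telescoping and by multiplying the sign axioms by products of literal-polynomials, together with the hypothesis $M(C_i)=0$. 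Over the additive encodings resolution is \emph{nearly} additive: the inequalities for $C\vee X$ and $D\vee\bar X$ sum to $\ell(C)+\ell(D)-1\ge 0$, which equals $\ell(C\vee D)-1$ plus the correction $\sum_{\mu\in C\cap D}v(\mu)$. When $C$ and $D$ share no literals this is a pure addition, so the degree does not increase and the substitution costs nothing. When they do share literals, I would cancel the correction against the parents' inequalities multiplied by the monomial $\prod_{\mu\in C\cap D}(1-v(\mu))$, using the cube axioms $X_i^2-X_i=0$ (so that $v(\mu)\prod_{\mu'}(1-v(\mu'))\equiv 0$) and the sign axioms (so that products of at most $k$ literal-polynomials are provably nonnegative), and split off the complementary case $1-\prod_{\mu\in C\cap D}(1-v(\mu))$ using that $(\ell(C\vee D)-1)\,v(\mu)\ge 0$ is itself a consequence of the axioms alone. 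Finally, if the target $C$ is not the empty clause, recover $M(C)=0$ from $\ell(C)-1\ge 0$ in degree $O(k)$, using that $M(C)\ell(C)\equiv 0$ modulo the cube axioms and that $M(C)\ge 0$ follows from the sign axioms.

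The genuinely hard point is this last one for SA: one must arrange the simulation so that the $O(k)$ degree cost is paid only \emph{locally}, at each resolution step, and never accumulates along long paths of the derivation DAG (in particular along paths containing many shared-literal steps). This is precisely the content of the statement that bounded-width resolution translates into bounded-degree Sherali-Adams, and it is also where the sign axioms of SA are indispensable: the analogue over Nullstellensatz, which lacks them, is false, as the induction principle shows — it has resolution refutations of width $2$ but requires Nullstellensatz degree $\Omega(\log n)$, while its Sherali-Adams degree is $O(1)$. The PC half, by contrast, has no such obstacle, since being dynamic it may simply reuse each $M(E_t)=0$ once derived; there the only care needed is the bookkeeping for the multilinearization and checking that every intermediate polynomial stays of degree $O(k)$.
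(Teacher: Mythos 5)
Your Polynomial Calculus half is essentially correct and close in spirit to the paper's: the paper first symmetrizes each resolution step (deriving $D\vee E\vee X$ and $D\vee E\vee\overline{X}$ by weakening and then resolving), so that the resolvent's monomial $M(D\vee E)$ is obtained by adding $M(D\vee E)\bar X=0$ and $M(D\vee E)X=0$ to the lift of $1-X-\bar X=0$, which avoids your explicit multilinearization step but is the same idea; either way the degree stays $O(k)$ and the size polynomial in $s$ and $k$.

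The Sherali--Adams half, however, has a genuine gap, and the difficulty you flag at the end is exactly the part you do not resolve. Your plan repeatedly multiplies \emph{derived} inequalities by monomials (e.g.\ cancelling the shared-literal correction against ``the parents' inequalities multiplied by $\prod_{\mu}(1-v(\mu))$'', and recovering $M(C)=0$ by multiplying the derived $\ell(C)-1\ge 0$ by $M(C)$). In the static system of this paper all multiplications must precede all additions, so these are not legal SA steps; to legitimize them you must push every such multiplication back onto the original hypotheses and axioms, and doing this naively along the derivation DAG is precisely the degree accumulation you worry about. Saying that arranging the cost to be local ``is precisely the content of the statement that bounded-width resolution translates into bounded-degree Sherali-Adams'' is circular, since that statement is (essentially) the lemma being proved. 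The detour through additive encodings is also unnecessary: the paper stays in the multiplicative encoding and gives a direct static construction. Each line $D_i$ of the symmetrized derivation contributes one inequality $Q_i\ge 0$ that is either a hypothesis $-M(D_i)\ge 0$ or a lift of an axiom ($1-X\ge0$, $1-\bar X\ge0$, or $X+\bar X-1\ge0$) by the monomial $M(D_i)$ of degree at most $k$; each $Q_i$ is then weighted by the number of paths from $D_i$ to the conclusion in the DAG and everything is added in one shot. The weights make all intermediate clause monomials telescope away, leaving $-M(C)\ge 0$, and $M(C)\ge 0$ comes from lifting $1\ge 0$; the degree is at most $k+1$ and the coefficient bit-size stays polynomial. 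This single weighting/telescoping idea is what your proposal is missing, and with it the feared accumulation along long paths simply never arises.
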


\begin{proof}
  Assume that $C$ has a resolution derivation of width $k$ and size
  $s$.  Before we describe the conversions we need to apply a light
  pre-processing to the resolution derivation. Convert each resolution
  step deriving $D \vee E$ from $D \vee X$ and $E \vee \overline{X}$
  into a \emph{symmetric} resolution step in which first $D \vee E
  \vee X$ and $D \vee E \vee \overline{X}$ are derived by weakenings
  from $D \vee X$ and $E \vee \overline{X}$, respectively, and then $D
  \vee E$ is derived from these by resolving on $X$. Let
  $D_1,D_2,\ldots,D_t$ be the resulting resolution derivation. The
  proofs for Polynomial Calculus and for Sherali-Adams are quite
  different because the latter one is a static proof system while the
  former one is not.

  For Polynomial Calculus, we derive the equation $M(D_i) = 0$ for $i
  = 1,\ldots,t$, by induction on~$i$.  When $D_i$ is a clause from the
  set $\{C_1,\ldots,C_m\}$, there is nothing to do. Assume now that
  $D_i$ is derived by a symmetric resolution step from $D_j = D_i \vee
  X$ and $D_k = D_i \vee \overline{X}$, where $j,k < i$.  By induction
  hypothesis the equations $M(D_i) \bar{X} = 0$ and $M(D_i) X = 0$
  have already been derived. Add these equations to the lift of the
  axiom $1 - X - \bar{X} = 0$ by $M(D_i)$ to get the equation $M(D_i)
  = 0$. Next assume that $D_i$ is derived by a weakening step
  from~$D_j$, say $D_i = D_j \vee X$ or $D_i = D_j \vee
  \overline{X}$. By induction hypothesis the equation $M(D_j) = 0$ has
  already been derived. Lift this equation by $X$ or $\bar{X}$ as
  appropriate to get $M(D_i) = 0$. Clearly, the degree of this proof
  is linear in $k$ and the size is polynomial in $s$ and $k$.

  For Sherali-Adams the proof is quite different. For each $D_i$ in
  the resolution derivation we produce an inequality $Q_i \geq 0$ as
  follows. If $D_i$ is an initial clause, let $Q_i := -M(D_i)$ so that
  $Q_i \geq 0$ is one of the given inequalities.  If $D_i$ is obtained
  as a one-variable weakening step deriving $D_j \vee X$ from $D_j$,
  let $Q_i := M(D_j) - M(D_j \vee X)$ and derive $Q_i \geq 0$ by
  lifting the axiom $1-\bar{X} \geq 0$. If $D_i$ is obtained as a
  one-variable weakening step deriving $D_j \vee \overline{X}$ from
  $D_j$, let $Q_i := M(D_j) - M(D_j \vee \overline{X})$ and derive
  $Q_i \geq 0$ by lifting the axiom $1-X \geq 0$.  If $D_i$ is
  obtained as a symmetric resolution step deriving $D_i$ from $D_i
  \vee X$ and $D_i \vee \overline{X}$, let $Q_i :=  M(D_i \vee
  X) + M(D_i \vee \overline{X}) - M(D_i)$ and derive $Q_i \geq 0$ by lifting
  the axiom $\bar{X} + X - 1 \geq 0$. Next consider the DAG of the
  resolution derivation oriented from the initial clauses towards the
  conclusion $D_t$. We assign a weight $c_i$ to each $D_i$ in this DAG
  inductively: the conclusion $D_t$ gets weight $1$, and if all
  immediate successors of $D_i$ have already been assigned weights,
  then $D_i$ gets as weight the sum of the weights of its immediate
  successors. Next multiply each inequality $Q_i \geq 0$ by its weight
  $c_i$ and add them together. This could cause the coefficients in
  the SA proof to go exponentially big, but their bitsize is still
  polynomial. The result is an SA proof of $- M(C) \geq 0$ since the
  only monomial that survives is the conclusion. The reverse
  inequality $M(C) \geq 0$ follows from lifting the axiom $1 \geq
  0$. This gives an SA proof of $M(C) = 0$ as required. The degree of
  this proof is linear in $k$ and its size is polynomial in $s$ and $k$.
\end{proof}

\section{Closure under reductions} \label{sec:ppdefinitions}

Three types of reductions are often considered in the context of constraint satisfaction problems: a) pp-interpretability, b) homomorphic equivalence, c) addition of constants to a core. In this section we give their precise definitions and show that many proof systems behave well with respect to those types of reductions. 

\subsection{Reductions.} 

Let $\Bstruct$ and $\Bstruct'$ be finite relational structures over
finite vocabularies $L$ and $L'$, respectively. We say that the
structure $\Bstruct'$ is \emph{pp-definable} in the structure
$\Bstruct$ if it has the same domain and for every relation symbol $T
\in L'$ the relation $T(\Bstruct')$ is definable in~$\Bstruct$ by a
pp-formula. Recall that a \emph{primitive positive formula} over $L$,
or \emph{pp-formula}, is a first-order formula which uses only symbols
from $L$, equality, conjunction, and first-order existential
quantification. A relation $T \subseteq B^r$ is \emph{definable in
  $\Bstruct$ by a pp-formula}, or \emph{pp-definable in $\Bstruct$},
if there exists a pp-formula $\phi(x_1, \ldots, x_r)$ over $L$, with
free variables $x_1,\ldots,x_r$, such that
\begin{equation*}
T = \{ (b_1, \ldots, b_r) \in B^r : \Bstruct \models \phi(x_1/b_1,
\ldots, x_r/b_r)\}.
\end{equation*}

Pp-interpretability is a generalization of pp-definability which allows for changing the domain of a CSP language. Given two relational structures~$\Bstruct$ and~$\Bstruct'$ in finite
vocabularies~$L$ and~$L'$, respectively, we say that~$\Bstruct'$ is
\emph{pp-interpretable} in~$\Bstruct$ if there exist a positive
integer $n$ and a surjective partial function $f \colon B^n
\rightarrow B'$ such that the preimages of all relations in $\Bstruct'$
(including the equality relation) and the domain of $f$ are
pp-definable in $\Bstruct$. Showing that a CSP over a language $\Bstruct'$ pp-interpretable in the language $\Bstruct$ is not harder than the CSP of the language $\Bstruct$ itself~\cite{BulatovJeavonsKrokhin2005} is one of the fundamental results of the so-called \emph{algebraic approach} to constraint satisfaction problem, which led to many break-through results in the area.

Probably the simplest of all the constructions is the homomorphic equivalence. Structures~$\Bstruct$ and~$\Bstruct'$ over a vocabulary $L$ are \emph{homomorphically equivalent} if there exists a homomorphism from $\Bstruct$ to $\Bstruct'$ and a homomorphism from $\Bstruct'$ to $\Bstruct$. Obviously, if $L$-structures $\Bstruct$ and $\Bstruct'$ are homomorphically equivalent, then any $L$-structure $\Astruct$ maps homomorphically to $\Bstruct$ if and only if it maps homomorphically to $\Bstruct'$. So the CSP problems over both languages are the same.

Homomorphic equivalence allows us to focus on studying constraint satisfaction problems of well-behaved structures which in this context turn out to be those exhibiting little symmetry. A finite relational structure is called a \emph{core} if all its endomorphisms are surjective. It is known that every relational structure has a homomorphically equivalent substructure that is a core. Core structures can be extended by one-element unary relations which we refer to as \emph{constants}, without increasing the complexity of the language~\cite{BulatovJeavonsKrokhin2005}.

The importance of the constructions a), b) and c) follows from the fact that classes of
constraint languages closed under those constructions can be studied via the corresponding
algebras of polymorphisms, that is algebras of operations which preserve all the relations in the
language (for details see e.g. the survey~\cite{Barto2017PolymorphismsAH}). Here we show that bounded-DNF Frege, bounded-depth Frege, Frege, Polynomial Calculus, Sherali-Adams, Sums-of-Squares and Lov\'asz-Schrijver of bounded and unbounded degree behave well with respect to those three types of reductions. This 
allows us to apply (in Section~\ref{sec:lowerbound}) strong results based on the algebraic approach to CSP.

\subsection{Results}

Let us fix relational structures $\Bstruct$ and $\Bstruct'$ over finite vocabularies $L$ and $L'$, respectively, such that $\Bstruct'$ is obtained from $\Bstruct$ by a finite sequence of
constructions a), b) and c). In the following we recall the known polynomial-time
computable transformation that maps instances $\Astruct$ of
$\CSP(\Bstruct')$ to instances $\Astruct'$ of $\CSP(\Bstruct)$ such
that $\Astruct'$ is satisfiable if and only if $\Astruct$ is
satisfiable, and the size of $\Astruct'$ is linear in the size of
$\Astruct$. The notation is supposed to remind the reader that 
once a template $\Bstruct'$ is constructed from a template $\Bstruct$,
the transformation of instances goes in the other direction: from an
instance $\Astruct$ of
$\CSP(\Bstruct')$ we build an instance $\Astruct'$ of $\CSP(\Bstruct)$ 
satisfying the above mentioned conditions.

We prove that if $\E$ and $\E'$ are any local propositional encoding schemes for $\CSP(\Bstruct)$ and $\CSP(\Bstruct')$, respectively, then this transformation satisfies the
following:

\begin{theorem}\label{thm:closureboundeddepthFrege}
  For any positive integers $t$, $k$ and $s$, and any $L'$-structure
  $\Astruct$, if there is a Frege refutation of
  $\E(\Astruct')$ of depth~$t$, bottom fan-in $k$, and size
  $s$, then there is a Frege refutation of $\E'(\Astruct)$
  of depth $t$, bottom fan-in polynomial in $k$, and size polynomial
  in~$2^k$, $s$ and the size of $\Astruct$.
\end{theorem}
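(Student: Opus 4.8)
\noindent The plan is to reduce to a single reduction step, to the concrete $\CNF$ encoding schemes, and to the substitution machinery of Lemma~\ref{lem:replace}. Since $\Bstruct'$ is obtained from $\Bstruct$ by a sequence of constructions a), b), c) whose length depends only on $\Bstruct$ and $\Bstruct'$, and each such construction turns an instance of the target CSP into a linearly larger instance of the source CSP, it suffices to prove the statement when $\Bstruct'$ arises from $\Bstruct$ by one step and then compose the resulting transformations. Furthermore, by Lemma~\ref{lem:encodings} a refutation of $\E(\Astruct')$ of depth $t$, bottom fan-in $k$ and size $s$ can be turned into a refutation of $\CNF(\Astruct',\Bstruct)$ of the same depth, bottom fan-in $O(k)$, and size polynomial in $s$ and $|\Astruct'|=O(|\Astruct|)$, and symmetrically a refutation of $\CNF(\Astruct,\Bstruct')$ converts back into a refutation of $\E'(\Astruct)$ at the same cost; this is where the factor $|\Astruct|$ in the conclusion enters. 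So it is enough to refute $G:=\CNF(\Astruct,\Bstruct')$ given a refutation of $F:=\CNF(\Astruct',\Bstruct)$, noting that all clauses of $F$ and $G$ have a number of variables bounded in terms of $L$, $L'$ and $\Bstruct$.

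\noindent For one reduction step, the heart of the matter is to exhibit a substitution $\sigma$ mapping each variable $X(a',b)$ of $F$ (with $a'\in A'$, $b\in B$) to a $d$-DNF with at most $m$ many terms over the variables $X(a,b')$ of $G$ (with $a\in A$, $b'\in B'$), for constants $d$ and $m$ depending only on the reduction data, in such a way that the $\sigma$-image of every clause of $F$ is a logical consequence of at most a constant number $p$ of clauses of $G$. Lemma~\ref{lem:replace} then turns the given refutation of $F$ into a refutation of $G$ of the same depth $t$, bottom fan-in $O(k)$, and size polynomial in $2^k$ and $s$. The substitution is obtained by reading off, variable by variable, the \emph{easy direction} of the reduction: the polynomial-time map that, from a homomorphism $h\colon\Astruct\to\Bstruct'$, constructs a homomorphism $h'\colon\Astruct'\to\Bstruct$ (the direction witnessing that satisfiability of $\Astruct$ over $\Bstruct'$ implies satisfiability of $\Astruct'$ over $\Bstruct$). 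The crucial point is that this construction is \emph{local}: the value $h'(a')$ depends only on the values of $h$ on a bounded set of elements of $\Astruct$, through the fixed data of the reduction, so after fixing canonical choices wherever the construction makes a choice, $\sigma(X(a',b))$ is a bounded DNF over the corresponding $G$-variables.

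\noindent Concretely, in case b) (homomorphic equivalence, $\Astruct'=\Astruct$) we fix a homomorphism $g'\colon\Bstruct'\to\Bstruct$ and set $\sigma(X(a,b)):=\bigvee_{b'\colon g'(b')=b}X(a,b')$, a clause. In case c) (addition of constants to the core), $\Astruct'$ is $\Astruct$ with its constant constraints deleted together with a disjoint copy $\{\hat b:b\in B\}$ of $\Bstruct$ carrying all the relations of $\Bstruct$, in which $\hat c$ is identified with every element forced to take the value $c$; here $\sigma$ keeps every $\Astruct$-part variable and sends each copy variable $X(\hat b_0,b)$ to the truth constant $\fullformula$ if $b=b_0$ and to $\emptyformula$ otherwise. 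In case a) (pp-interpretability through $n$ and a surjective partial $f\colon B^n\to B'$), $\Astruct'$ replaces each $a\in A$ by coordinate elements $a^1,\dots,a^n$ and replaces the condition that $(a^1,\dots,a^n)$ lies in $\mathrm{dom}(f)$, as well as each $L'$-constraint, by the gadget spelled out by the corresponding fixed pp-formula, with fresh elements for the existential quantifiers; fixing once and for all a preimage $\bar u(b')=(u_1(b'),\dots,u_n(b'))\in\mathrm{dom}(f)$ of each $b'\in B'$ (possible since $f$ is onto) and a witnessing assignment for the existential quantifiers of each pp-formula, we set $\sigma(X(a^i,u)):=\bigvee_{b'\colon u_i(b')=u}X(a,b')$ and, for an auxiliary element $y$ of the gadget of a constraint $R'(a_1,\dots,a_r)$, we let $\sigma(X(y,b))$ be the $r$-DNF over the $X(a_j,\cdot)$ that expresses that $b$ is the appropriate coordinate of the fixed witness for $(h(a_1),\dots,h(a_r))$. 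In all three cases the degree and number of terms of these DNFs are bounded in terms of $\Bstruct$, $\Bstruct'$ and the fixed reduction data.

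\noindent It remains to check the hypothesis of Lemma~\ref{lem:replace}. By construction the $\sigma$-image of a clause of $\CNF(\Astruct',\Bstruct)$ lying in the local piece attached to an element $a'$ or a constraint $R(a'_1,\dots,a'_r)$ of $\Astruct'$ mentions only the $G$-variables $X(a_j,\cdot)$ for the constantly many elements $a_j$ of $\Astruct$ occurring in that piece; and the required implication holds because any truth assignment of those $G$-variables satisfying the constantly many local pieces of $\CNF(\Astruct,\Bstruct')$ that mention them fixes the values of a candidate $h$ on the relevant elements, and feeding these through $\sigma$ yields, by the canonical choices, a genuine partial homomorphism of $\Astruct'$ into $\Bstruct$ on the relevant elements, so the clause in question, which forbids a non-homomorphic pattern, is satisfied. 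Composing the constantly many single-step transformations preserves the depth $t$, multiplies the bottom fan-in by a constant, and blows up the size polynomially, and together with the two applications of Lemma~\ref{lem:encodings} this yields bottom fan-in polynomial in $k$ and size polynomial in $2^k$, $s$ and $|\Astruct|$; the case $t=1$ (resolution and $\Res(k)$) is handled by the same argument, using Lemma~\ref{lem:substitution} to keep the substituted bottom-level terms and clauses within the class of $O(k)$-DNFs. The step I expect to be the main obstacle is case a): verifying that the substituted gadget clauses follow from a bounded number of clauses of $G$ requires unwinding the fixed pp-definitions and checking that the canonical preimage-and-witness choices always produce a valid assignment to the gadget over $\Bstruct$.
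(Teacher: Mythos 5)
Your overall strategy---reduce to a single construction at a time via Lemma~\ref{lem:encodings} and the $\CNF$ encodings, then design a substitution into bounded DNFs and invoke Lemma~\ref{lem:replace}---is the same as the paper's, and your treatments of homomorphic equivalence and of adding constants essentially coincide with the paper's substitutions. The gap is in case a). Your argument rests on the blanket claim that the $\sigma$-image of every clause of $F$ is a logical consequence of a \emph{bounded} number of clauses of $G$, which is precisely the hypothesis needed for Lemma~\ref{lem:replace}. This fails as soon as the pp-formulas contain equality atoms between free variables (equivalently, as soon as one of the intermediate constructions adds the equality relation to $\Bstruct$). An equality atom linking, say, a coordinate of the first block to a coordinate of the second block of $\varphi_R$ forces the instance transformation to identify coordinate elements belonging to \emph{different} elements of $\Astruct$; since $\Bstruct$ need not contain any relation expressing equality, this identification (quotienting by the generated equivalence relation and choosing representatives) cannot be avoided. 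After the quotient, the substituted type-3 clause attached to a constraint of $\Astruct$ mentions the variables of the chosen representatives, whereas the clauses of $G$ local to that constraint mention the original elements; relating the two requires following an equality chain whose length can be $\Theta(|A|)$, so the substituted clause is \emph{not} implied by any bounded set of clauses of $G$, and Lemma~\ref{lem:replace} does not apply. The paper flags exactly this failure and handles it separately in Lemma~\ref{lem:eq}: it proves that equality propagation has constant-width resolution derivations of size $O(|A|)$ from $G$ and splices these derivations into the refutation directly; this is also where the factor ``size of $\Astruct$'' in the theorem's bound genuinely comes from, not only from the encoding-change step as your first paragraph suggests.

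You did single out case a) as the expected obstacle, but for a different reason (soundness of the canonical preimage-and-witness choices), and that part of your plan is fine: for gadget clauses involving no identifications, your local-implication argument is the same one the paper uses in its conjunction, existential-quantification and coordinate-blocking lemmas. What is missing is any mechanism for the equality/quotient step --- either an argument that long-chain identifications never arise (false in general, e.g.\ for path-like instances whose constraints share elements) or a replacement for the logical-consequence hypothesis of Lemma~\ref{lem:replace}, such as the paper's bounded-width, linear-size equality-propagation derivations.
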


As a special case of the above theorem, by taking $t=1$ we obtain that bounded-DNF Frege behaves well with respect to the classical CSP reductions:

\begin{corollary}
  For any positive integers $k$ and $s$, and any $L'$-structure
  $\Astruct$, if there is a $k$-DNF Frege refutation of
  $\E(\Astruct')$ of size
  $s$, then there is an $\ell$-DNF Frege refutation of $\E'(\Astruct)$
  of size polynomial
  in~$2^k$, $s$ and the size of $\Astruct$, where $\ell$ is polynomial in $k$.
\end{corollary}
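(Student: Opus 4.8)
The corollary is precisely the case $t = 1$ of Theorem~\ref{thm:closureboundeddepthFrege}: a Frege refutation of depth~$1$ and bottom fan-in $k$ is by definition a $\Sigma_{1,k}$-Frege, i.e.\ $k$-DNF Frege, refutation, and ``bottom fan-in polynomial in $k$'' is exactly the ``$\ell$-DNF with $\ell$ polynomial in $k$'' asserted here. So the plan is to prove Theorem~\ref{thm:closureboundeddepthFrege}. The one point to keep in mind for $t=1$ specifically is that substitution turns a $k$-DNF into a formula of larger syntactic depth, so the argument must collapse it back to a bounded-width DNF; this is handled uniformly below via Lemma~\ref{lem:substitution} together with the quantitative completeness theorem, which stay inside the class $\Sigma_{t,\ell}$.

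To prove the theorem I would first reduce to a single application of one of the constructions a), b), c) and then compose: depth is preserved at each step while the bottom fan-in and size grow only by the stated amounts, so a constant-length sequence of steps still yields bottom fan-in polynomial in $k$ and size polynomial in $2^k$, $s$ and the size of $\Astruct$. For a fixed construction producing $\Bstruct'$ from $\Bstruct$, recall the standard polynomial-time instance transformation $\Astruct \mapsto \Astruct'$ from $\CSP(\Bstruct')$ to $\CSP(\Bstruct)$: for pp-interpretability one replaces each element of the domain of $\Astruct$ by the $n$-tuple of new variables realising it and each constraint by the clauses coming from its pp-definition; for homomorphic equivalence and for the addition of constants to a core the transformation is more direct still. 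In all three cases this transformation is \emph{local}: the domain and each tuple of $\Astruct'$ is produced from a single element or tuple of $\Astruct$ by a rule depending only on $\Bstruct$ and the construction, so the size of $\Astruct'$, hence of $\E(\Astruct')$, is linear in the size of $\Astruct$.

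Given the depth-$t$, bottom-fan-in-$k$, size-$s$ Frege refutation of $\E(\Astruct')$, the plan is to apply Lemma~\ref{lem:replace} (when $t \ge 2$), or an entirely analogous substitution argument inside $k$-DNF Frege (when $t = 1$, the case the corollary needs), with $F = \E(\Astruct')$ and $G = \E'(\Astruct)$. For this I would define one substitution $\sigma$ sending each variable $X(a',b) \in V(A',B)$ to a $d$-DNF with at most $m$ terms over the variables $V(A,B')$ that spells out what $X(a',b)$ denotes on the $\Astruct$-side --- for pp-interpretability, a disjunction over the boundedly many preimage $n$-tuples of a conjunction of the corresponding membership variables; for the other two constructions, something simpler. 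The numbers $d$ and $m$ depend only on $\Bstruct$ and the construction, not on $\Astruct$, hence are absolute constants, so the lemma delivers a Frege refutation of $\E'(\Astruct)$ of depth $t$, bottom fan-in $k(d+m)$ (polynomial in $k$), and size polynomial in $2^k$, $s$ and the size of $\Astruct$, the last factor accounting for the axiom clauses of $\E'(\Astruct)$ that have to be re-derived.

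The main obstacle is the remaining hypothesis of Lemma~\ref{lem:replace}: that the $\sigma$-substitution of every clause of $\E(\Astruct')$ is a logical consequence of at most $p$ clauses of $\E'(\Astruct)$, for a bound $p$ depending only on $\Bstruct$ and the construction. This has to be checked construction by construction, and this is exactly where locality of the encoding schemes is used: a clause of $\E(\Astruct')$ sits in the bounded sub-encoding attached to one element or tuple of $\Astruct'$, which came from one element or tuple of $\Astruct$, which contributes a bounded sub-encoding to $\E'(\Astruct)$; one then checks that these constantly many clauses already force the substituted clause on every $\{0,1\}$-assignment, using the one-to-one correspondence between satisfying assignments of an encoding and homomorphisms together with the fact that $\Astruct'$ is satisfiable exactly when $\Astruct$ is. The rest --- composing the bounds over the bounded sequence of constructions, and checking through Lemma~\ref{lem:substitution} that the substituted formulas stay bounded-width DNFs with at most polynomially-in-$2^k$ many terms --- is routine bookkeeping.
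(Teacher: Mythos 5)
Your first step is exactly the paper's proof of this corollary: it is literally the case $t=1$ of Theorem~\ref{thm:closureboundeddepthFrege}, and your plan for the theorem itself (pass to a single construction at a time, substitute variables of $\E(\Astruct')$ by bounded DNFs over the variables of $\E'(\Astruct)$, invoke Lemma~\ref{lem:replace} together with the quantitative completeness theorem, and use locality of the encodings --- the paper routes this last point through Lemma~\ref{lem:encodings} and the concrete $\CNF$ encoding, which is only cosmetically different from working with general local encodings directly) is the same strategy the paper follows. Your explicit remark that Lemma~\ref{lem:replace} is stated for $t\geq 2$ and that the $t=1$ case needs the analogous argument carried out inside $\ell$-DNF Frege is a fair point and does not cause trouble.

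There is, however, one genuine gap in the verification step you declare to be ``checked construction by construction.'' You claim that the instance transformation is local in the sense that each element or tuple of $\Astruct'$ comes from a single element or tuple of $\Astruct$, and that consequently the $\sigma$-substitution of every clause of $\E(\Astruct')$ is a logical consequence of a \emph{constant} number $p$ of clauses of $\E'(\Astruct)$. This fails when the pp-definition involves the equality relation (which pp-formulas are allowed to use): there the transformation quotients the domain of $\Astruct$ by the equivalence relation generated by $E(\Astruct)$, an element $[a]_\equiv$ of $\Astruct'$ arises from an unboundedly long chain of elements of $\Astruct$, and after substituting each $X([a]_\equiv,b)$ by $X(a^*,b)$ for a chosen representative $a^*$, the substituted type-3 clause is \emph{not} entailed by any bounded set of clauses of $\E'(\Astruct)$ --- one must propagate equality along the chain. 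The paper flags exactly this failure and handles it by a separate mechanism: a resolution derivation of the substituted clause of constant width but size linear in $|A|$ (Claim~\ref{lem:propagation} inside the proof of Lemma~\ref{lem:eq}), which is also why the final size bound carries the factor ``size of $\Astruct$.'' Your sketch, as written, would silently apply the bounded-consequence hypothesis of Lemma~\ref{lem:replace} in a case where it is false; to close the gap you need either the paper's equality-propagation argument or some substitute for it, and your blanket locality claim about $\Astruct\mapsto\Astruct'$ must be weakened accordingly.
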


Notice also that a Frege refutation of depth $t$ and bottom fan-in $k$ can be seen as a Frege refutation of depth $t+1$ and bottom fan-in $1$. Therefore, Theorem~\ref{thm:closureboundeddepthFrege} implies the following statement, which will be crucial for obtaining lower bounds in Section~\ref{sec:lowerbound}:

\begin{corollary}\label{col:bdFrege}
 For any positive integers $t$ and $s$, and any $L'$-structure
  $\Astruct$, if there is a Frege refutation of
  $\E(\Astruct')$ of depth $t$ and size
  $s$, then there is a Frege refutation of $\E'(\Astruct)$
  of depth $t+1$ and size polynomial
  in~$s$ and the size of $\Astruct$.
\end{corollary}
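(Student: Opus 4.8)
The plan is to establish Theorem~\ref{thm:closureboundeddepthFrege} directly; Corollary~\ref{col:bdFrege} (and the preceding $k$-DNF corollary) then fall out by specialising, respectively, to bottom fan-in $k=1$ — a depth-$t$ refutation with unbounded bottom fan-in being a depth-$(t+1)$ refutation with bottom fan-in $1$, so that the promised bottom-fan-in and size blow-ups get absorbed into ``depth $t+1$'' and ``polynomial in $s$ and $|\Astruct|$'' — and to a single level, $t=1$. So all the work is in the theorem, and I would first reduce it to the case of a single reduction step. By hypothesis $\Bstruct'$ is obtained from $\Bstruct$ by a fixed finite sequence of constructions a), b), c), and the instance transformation $\Astruct\mapsto\Astruct'$ is the corresponding composition of single-step transformations, each of which blows up the instance size only by a constant factor. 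If every single construction sends a depth-$t$, bottom-fan-in-$k$, size-$s$ refutation of the source encoding to a depth-$t$, bottom-fan-in-$O(k)$, size-$\mathrm{poly}(2^{k},s,|\Astruct|)$ refutation of the target encoding, then iterating this a constant number of times keeps the depth at $t$, the bottom fan-in at $O(k)$ (hence polynomial in $k$), and the size at $\mathrm{poly}(2^{O(k)},s,|\Astruct|)=\mathrm{poly}(2^{k},s,|\Astruct|)$, since $2^{O(k)}$ is polynomial in $2^{k}$ and a polynomial of a polynomial is a polynomial. This reduces the theorem to a single construction.

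Next, using Lemma~\ref{lem:encodings} I would assume without loss of generality that both encoding schemes are the canonical $\CNF$ encodings: prepending to the given refutation of $\E(\Astruct')$ the constant-size resolution derivations of each of its clauses from $\CNF(\Astruct',\Bstruct)$ turns it into a refutation of $\CNF(\Astruct',\Bstruct)$ of the same depth and with the same parameters up to a polynomial factor, and symmetrically at the very end a refutation of $\CNF(\Astruct,\Bstruct')$ is converted into one of $\E'(\Astruct)$ by prepending the constant-size resolution derivations of the used $\CNF$-clauses from $\E'(\Astruct)$. So the real target is: from a depth-$t$, bottom-fan-in-$k$, size-$s$ refutation of $\CNF(\Astruct',\Bstruct)$, produce a depth-$t$, bottom-fan-in-$O(k)$, size-$\mathrm{poly}(2^{k},s,|\Astruct|)$ refutation of $\CNF(\Astruct,\Bstruct')$ whenever $\Bstruct'$ is obtained from $\Bstruct$ by one of a), b), c).

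For each of the three single-step constructions I would exhibit an explicit substitution $\sigma$ of the variables $X(v,b)$ of $\CNF(\Astruct',\Bstruct)$ by $O(1)$-DNFs with $O(1)$ many terms over the variables $Y(a,b')$ of $\CNF(\Astruct,\Bstruct')$, and then invoke Lemma~\ref{lem:replace} (for $t\ge 2$; for $t=1$ the same substitution argument via Lemma~\ref{lem:substitution} together with the quantitative completeness theorem for resolution, Theorem~\ref{thm:completeness}, keeps all formulas at depth $1$). Concretely: for homomorphic equivalence, where $\Astruct'=\Astruct$ and we are given homomorphisms $e\colon\Bstruct\to\Bstruct'$ and $e'\colon\Bstruct'\to\Bstruct$, set $X(a,b):=\bigvee_{b'\,:\,e'(b')=b}Y(a,b')$, a width-$\le|B'|$ clause; for the addition of constants to a core, reuse the standard reduction, which adjoins a bounded number of fresh ``pebble'' variables pinned, through the core's automorphisms, to an enumeration of $B$, and have $\sigma$ read off a coordinate's value accordingly; for pp-interpretability given by a surjective partial $f\colon B^{n}\to B'$ with pp-definable preimages, fix once and for all a section $g\colon B'\to B^{n}$ of $f$ and put $X(a^{i},b):=\bigvee_{b'\,:\,g_{i}(b')=b}Y(a,b')$ for the coordinate elements $a^{i}$ of $\Astruct'$, while for the auxiliary elements $z$ of $\Astruct'$ carried by the existential quantifiers of the pp-formulas, fix for each assignment of values in $B$ to the (finitely many) free coordinate-variables of the relevant pp-formula a witnessing value for $z$, and define $X(z,b)$ as the disjunction, over the assignments whose chosen witness is $b$, of the conjunction asserting that assignment — a bounded-depth, bounded-fan-in formula that distributes into an $O(1)$-DNF with $O(1)$ terms. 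In every case it then remains to check the hypothesis of Lemma~\ref{lem:replace}: that the $\sigma$-image of each (bounded-width) clause of $\CNF(\Astruct',\Bstruct)$ is a logical consequence of $O(1)$ clauses of $\CNF(\Astruct,\Bstruct')$ — a purely semantic, clause-type-by-clause-type verification using locality of both clause sets and the fact that $\sigma$ was built from the reduction data (the homomorphisms, the core automorphisms, or the section together with the chosen witnesses), guided by the observation that $\sigma$ carries satisfying assignments of one encoding onto satisfying assignments of the other.

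The step I expect to be the main obstacle is the pp-interpretability case, and specifically the auxiliary/existential variables: one must make the substitution for them genuinely a bounded-fan-in, bounded-term formula (rather than, say, a disjunction over exponentially many ``local scenarios''), and one must verify the local-implication condition for exactly those clauses of $\CNF(\Astruct',\Bstruct)$ that mention the auxiliary elements, which forces one to unwind the pp-definitions of the preimage relations and of $\mathrm{dom}(f)$ and match them against the corresponding constraints (and the coordinate-element constraints) of $\Astruct$. A secondary point that needs care, already flagged above, is the bookkeeping in the reduction to a single step: one must check that a constant number of compositions of blow-ups of the shape $k\mapsto O(k)$, $s\mapsto\mathrm{poly}(2^{k},s,|\Astruct|)$ stays polynomial in $2^{k}$ rather than degenerating to $2^{\mathrm{poly}(k)}$, which it does precisely because each step multiplies the bottom fan-in by a constant, not by a polynomial.
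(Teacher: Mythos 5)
Your derivation of the corollary itself is exactly the paper's: a depth-$t$ refutation is a depth-$(t+1)$, bottom-fan-in-$1$ refutation, and then Theorem~\ref{thm:closureboundeddepthFrege} with $k=1$ gives depth $t+1$ and size polynomial in $s$ and $|\Astruct|$. Likewise your reduction to the $\CNF$ encodings via Lemma~\ref{lem:encodings}, the decomposition into single construction steps, and the substitution-plus-Lemma~\ref{lem:replace} scheme for homomorphic equivalence, constants, and the coordinate/witness variables of a pp-interpretation all track the paper's proof of the theorem (your section $g\colon B'\to B^n$ plays the role of the chosen representatives, and your per-assignment witness choice is the paper's partition $F(b)$).

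The genuine gap is equality. Your plan rests on the blanket hypothesis of Lemma~\ref{lem:replace}: that the $\sigma$-image of every clause of $\CNF(\Astruct',\Bstruct)$ is a logical consequence of $O(1)$ clauses of $\CNF(\Astruct,\Bstruct')$. This fails as soon as the construction involves the equality relation, and in a pp-interpretation it does in general: the pp-formulas $\delta,\epsilon,\varphi_R$ may contain equality atoms that are not pp-definable from $\Bstruct$ without equality, and your one-shot instance transformation is silent about them. If you drop them the reduction is unsound; if you implement them by identifying instance elements (the quotient by the generated congruence), then a single substituted constraint clause is only entailed by a \emph{chain} of constraints of length up to $|A|$ linking an element to its chosen representative, so it is not a consequence of boundedly many clauses of the target encoding and Lemma~\ref{lem:replace} simply does not apply. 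The paper flags exactly this failure and handles equality by a separate, non-local argument (the equality-propagation claims inside Lemma~\ref{lem:eq}): the substituted clause gets a resolution derivation from $\CNF(\Astruct,\Bstruct')$ of \emph{constant width} but size $\Theta(|A|)$, and these derivations are spliced directly into the refutation, multiplying the size by $O(|A|)$ while leaving depth and bottom fan-in untouched (and, for the algebraic systems, this is converted via Lemma~\ref{lem:simulation}). Your proposal has no counterpart to this step, and without it the pp-interpretability case does not go through. A secondary, smaller soft spot is the constants-to-a-core case, where your ``pebble variables pinned through the core's automorphisms'' sketch is not yet a reduction with the stated properties; the paper instead adds $B$ to the instance's domain and substitutes the variables $X(b,b')$ by truth constants, using the core property only for the correctness of the instance transformation.
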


One more consequence of Theorem~\ref{thm:closureboundeddepthFrege} concerns proofs in Frege proof system without any bounds on the depth. Corollary~\ref{col:bdFrege} above immediately implies that  Frege is well-behaved with respect to the classical CSP reductions, that is:

\begin{corollary}
 For any positive integer $s$, and any $L'$-structure
  $\Astruct$, if there is a Frege refutation of
  $\E(\Astruct')$ of size
  $s$, then there is a Frege refutation of $\E'(\Astruct)$
  of size polynomial
  in~$s$ and the size of $\Astruct$.
\end{corollary}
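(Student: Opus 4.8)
The plan is to deduce this directly from Corollary~\ref{col:bdFrege}, exploiting the fact that for Frege a bound on the \emph{size} of a refutation is automatically a bound on its \emph{depth}. First I would take an arbitrary Frege refutation $\pi$ of $\E(\Astruct')$ of size~$s$. Since the size of $\pi$ is the sum of the sizes of the formulas occurring in it, every such formula has size at most~$s$, and, with the conventions on $\wedge$ and $\vee$ fixed in Section~\ref{sec:preliminaries} (associativity, commutativity, idempotency, and the absorption/neutrality rules for $\emptyformula$ and $\fullformula$), a formula of size at most~$s$ has nesting depth at most~$s$. A straightforward induction on the depth then shows that every negation normal form formula of depth at most~$s$ belongs to the class $\Sigma_{s+1,1}$: after collapsing consecutive occurrences of the same connective one obtains an alternating formula of no larger depth, and treating every bottom-level literal as a trivial $1$-term one sees that such a formula is in $\Sigma_{s+1,1}$ (single literals already lie in $\Sigma_{1,1}$, so the $\Pi$-case is absorbed). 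In particular the clauses of $\E(\Astruct')$, being in $\Sigma_{1,1}\subseteq\Sigma_{s+1,1}$, cause no problem. Hence $\pi$ is a $\Sigma_{s+1,1}$-Frege refutation of $\E(\Astruct')$; in the terminology of this paper it is a Frege refutation of $\E(\Astruct')$ of depth $s+1$, bottom fan-in~$1$, and size~$s$.

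Now I would simply invoke Corollary~\ref{col:bdFrege} with $t := s+1$: it yields a Frege refutation of $\E'(\Astruct)$ of depth $s+2$ and size polynomial in~$s$ and in the size of $\Astruct$. Forgetting the depth restriction, this is in particular an (unrestricted) Frege refutation of $\E'(\Astruct)$ of size polynomial in~$s$ and the size of $\Astruct$, which is exactly the assertion.

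There is no genuine obstacle here; the one subtlety worth recording is why the argument must be routed through Corollary~\ref{col:bdFrege} rather than through Theorem~\ref{thm:closureboundeddepthFrege} directly. If one instead regarded $\pi$ as a bounded-depth refutation whose bottom fan-in can be as large as~$s$, the factor $2^k$ appearing in the size bound of Theorem~\ref{thm:closureboundeddepthFrege} would become $2^s$, which is not polynomial in~$s$. Viewing each bottom-level literal as a trivial $1$-term keeps the bottom fan-in equal to~$1$ at the harmless cost of raising the depth parameter by one, and this is precisely what makes the final size bound polynomial; the blow-up in depth is irrelevant since the target proof system is unrestricted Frege.
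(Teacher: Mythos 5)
Your proof is correct and spells out exactly the argument the paper intends when it says the result follows ``immediately'' from Corollary~\ref{col:bdFrege}: bound the depth of the refutation by its size, view it as a $\Sigma_{s+1,1}$-Frege refutation of bottom fan-in~$1$, and apply that corollary with $t = s+1$, relying on the (implicit but true) fact that its polynomial size bound is uniform in $t$ because the bottom fan-in is fixed at~$1$. Your aside about why invoking Theorem~\ref{thm:closureboundeddepthFrege} directly with bottom fan-in as large as~$s$ would fail is precisely the right explanation of why Corollary~\ref{col:bdFrege} is the lemma to route through.
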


In the case of algebraic proof systems, if $\E$ and $\E'$ are any
local algebraic encoding schemes over a field $F$ for $\CSP(\Bstruct)$
and $\CSP(\Bstruct')$, respectively, we show that:

\begin{theorem}\label{thm:closurealgebraic}
  For any positive integers $k$ and $s$, and any $L'$-structure
  $\Astruct$, if there is a PC refutation over $F$ of
  $\E(\Astruct')$ of degree~$k$ and size $s$, then there is a
  PC refutation over $F$ of $\E'(\Astruct)$ of degree
  linear in $k$ and size polynomial in~$2^k$, $s$ and the size of
  $\Astruct$.
\end{theorem}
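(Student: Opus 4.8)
The plan is to reduce Theorem~\ref{thm:closurealgebraic} to the substitution lemma for Polynomial Calculus, Lemma~\ref{lem:replacePC}, applied once per reduction step. By Lemma~\ref{lem:encodings} it suffices to prove the statement with the explicit scheme $\EQ$ in place of the arbitrary local schemes $\E$ and $\E'$: a PC refutation of $\E(\Astruct')$ is first turned into a PC refutation of $\EQ(\Astruct',\Bstruct)$ by splicing in, for each hypothesis it uses, the bounded-size bounded-degree NS derivation of that hypothesis from $\EQ(\Astruct',\Bstruct)$, and symmetrically a PC refutation of $\EQ(\Astruct,\Bstruct')$ yields one of $\E'(\Astruct)$ with the degree up by at most a constant factor and the size up by a polynomial. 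Moreover, since $\Bstruct'$ is obtained from $\Bstruct$ by a \emph{fixed} finite sequence of constructions a), b), c), and the bound ``degree linear in $k$, size polynomial in $2^k$, $s$ and the size of the instance'' is closed under composition of a constant number of such transformations (and the instance transformation $\Astruct\mapsto\Astruct'$ is itself the composition of the one-step ones), it is enough to treat a single construction of each of the three kinds.

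For a single construction I would write down the one-step instance transformation explicitly and read off from it a substitution $\sigma$ sending each variable $X(v,b)$ of $\EQ(\Astruct',\Bstruct)$ to a polynomial over the variables $X(a,b')$ of $\EQ(\Astruct,\Bstruct')$ with all coefficients $1$, bounded degree and a bounded number of monomials, so that Lemma~\ref{lem:replacePC} applies. For homomorphic equivalence, $\Astruct'=\Astruct$ and, picking a homomorphism $e'\colon\Bstruct'\to\Bstruct$, one sets $X(a,b)\mapsto\sum_{b'\in(e')^{-1}(b)}X(a,b')$, with the twin variable sent to the complementary sum. For a pp-interpretation through $f\colon B^n\to B'$, a variable $a$ of $\Astruct$ corresponds in $\Astruct'$ to its $n$ coordinate variables together with auxiliary variables coming from the existential quantifiers of the defining pp-formulas; after fixing once and for all a section $g\colon B'\to B^n$ of $f$ and, for each quantifier, a canonical witness as a function of the free variables, one sends a coordinate variable to the appropriate linear form in the $X(a,b')$ and an auxiliary variable to the appropriate bounded-degree sum of products of $X(a_i,b'_i)$. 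For adding constants to a core, $\Astruct'$ extends $\Astruct$ by a rigid copy of $\Bstruct$ on fresh variables $z_1,\dots,z_m$ and replaces each constant-constrained variable of $\Astruct$ by the matching $z_i$; here $\sigma$ is the identity on the surviving variables and sends $X(z_i,b)$ to $1$ if $b=b_i$ and to $0$ otherwise. In every case the remaining work is the hypothesis of Lemma~\ref{lem:replacePC}: for each axiom of $\mathrm{Eq}(\EQ(\Astruct',\Bstruct))$---the Boolean axioms, the ``at most one value'' and ``at least one value'' clauses of each variable, and the constraint clauses---its $\sigma$-image must be a consequence over $\{0,1\}$ of boundedly many axioms of $\EQ(\Astruct,\Bstruct')$ together with the Boolean axioms. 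This is where one uses that $\sigma$ comes from a genuine structural map: that homomorphisms preserve relations (so in case b) a tuple of $B'$-values surviving in the expansion of a substituted constraint clause and mapping into $R(\Bstruct)$ under $e'$ must already be in $R(\Bstruct')$, hence yields an axiom of $\EQ(\Astruct,\Bstruct')$), and that $\Bstruct$ is a core (so in case c) any endomorphism fixing the rigid copy is an automorphism, which is exactly what makes the constant-valued substitution sound). With these consequence conditions in hand, Lemma~\ref{lem:replacePC} gives the one-step bound, and composing the constantly many steps and undoing the $\EQ$-reduction completes the proof.

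The step I expect to be the main obstacle is ``addition of constants to a core''. Unlike the other two it is not a purely local gadget replacement: soundness of collapsing the new unary constraints rests on the global fact that $\Bstruct$ is a core, so one has to introduce the rigid copy of $\Bstruct$ inside $\Astruct'$, argue that every homomorphism from $\Astruct'$ to $\Bstruct$ restricts to an automorphism on it, and then check that the constant-valued substitution is compatible with all substituted constraint clauses---most delicately, that a substituted constraint clause of $\EQ(\Astruct',\Bstruct)$ some of whose coordinates are frozen to constants is implied by the corresponding constraint clause of $\EQ(\Astruct,\Bstruct')$ together with the unary-constraint clauses of $\EQ(\Astruct,\Bstruct')$. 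The pp-interpretation case is conceptually routine but bookkeeping-heavy, mainly because of the auxiliary existential-quantifier variables and the need to keep their canonical witnesses coherent across all clauses in which they occur.
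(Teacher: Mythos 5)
Your overall architecture --- reduce to the $\EQ$ scheme via Lemma~\ref{lem:encodings}, treat one construction at a time, and discharge each step through the substitution lemma (Lemma~\ref{lem:replacePC}) --- is the same as the paper's, and your substitutions for homomorphic equivalence and for added constants are essentially the ones used there. But there is a genuine gap in the blanket claim that ``in every case'' the $\sigma$-image of each axiom of $\mathrm{Eq}(\EQ(\Astruct',\Bstruct))$ is a consequence over $\{0,1\}$ of \emph{boundedly many} equations of $\EQ(\Astruct,\Bstruct')$. This fails for the equality relation, which pp-formulas may use even when equality is not among the relations of $\Bstruct$. In the standard instance transformation equality atoms are eliminated by identifying variables: either one quotients $A$ by the equivalence generated by the instance's equality constraints (the paper's one-step equality construction), or, in your one-shot pp-interpretation, equality atoms inside the formulas $\varphi_R$ force identifications of coordinate variables that are shared between different constraint gadgets. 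Either way, a substituted constraint equation speaks about a chosen representative $a^*$ of an equivalence class, and deriving it from $\EQ(\Astruct,\Bstruct')$ requires propagating values along an $\equiv$-chain from $a^*$ to the elements actually occurring in the constraint; such chains can have length $\Theta(|A|)$, so the number of axioms needed grows with the instance and the hypothesis of Lemma~\ref{lem:replacePC} is simply false for these axioms. This is precisely the point the paper flags (``this argument fails if one of the steps \ldots is adding the equality relation'') and handles by a separate mechanism: Claim~\ref{lem:propagation} gives a constant-width, size-$O(|A|)$ resolution derivation of the substituted clause, which Lemma~\ref{lem:eqLS} then converts into a constant-degree PC derivation via the simulation lemma (Lemma~\ref{lem:simulation}); this is also the source of the ``size polynomial in \ldots the size of $\Astruct$'' term in the statement you are proving. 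Your proposal contains no substitute for this step, so as written it does not yield the theorem for pp-interpretations whose defining formulas involve equality.

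Two smaller points. For addition of constants, the core property is not what makes the bounded-consequence check for your $0/1$-valued substitution go through --- that check is purely local and is the easy part; it is what makes the instance transformation $\Astruct\mapsto\Astruct'$ a correct reduction (every homomorphism from $\Astruct'$ to $\Bstruct$ restricts to an automorphism on the copy of $\Bstruct$), a fact the paper imports from earlier work rather than reproving. Apart from this and the equality issue above, your one-shot treatment of pp-interpretations is a compressed version of the paper's decomposition into equality, conjunction, existential quantification and the change-of-domain step, with the same choice of a section (representatives of classes) and of canonical witnesses realized there by the partitions $F(b)$ and $F(b,i)$.
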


Finally, if $\E$ and $\E'$ are any local semi-algebraic encoding schemes for $\CSP(\Bstruct)$ and $\CSP(\Bstruct')$, respectively, then:

\begin{theorem}\label{thm:closuresemialgebraic}
  For any positive integers $k$ and $s$, and any $L'$-structure $\Astruct$,
  if there is an SA, SA$^+$, SOS, LS or LS$^+$
  refutation of $\E(\Astruct')$ of degree~$k$ and size $s$,
  then there is, respectively, 
  an SA, SA$^+$, SOS, LS or LS$^+$
  refutation of $\E'(\Astruct)$ of
  degree linear in $k$ and size polynomial in~$2^k$, $s$ and the size of $\Astruct$.
\end{theorem}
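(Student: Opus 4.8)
The plan is to prove Theorem~\ref{thm:closuresemialgebraic} by reducing it, via the same strategy already used for Frege (Theorem~\ref{thm:closureboundeddepthFrege}) and for Polynomial Calculus (Theorem~\ref{thm:closurealgebraic}), to the substitution lemma for semi-algebraic systems, namely Lemma~\ref{lem:replaceLS}. The key observation is that each of the three CSP reductions a), b), c) induces a \emph{local, bounded-degree, bounded-monomial, coefficient-one} substitution $\sigma$ of the variables of one encoding into polynomials over the variables of the other, together with a polynomial-size, linear-blowup transformation $\Astruct \mapsto \Astruct'$ on instances. So the first step is to make this precise: for each elementary reduction, exhibit the explicit substitution and check that it meets the hypotheses of Lemma~\ref{lem:replaceLS}. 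For pp-interpretability with parameter $n$, a variable $X(a,b')$ of the $\CSP(\Bstruct')$ instance gets replaced by a polynomial (of degree at most $n$, with a bounded number of monomials) in the variables $X((a,\mathbf{1}),c_1),\ldots$ of the $\CSP(\Bstruct)$ instance $\Astruct'$ obtained by spelling out the defining pp-formulas; for homomorphic equivalence the substitution is essentially a renaming composed with the fixed homomorphism $\Bstruct' \to \Bstruct$ (degree one, one monomial); for addition of constants to a core the substitution again composes with the fixed retraction onto the core. In each case the parameters $d$ (degree), $m$ (number of monomials) and $q$ (the constant arity of clauses in a \emph{local} encoding, via Lemma~\ref{lem:encodings}) are bounded by constants depending only on $\Bstruct,\Bstruct',L,L'$ and the chosen pp-interpretation, not on $\Astruct$.

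Second, I would verify the crucial side condition of Lemma~\ref{lem:replaceLS}: that for each inequality or equation in $\mathrm{Ineq}(F)$ (where $F = \E(\Astruct')$ in the canonical local encoding), its $\sigma$-image follows from at most $p$ equations of $G = \E'(\Astruct)$ over $\{0,1\}$-evaluations, for some constant $p$. This is where the design of the instance transformation $\Astruct \mapsto \Astruct'$ and of the pp-interpretation do their work: the substitution is engineered precisely so that a satisfying assignment of $G$ projects, via $f$, to a homomorphism $\Astruct' \to \Bstruct$, hence to a homomorphism $\Astruct \to \Bstruct'$, hence to a satisfying assignment of $F$ after substitution; and this semantic implication is \emph{local} in the sense that each single clause/axiom of $F$ under $\sigma$ only mentions a bounded number of variables of $G$, so the witnessing set of $G$-equations has bounded size $p$. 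Since all encoding schemes considered are local, Lemma~\ref{lem:encodings} lets me pass freely between the abstract $\E,\E'$ of the statement and the concrete canonical encodings $\EQ,\INEQ$ at the cost of a constant size/degree overhead.

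Third, with the substitution and the side condition in hand, the conclusion is immediate: given an SA (resp.\ SA$^+$, SOS, LS, LS$^+$) refutation of $\E(\Astruct')$ of degree $k$ and size $s$, pass to the canonical encoding $\EQ(\Astruct',\Bstruct')$ (constant overhead), apply Lemma~\ref{lem:replaceLS} with the appropriate proof system $\mathcal{P}$ to obtain a $\mathcal{P}$-refutation of the canonical encoding $\EQ(\Astruct,\Bstruct)$ of degree linear in $k$ and size polynomial in $2^k$, $s$ and the size of $\Astruct$ (the dependence on $|\Astruct|$ enters because the number of variables of $\Astruct'$, and hence the number of substituted-axiom derivations we must splice in, is linear in $|\Astruct|$), and finally translate back to the target encoding $\E'(\Astruct)$ using Lemma~\ref{lem:encodings} again. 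Because Lemma~\ref{lem:replaceLS} already names exactly the list SA, SA$^+$, SOS, LS, LS$^+$ and is degree-preserving up to a constant factor, each of the five cases is handled uniformly. The same argument chaining over a finite sequence of constructions a), b), c) gives the general case, since composing finitely many bounded-degree bounded-monomial coefficient-one substitutions yields another one.

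The main obstacle I expect is \emph{not} the proof-theoretic step — that is essentially just an invocation of Lemma~\ref{lem:replaceLS} — but the bookkeeping needed to define the instance transformation $\Astruct \mapsto \Astruct'$ and the companion substitution $\sigma$ for pp-interpretability and to confirm the constant bound $p$ in the side condition. Spelling out a relation defined by a pp-formula with existential quantifiers requires introducing fresh ``witness'' elements in $\Astruct'$, and one has to check that the multiplicative encoding of each constraint of $\Astruct$ translates, under $\sigma$, into something semantically entailed by a bounded number of constraints of $\Astruct'$; the bound is uniform because the pp-formulas are fixed and finite, but making this airtight (including the treatment of the pp-definable domain of $f$ and of the pp-definable equality relation, which forces the quotient by an equivalence) is the delicate part. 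Homomorphic equivalence and addition of constants to the core are comparatively trivial: there the substitution is linear with a single monomial, so $d=m=1$, and the side condition is checked by a one-line semantic argument using the fixed homomorphisms.
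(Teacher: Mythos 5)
Your overall architecture matches the paper's: reduce to the canonical local encodings via Lemma~\ref{lem:encodings}, handle each elementary construction by an explicit bounded-degree, bounded-monomial, coefficient-one substitution, verify a locality side condition, and invoke Lemma~\ref{lem:replaceLS} (this is exactly how the paper proves Lemmas~\ref{lem:conjunctionLS}, \ref{lem:existLS}, \ref{lem:interpretationLS}, \ref{lem:homomorphismLS} and~\ref{lem:constantsLS}). However, there is a genuine gap at precisely the point you wave at as ``delicate'': the equality relation inside a pp-interpretation. When $\Bstruct'$ adds the (preimage of the) equality relation, the instance transformation quotients $A$ by the equivalence relation $\equiv$ generated by $E(\Astruct)$, and the substitution maps $X([a]_\equiv,b)$ to $X(a^*,b)$ for a chosen representative $a^*$. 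A substituted constraint of type~3 is then only a semantic consequence of the type-3 constraint for some tuple $(a'_1,\ldots,a'_r)\in R(\Astruct)$ \emph{together with a chain of equality constraints} linking each $a'_i$ to $a_i^*$; the length of that chain depends on the instance $\Astruct$ and can be as large as $|A|$. So the hypothesis of Lemma~\ref{lem:replaceLS} --- that each substituted axiom follows from at most $p$ (a constant) equations of $G$ --- simply fails here, and it is not rescued by the pp-formulas being ``fixed and finite.'' One also cannot fall back on the completeness theorem (Theorem~\ref{thm:completeness_alg}) for the whole witnessing set, since its degree and size bounds are exponential in the number of variables involved, which is now $\Theta(|A|)$.

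The paper plugs this hole with a separate argument (Lemmas~\ref{lem:eq} and~\ref{lem:eqLS}): it shows that equality propagation has resolution derivations of \emph{constant width} and size $O(|A|)$ (Claims~\ref{lem:propagation} and~\ref{lem:equality}), simulates these in PC and SA via Lemma~\ref{lem:simulation} to get constant-degree, $\mathrm{poly}(|A|)$-size derivations of the substituted type-3 equations, and then re-runs the argument of Lemmas~\ref{lem:replacePC}/\ref{lem:replaceLS} with these larger (but still constant-degree) derivations spliced in; this is why the size bound in the theorem depends on the size of $\Astruct$ and not only on $2^k$ and $s$. Your proposal needs this extra ingredient, or some equivalent constant-degree treatment of equality chains, before the uniform invocation of Lemma~\ref{lem:replaceLS} is legitimate. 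Two minor inaccuracies, not gaps: for homomorphic equivalence the substitution is $X(a,b)\mapsto\sum_{b'\in h^{-1}(b)}X(a,b')$, a sum with up to $|B'|$ monomials rather than a single monomial; and the retraction-onto-the-core picture belongs to the homomorphic-equivalence step, while the constants step substitutes the variables $X(b,b')$ by the scalars $0$ and $1$.
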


We point out that Theorem~\ref{thm:closuresemialgebraic} in the case of the Sherali-Adams and Sums-of-Squares proof systems and the $\EQ$ encoding scheme can be extracted from~\cite{DBLP:journals/siamcomp/ThapperZ17} and~\cite{DBLP:conf/lics/ThapperZ17}.

The main idea in proving the above theorems for all the proof systems
under consideration is the same. The refutation for an instance
$\Astruct'$ of $\CSP(\Bstruct)$ is transformed into a refutation for an
instance $\Astruct$ of $\CSP(\Bstruct')$ by substituting the
variables of $\E(\Astruct')$
by DNFs with a bounded number of terms and a bounded
number of literals in each term,
or by
polynomials with bounded degree, a bounded number of monomials and all
coefficients equal $1$. The additional
condition we need to ensure is that each element of
$\E(\Astruct')$ after applying the substitution is a
logical consequence of a subset of
$\E'(\Astruct)$ of a bounded size. This way we can use Lemmas~\ref{lem:replace}, \ref{lem:replacePC} and~\ref{lem:replaceLS} from Section~\ref{sec:generalproofcomplexityfacts} to control the growth of the size and depth/degree of the refutations. This argument, however, fails if one of the steps in constructing $\Bstruct'$ from $\Bstruct$ is adding the equality relation (which is a special case of a pp-definition). We deal with this by showing that equality propagation can be done in bounded-width resolution.

We prove Theorems~\ref{thm:closureboundeddepthFrege}, \ref{thm:closurealgebraic} and~\ref{thm:closuresemialgebraic} for CNF and EQ encoding schemes in a series of lemmas below. It follows from Lemma~\ref{lem:encodings} that this suffices to obtain the theorems in full generality. Let us see how to argue this for propositional proof systems. The reasoning in the case of algebraic and semi-algebraic proof systems is analogous.

\begin{proof}[Proof of Theorem~\ref{thm:closureboundeddepthFrege}]
Assume that the statement of the theorem holds for $\E$ and $\E'$ being the CNF encoding scheme. Let now $\E$ and $\E'$ be arbitrary local propositional encoding schemes for $\CSP(\Bstruct)$ and $\CSP(\Bstruct')$, respectively. 

By Lemma~\ref{lem:encodings} there exist positive integers $p$ and $p'$ such that for each $L$-structure $\Astruct'$, every clause in $\E(\Astruct')$ has a resolution proof from $\CNF(\Astruct',\Bstruct)$ of size bounded by $p$, and for each $L'$-structure $\Astruct$, every clause in $\CNF(\Astruct,\Bstruct')$ has a resolution proof from $\E(\Astruct)$ of size bounded by $p'$.

Take an $L'$-structure
  $\Astruct$, and assume that there is a Frege refutation of
  $\E(\Astruct')$ of depth~$t$, bottom fan-in $k$, and size
  $s$. Since every clause in $\E(\Astruct')$ has a resolution proof from $\CNF(\Astruct',\Bstruct)$ of size bounded by $p$, it follows that $\CNF(\Astruct',\Bstruct)$ has a Frege refutation of depth $t$, bottom fan-in $k$, and size linear in $s$. The statement of the theorem holds for the CNF encoding schemes, so $\CNF(\Astruct,\Bstruct')$ has a Frege refutation of depth~$t$, bottom fan-in polynomial in $k$, and size polynomial
  in~$2^k$, $s$ and the size of $\Astruct$. Since every clause in $\CNF(\Astruct,\Bstruct')$ has a resolution proof from $\E(\Astruct)$ of size bounded by $p'$, it follows that $\E(\Astruct)$ has a Frege refutation of depth $t$, bottom fan-in polynomial in $k$, and size polynomial
  in~$2^k$, $s$ and the size of $\Astruct$.
\end{proof}

In the subsequent sections we consider one by one the cases when $\Bstruct'$ is constructed from $\Bstruct$ using a), b) and c). We begin with pp-definability, with which we deal in three steps: by considering the equality relation, pp-formulas using conjunction only and existential quantification only.

\subsection{Equality}

Suppose that none of the relation symbols in $L$ interprets in $\Bstruct$ as the equality relation.
For a binary relation symbol $E$ not in $L$, let $L' = L \cup \{E\}$.
Assume that $\Bstruct'$ is the $L'$-structure with domain $B$, all relation symbols from $L$ interpreted as in $\Bstruct$, i.e., $R(\Bstruct') = R(\Bstruct)$ for every $R \in L$,
and the relation symbol $E$ interpreted as the equality relation over $B$, i.e., $E(\Bstruct') = \{(b,b) : b \in B\}$.

For every instance of the CSP of the language $\Bstruct'$, that is for every finite $L'$-structure~$\Astruct$, there is a natural corresponding instance $\Astruct'$ of the CSP over the language $\Bstruct$. If $\equiv$ is the smallest equivalence relation on $A$ which contains $E(\Astruct)$, then define $\Astruct'$ to be the $L$-structure whose domain $A'$ is the set of the equivalence classes of the relation $\equiv$ and every relation symbol $R \in L$ is interpreted as $\{ ([a_1]_\equiv, \ldots, [a_r]_\equiv) : (a_1, \ldots, a_r) \in R(\Astruct) \}$, where $r$ is the arity of~$R$. It is not difficult to see that~$\Astruct$ maps homomorphically to~$\Bstruct'$ if and only if~$\Astruct'$ maps homomorphically to~$\Bstruct$.

\begin{lemma}\label{lem:eq}
  There exists a positive integer $c$ such that the following holds. For any positive integers $t$, $k$ and $s$ and
  every finite $L'$-structure $\Astruct$, if there is a Frege refutation of $\CNF(\Astruct',\Bstruct)$ of depth $t$, bottom fan-in $k$ and size $s$, then
  there is a Frege refutation of $\CNF(\Astruct,\Bstruct')$ of depth~$t$, bottom fan-in $k$ and
  size at most $(c|A|+1)s$.
\end{lemma}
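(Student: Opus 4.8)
The plan is to take a Frege refutation of $\CNF(\Astruct',\Bstruct)$ and turn it into a Frege refutation of $\CNF(\Astruct,\Bstruct')$ via a substitution, following the general template described before the lemma. The natural substitution replaces, for each element $a \in A$ and each $b \in B$, the variable $X([a]_\equiv, b)$ over the ``collapsed'' structure $\Astruct'$ by the variable $X(a,b)$ over $\Astruct$. Under this substitution every clause of $\CNF(\Astruct',\Bstruct)$ pulls back to a clause (of the same width) that is either \emph{literally} one of the clauses of $\CNF(\Astruct,\Bstruct')$ — for the ``at least one value'' and ``at most one value'' clauses and the $R$-constraint clauses — or is a clause of the form $\bigvee_{i\in[r]}\overline{X(a_i,b_i)}$ coming from a tuple $([a_1]_\equiv,\dots,[a_r]_\equiv)\in R(\Astruct')$ whose representative tuple $(a_1,\dots,a_r)$ need not itself lie in $R(\Astruct)$. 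The point of the definition of $\Astruct'$ is that some $\equiv$-equivalent tuple $(a_1',\dots,a_r')$ does lie in $R(\Astruct)$, and that $a_i \equiv a_i'$ means $a_i$ and $a_i'$ are connected by a path of $E$-edges in $\Astruct$. So the substituted clause is a logical consequence of the clause $\bigvee_i \overline{X(a_i',b_i)}$ of $\CNF(\Astruct,\Bstruct')$ \emph{together with} the $E$-constraint clauses of $\CNF(\Astruct,\Bstruct')$ along those paths, which encode that $E(\Bstruct')$ is the equality relation.

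Concretely, one first establishes an ``equality propagation'' sublemma in bounded-width resolution: from the clauses encoding $E(a,a')$ (i.e. the clauses $\overline{X(a,b_0)}\vee \overline{X(a',b_1)}$ for $b_0\neq b_1$, which say $h(a)=h(a')$) one can derive in constant-width resolution, for every $b$, the two clauses $\overline{X(a,b)}\vee X(a',b)$ and $X(a,b)\vee \overline{X(a',b)}$ — using also the ``at least one value'' clause for $a'$. Chaining these along an $E$-path of length $\ell$ gives, in width $O(1)$ and size $O(\ell)$, the clauses $\overline{X(a,b)}\vee X(a',b)$ whenever $a\equiv a'$. From these, each substituted constraint clause $\bigvee_i\overline{X(a_i,b_i)}$ is derivable from $\bigvee_i\overline{X(a_i',b_i)}$ by $r$ resolution steps, again in constant width. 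Since $r$ is bounded by the (fixed) maximum arity in $L$, and each $E$-path can be taken to have length at most $|A|$, the whole derivation of any one substituted axiom from $\CNF(\Astruct,\Bstruct')$ has width $O(1)$ and size $O(|A|)$; a fixed constant $c$ absorbs the arity and domain-size-of-$\Bstruct$ dependencies.

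With this in hand I would finish exactly as in the proof of Lemma~\ref{lem:replace}: write down the given depth-$t$, bottom-fan-in-$k$, size-$s$ refutation $D_1,\dots,D_s$ of $\CNF(\Astruct',\Bstruct)$, apply the variable-renaming substitution to get $D_1^+,\dots,D_s^+$ (the substitution is by single literals, so depth, bottom fan-in, and size are unchanged, and each Frege rule instance maps to a rule instance), and then splice in, before each (substituted) non-logical axiom $D_i^+$, the constant-width resolution derivation of $D_i^+$ from $\CNF(\Astruct,\Bstruct')$ described above. Each splice costs at most $c|A|$ formulas of constant size, and there are at most $s$ of them, which accounts for the $(c|A|+1)s$ size bound and leaves the depth at $t$ and the bottom fan-in at $k$ (a width-$O(1)$ resolution derivation is in particular $\Sigma_{t,k}$ for any $t\ge1$, $k\ge$ that width, and these widths are $\le k$ up to adjusting $c$, or more cleanly one notes width-$O(1)$ clauses are among the $k$-clauses already since $k\ge 1$ and we may pad).

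\medskip

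\noindent\textbf{Main obstacle.} The delicate point is the equality-propagation sublemma: one must be careful that resolving ``$h(a)=h(a')$'' information across a path stays in bounded width and does not blow up, and in particular that deriving the ``copy'' clauses $\overline{X(a,b)}\vee X(a',b)$ really only needs the single adjacent $E$-edge's clauses plus one ``totality'' clause, not global information. Handling the direction of the $E$-path and the fact that $\equiv$ is the \emph{transitive–symmetric} closure of $E(\Astruct)$ (so paths may be undirected) requires noting that the copy clauses are symmetric in $a,a'$, which they are. The rest is the routine bookkeeping already carried out for Lemma~\ref{lem:replace}.
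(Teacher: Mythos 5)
Your proposal is correct and follows essentially the same route as the paper: substitute each variable $X([a]_\equiv,b)$ by $X(a^*,b)$ for a chosen representative, prove an equality-propagation sublemma deriving the substituted constraint clauses from $\CNF(\Astruct,\Bstruct')$ in constant width and size $O(|A|)$ by chaining along $E$-paths of length at most $|A|$, and splice these derivations before the substituted axioms. The only (harmless) difference is stylistic — you derive binary ``copy'' clauses $\overline{X(a,b)}\vee X(a',b)$ and then resolve, whereas the paper propagates the unit $\overline{X(a',b')}$ directly while carrying side literals — and note that the bottom fan-in issue you worry about is moot, since clauses of any width are $1$-DNFs and hence already $\Sigma_{t,k}$-formulas.
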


\begin{proof}
  Let $F$ denote
  $\CNF(\Astruct',\Bstruct)$ and let $G$ denote
  $\CNF(\Astruct,\Bstruct')$. For each $[a]_\equiv \in A'$, we
  choose one element of $[a]_\equiv$ denoted by $a^*$, and consider
  the substitution $\sigma$ of variables in $F$ defined by:
\begin{equation*}
X([a]_\equiv,b) := X(a^*,b),
\end{equation*}
for every $[a]_\equiv \in A'$ and $b \in B$. We show that, for a
constant $c$ to be determined later, for every clause $C$ from $F$,
the substituted formula $\sigma(C)$ has a resolution proof from $G$ of size at
most $c|A|$. It
follows that $G$ has a Frege refutation of depth $t$, bottom fan-in $k$ and
  size at most $(c|A|+1)s$.
 
Let $C$ be any of the clauses in $F$. Note that if $C$ is of type~1
or~2 then by applying the substitution we obtain a clause in $G$, so
there is nothing to be proved.  Now, let us assume that $C$ is a
clause of type~3, i.e., $C = \overline{X([a_1]_\equiv,b_1)} \vee
\cdots \vee \overline{X([a_r]_\equiv,b_r)}$ for some $R \in L$ of
arity~$r$, $([a_1]_\equiv, \ldots, [a_r]_\equiv) \in R(\Astruct')$
and $(b_1, \ldots, b_r) \in B^r \setminus R(\Bstruct)$. The following
claim will finish the proof. We state the bound on width because it
will be useful later. By $q$ we denote the number of elements in $B$.

\begin{claim} \label{lem:propagation}
  There are constants $c$ and $d$ such that the clause
  $\sigma(C)$ has a resolution derivation from $G$ of width at most
  $d$ and size at most $c|A|$.
\end{claim}

\noindent To prove this claim the following observation will be helpful.
   
\begin{claim}\label{lem:equality}
  There is a constant $e$ such that for every $a,a' \in A$ such
  that $a \equiv a'$, and every $b' \in B$, there is a resolution
  proof of $\overline{X(a,b')}$ from $\overline{X(a',b')}$ and clauses
  in $G$ of width at most $e$, length at most $(2q+1)|A|$ and size at
  most $(2q+1)^2|A|$.
\end{claim}
 	
We use Claim~\ref{lem:equality} to prove
Claim~\ref{lem:propagation}. Note that for every $i \in [r]$ there
exists $a'_i \in [a_i]_{\equiv}$ such that $(a'_1, \ldots, a'_r) \in
R(\Astruct)$. Therefore, the clause $\overline{X(a'_1,b_1)} \vee
\cdots \vee \overline{X(a'_r,b_r)}$ belongs to~$G$. Now, since $a^*_1
\equiv a'_1$, it follows from Claim~\ref{lem:equality} that there is a
resolution derivation of width at most $e$, length at most $(2q+1)|A|$
and size bounded by $(2q+1)^2|A|$ of $\overline{X(a^*_1,b_1)}$ from
$\overline{X(a'_1,b_1)}$ and clauses in $G$.  If we reproduce exactly
the same derivation starting with the clause $\overline{X(a'_1,b_1)}
\vee \cdots \vee \overline{X(a'_r,b_r)}$ instead of
$\overline{X(a'_1,b_1)}$, what we get is a valid resolution derivation
of $\overline{X(a^*_1,b_1)} \vee \overline{X(a'_2,b_2)} \vee \cdots
\vee \overline{X(a'_r,b_r)}$ of width at most $e+r$ and size at most
$(2q+1)^2 |A|+(2q+1)(2r-2)|A|$. We repeat the same construction $r-1$
more times starting with the last clause derived and get a resolution
derivation of $\sigma(C)$ whose width is bounded by $re$ and whose size
is bounded by $((2q+1)^2r+(2q+1)(2r-2)r)|A|$. It
remains to prove Claim \ref{lem:equality}.
 	
\medskip 	
 	
\noindent \textit{Proof of Claim \ref{lem:equality}.}  First let us
show that for every $a,a' \in A$ such that $(a,a') \in \ E(\Astruct)$
or $(a',a) \in \ E(\Astruct)$, and every $b' \in B$ there is a
resolution proof of width at most $q$, length at most $2q+1$ and
size bounded by $(2q+1)^2$ of $\overline{X(a,b')}$ from $
\overline{X(a',b')}$ and the clauses in $G$. Indeed, the cut rule
applied to $\overline{X(a',b')}$ and the formula $\bigvee_{b \in B}
X(a',b)$ from $G$ gives $\bigvee_{b \in B'} X(a',b)$, where $B' = B
\setminus \{ b' \}$. Then by a sequence of $q-1$ cuts with formulas
$\overline{X(a,b')} \vee \overline{X(a',b)}$, for $b \in B'$, we
derive $\overline{X(a,b')}$. The total number of formulas in this
sequence is $2q+1$, and each has width at most $q$ and size at most
$2q+1$.

Now, let $a=a_1, \ldots, a_m=a'$ be a sequence of elements of $A$
such that $(a_i,a_{i+1}) \in E(\Astruct)$ or $(a_{i+1},a_{i}) \in
 E(\Astruct)$, and let us assume that this is one of the shortest
sequences with this property.  The statement of the claim then follows
from the fact that $m \leq |A|$.
 \end{proof}

\begin{lemma}\label{lem:eqLS}
  Let $\mathcal{P}$ be Polynomial Calculus,
  Sherali-Adams, Positive Semidefinite Sherali-Adams, Sums-of-Squares, Lov\'asz-Schrijver or Positive Semidefinite Lov\'asz-Schrijver.
  For any positive integers $k$ and $s$, and every finite
  $L'$-structure $\Astruct$, if there is a $\mathcal{P}$ refutation of
  $\EQ(\Astruct',\Bstruct)$ of degree $k$ and size $s$, then there is
  a $\mathcal{P}$ refutation of $\EQ(\Astruct,\Bstruct')$ of degree
  linear in $k$ and size polynomial in $|A|$ and $s$.
\end{lemma}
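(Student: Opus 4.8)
The plan is to reuse verbatim the renaming of variables from the proof of Lemma~\ref{lem:eq}. For each class $[a]_\equiv \in A'$ fix a representative $a^* \in [a]_\equiv$, and let $\sigma$ be the map sending the variable $X([a]_\equiv,b)$ to $X(a^*,b)$ and the twin variable $\bar{X}([a]_\equiv,b)$ to $\bar{X}(a^*,b)$, for all $[a]_\equiv \in A'$ and $b \in B$. Since distinct classes receive distinct representatives, $\sigma$ is an \emph{injective renaming} of the variables of $F := \EQ(\Astruct',\Bstruct)$ onto a subset of the variables of $G := \EQ(\Astruct,\Bstruct')$. In particular it preserves degrees exactly and commutes with every rule of PC, SA, SA$^+$, SOS, LS and LS$^+$: a multiplication-by-variable step turns into a multiplication by the renamed variable, an axiom from~\eqref{eqn:axioms} or~\eqref{eqn:axiomssemi} for $X([a]_\equiv,b)$ turns into the same axiom for $X(a^*,b)$, a square stays a square, and so on. Applying $\sigma$ to the assumed $\mathcal{P}$ refutation of $F$ of degree $k$ and size $s$ therefore yields a $\mathcal{P}$ refutation of degree $k$ and size $O(s)$ from the hypothesis set $\sigma(F)$ together with the standard axioms, which are available for all variables.

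It then suffices to derive, from $G$ alone, a short $\mathcal{P}$ proof of every equation in $\sigma(F)$, and to compose these with the renamed refutation. The equations of types~1 and~2 in $F$ are sent by $\sigma$ to equations that lie literally in $G$, so nothing is needed. An equation of type~3 in $F$ is of the form $\prod_{i=1}^r X([a_i]_\equiv,b_i) = 0$ with $([a_1]_\equiv,\dots,[a_r]_\equiv) \in R(\Astruct')$ and $(b_1,\dots,b_r) \in B^r \setminus R(\Bstruct)$; under $\sigma$ it becomes $\prod_{i=1}^r X(a_i^*,b_i) = 0$, which is exactly the multiplicative encoding $M(\sigma(C)) = 0$ of the clause $\sigma(C) = \bigvee_{i=1}^r \overline{X(a_i^*,b_i)}$, where $C$ is the corresponding type~3 clause of $\CNF(\Astruct',\Bstruct)$.

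This is precisely the place where the equality-propagation argument behind Lemma~\ref{lem:eq} does the work: Claim~\ref{lem:propagation}, via Claim~\ref{lem:equality}, shows that $\sigma(C)$ has a resolution derivation from $\CNF(\Astruct,\Bstruct')$ of \emph{bounded} width and size $O(|A|)$. Since $\EQ(\Astruct,\Bstruct')$ is by definition the system of multiplicative encodings of the clauses of $\CNF(\Astruct,\Bstruct')$, Lemma~\ref{lem:simulation} turns this resolution derivation into a PC proof over any field, and into an SA proof, of $M(\sigma(C)) = 0$ from $G$ of bounded degree and size polynomial in $|A|$. An SA proof is in particular an SA$^+$, an LS and an LS$^+$ proof, and by the axioms~\eqref{eqn:axioms} it converts into an SOS proof of at most twice the degree; this covers all the systems $\mathcal{P}$ in the statement. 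There are at most $\mathrm{poly}(|A|)$ many type~3 equations, so adjoining all these derivations to the renamed refutation (for the static systems SA, SA$^+$, SOS this amounts to substituting each of them into the single static identity) produces a $\mathcal{P}$ refutation of $G$ whose degree is $\max\{k,O(1)\}$, i.e.\ linear in $k$, and whose size is polynomial in $|A|$ and $s$.

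The one point requiring care is that the equality-propagation overhead is polynomial in $|A|$ rather than constant, so Lemmas~\ref{lem:replacePC} and~\ref{lem:replaceLS} cannot be invoked as black boxes — their hypotheses demand that the substitution of each axiom follow from a \emph{constant} number of equations of $G$. One therefore re-runs their argument by hand as above, which is harmless here because $\sigma$ is a plain renaming and hence contributes no $2^k$ blow-up; the lone $|A|$ factor comes from the number of type~3 equations and from the length of each propagation derivation. I expect the only genuine (if mild) obstacle to be bookkeeping: verifying that the renaming interacts correctly with the static form of SA/SA$^+$/SOS proofs and with the nested multiplication-by-variable steps of PC and LS, and checking that no other source of growth beyond the original $s$ and the advertised $|A|$ factor is introduced.
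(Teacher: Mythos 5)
Your proposal is correct and is essentially the paper's own proof: the same renaming $\sigma$ by representatives, the observation that type 1 and 2 equations and the axioms map into $\EQ(\Astruct,\Bstruct')$ verbatim, the use of Claim~\ref{lem:propagation} plus Lemma~\ref{lem:simulation} to derive the substituted type 3 equations in constant degree and size $O(|A|)$ up to polynomial factors, and the composition carried out along the lines of Lemmas~\ref{lem:replacePC} and~\ref{lem:replaceLS} rather than by invoking them as black boxes. Your explicit remark that those lemmas cannot be cited directly (their hypotheses give constant-size derivations, whereas here the overhead is polynomial in $|A|$) matches exactly how the paper handles the same point.
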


\begin{proof}
  Let $F$ denote
  $\EQ(\Astruct',\Bstruct)$ and let $G$ denote
  $\EQ(\Astruct,\Bstruct')$.  Analogously as in the proof of Lemma~\ref{lem:eq} above, for each $[a]_\equiv \in A'$, we
  choose one element of $[a]_\equiv$ denoted by $a^*$, and consider
  the substitution $\sigma$ of the variables in $F$ defined by:
\begin{align*}
X([a]_\equiv,b) := X(a^*,b), \ \ \ \ \ \ \ \ \ \
\bar{X}([a]_\equiv,b) := \bar{X}(a^*,b),
\end{align*}
for every $[a]_\equiv \in A'$ and $b \in B$. 

We show that every equation from $\mathrm{Eq}(F)$ after applying the substitution $\sigma$
has a PC derivation from $\mathrm{Eq}(G)$ of constant degree and size polynomial in $|A|$. Once we have this, the proof for $\mathcal{P}$ being the Polynomial Calculus proof system follows the same lines as the proof of Lemma~\ref{lem:replacePC}.
Similarly, for (Positive Semidefinite) Sherali-Adams, Sums-of-Squares or (Positive Semidefinite) Lov\'asz-Schrijver proof systems, the proof follows the same lines as the proof of Lemma~\ref{lem:replaceLS} once we show that every inequality from $\mathrm{Ineq}(F)$ after applying the substitution $\sigma$
has an SA derivation from $\mathrm{Ineq}(G)$ of constant degree and size polynomial in~$|A|$.

Note that by applying the substitution to equations of type 1 and
2, and to the axiom equations and inequalities we obtain equations and inequalities from $\mathrm{Eq}(G)$ and $\mathrm{Ineq}(G)$ so
there is nothing to be proved.  Now, consider an equation of type~3
from $F$, i.e., $X([a_1]_\equiv,b_1) \cdot \ldots \cdot
X([a_r]_\equiv,b_r) = 0$ for some $R \in L$ of arity $r$,
$([a_1]_\equiv, \ldots, [a_r]_\equiv) \in R(\Astruct')$ and $(b_1,
\ldots, b_r) \in B^r \setminus R(\Bstruct)$. In the
terminology of Section~\ref{sec:simulations}, this equality is the
multiplicative encoding $M(C) = 0$ of the corresponding clause $C$ from
CNF$(\Astruct',\Bstruct)$. Note that $\sigma(C)$ is again a
clause because $\sigma$ is a substitution by variables. Let us call this clause $D$. The substitution of the equation we are
considering is the multiplicative encoding $M(D) = 0$ of the clause $D$. Now, by
Claim~\ref{lem:propagation} that we proved inside the proof of
Lemma~\ref{lem:eq}, there is a resolution derivation of $D$ from $G$ of some constant width $d$ and size linear in $|A|$. Since $G$ is the set of multiplicative encodings of the clauses in
$\CNF(\Astruct,\Bstruct')$,
Lemma~\ref{lem:simulation} applies and we get both PC and SA
proofs of $M(D) = 0$ from $G$ of degree linear in
$d$ and size polynomial in $|A|$. This is a constant degree, and the proof is complete.
\end{proof}

\subsection{Conjunction}

We now consider the case when the structure $\Bstruct'$ is
pp-definable from $\Bstruct$ by adding a single relation pp-definable
using conjunction only. Let $S$ and $P$ be relation symbols in $L$,
let $T$ be a relation symbol not in $L$, let $L' = L \cup \{T\}$, and
assume that $\Bstruct'$ is the expansion of $\Bstruct$ with the
relation $T(\Bstruct')$ defined using a pp-formula
$\phi(x_1,\ldots,x_r)$, where $r$ is the arity of $T$, that is made of
a conjunction of one atom on $S$ and one atom on $P$.  That is,
$R(\Bstruct') = R(\Bstruct)$ for every $R \in L$, and $T(\Bstruct') =
\{ (b_1,\ldots,b_r) \in B^r : \Bstruct \models
\phi(x_1/b_1,\ldots,x_r/b_r) \}$. To focus our attention let us assume
that $S$ and $P$ are binary, $T$ is ternary, and that the pp-formula
that defines $T$ is $\phi(x_1,x_2,x_3) = S(x_1,x_2) \wedge
P(x_2,x_3)$. The proof of the general case will be the same.

For a finite $L'$-structure $\Astruct$ the corresponding $L$-structure
$\Astruct'$ has the same domain and all the relation symbols except
for $S$ and $P$ interpreted the same as in $\Astruct$.  Moreover,
$S(\Astruct') = S(\Astruct) \cup \{(a_1,a_2) : (a_1,a_2,a_3) \in
T(\Astruct) \text{ for some } a_3\}$, and $P(\Astruct') = P(\Astruct)
\cup \{(a_2,a_3) : (a_1,a_2,a_3) \in T(\Astruct) \text{ for some } a_1
\}$.  It is easy to see that $\Astruct$ maps homomorphically
to~$\Bstruct'$ if and only if~$\Astruct'$ maps homomorphically
to~$\Bstruct$.

\begin{lemma}\label{lem:conjunction}
For any positive integers $t$, $k$ and $s$, and every finite $L'$-structure $\Astruct$, if there
  is a Frege refutation of $\CNF(\Astruct',\Bstruct)$ of depth $t$, bottom fan-in $k$ and size $s$,
  then there is a Frege refutation of
  $\CNF(\Astruct,\Bstruct')$ of depth $t$, bottom fan-in $k$ and size linear in $s$.
\end{lemma}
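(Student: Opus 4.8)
The plan is to apply the substitution machinery of Lemma~\ref{lem:replace}. We have $F = \CNF(\Astruct',\Bstruct)$ over the variables $V(A,B)$ for the vocabulary $L$ (so the variables come with names tied to $S$, $P$ and the other relations of $L$), and $G = \CNF(\Astruct,\Bstruct')$ over the variables $V(A,B)$ for the vocabulary $L' = L \cup \{T\}$. Both structures share the same domain $A$ and the same target domain $B$, so in fact $F$ and $G$ are built from literally the same propositional variables $X(a,b)$; the difference is purely in which clauses are present. Consequently the substitution $\sigma$ we use is the \emph{identity} on variables: the only work is to check that each clause of $F$ follows, in bounded-width/bounded-size resolution, from a bounded number of clauses of $G$. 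Since $\sigma$ is the identity, it trivially maps $d$-DNFs to $d$-DNFs with $d = m = 1$, and Lemma~\ref{lem:replace} (with $q$ the maximum clause width in $F$, which is bounded by a constant depending only on $\Bstruct$ and $L$) immediately yields the conclusion, with bottom fan-in staying $k$ and size blowing up only linearly, once the hypothesis of that lemma is verified.

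So the heart of the argument is: every clause $C \in \CNF(\Astruct',\Bstruct)$ is a logical consequence of a bounded number of clauses from $\CNF(\Astruct,\Bstruct')$. I would check this by cases on the type of clause. Type-1 clauses $\bigvee_{b \in B} X(a,b)$ and type-2 clauses $\overline{X(a,b_0)} \vee \overline{X(a,b_1)}$ depend only on the domain $A$ and on $B$, so they appear verbatim in $\CNF(\Astruct,\Bstruct')$ as well. Type-3 clauses of $\CNF(\Astruct',\Bstruct)$ come from a relation symbol $R \in L$ and a tuple of $R(\Astruct')$. If $R$ is neither $S$ nor $P$, then $R(\Astruct') = R(\Astruct)$, $R(\Bstruct') = R(\Bstruct)$, and the clause is again present in $\CNF(\Astruct,\Bstruct')$ unchanged. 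The only interesting case is $R \in \{S,P\}$, say $R = S$, and a tuple $(a_1,a_2) \in S(\Astruct')$ that is \emph{not} in $S(\Astruct)$: by the definition of $\Astruct'$ this means $(a_1,a_2,a_3) \in T(\Astruct)$ for some $a_3 \in A$. The clause in question is $\overline{X(a_1,b_1)} \vee \overline{X(a_2,b_2)}$ for some $(b_1,b_2) \in B^2 \setminus S(\Bstruct)$, and I must derive it from clauses of $\CNF(\Astruct,\Bstruct')$.

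The derivation uses the type-3 clauses coming from $T$ together with a type-1 clause. Fix the witness $a_3$ with $(a_1,a_2,a_3) \in T(\Astruct)$. Since $(b_1,b_2) \notin S(\Bstruct)$, every triple $(b_1,b_2,b_3)$ lies outside $T(\Bstruct') = \{(c_1,c_2,c_3) : (c_1,c_2) \in S(\Bstruct),\ (c_2,c_3) \in P(\Bstruct)\}$, so for each $b_3 \in B$ the clause $\overline{X(a_1,b_1)} \vee \overline{X(a_2,b_2)} \vee \overline{X(a_3,b_3)}$ is a type-3 clause of $\CNF(\Astruct,\Bstruct')$. Resolving all $|B|$ of these against the type-1 clause $\bigvee_{b_3 \in B} X(a_3,b_3)$ eliminates the literals on $a_3$ one by one and leaves exactly $\overline{X(a_1,b_1)} \vee \overline{X(a_2,b_2)}$. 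This is a resolution derivation from at most $|B|+1$ clauses of $\CNF(\Astruct,\Bstruct')$, of width bounded by $3$ and constant length; the case $R = P$ is symmetric using the third coordinate of the defining formula, and the general pp-formula built from one atom on $S$ and one on $P$ (with arbitrary arities and variable patterns) is handled identically. Plugging $p = |B| + 1$ (a constant, since $\Bstruct$ is fixed) and the observations above into Lemma~\ref{lem:replace} gives the claimed Frege refutation of $\CNF(\Astruct,\Bstruct')$ of depth $t$, bottom fan-in $k$, and size linear in $s$.

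I do not expect any serious obstacle here; the potential traps are purely bookkeeping. The one point that needs a little care is the size bound: Lemma~\ref{lem:replace} as stated gives size polynomial in $2^k$ and $s$, whereas the statement of Lemma~\ref{lem:conjunction} claims size merely \emph{linear} in $s$. The reason the stronger bound holds is that $\sigma$ is the identity substitution, so the operator $D \mapsto D^+$ of Lemma~\ref{lem:replace} is itself the identity and there is no $2^k$ blow-up at all; each non-logical axiom $C$ of $F$ is simply replaced by its bounded-size resolution derivation from $G$, multiplying the total size by a constant (at most $p$ times the local derivation size, both constants). So rather than literally invoking Lemma~\ref{lem:replace} as a black box, I would either inline its (now trivial) argument for this special case, or remark that when the substitution is the identity the $2^k$ factor in that lemma disappears. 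Everything else — the case analysis on clause types, the resolution derivation eliminating the witness variable — is routine.
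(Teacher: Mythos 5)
Your proposal is correct and follows essentially the same route as the paper: identity substitution (type-1, type-2, and non-$S,P$ type-3 clauses carry over verbatim), and for an $S$- or $P$-clause arising from a tuple outside $S(\Astruct)$, deriving it from the $T$-clauses for the witness $a_3$ together with the type-1 clause for $a_3$, then replacing each initial clause by its constant-size derivation to keep the size linear (the paper likewise argues this directly rather than through Lemma~\ref{lem:replace}, exactly as you suggest). The only nitpick is that your explicit resolution derivation has width about $|B|+1$ rather than $3$, which is immaterial since only a constant bound is needed.
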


\begin{proof}
  Let $F$ denote
  $\CNF(\Astruct',\Bstruct)$ and let $G$ denote
  $\CNF(\Astruct,\Bstruct')$.  Observe that the variables of $F$
  and $G$ as well as the clauses of type 1 and 2 are the same.  
  Below we show that every clause $C$ of
  type 3 in $F$ is a logical consequence of a bounded number of
  clauses of $G$. It follows that every clause $C$ of
  type 3 in $F$ has a resolution derivation from $G$ of size bounded by some constant $c$, hence $G$ has a Frege refutation of depth $t$, bottom fan-in $k$ and
  size at most~$cs+s$.

Let $C$ be a clause $\overline{X(a_1,b_1)} \vee \cdots
\vee \overline{X(a_r,b_r)}$ for some natural number $r$, $R \in L$ of
arity~$r$, $(a_1,\ldots,a_r) \in R({\Astruct'})$, and $(b_1,\ldots,b_r)
\in B^r \setminus R({\Bstruct})$. If $R \not \in \{S,P\}$ then $C$ is
also a clause of $G$ and there is nothing to be proved.  Without loss
of generality let us assume that $R=S$ and hence $C$ is of the form
$\overline{X(a_1,b_1)} \vee \overline{X(a_2,b_2)}$ where $(a_1,a_2)
\in S({\Astruct'})$, and $(b_1,b_2) \in B^2 \setminus S({\Bstruct})$.
Now if $(a_1,a_2) \in S(\Astruct)$ then $C$ is a clause of $G$ and
we are done.  Otherwise, there exists $a_3 \in A'$ such that
$(a_1,a_2,a_3) \in T(\Astruct)$ and for every $b_3 \in B$ there is a
clause $\overline{X(a_1,b_1)} \vee \overline{X(a_2,b_2)} \vee
\overline{X(a_3,b_3)}$ in $G$. Indeed, since $(b_1,b_2) \in B^2
\setminus S({\Bstruct})$ we have that $(b_1,b_2,b_3) \in B^3 \setminus
T({\Bstruct}')$.  The number of such clauses is bounded by $q^{\ell}$,
where $q$ is the number of elements in $B$ and $\ell$ is the arity of $T$.  Those clauses together with the clause
of type 1 for $a_3$ logically imply~$C$.
\end{proof}

\begin{lemma}\label{lem:conjunctionLS}
Let $\mathcal{P}$ be Polynomial Calculus,
  Sherali-Adams, Positive Semidefinite Sherali-Adams, Sums-of-Squares, Lov\'asz-Schrijver or Positive Semidefinite Lov\'asz-Schrijver.
  For any positive integers $k$ and $s$, and every finite
  $L'$-structure $\Astruct$, if there is a $\mathcal{P}$ refutation of
  $\EQ(\Astruct',\Bstruct)$ of degree $k$ and size $s$, then there is
  a $\mathcal{P}$ refutation of $\EQ(\Astruct,\Bstruct')$ of degree
  linear in $k$ and size linear in $s$.
\end{lemma}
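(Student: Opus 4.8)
The plan is to transport the proof of Lemma~\ref{lem:conjunction} to the algebraic and semi-algebraic setting, using Lemma~\ref{lem:simulation} to pass from resolution to the polynomial proof systems and then the machinery of Lemmas~\ref{lem:replacePC} and~\ref{lem:replaceLS} to control the blow-up. Write $F$ for $\EQ(\Astruct',\Bstruct)$ and $G$ for $\EQ(\Astruct,\Bstruct')$. Since $\Astruct$ and $\Astruct'$ have the same domain, $F$ and $G$ are systems of polynomial equations over the same variable set $V(A,B)$, so the substitution that will play the role of $\sigma$ in Lemmas~\ref{lem:replacePC} and~\ref{lem:replaceLS} is the identity. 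The equations of type~1 and type~2, together with all variable axioms from~\eqref{eqn:axioms} (and, in the semi-algebraic case, all axioms from~\eqref{eqn:axiomssemi}, so that $\mathrm{Ineq}(F)$ and $\mathrm{Ineq}(G)$ agree on them), are literally the same in $F$ and $G$. The only equations in $\mathrm{Eq}(F)$ that need not lie in $\mathrm{Eq}(G)$ are those of type~3 arising from the tuples added to $S(\Astruct')$ and $P(\Astruct')$ on top of $S(\Astruct)$ and $P(\Astruct)$.

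The first step is to show that each such extra type-3 equation of $F$ has a $\mathcal{P}$-derivation from $G$ of constant degree and constant size. Such an equation is the multiplicative encoding $M(C)=0$ of a type-3 clause $C$ of $\CNF(\Astruct',\Bstruct)$, and the proof of Lemma~\ref{lem:conjunction} already produces a resolution derivation of $C$ from $\CNF(\Astruct,\Bstruct')$ of bounded width and constant size (the derivation uses only the $q^\ell$ clauses $\overline{X(a_1,b_1)}\vee\overline{X(a_2,b_2)}\vee\overline{X(a_3,b_3)}$, $b_3\in B$, and the type-1 clause for $a_3$, where $q=|B|$ and $\ell$ is the arity of $T$). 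Since $G$ is precisely the set of multiplicative encodings of the clauses of $\CNF(\Astruct,\Bstruct')$, Lemma~\ref{lem:simulation} turns this into a PC derivation and an SA derivation of $M(C)=0$ from $G$ of degree linear in the width and size polynomial in the size and the width; as both of these are constants, we obtain constant degree and constant size. The same derivations serve for SA$^+$, SOS, LS and LS$^+$.

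The rest follows the pattern of Lemma~\ref{lem:eqLS}. When $\mathcal{P}$ is Polynomial Calculus or (Positive Semidefinite) Lov\'asz-Schrijver, take the given $\mathcal{P}$-refutation of $F$ of degree $k$ and size $s$, prepend the constant-size $G$-derivations of the finitely many type-3 equations of $F$ that it uses, and then replay the remaining steps unchanged --- legitimate because $F$ and $G$ are over the same variables and share their variable axioms; this is exactly the argument in the proof of Lemma~\ref{lem:replacePC} specialised to the identity substitution. When $\mathcal{P}$ is (Positive Semidefinite) Sherali-Adams or Sums-of-Squares, which are static, take the refutation in the form of a polynomial identity as in~\eqref{eqn:identity} or~\eqref{eqn:sosproof} and replace each hypothesis polynomial $P_i$ with $P_i=0$ in $\mathrm{Eq}(F)\setminus\mathrm{Eq}(G)$ by its constant-degree, constant-size derivation from $G$, as in the proof of Lemma~\ref{lem:replaceLS}. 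In all cases the degree increases by at most an additive constant, hence stays linear in $k$, and the size increases by at most a constant factor, hence stays linear in $s$.

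I do not expect a genuine obstacle here: the argument is essentially a transcription of Lemma~\ref{lem:conjunction} through the multiplicative encoding, and it is in fact easier than the corresponding step for the equality relation because the substitution is trivial and the auxiliary derivations have constant size rather than size depending on $|A|$. The one point requiring care is the static case, where one cannot literally prepend derivations but must substitute hypothesis polynomials inside the polynomial identity; this is handled once and for all by Lemma~\ref{lem:replaceLS}, and the bound on the degree increase there is purely additive in our situation because $\sigma$ is the identity.
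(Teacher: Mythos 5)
Your proposal is correct, and its overall structure coincides with the paper's: the substitution is the identity, the type-1 and type-2 equations and all variable axioms of $F$ and $G$ are literally the same, the only work is to give constant-degree, constant-size derivations from $G$ of the extra type-3 equations of $F$ (those coming from tuples added to $S(\Astruct')$ and $P(\Astruct')$), and then the machinery of Lemmas~\ref{lem:replacePC} and~\ref{lem:replaceLS} with a trivial substitution yields the stated linear bounds. The one place you diverge is how those auxiliary derivations are obtained: you route through the constant-size, bounded-width resolution derivation implicit in Lemma~\ref{lem:conjunction} and then apply the simulation Lemma~\ref{lem:simulation} to its multiplicative encoding, whereas the paper argues directly that each such equation follows, on all $\{0,1\}$ evaluations, from the $q^{\ell}$ equations $X(a_1,b_1)X(a_2,b_2)X(a_3,b_3)=0$ together with the type-1 equation for $a_3$, and then invokes the quantitative completeness theorem for NS and SA (Theorem~\ref{thm:completeness_alg}). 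The two routes are interchangeable here and give the same constant bounds; in fact your route is exactly the one the paper itself uses in the equality case (Lemma~\ref{lem:eqLS}), where the auxiliary resolution derivation genuinely has size depending on $|A|$ and the semantic-consequence shortcut is unavailable, so nothing is lost or gained beyond taste.
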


\begin{proof}
  Let $F$ denote
  $\EQ(\Astruct',\Bstruct)$ and let $G$ denote
  $\EQ(\Astruct,\Bstruct')$. Observe that the variables of $F$
  and $G$ as well as the equations of type 1 and 2, and the axiom equations and inequalities are the same.  
   Below we show that each equation of
  type 3 in $F$ follows from a bounded number of equations~in~$\mathrm{Eq}(G)$ on all evaluations of its variables in $\{0,1\}$. The way to show this is analogous as in Lemma~\ref{lem:conjunction} above. It follows that every equation in $F$ has NS and SA derivations from $G$ of degrees and sizes bounded by some constants. Similarly as in the proofs of Lemmas~\ref{lem:replacePC} and~\ref{lem:replaceLS}, this implies that $G$ has a $\mathcal{P}$ refutation of degree linear in $k$ and size linear in $s$.

Let $P = 0$ be an equation $X(a_1,b_1) \cdot \ldots
\cdot X(a_r,b_r) = 0$ for some natural number $r$, $R \in L$ of
arity~$r$, $(a_1,\ldots,a_r) \in R({\Astruct'})$, and $(b_1,\ldots,b_r)
\in B^r \setminus R({\Bstruct})$. If $R \not \in \{S,P\}$ then the equation $P = 0$ is
also in $G$ and there is nothing to be proved.  Without loss
of generality let us assume that $R=S$ and hence $P = 0$ is of the form
$X(a_1,b_1) X(a_2,b_2) = 0$ where $(a_1,a_2)
\in S({\Astruct'})$, and $(b_1,b_2) \in B^2 \setminus S({\Bstruct})$.
Now if $(a_1,a_2) \in S(\Astruct)$ then the equation $P = 0$ is in~$G$ and
we are done. Otherwise, there exists $a_3 \in A'$ such that
$(a_1,a_2,a_3) \in T(\Astruct)$ and for every $b_3 \in B$ there is an equation $X(a_1,b_1)  X(a_2,b_2) 
X(a_3,b_3) = 0$ in~$G$.  The number of such equations is bounded by $q^{\ell}$,
where $q$ is the number of elements in the domain of $\Bstruct$ and $\ell$ is the arity of~$T$.  The equation in question follows on all evaluations of its variables in $\{0,1\}$ from the set $\mathrm{Eq}(G')$, where $G'$ is the set of those at most $q^{\ell}$ equations together with the equation
of type~1 for $a_3$.
\end{proof}

\subsection{Existential quantification}

We now consider the case when the structure $\Bstruct'$ is
pp-definable from $\Bstruct$ by adding a single relation definable
using existential quantification only.  Let $S$ be a relation
symbol in $L$, let~$T$ be a relation symbol not in $L$, let
$L' = L \cup \{T\}$, and assume that $\Bstruct'$ is the expansion
of~$\Bstruct$ with the relation $T(\Bstruct')$ defined using
a pp-formula $\phi(x_1,\ldots,x_r)$, where $r$ is the arity of~$T$,
that is made of the existential quantification of one variable
over an atom on $S$. That is, $R(\Bstruct') = R(\Bstruct)$ for
every $R \in L$, and $T(\Bstruct') = \{ (b_1,\ldots,b_r) \in B^r :
\Bstruct \models \phi(x_1/b_1,\ldots,x_r/b_r) \}$. To focus
our attention let us assume that $S$ is ternary, $T$ is binary,
and that the pp-formula that defines $T$ is $\phi(x_1,x_2) = \exists  y
S(x_1,x_2,y)$.

For a finite $L'$-structure $\Astruct$ the corresponding $L$-structure $\Astruct'$ has domain $A$ extended by a set of witnesses for $S$. For each $(a_1,a_2) \in T(\Astruct)$, we add to $A$ a new point $y(a_1,a_2)$ so the domain $A'$ is equal to $A \cup \{ y(a_1,a_2) : (a_1,a_2) \in T(\Astruct) \}$. All the relation symbols from $L$ except for $S$ are interpreted in $\Astruct'$ the same as in $\Astruct$, and $S(\Astruct') = S(\Astruct) \cup
\{(a_1,a_2,y(a_1,a_2)) : (a_1,a_2) \in T(\Astruct) \}$.
It is not difficult to see that~$\Astruct$ maps homomorphically to~$\Bstruct'$ if and only if~$\Astruct'$ maps homomorphically to~$\Bstruct$.

\begin{lemma}\label{lem:exist}
  For any positive integers $t$, $k$ and $s$, and every finite $L'$-structure $\Astruct$, if there
  is a Frege refutation of $\CNF(\Astruct',\Bstruct)$ of depth $t$, bottom fan-in $k$ and size $s$,
  then there is a  Frege refutation of
  $\CNF(\Astruct,\Bstruct')$ of depth $t$, bottom fan-in polynomial in $k$ and size polynomial in~$2^k$ and $s$.
\end{lemma}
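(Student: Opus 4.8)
The plan is to follow the same substitution strategy used in the previous two cases (equality and conjunction), but now the substitution goes in the opposite direction because the existential-quantification construction \emph{removes} variables rather than adding relations. Concretely, $\CNF(\Astruct',\Bstruct)$ has variables $X(a,b)$ for all $a \in A' = A \cup \{y(a_1,a_2) : (a_1,a_2) \in T(\Astruct)\}$, while $\CNF(\Astruct,\Bstruct')$ has variables only for $a \in A$. So I would take the refutation of $G := \CNF(\Astruct,\Bstruct')$ as the \emph{input} and transform it into a refutation of $F := \CNF(\Astruct',\Bstruct)$ --- wait, that is backwards relative to the statement; let me re-read. The statement wants: from a refutation of $\CNF(\Astruct',\Bstruct)$ produce one of $\CNF(\Astruct,\Bstruct')$. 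So the input refutation has the \emph{extra} witness variables $X(y(a_1,a_2),b)$, and I must produce a refutation over the smaller variable set. The natural move is to substitute, for each witness variable $X(y(a_1,a_2),b)$, a formula over the remaining variables $X(a_1,\cdot),X(a_2,\cdot)$ that encodes ``$b$ is a valid $S$-witness for the pair $(a_1,a_2)$'': roughly $X(y(a_1,a_2),b) \mapsto \bigvee \{\, X(a_1,b_1)\wedge X(a_2,b_2) \;:\; (b_1,b_2,b)\in S(\Bstruct)\,\}$. This is a $2$-DNF with at most $|B|^2$ terms, so Lemma~\ref{lem:substitution} and Lemma~\ref{lem:replace} apply with $d=2$, $m \leq |B|^2$, $q$ bounded (the CNF clauses have width $\leq \max(|B|, r)$), and $p$ bounded; the output will have depth $t$, bottom fan-in polynomial in $k$, and size polynomial in $2^k$ and $s$, exactly as claimed.

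The key steps, in order: (1) fix the substitution $\sigma$ described above (leaving all non-witness variables $X(a,b)$ with $a\in A$ untouched); (2) check that for every axiom clause $C$ of $F = \CNF(\Astruct',\Bstruct)$, the substituted formula $\sigma(C)$ is a logical consequence of a bounded number of clauses of $G = \CNF(\Astruct,\Bstruct')$ --- this is the content that feeds Lemma~\ref{lem:replace}; (3) invoke Lemma~\ref{lem:replace} to conclude. Step (2) splits by clause type. Type-1 clauses $\bigvee_{b\in B}X(a,b)$ for $a\in A$ are unchanged and already lie in $G$. Type-1 clauses for a witness point, $\bigvee_{b\in B}X(y(a_1,a_2),b)$, become $\bigvee_{b\in B}\sigma(X(y(a_1,a_2),b))$, which after De~Morgan is logically implied by the type-1 clauses for $a_1$ and $a_2$ together with the clause in $G$ saying $(a_1,a_2)\in T(\Astruct)$ is consistent --- i.e. $\bigvee\{X(a_1,b_1)\wedge X(a_2,b_2) : (b_1,b_2,b)\in S(\Bstruct) \text{ for some } b\}$, which is exactly the type-3 constraint of $G$ coming from $T(\Astruct')=\{(b_1,b_2):\exists b\,(b_1,b_2,b)\in S(\Bstruct)\}$ phrased positively; one must be slightly careful and argue at the level of truth assignments that any assignment satisfying the relevant $O(|B|^2)$ clauses of $G$ satisfies $\sigma$ of this type-1 clause. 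Type-2 clauses $\overline{X(a,b_0)}\vee\overline{X(a,b_1)}$: for $a\in A$ unchanged; for $a = y(a_1,a_2)$, after substitution this says ``no two distinct $S$-witnesses for $(a_1,a_2)$'', which is implied by the functionality (type-2) clauses for $a_1$ and $a_2$. Type-3 clauses of $F$ on a symbol $R\in L$, $R\neq S$, are unchanged and lie in $G$ (the witness points appear in no $R$-tuple for $R\neq S$). Type-3 clauses of $F$ on $S$: either $(a_1,a_2,a_3)\in S(\Astruct)$, in which case the clause is already in $G$, or it comes from the added tuple $(a_1,a_2,y(a_1,a_2))$, in which case after substituting the witness variable the clause $\overline{X(a_1,b_1)}\vee\overline{X(a_2,b_2)}\vee\overline{X(y(a_1,a_2),b_3)}$ with $(b_1,b_2,b_3)\notin S(\Bstruct)$ becomes a formula whose negation forces $X(a_1,b_1),X(a_2,b_2)$ and some $S$-witness term for $(a_1,a_2)$; but a witness term asserts $(b_1',b_2',b_3)\in S(\Bstruct)$ for some $b_1',b_2'$, and combining with functionality clauses for $a_1,a_2$ (giving $b_1'=b_1$, $b_2'=b_2$) yields $(b_1,b_2,b_3)\in S(\Bstruct)$, a contradiction --- again a bounded number of $G$-clauses suffices.

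I expect the main obstacle to be step (2) for the type-1 and type-3 cases involving the witness point: the substitution $\sigma(X(y(a_1,a_2),b))$ is a genuine disjunction (``some witness works''), so one must argue that the relevant bounded subset of $G$ \emph{logically implies} the substituted axiom, and getting the bookkeeping right on which $O(|B|^{O(r)})$ clauses of $G$ are needed (functionality clauses for $a_1$ and $a_2$, the type-1 clauses for $a_1,a_2$, and the type-3 clauses encoding the pp-definition of $T$) requires care but no cleverness. Everything else --- the depth preservation, the $2^k$ and $s$ blowups --- is handled verbatim by Lemma~\ref{lem:substitution} and Lemma~\ref{lem:replace}, since the substituting formulas are $2$-DNFs with a constant number of terms. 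As in Lemma~\ref{lem:eq} and Lemma~\ref{lem:conjunction}, one should state the analogous algebraic/semi-algebraic version (substituting each witness variable by the corresponding sum of products $\sum\{X(a_1,b_1)X(a_2,b_2):(b_1,b_2,b)\in S(\Bstruct)\}$, a polynomial of degree $2$ with $\leq|B|^2$ monomials and coefficients that one normalizes to $1$ using the disjointness axioms $X(a_i,b)X(a_i,b')=0$) and invoke Lemmas~\ref{lem:replacePC} and~\ref{lem:replaceLS}; but for the present lemma only the Frege case is stated, so I would keep the proof to the propositional substitution.
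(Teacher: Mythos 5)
Your overall strategy is the right one and matches the paper's: substitute each witness variable $X(y(a_1,a_2),b)$ by a $2$-DNF over the variables of $G=\CNF(\Astruct,\Bstruct')$, verify that every substituted axiom of $F=\CNF(\Astruct',\Bstruct)$ is a logical consequence of boundedly many clauses of $G$, and then let Lemma~\ref{lem:replace} do the quantitative work. But the specific substitution you chose has a genuine gap at the type-2 clauses. You substitute $X(y(a_1,a_2),b)$ by $\bigvee\{X(a_1,b_1)\wedge X(a_2,b_2) : (b_1,b_2,b)\in S(\Bstruct)\}$, i.e.\ ``$b$ is \emph{some} witness for the pair''. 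Now take a type-2 axiom $\overline{X(y(a_1,a_2),b)}\vee\overline{X(y(a_1,a_2),b')}$ with $b\neq b'$ and suppose some single pair $(b_1,b_2)$ has two witnesses, $(b_1,b_2,b)\in S(\Bstruct)$ and $(b_1,b_2,b')\in S(\Bstruct)$ (nothing rules this out). An assignment mapping $a_1\mapsto b_1$, $a_2\mapsto b_2$ can satisfy \emph{all} of $G$, yet it makes both substituted disjunctions true and hence falsifies the substituted type-2 clause. So that clause is not a logical consequence of the functionality clauses for $a_1,a_2$ (nor even of all of $G$), and step (2) of your plan — and with it the appeal to Lemma~\ref{lem:replace} — breaks down. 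Your type-1 and type-3 analyses are fine even with your substitution; the failure is isolated to, but fatal at, type 2.

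The paper's proof repairs exactly this point by fixing a \emph{canonical} witness: it partitions $T(\Bstruct')$ into pairwise disjoint sets $F(1),\ldots,F(q)$ (a pair goes into $F(b)$ for its least witness $b$) and substitutes $X(y(a_1,a_2),b):=\bigvee_{(b_1,b_2)\in F(b)}X(a_1,b_1)\wedge X(a_2,b_2)$. Because $F(b)\cap F(b')=\emptyset$ for $b\neq b'$, the substituted type-2 clause now only asserts that $(a_1,a_2)$ is not mapped to a pair in $F(b)$ or not mapped to a pair in $F(b')$, which \emph{is} implied by the at-most-$2q^2$ functionality clauses for $a_1$ and $a_2$; coverage of $T(\Bstruct')$ by the $F(b)$'s keeps your type-1 argument intact, and $F(b_3)\subseteq\{(i,j):(i,j,b_3)\in S(\Bstruct)\}$ keeps the type-3 argument intact. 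With that one change your proof goes through as planned.
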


\begin{proof}
  Let $F$ denote
  $\CNF(\Astruct',\Bstruct)$ and let $G$ denote
  $\CNF(\Astruct,\Bstruct')$.  Assume for this proof that $B =
  [q]$.  For each $b \in [q]$ we define a subset $F(b)$ of
  $T(\Bstruct')$ inductively as follows:
\begin{equation*}
F(b) = \big\{{ (i,j) \in [q]^2 : (i,j,b) \in S(\Bstruct') }\big\} \setminus (F(1)
\cup \cdots \cup F(b-1)).
\end{equation*}
Note that $F(1),\ldots,F(q)$ cover $T(\Bstruct')$ and are pairwise
disjoint. In other words, they partition $T(\Bstruct')$; note however
that some $F(b)$'s may be empty.

Consider the substitution $\sigma$ defined by the identity on all
variables of $G$ and defined as follows for every variable in $F$ that
is not in $G$:
\begin{equation*}
X(y(a_1,a_2),b) := \bigvee_{(b_1,b_2) \in F(b)} X(a_1,b_1) \wedge X(a_2,b_2),
\end{equation*}
for every $(a_1,a_2) \in T(\Astruct)$ and $b \in [q]$. Note that
this is an $\ell$-DNF with at most $q^{\ell}$ many terms, where $\ell$ is the
arity of $T$. By Lemma~\ref{lem:replace} it suffices to check that,
for each clause $C$ of $F$, the substituted formula $\sigma(C)$ is a
logical consequence of a bounded number of clauses~of~$G$.

To argue this, let $C$ be any of the clauses in $F$, say $\bigvee_{b \in [q]}
X(a,b)$ for $a$ in the domain of~$\Astruct'$. If $a$ is not of
the form $y(a_1,a_2)$, then the clause is left untouched by the substitution.
Since the same clause is also in $G$, there is nothing to prove.
Suppose now that $a$ is $y(a_1,a_2)$. The substituted formula is
then the following:
\begin{equation*}
\bigvee_{b \in [q]} \bigvee_{(b_1,b_2) \in F(b)} X(a_1,b_1) \wedge X(a_2,b_2).
\end{equation*}
Since the sets $F(1),\ldots,F(q)$ cover $T(\Bstruct')$, this is
indeed equivalent to 
\begin{equation*}
\bigvee_{(b_1,b_2) \in T(\Bstruct')} X(a_1,b_1) \wedge X(a_2,b_2).
\end{equation*}
Note now that this formula is a logical consequence of the following
clauses of $G$: those of type 1 for $a_1$ and $a_2$, and all those of
type 3 for $(a_1,a_2)$ and the relation symbol $T$. These are at most $q^{\ell} + 2$ many clauses, where $\ell$ is the
arity of $T$,
and we are done for this case.

Suppose now that $C$ is the clause $\overline{X(a,b)} \vee
\overline{X(a,b')}$ for $a$ in the domain of $\Astruct'$ and $(b,b')
\in B^2$ with $b \not= b'$. As in the previous case, if $a$ is not of
the form $y(a_1,a_2)$, then the clause is left untouched by the
substitution and there is nothing to prove.  Suppose now that $a$ is
$y(a_1,a_2)$. By applying the substitution $\sigma$ and converting the
resulting formula into negation normal form we obtain the following:
\begin{equation*}
\Big({\bigwedge_{(b_1,b_2) \in F(b)} \overline{X(a_1,b_1)} \vee \overline{X(a_2,b_2)}}\Big)
\vee
\Big({\bigwedge_{(b_1,b_2) \in F(b')} \overline{X(a_1,b_1)} \vee \overline{X(a_2,b_2)}}\Big).  
\end{equation*}
This formula says that the tuple $(a_1,a_2)$ is either not mapped to
any tuple in $F(b)$ or not mapped to any tuple in $F(b')$.  Since the
sets $F(b)$ and $F(b')$ are disjoint, this is a logical consequence
of at most $2q^2$ many clauses of $G$: those of type 2 for $a_1$ and
$a_2$. Indeed, those clauses imply that the tuple $(a_1,a_2)$ can be
mapped to at most one tuple from $B^2$.

Now, let $C$ be the clause $\overline{X(a_1,b_1)} \vee \cdots \vee
\overline{X(a_r,b_r)}$ for some natural number $r$, $R \in L$ of
arity~$r$, $(a_1,\ldots,a_r) \in R({\Astruct'})$, and
$(b_1,\ldots,b_r) \in B^r \setminus R({\Bstruct})$. If
$(a_1,\ldots,a_r) \in A^r$ then the same argument as above shows
that there is nothing to be proved.  Observe that the only other case
is when $R=S$ and $C$ is of the form $\overline{X(a_1,b_1)} \vee
\overline{X(a_2,b_2)} \vee \overline{X(y(a_1,a_2),b_3)}$, where
$(a_1,a_2) \in T(\Astruct)$ and $(b_1,b_2,b_3) \in B^3 \setminus
S({\Bstruct})$.  The substituted formula (after converting to negation
normal form) is then the following:
\begin{equation*}
\overline{X(a_1,b_1)} \vee \overline{X(a_2,b_2)} \vee {\bigwedge_{(b,b') \in F(b_3)} \overline{X(a_1,b)} \vee \overline{X(a_2,b')}}.  
\end{equation*}  
There are two possibilities. If $(b_1,b_2) \in B^2 \setminus
T({\Bstruct'})$, then the formula above is the logical consequence of
the clause $\overline{X(a_1,b_1)} \vee \overline{X(a_2,b_2)}$ from
$G$.  Otherwise, we have that $(b_1,b_2) \in T({\Bstruct'})$, but
$(b_1,b_2,b_3) \in B^3 \setminus S({\Bstruct})$ which means that
$(b_1,b_2) \not \in F(b_3)$. Observe that the substituted formula says
that the tuple $(a_1,a_2)$ is not mapped to the tuple $(b_1,b_2)$ or
it is not mapped to any tuple from $F(b_3)$. Similarly to the previous
case, this is a logical consequence of at most $2q^2$ many clauses of
$G$: those of type 2 for $a_1$ and $a_2$. This is because those
clauses imply that the tuple $(a_1,a_2)$ can be mapped to at most one
tuple from $B^2$.
\end{proof}

\begin{lemma}\label{lem:existLS}
Let $\mathcal{P}$ be Polynomial Calculus,
  Sherali-Adams, Positive Semidefinite Sherali-Adams, Sums-of-Squares, Lov\'asz-Schrijver or Positive Semidefinite Lov\'asz-Schrijver.
  For any positive integers $k$ and $s$, and every finite
  $L'$-structure $\Astruct$, if there is a $\mathcal{P}$ refutation of
  $\EQ(\Astruct',\Bstruct)$ of degree $k$ and size $s$, then there is
  a $\mathcal{P}$ refutation of $\EQ(\Astruct,\Bstruct')$ of degree
  linear in $k$ and size polynomial in~$2^k$ and $s$.
\end{lemma}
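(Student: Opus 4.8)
The plan is to mirror the proof of Lemma~\ref{lem:exist}, replacing the DNF substitution by a polynomial substitution and then appealing to Lemmas~\ref{lem:replacePC} and~\ref{lem:replaceLS}. Write $F$ for $\EQ(\Astruct',\Bstruct)$ and $G$ for $\EQ(\Astruct,\Bstruct')$, assume $B=[q]$, and let $F(1),\ldots,F(q)$ be the partition of $T(\Bstruct')$ defined in the proof of Lemma~\ref{lem:exist}. The key observation is that, for a fixed tuple $(a_1,\ldots,a_\ell)$ (where $\ell$ is the arity of $T$; $\ell=2$ in the running example), once the ``unique image'' equations are available at most one of the products $X(a_1,b_1)\cdots X(a_\ell,b_\ell)$ with $(b_1,\ldots,b_\ell)$ ranging over $T(\Bstruct')$ can be nonzero, so the DNF used in Lemma~\ref{lem:exist} becomes literally a \emph{sum} of monomials. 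Accordingly, I would take $\sigma$ to be the identity on the variables of $G$ and set, for every $(a_1,a_2)\in T(\Astruct)$ and every $b\in[q]$ (writing out the $\ell=2$ case),
\[
X(y(a_1,a_2),b) \;:=\; \sum_{(b_1,b_2)\in F(b)} X(a_1,b_1)\,X(a_2,b_2),
\qquad
\bar X(y(a_1,a_2),b) \;:=\; \sum_{b'\neq b}\;\sum_{(b_1,b_2)\in F(b')} X(a_1,b_1)\,X(a_2,b_2).
\]
Both are polynomials of degree at most $\ell$ with at most $q^\ell$ monomials and all coefficients equal to $1$, as required by Lemmas~\ref{lem:replacePC} and~\ref{lem:replaceLS}; choosing instead $\bar X(y(a_1,a_2),b):=1-X(y(a_1,a_2),b)$ would be disallowed because of the $-1$ coefficient, which is why the ``sum over the other values of $b$'' form is used.

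By Lemmas~\ref{lem:replacePC} and~\ref{lem:replaceLS} it then suffices to check that the $\sigma$-image of every equation in $\mathrm{Eq}(F)$ --- and, for the semi-algebraic systems, of every inequality in $\mathrm{Ineq}(F)$ --- follows on all $\{0,1\}$-evaluations from a bounded number of equations of $G$. This is just the polynomial shadow of the case analysis in the proof of Lemma~\ref{lem:exist}. The axiom equations and inequalities for variables $X(a,b),\bar X(a,b)$ with $a\in A$ are untouched by $\sigma$ and hold on the cube vacuously. For a witness variable $X(y(a_1,a_2),b)$: the idempotency axioms $X^2-X=0$, $\bar X^2-\bar X=0$ and the inequality axioms $0\le X\le1$, $0\le\bar X\le1$, after substitution, hold modulo the type-$2$ equations of $G$ for $a_1$ and $a_2$, which force all but one product in the relevant sum to vanish; the link axiom $X+\bar X-1=0$, after substitution, becomes $\sum_{(b_1,b_2)\in T(\Bstruct')}X(a_1,b_1)X(a_2,b_2)-1=0$, which holds modulo the type-$1$ and type-$2$ equations of $G$ for $a_1,a_2$ together with the type-$3$ equations of $G$ for the relation symbol $T$ at $(a_1,a_2)$ (these encode that $(a_1,a_2)\in T(\Astruct)$ must map into $T(\Bstruct')$). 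Finally, the type-$1$, $2$, $3$ encoding equations of $F$ transcribe exactly as in Lemma~\ref{lem:exist}: e.g.\ the type-$3$ equation $X(a_1,b_1)X(a_2,b_2)X(y(a_1,a_2),b_3)=0$ for $S$ becomes, after substitution, a sum of degree-$4$ monomials that vanishes modulo the type-$2$ equations for $a_1,a_2$ because $(b_1,b_2,b_3)\notin S(\Bstruct')$ implies $(b_1,b_2)\notin F(b_3)$; and the substituted type-$1$ and type-$2$ equations for $y(a_1,a_2)$ vanish modulo the type-$1$, type-$2$ and type-$3$($T$) equations for $a_1,a_2$. In every case the number of equations of $G$ used is bounded by a function of $q$ and $\ell$ only.

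Having verified these hypotheses, Lemma~\ref{lem:replacePC} yields the claim for Polynomial Calculus (and Nullstellensatz) over any field, and Lemma~\ref{lem:replaceLS} yields it for SA, SA$^+$, SOS, LS and LS$^+$, in each case with a refutation of $G$ of degree linear in $k$ and size polynomial in $2^k$ and $s$. The main obstacle I anticipate is precisely the bookkeeping around the twin variables: one must choose the substitution for $\bar X(y(a_1,a_2),b)$ so that (i) all coefficients remain equal to $1$, and (ii) the substituted idempotency and link axioms are still bounded consequences of $G$; point (ii) for the link axiom is exactly where the type-$3$ equations for $T$ in $G$ are needed, and this is the only place the argument departs in substance from a verbatim transcription of the proof of Lemma~\ref{lem:exist}.
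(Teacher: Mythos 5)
Your proof is correct and follows essentially the same route as the paper: the identical substitution for $X(y(a_1,a_2),b)$, the same verification that each substituted axiom and encoding equation of $\mathrm{Eq}(F)$/$\mathrm{Ineq}(F)$ is a $\{0,1\}$-consequence of boundedly many equations of $G$, and then an appeal to Lemmas~\ref{lem:replacePC} and~\ref{lem:replaceLS}. The only deviation is that the paper defines $\bar X(y(a_1,a_2),b)$ as the sum over $B^2\setminus F(b)$ rather than over $T(\Bstruct')\setminus F(b)$; both choices keep every coefficient equal to $1$ and bounded degree and monomial count, and your variant merely shifts which bounded subset of $G$ is invoked (your substituted link axiom $X+\bar X-1=0$ additionally needs the type-3 equations for $T$ at $(a_1,a_2)$, which the paper's choice avoids), so the argument goes through either way.
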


\begin{proof}
 Let $F$ denote
  $\EQ(\Astruct',\Bstruct)$ and let $G$ denote
  $\EQ(\Astruct,\Bstruct')$. 
Assume that $B =
  [q]$. For each $b \in [q]$ define a subset $F(b)$ of
  $T(\Bstruct')$ as in Lemma~\ref{lem:exist} above.
 Consider the substitution $\sigma$ defined by the identity on all
variables of $G$ and defined as follows for every variable in $F$ that
is not in $G$:
\begin{align*}
X(y(a_1,a_2),b) &:= \sum_{(b_1,b_2) \in F(b)} X(a_1,b_1)  X(a_2,b_2), \\
\bar{X}(y(a_1,a_2),b) &:= \sum_{(b_1,b_2) \in B^2 \setminus F(b)} X(a_1,b_1)  X(a_2,b_2),
\end{align*}
for every $(a_1,a_2) \in T(\Astruct)$ and $b \in [q]$. Note that
those are polynomials of degree $m$ with at most $q^m$ many monomials and all coefficients equal $1$, where $m$ is the
arity of $T$. We will show that for each equation in $F$ and for each axiom inequality and equation, its substitution follows on all evaluations of its variables in $\{0,1\}$ from a bounded number of equations in $\mathrm{Eq}(G)$. By Lemmas~\ref{lem:replacePC} and~\ref{lem:replaceLS} this implies the statement of the lemma.
The way to show this is analogous as in Lemma~\ref{lem:exist} above.

Let $P = 0$ be any of the equations in $F$, say $\prod_{b \in B}
\bar{X}(a,b) = 0$ for $a$ in the domain of~$\Astruct'$. If $a$ is not of
the form $y(a_1,a_2)$, then the equation is left untouched by the substitution and there is nothing to prove.
Suppose now that $a$ is $y(a_1,a_2)$. The substituted equation is
then the following:
\begin{equation*}
\prod_{b \in B} \big( \sum_{(b_1,b_2) \in B^2 \setminus F(b)} X(a_1,b_1)  X(a_2,b_2) \big) = 0.
\end{equation*}
This equation follows on all evaluations of its variables in $\{0,1\}$ from the
set $\mathrm{Eq}(G')$, where $G'$ contains the following
equations of $G$: those of type 1 and 2 for $a_1$ and $a_2$, and all those of
type 3 for $(a_1,a_2)$ and the relation symbol $T$. Indeed, take any evaluation satisfying $\mathrm{Eq}(G')$. It corresponds to a mapping from $\{a_1,a_2\}$ to $B$, where $(a_1,a_2)$ is mapped to a pair $(b_1,b_2)$ in $T(\Bstruct')$. Since the sets $F(1),\ldots,F(q)$ form a partition of $T(\Bstruct')$, there is $b \in B$ such that $(b_1,b_2) \in F(b)$. For such $b$ it holds that $X(a_1,b_1)  X(a_2,b_2) = 0$ whenever $(b_1,b_2) \in B^2 \setminus F(b)$. There are at most $q^{\ell} + 2q^2 + 2$ many equations in $G'$, where $\ell$ is the
arity of $T$,
so we are done for this case.

Suppose now that $P = 0$ is the equation $X(a,b)  X(a,b') = 0$ for $a$ in the domain of~$\Astruct'$ and $(b,b')
\in B^2$ with $b \not= b'$. As in the previous case, if $a$ is not of
the form $y(a_1,a_2)$, then the equation is left untouched by the
substitution and there is nothing to prove.  Suppose now that $a$ is
$y(a_1,a_2)$. By applying the substitution $\sigma$ we obtain the following:
\begin{equation*}
\big( {\sum_{(b_1,b_2) \in F(b)} X(a_1,b_1)  X(a_2,b_2)} \big)
\cdot
\big( {\sum_{(b_1,b_2) \in F(b')} X(a_1,b_1)  X(a_2,b_2)} \big) = 0.  
\end{equation*}
Since the
sets $F(b)$ and $F(b')$ are disjoint, this equation follows on all evaluations of its variables in $\{0,1\}$ from the set of equations of type 2 for $a_1$ and
$a_2$. Indeed, those equations imply that at most one of the pairs $(b_1,b_2) \in B^2$ the product $X(a_1,b_1) X(a_2,b_2)$ is $1$.

Now, let $P = 0$ be the equation $X(a_1,b_1) \cdot \ldots \cdot
X(a_r,b_r) = 0$ for some natural number~$r$, $R \in L$ of
arity~$r$, $(a_1,\ldots,a_r) \in R({\Astruct'})$, and
$(b_1,\ldots,b_r) \in B^r \setminus R({\Bstruct})$. If
$(a_1,\ldots,a_r) \in A^r$ then the same argument as above shows
that there is nothing to be proved.  Observe that the only other case
is when $R=S$ and the equation is of the form $$X(a_1,b_1) 
X(a_2,b_2)  X(y(a_1,a_2),b_3) = 0,$$ where
$(a_1,a_2) \in T(\Astruct)$ and $(b_1,b_2,b_3) \in B^3 \setminus
S({\Bstruct})$.  The substituted equation is then the following:
\begin{equation*}
X(a_1,b_1)  {X}(a_2,b_2) \ \ \cdot \big({\sum_{(b,b') \in F(b_3)} {X(a_1,b)} {X(a_2,b')}}\big) = 0.  
\end{equation*}  
There are two possibilities. If $(b_1,b_2) \in B^2 \setminus
T({\Bstruct'})$, then the equation above follows on all evaluations of its variables in $\{0,1\}$ from the equation ${X}(a_1,b_1)  {X}(a_2,b_2) = 0$ from~$G$.  Otherwise, we have that $(b_1,b_2) \in T({\Bstruct'})$, but
$(b_1,b_2,b_3) \in B^3 \setminus S({\Bstruct})$ which means that
$(b_1,b_2) \in B^2 \setminus F(b_3)$. In this case, the substituted equation follows on all evaluations of its variables in $\{0,1\}$ from the set of all equations of type 2 for $a_1$ and $a_2$, which imply that for at most one of the pairs $(b_1,b_2) \in B^2$ the product $X(a_1,b_1)  X(a_2,b_2)$ is $1$.

Let us consider the axiom equation $X(a,b)^2-X(a,b)=0$ for $a$ in the domain of~$\Astruct'$ and $b
\in B$. If $a$ is not of
the form $y(a_1,a_2)$, then the equation is left untouched by the
substitution and there is nothing to prove. Suppose now that $a$ is
$y(a_1,a_2)$. By applying the substitution~$\sigma$ we obtain the following:
\begin{equation*}
\big(\sum_{(b_1,b_2) \in F(b)} X(a_1,b_1)  X(a_2,b_2)\big)^2 - \sum_{(b_1,b_2) \in F(b)} X(a_1,b_1)  X(a_2,b_2) = 0.  
\end{equation*}
This equation follows on all evaluations of its variables in $\{0,1\}$ from $\mathrm{Eq}(G')$ where $G'$ is the set of
equations of type 1 and 2 for~$a_1$ and~$a_2$. 

Let us consider the axiom equation $X(a,b)+\bar{X}(a,b)-1=0$ for $a$ in the domain of~$\Astruct'$ and $b \in B$. If $a$ is not of
the form $y(a_1,a_2)$, then the equation is left untouched by the
substitution and there is nothing to prove. Suppose now that $a$ is
$y(a_1,a_2)$. By applying the substitution $\sigma$ we obtain the following:
\begin{equation*}
\sum_{(b_1,b_2) \in B^2} X(a_1,b_1)  X(a_2,b_2) - 1 = 0.  
\end{equation*}
This equation follows on all evaluations of its variables in $\{0,1\}$ from $\mathrm{Eq}(G')$ where $G'$ is the set of
equations of type 1 and 2 for~$a_1$ and~$a_2$.

Let us consider the axiom inequality $1-X(a,b) \geq 0$, for $a$ in the domain of~$\Astruct'$ and $b \in B$. If $a$ is not of
the form $y(a_1,a_2)$, then the inequality is left untouched by the
substitution and there is nothing to prove. Suppose now that $a$ is
$y(a_1,a_2)$. By applying the substitution $\sigma$ we obtain the following:
\begin{equation*}
1 \ - \sum_{(b_1,b_2) \in F(b)} X(a_1,b_1)  X(a_2,b_2) \geq 0.  
\end{equation*}
This inequality follows on all evaluations of its variables in $\{0,1\}$ from $\mathrm{Eq}(G')$ where $G'$ is the set of
equations of type 1 and 2 for~$a_1$ and~$a_2$.  They imply that at most one of the products $X(a_1,b_1)  X(a_2,b_2)$ for $(b_1,b_2) \in F(b)$ is equal $1$. The same way we deal with the case, when the inequality in question is the axiom inequality $1-\bar{X}(a,b) \geq 0$, for $a$ in the domain of~$\Astruct'$ and $b \in B$.

Finally, the axiom inequalities 
$X(a,b) \geq 0$ and $\bar{X}(a,b) \geq 0$, for $a$ in the domain of~$\Astruct'$ and $b \in B$, after applying the substitution $\sigma$ are always satisfied on evaluations of their variables in $\{0,1\}$. 
\end{proof}

\subsection{All together: pp-interpretations}

Let $\Bstruct'$ be a finite $L'$-structure pp-interpretable in
  $\Bstruct$, and let $f \colon B^n
  \rightarrow B'$ be a surjective partial function such that the domain of $f$ is defined by a pp-formula $\delta(x_1, \ldots, x_n)$ in the language $L$, i.e,. $f^{-1}(B') = \{ (b_1,\ldots,b_n) \in B^n : \Bstruct \models \delta(x_1/b_1,\ldots,x_n/b_n) \}$,
 the preimage of the equality relation on $B'$ is defined by a
   pp-formula $\epsilon(x_1,\ldots, x_{2n})$ in the language $L$,
   i.e., $f^{-1}(\{(b',b') : b' \in B'\}) = \{ ((b_1,\ldots,
   b_n),(b_{n+1}, \ldots, b_{2n})) \in (B^{n})^2 : \Bstruct \models
   \epsilon(x_1/b_1,\ldots,x_{2n}/b_{2n}) \}$,
 and for every relation symbol $R \in L'$ of arity $r$, the preimage
   of the relation $R(\Bstruct')$ is defined by a pp-formula
   $\varphi_R(x_1, \ldots, x_{rn})$ in the vocabulary $L$, i.e.,
   $$f^{-1}(R(\Bstruct')) = \{ ((b_1, \ldots,b_n),\ldots,(b_{nr-n+1},
   \ldots, b_{nr})) \in (B^n)^r : \Bstruct \models
   \varphi_R(x_1/b_1,\ldots,x_{rn}/b_{rn}) \}.$$

Consider the set $f^{-1}(B') \subseteq B^n$ quotiented by the
equivalence relation $f^{-1}(\{(b',b') : b' \in B'\}) \subseteq
(B^n)^2$. For every equivalence class $[(b_1, \ldots,b_n)]$ we choose
a representative $(b_1, \ldots,b_n)^*$. The $L'$-structure whose
domain is the set of all representatives and for each $R \in L'$
of arity $r$ the relation $R$ interpreted as $\{ ((b_1,
\ldots,b_n)^*,\ldots,(b_{nr-n+1}, \ldots, b_{nr})^*) : \Bstruct
\models \varphi_R(x_1/b_1,\ldots,x_{rn}/b_{rn}) \}$ is isomorphic to
$\Bstruct'$. From now on whenever we talk about the structure $\Bstruct'$
we mean the structure that we have just defined.

We now define a structure $\Bstruct''$ pp-definable in $\Bstruct$ and show intuitively that small
 refutations for~$\Bstruct''$ imply small refutations for~$\Bstruct'$. By
the results of previous sections it follows that small 
 refutations for $\Bstruct$ imply small refutations for $\Bstruct'$. To this end, for every relation symbol $R \in L'$ of arity $r$, let
  $\hat{R}$ be a relation symbol of arity $nr$, and let $L'' =
  \{ \hat{R} : R \in L'\}$. We define $\Bstruct''$ to be the finite $L''$-structure with domain $B$ and relations defined by 
$\hat{R}(\Bstruct'') = \{ (b_1,\ldots,b_{rn}) \in B^{rn} :
  \Bstruct \models \varphi_R(x_1/b_1,\ldots,x_{rn}/b_{rn}) \}$, for
  each $\hat{R} \in L''$ of arity $rn$.

For every instance $\Astruct$ of the CSP of the language $\Bstruct'$, that is,
for every finite $L'$-structure $\Astruct$, the corresponding instance of the CSP of the language $\Bstruct''$ is the $L''$-structure
$\Astruct''$ whose domain $A''$ is $A \times [n]$
and whose relations are defined by $$\hat{R}(\Astruct'') = \{ ((a_1,1), \ldots, (a_1,n),(a_2,1),  \ldots \ldots \ldots, (a_r,n)) : (a_1, \ldots, a_r) \in R(\Astruct)\},$$ for each $\hat{R} \in L''$ of arity~$rn$.
It is not difficult to see that~$\Astruct$ maps homomorphically to~$\Bstruct'$ if and only if~$\Astruct''$ maps homomorphically to $\Bstruct''$.

\begin{lemma}\label{lem:interpretation}
 For any positive integers $t$, $k$ and $s$, and every finite $L'$-structure $\Astruct$, if
  there is a Frege refutation of
  $\CNF(\Astruct'',\Bstruct'')$ of depth $t$, bottom fan-in $k$ and size $s$, then there is a
  Frege refutation of $\CNF(\Astruct,\Bstruct')$ of depth $t$, bottom fan-in polynomial in $k$ and size polynomial in~$2^k$ and $s$.
\end{lemma}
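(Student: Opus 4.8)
The plan is to reuse the substitution machinery of the previous subsections: I will exhibit a substitution $\sigma$ sending each variable of $\CNF(\Astruct'',\Bstruct'')$ to a short DNF over the variables of $\CNF(\Astruct,\Bstruct')$, check that every clause of $\CNF(\Astruct'',\Bstruct'')$ becomes, after $\sigma$, a logical consequence of a constant number of clauses of $\CNF(\Astruct,\Bstruct')$, and then conclude by Lemma~\ref{lem:replace}. Recall that the domain $B'$ of $\Bstruct'$ is the set of chosen representatives $(b_1,\dots,b_n)^{*}\in B^{n}$ of the $\epsilon$-classes inside $f^{-1}(B')$; for a representative $b'\in B'$ write $\pi_i(b')\in B$ for its $i$-th coordinate. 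The substitution I would take is the identity on the variables $X(a,b')$ of $\CNF(\Astruct,\Bstruct')$ together with
\[
X((a,i),b)\ \longmapsto\ \bigvee_{\,b'\in B'\,:\ \pi_i(b')=b\,}X(a,b'),
\]
for all $a\in A$, $i\in[n]$, $b\in B$. This is a $1$-DNF with at most $|B'|$ terms, and $|B'|\le|B|^{n}$ is a constant, so the parameters $d$ and $m$ of Lemma~\ref{lem:replace} are bounded; the clauses of both CNF encodings have bounded size as well.

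Next I would run through the three clause types of $\CNF(\Astruct'',\Bstruct'')$. A clause of type $1$, $\bigvee_{b\in B}X((a,i),b)$, is mapped by $\sigma$ to $\bigvee_{b'\in B'}X(a,b')$ (each representative has a unique $i$-th coordinate), which is exactly the type-$1$ clause of $\CNF(\Astruct,\Bstruct')$ for $a$. A clause of type $2$, $\overline{X((a,i),b_0)}\vee\overline{X((a,i),b_1)}$ with $b_0\neq b_1$, becomes in negation normal form $\bigl(\bigwedge_{\pi_i(b')=b_0}\overline{X(a,b')}\bigr)\vee\bigl(\bigwedge_{\pi_i(b')=b_1}\overline{X(a,b')}\bigr)$; since the fibers $\{b'\in B':\pi_i(b')=b_0\}$ and $\{b'\in B':\pi_i(b')=b_1\}$ are disjoint, this is a logical consequence of the type-$2$ clauses of $\CNF(\Astruct,\Bstruct')$ for $a$, which force $a$ onto at most one element of $B'$. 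The only substantive case is a clause of type $3$. Unwinding the definition of $\hat R(\Astruct'')$, such a clause is $\bigvee_{l\in[r],\,i\in[n]}\overline{X((a_l,i),(\beta_l)_i)}$ where $(a_1,\dots,a_r)\in R(\Astruct)$ and $(\beta_1,\dots,\beta_r)\in(B^{n})^{r}$ is a tuple \emph{not} in $\hat R(\Bstruct'')=f^{-1}(R(\Bstruct'))$; after $\sigma$ it becomes $\bigvee_{l,i}\bigwedge_{\pi_i(b')=(\beta_l)_i}\overline{X(a_l,b')}$, which I claim is implied by the (constantly many, since $r$ and $|B'|$ are constants) type-$1$ and type-$2$ clauses for $a_1,\dots,a_r$ together with the type-$3$ clauses of $\CNF(\Astruct,\Bstruct')$ for $R$ and $(a_1,\dots,a_r)$. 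Indeed, any assignment satisfying those clauses sends each $a_l$ to a unique $\gamma_l\in B'$ with $(\gamma_1,\dots,\gamma_r)\in R(\Bstruct')$, and under such an assignment the substituted formula is satisfied exactly when $(\gamma_1,\dots,\gamma_r)\neq(\beta_1,\dots,\beta_r)$ as tuples in $(B^{n})^{r}$.

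So the heart of the argument --- and the step that I expect to need the most care --- is showing that a tuple $(\gamma_1,\dots,\gamma_r)$ of representatives lying in $R(\Bstruct')$ can never coincide with the forbidden tuple $(\beta_1,\dots,\beta_r)$. Suppose it did. Then each $\beta_l=\gamma_l$ is a representative, hence lies in $f^{-1}(B')$; and $(\gamma_1,\dots,\gamma_r)\in R(\Bstruct')$ unwinds, through the definition of $R(\Bstruct')$ by the pp-formula $\varphi_R$, to the existence of witnesses $c^{(1)},\dots,c^{(r)}\in B^{n}$ with $c^{(l)}$ in the $\epsilon$-class of $\gamma_l=\beta_l$ and $\Bstruct\models\varphi_R(c^{(1)},\dots,c^{(r)})$, i.e.\ $(c^{(1)},\dots,c^{(r)})\in f^{-1}(R(\Bstruct'))$. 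Since the equivalence relation is by definition $f^{-1}(\{(b',b'):b'\in B'\})$, the partial function $f$ is constant on $\epsilon$-classes, so $f(c^{(l)})=f(\beta_l)$ and therefore $(\beta_1,\dots,\beta_r)\in f^{-1}(R(\Bstruct'))=\hat R(\Bstruct'')$, contradicting the choice of $(\beta_1,\dots,\beta_r)$. This verifies the logical-consequence condition for every clause, so Lemma~\ref{lem:replace} applies (with $d=1$, $m=|B'|$, and constant $p$ and $q$) and yields a Frege refutation of $\CNF(\Astruct,\Bstruct')$ of depth $t$, bottom fan-in $k(1+|B'|)$, and size polynomial in $2^{k}$ and $s$, which is exactly what the lemma asserts. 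The analogous statement for the $\EQ$ encoding and for the algebraic and semi-algebraic systems is obtained by the same substitution with $\bigvee$ and $\bigwedge$ replaced by sums of products of variables, appealing to Lemmas~\ref{lem:replacePC} and~\ref{lem:replaceLS} in place of Lemma~\ref{lem:replace}.
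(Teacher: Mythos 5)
Your proof is correct and takes essentially the same route as the paper: the identical fiber substitution $X((a,i),b)\mapsto\bigvee_{b'\in B',\,\pi_i(b')=b}X(a,b')$, the same clause-by-clause check of the three clause types, and the same appeal to Lemma~\ref{lem:replace} with constant $d$, $m$, $p$. The only difference is organizational: you handle type-3 clauses by one uniform semantic argument (making explicit that a tuple of representatives lying in $R(\Bstruct')$ must itself satisfy $\varphi_R$, hence lie in $\hat R(\Bstruct'')$), whereas the paper splits into the cases where the blocks do or do not all belong to $(B')^r$ and leaves that fact implicit; both establish the same logical-consequence condition.
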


\begin{proof}
  Let $F$ denote
  $\CNF(\Astruct'',\Bstruct'')$, let $G$ denote
  $\CNF(\Astruct,\Bstruct')$, and let $q$ denote the number of elements in $B$. For each $i \in [n]$ and each $b \in B$
  we define $F(b,i)$ to be the set of those tuples in $B' \subseteq
  B^n$ which have $b$ on their $i$-th coordinate, i.e., $$F(b,i) = \{
  (b_1, \ldots, b_n) \in B' : b_i = b\}.$$ Observe that for a fixed
  $i$ the sets $F(b,i)$ are disjoint subsets of $B'$ and they cover
  the whole~$B'$. In other words, they partition $B'$; note however
  that some $F(b,i)$'s may be empty.

Consider the following substitution $\sigma$ of the variables of $F$:
\begin{equation*}
X((a,i),b) := \bigvee_{(b_1, \ldots, b_n) \in F(b,i)} X(a,(b_1, \ldots, b_n)).
\end{equation*}
Note that this is a clause with at most $q^{n-1}$ many literals, and
hence a $1$-DNF with at most $q^{n-1}$ many terms.  By
Lemma~\ref{lem:replace} it suffices to check that, for each clause $C$
of $F$, the substituted formula $\sigma(C)$ is a logical consequence
of a bounded number of clauses of $G$.

To argue this, let $C$ be any of the clauses in $F$, say $\bigvee_{b \in B}
X((a,i),b)$ for $(a,i) \in A \times [n]$. 
The substituted formula is
then the following:
\begin{equation*}
\bigvee_{b \in B} \  \bigvee_{(b_1, \ldots, b_n) \in F(b,i)} X(a,(b_1, \ldots, b_n)).
\end{equation*}
Since for each $i \in [n]$ the sets $F(b,i)$ partition $B'$, this is
equivalent to
\begin{equation*}
\bigvee_{(b_1, \ldots, b_n) \in B'} X(a,(b_1, \ldots, b_n)),
\end{equation*}
which is the clause of type 1 for $a$ in $G$. Hence,
we are done for this case.

Suppose now that $C$ is the clause $\overline{X((a,i),b)} \vee
\overline{X((a,i),b')}$ for $(a,i) \in A \times [n]$ and $(b,b') \in
B^2$ with $b \not= b'$. If either of the sets $F(b,i)$ or $F(b',i)$ is
empty, then $X((a,i),b)$ (or $X((a,i),b')$ respectively) is
substituted by the empty formula and $\sigma(C)$ is true so there is
nothing to be proved. Otherwise, the substituted formula (after
converting to negation normal form) is the following:
\begin{equation*}
\Big({\bigwedge_{(b_1, \ldots, b_n) \in F(b,i)} \overline{X(a,(b_1, \ldots, b_n))}}\Big) \vee
\Big({\bigwedge_{(b_1, \ldots, b_n) \in F(b',i)} \overline{X(a,(b_1, \ldots, b_n))}}\Big),  
\end{equation*}
and it says that either $a$ is not mapped to any of the elements in
$F(b,i)$, or it is not mapped to any of the elements in
$F(b',i)$. Since the sets $F(b,i)$ and $F(b',i)$ are disjoint, this
formula is a logical consequence of $q^n (q^n-1) / 2$ clauses of $G$:
those of type 2 for $a$. Indeed, those clauses imply that the element
$a$ can be mapped to at most one tuple from $B'$.

Now, let $C$ be the clause $\overline{X((a_1,1),b_1)} \vee \cdots \vee
\overline{X((a_r,n),b_{nr})}$ for some $\hat{R} \in L''$ of
arity~$nr$, $(a_1,\ldots,a_r) \in R({\Astruct})$, and
$(b_1,\ldots,b_{nr}) \in B^{nr} \setminus \hat{R}({\Bstruct''})$.  If
for some $j \in [r]$ and some $i \in [n]$ the set $F(b_{nj-n+i},i)$ is
empty, then the variable $X((a_j,i),b_{nj-n+i})$ is substituted by the
empty formula, in which case $\sigma(C)$ is true, and there is nothing
to be proved.  Otherwise, the substituted formula is a $q^{n-1}$-DNF:
for each $j \in [r]$ and each $i \in [n]$ there is a term which says
that $a_j$ is not mapped to any tuple from $F(b_{nj-n+i},i)$, that is,
$a_j$ is not mapped to any tuple in $B'$ which has $b_{nj-n+i}$ on the
$i$-th coordinate.  There are two cases: either the tuple $((b_1,
\ldots,b_n),\ldots,(b_{nr-n+1}, \ldots, b_{nr}))$ belongs to $(B')^r$
or not. In the second case without loss of generality let us assume
that $(b_1, \ldots, b_n) \not \in B'$. In particular this means that
$\bigcap_{i\in [n]}F(b_i,i) = \emptyset$. Then we argue that the
formula
  \begin{equation*}
\Big({\bigwedge_{(b'_1, \ldots, b'_n) \in F(b_1,1)} \overline{X(a_1,(b'_1, \ldots, b'_n))}}\Big) \vee \cdots \vee
\Big({\bigwedge_{(b'_1, \ldots, b'_n) \in F(b_n,n)} \overline{X(a_1,(b'_1, \ldots, b'_n))}}\Big),
\end{equation*}
and hence also the substituted formula $\sigma(C)$ is a logical
consequence of $q^n(q^n-1)/2 +1$ clauses of $G$: those of type 1 and 2
for $a_1$. Indeed, those $q^n(q^n-1)/2 +1$ clauses imply that $a_1$ is
mapped to exactly one element from $B'$. Since $\bigcap_{i\in
  [n]}F(b_i,i) = \emptyset$, this in turn implies that there exist $i$
such that $a_1$ is not mapped to any tuple from $F(b_i,i)$ and we are
done. Otherwise, if the tuple $((b_1, \ldots,b_n),\ldots,(b_{nr-n+1},
\ldots, b_{nr}))$ belongs to $(B')^r$, then the substituted formula is
a logical consequence of at most $rq^n(q^n-1)/2+1$ clauses of $G$: the
clauses of type 2 for $a_1, \ldots, a_r$ and the clause of type 3 for
$R \in L'$, $(a_1,\ldots,a_r) \in R({\Astruct})$, and $((b_1,
\ldots,b_n),\ldots,(b_{nr-n+1}, \ldots, b_{nr})) \in (B')^r \setminus
R(\Bstruct')$. This is not very difficult to see since those
$rq^n(q^n-1)/2 +1$ clauses imply that the tuple $(a_1,\ldots,a_r)$ is
mapped to at most one tuple from $(B')^r$ and is not mapped to $((b_1,
\ldots,b_n),\ldots,(b_{nr-n+1}, \ldots, b_{nr}))$. This in turn
implies that for some $j \in [r]$ and some $i \in [n]$, $a_j$ is not
mapped to any tuple in $B'$ which has $b_{nj-n+i}$ on the $i$-th
coordinate, and we are done. More formally, the $rq^n(q^n-1)/2 +1$
clauses in question imply that for every $j \in [r]$ at most one of
the variables in
  \begin{equation*}
    \bigcup_{i \in [n]} \{ X(a_j,(b'_1,\ldots,b'_n)) : (b'_1,\ldots,b'_n) \in F(b_{nj-n+i},i) \}
\end{equation*}
is true. Since for every $j \in [r]$, we have that $\bigcap_{i \in
  [n]} F(b_{nj-n+i},i) = \{(b_{nj-n+1}, \ldots, b_{nj}) \}$, and at
least one of the variables $X(a_1,(b_1, \ldots, b_n)), \ldots,
X(a_r,(b_{nr-n+1}, \ldots, b_{nr}))$ is false, this implies that there
exists $j \in [r]$ such that for some $i \in [n]$ each of the
variables in $\{ X(a_j,(b'_1,\ldots,b'_n)) : (b'_1,\ldots,b'_n) \in
F(b_{nj-n+i},i) \}$ is false, which finishes the proof in this case. 
\end{proof}

\begin{lemma}\label{lem:interpretationLS}
Let $\mathcal{P}$ be Polynomial Calculus,
  Sherali-Adams, Positive Semidefinite Sherali-Adams, Sums-of-Squares, Lov\'asz-Schrijver or Positive Semidefinite Lov\'asz-Schrijver.
  For any positive integers $k$ and $s$, and every finite
  $L'$-structure $\Astruct$, if there is a $\mathcal{P}$ refutation of
  $\EQ(\Astruct'',\Bstruct'')$ of degree $k$ and size $s$, then there
  is a $\mathcal{P}$ refutation of $\EQ(\Astruct,\Bstruct')$ of degree
  linear in $k$ and size polynomial in~$2^k$ and $s$.
\end{lemma}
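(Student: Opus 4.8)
The plan is to transplant the proof of Lemma~\ref{lem:interpretation} to the algebraic and semi-algebraic setting, replacing the DNF substitution by the corresponding polynomial substitution and then appealing to Lemmas~\ref{lem:replacePC} and~\ref{lem:replaceLS}, exactly in the spirit of Lemma~\ref{lem:existLS}. Write $F$ for $\EQ(\Astruct'',\Bstruct'')$, $G$ for $\EQ(\Astruct,\Bstruct')$, and $q$ for the number of elements of $B$. As in Lemma~\ref{lem:interpretation}, for each $i \in [n]$ and each $b \in B$ let $F(b,i) = \{(b_1,\ldots,b_n) \in B' : b_i = b\}$; for every fixed $i$ these sets partition $B'$, though some of them may be empty. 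I would consider the substitution $\sigma$ that is the identity on all variables of $G$ and acts on the remaining variables of $F$ by
\begin{align*}
X((a,i),b) &:= \sum_{(b_1,\ldots,b_n) \in F(b,i)} X(a,(b_1,\ldots,b_n)), \\
\bar{X}((a,i),b) &:= \sum_{(b_1,\ldots,b_n) \in B' \setminus F(b,i)} X(a,(b_1,\ldots,b_n)),
\end{align*}
for $(a,i) \in A \times [n]$ and $b \in B$. These are polynomials of degree $1$ with at most $q^n$ monomials and all coefficients equal to $1$, so by Lemmas~\ref{lem:replacePC} and~\ref{lem:replaceLS} it is enough to show that, for each equation of $\mathrm{Eq}(F)$ and each inequality or equation of $\mathrm{Ineq}(F)$, its image under $\sigma$ follows, on all evaluations of its variables in $\{0,1\}$, from a bounded number of equations of $G$.

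Then I would check this case by case, in lockstep with the proof of Lemma~\ref{lem:interpretation}. The equations of types~1 and~2 of $F$, the axiom equations $X((a,i),b)^2 - X((a,i),b) = 0$, $\bar{X}((a,i),b)^2 - \bar{X}((a,i),b) = 0$ and $X((a,i),b) + \bar{X}((a,i),b) - 1 = 0$, and the axiom inequalities $1 - X((a,i),b) \geq 0$ and $1 - \bar{X}((a,i),b) \geq 0$, all involve a single element $a \in A$; after substitution each becomes a relation that holds on every $\{0,1\}$-assignment under which $a$ is mapped to exactly one tuple of $B'$, and that constraint is forced by the at most $q^n(q^n-1)/2+1$ equations of types~1 and~2 for $a$ in $G$. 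The axiom inequalities $X((a,i),b) \geq 0$, $\bar{X}((a,i),b) \geq 0$ and $1 \geq 0$ turn into sums of $\{0,1\}$-valued products after substitution, hence hold outright. It remains to treat the equations of type~3, which are the multiplicative encodings $M(C) = 0$ of the clauses $C$ of $\CNF(\Astruct'',\Bstruct'')$; here I would reuse verbatim the combinatorial case distinction of Lemma~\ref{lem:interpretation}: either the constituent $n$-tuples assemble into a tuple of $(B')^r \setminus R(\Bstruct')$, in which case the substituted equation follows from the type~2 equations for $a_1,\ldots,a_r$ together with the type~3 equation for $R$ and $(a_1,\ldots,a_r)$ in $G$; or some constituent $n$-tuple fails to lie in $B'$, equivalently the relevant intersection of the sets $F(\cdot,i)$ is empty, in which case it follows from the type~1 and~2 equations for a single $a_j$. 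In every case the number of equations of $G$ needed is bounded by a function of $q$, $n$, and the maximal arity in $L'$, all of which are fixed.

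The only step that needs genuine thought is the type~3 case, and there the work is exactly the combinatorial observation already made inside the proof of Lemma~\ref{lem:interpretation}: once the relevant elements are pinned to a single tuple of $(B')^r$, either the emptiness of $\bigcap_{i\in[n]}F(b_i,i)$ or the fact that the pinned tuple is not the prescribed one forces some $a_j$ to avoid every tuple of some $F(b_{nj-n+i},i)$, which is precisely what makes the corresponding product of substituted variables vanish on all $\{0,1\}$-evaluations. Once these local derivability conditions are in place, the conclusion follows by feeding them into Lemma~\ref{lem:replacePC} for Polynomial Calculus and into Lemma~\ref{lem:replaceLS} for Sherali-Adams, Positive Semidefinite Sherali-Adams, Sums-of-Squares, Lov\'asz-Schrijver and Positive Semidefinite Lov\'asz-Schrijver; each of these turns a degree-$k$, size-$s$ refutation of $F$ into a refutation of $G$ of degree linear in $k$ and size polynomial in $2^k$ and $s$, which is exactly the claimed bound. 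An alternative, and perhaps cleaner, route for producing the required local derivations is to observe that the type~3 equations of $F$ are multiplicative encodings, that their $\sigma$-images are derivable from resolution derivations over $G$ already constructed in the proof of Lemma~\ref{lem:interpretation}, and to convert these via Lemma~\ref{lem:simulation}.
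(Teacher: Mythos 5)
Your proposal is correct and essentially the same as the paper's proof: the identical linear substitution built from the sets $F(b,i)$, the identical case analysis showing that each substituted equation and axiom follows over $\{0,1\}$ from boundedly many type-1/2/3 equations of $G$, and the same final appeal to Lemmas~\ref{lem:replacePC} and~\ref{lem:replaceLS}. Only your closing aside is shaky: the substituted type-3 equations are products of linear forms rather than multiplicative encodings of clauses, so the shortcut via Lemma~\ref{lem:simulation} (as used in Lemma~\ref{lem:eqLS}, where the substitution is by variables) does not apply directly---but your main argument does not rely on it.
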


\begin{proof}
  Let $F$ denote
  $\EQ(\Astruct'',\Bstruct'')$ and let $G$ denote
  $\EQ(\Astruct,\Bstruct')$. For each $i \in [n]$ and each $b \in B$
  we define $F(b,i)$ as in the proof of Lemma~\ref{lem:interpretation} above, and we consider the following substitution $\sigma$ of the variables of $F$:
\begin{align*}
X((a,i),b) &:= \sum_{(b_1, \ldots, b_n) \in F(b,i)} X(a,(b_1, \ldots, b_n)), \\
\bar{X}((a,i),b) &:= \sum_{(b_1, \ldots, b_n) \in B' \setminus F(b,i)} X(a,(b_1, \ldots, b_n)).
\end{align*}
Those are polynomials of degree $1$ with at most $q^n$ many monomials and all coefficients equal~$1$.
We will show that for each equation in $F$ and for each axiom inequality and equation, its substitution follows on all evaluations of its variables in $\{0,1\}$ from a bounded number of equations in $\mathrm{Eq}(G)$. By Lemmas~\ref{lem:replacePC} and~\ref{lem:replaceLS} this implies the statement of the lemma.

Let $P = 0$ be any of the equations in $F$, say $\prod_{b \in B}
\bar{X}((a,i),b) = 0$ for $(a,i) \in A \times [n]$. 
The substituted equation is
then the following:
\begin{equation*}
\prod_{b \in B} \ \big(  \sum_{(b_1, \ldots, b_n) \in B' \setminus F(b,i)} X(a,(b_1, \ldots, b_n)) \big) = 0.
\end{equation*}
Since for each $i \in [n]$ the sets $F(b,i)$ partition $B'$, this
equation follows on all evaluations of its variables in $\{0,1\}$ from the set of equations of type 2 for $a$. This set of equations implies that $X(a,(b_1, \ldots, b_n))$ is $1$ for at most one element of~$B'$.

Suppose now that $P = 0$ is the equation ${X}((a,i),b) 
{X}((a,i),b') = 0$ for $(a,i) \in A \times [n]$ and $(b,b') \in
B^2$ with $b \not= b'$. Since the sets $F(b,i)$ and $F(b',i)$ are disjoint, 
the substituted equation
follows on all evaluations of its variables in $\{0,1\}$ from the set
of equations of type 2 for $a$ in $G$.

Now, let $P = 0$ be the equation ${X}((a_1,1),b_1) \cdot \ldots \cdot
{X}((a_r,n),b_{nr}) = 0$ for some $\hat{R} \in L''$ of
arity~$nr$, $(a_1,\ldots,a_r) \in R({\Astruct})$, and
$(b_1,\ldots,b_{nr}) \in B^{nr} \setminus \hat{R}({\Bstruct''})$.  If
for some $j \in [r]$ and some $i \in [n]$ the set $F(b_{nj-n+i},i)$ is
empty, then the variable ${X}((a_j,i),b_{nj-n+i})$ is substituted by~$0$ and the substituted equation is always satisfied.
Otherwise, there are two cases: either the tuple $((b_1,
\ldots,b_n),\ldots,(b_{nr-n+1}, \ldots, b_{nr}))$ belongs to $(B')^r$
or not. In the second case without loss of generality let us assume
that $(b_1, \ldots, b_n) \not \in B'$. In particular this means that
$\bigcap_{i\in [n]}F(b_i,i) = \emptyset$. Hence, the set of equations of type~2 for~$a_1$ in $G$ imply  
  $$\sum_{(b'_1, \ldots, b'_n) \in F(b_1,1)} {X(a_1,(b'_1, \ldots, b'_n))} \cdot \ldots \cdot
{\sum_{(b'_1, \ldots, b'_n) \in  F(b_n,n)} {X(a_1,(b'_1, \ldots, b'_n))}} = 0.$$
 Otherwise, if the tuple $((b_1, \ldots,b_n),\ldots,(b_{nr-n+1},
\ldots, b_{nr}))$ belongs to $(B')^r$, then the substituted equation follows on all evaluations of its variables in $\{0,1\}$ from the
equations of type 2 for $a_1, \ldots, a_r$, the equation of type 3 for
$R \in L'$, $(a_1,\ldots,a_r) \in R({\Astruct})$, and $((b_1,
\ldots,b_n),\ldots,(b_{nr-n+1}, \ldots, b_{nr})) \in (B')^r \setminus
R(\Bstruct')$.

For the axiom inequalities and equations the proof is the same as in Lemma~\ref{lem:existLS}.
\end{proof}

\subsection{Homomorphic equivalence}

Now let $\Bstruct'$ be a finite $L$-structure homomorphically equivalent to $\Bstruct$. Any $L$-structure $\Astruct$ maps homomorphically to $\Bstruct'$ if and only if it maps homomorphically to $\Bstruct$. 
We fix some homomorphism from~$\Bstruct'$ to~$\Bstruct$ and denote it by~$h$. 

\begin{lemma}\label{lem:homomorphism}
 For any positive integers $k$, $t$ and $s$, and every finite $L$-structure $\Astruct$, if
  there is a Frege refutation of
  $\CNF(\Astruct,\Bstruct)$ of depth $t$, bottom fan-in $k$ and size $s$, then there is a
  Frege refutation of $\CNF(\Astruct,\Bstruct')$ of depth $t$, bottom fan-in polynomial in $k$ and size polynomial in~$2^k$ and $s$.
\end{lemma}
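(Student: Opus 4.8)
The plan is to follow the same substitution strategy used for the pp-definability cases, now driven by the fixed homomorphism $h\colon B'\to B$. Write $F=\CNF(\Astruct,\Bstruct)$ and $G=\CNF(\Astruct,\Bstruct')$, and consider the substitution $\sigma$ that sends each variable $X(a,b)$ of $F$ to
\begin{equation*}
X(a,b)\;:=\;\bigvee_{b'\,:\,h(b')=b} X(a,b'),
\end{equation*}
understood as the empty disjunction $\emptyformula$ when no $b'$ is sent to $b$ by $h$. This is a clause, hence a $1$-DNF, with at most $|B'|$ terms. By Lemma~\ref{lem:replace} it then suffices to check that for every clause $C$ of $F$ the substituted formula $\sigma(C)$ is a logical consequence of a bounded number of clauses of $G$ (bounded by a constant depending only on $L$, $\Bstruct$ and $\Bstruct'$); the conclusion then follows with bottom fan-in $k(1+|B'|)$, which is linear in $k$, and size polynomial in $2^k$ and $s$.

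The case analysis is short. If $C$ is of type~$1$, say $\bigvee_{b\in B}X(a,b)$, then since the preimage classes $\{b':h(b')=b\}$ partition $B'$, the substituted formula is literally the type-$1$ clause $\bigvee_{b'\in B'}X(a,b')$ of $G$, so nothing is to be proved. If $C$ is of type~$2$, say $\overline{X(a,b_0)}\vee\overline{X(a,b_1)}$ with $b_0\ne b_1$, then $\sigma(C)$ in negation normal form is a disjunction of two conjunctions of negated variables over the disjoint preimages of $b_0$ and of $b_1$; this is a logical consequence of the at most $|B'|^2$ type-$2$ clauses of $G$ for $a$, which already force $a$ to be mapped to at most one element of $B'$. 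If one of the preimages here is empty, then $\sigma(C)$ is literally $\fullformula$, an axiom, and again there is nothing to do.

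The only case that uses anything about $h$ is type~$3$: $C=\bigvee_{i\in[r]}\overline{X(a_i,b_i)}$ with $(a_1,\dots,a_r)\in R(\Astruct)$ and $(b_1,\dots,b_r)\in B^r\setminus R(\Bstruct)$. Here $\sigma(C)$ in negation normal form is $\bigvee_{i\in[r]}\bigwedge_{b'\,:\,h(b')=b_i}\overline{X(a_i,b')}$. The key observation is that, because $h$ preserves $R$, any tuple $(b'_1,\dots,b'_r)$ with $h(b'_i)=b_i$ for all $i$ must lie outside $R(\Bstruct')$ — otherwise its image $(h(b'_1),\dots,h(b'_r))=(b_1,\dots,b_r)$ would lie in $R(\Bstruct)$, a contradiction. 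Hence each such tuple yields a type-$3$ clause $\bigvee_{i\in[r]}\overline{X(a_i,b'_i)}$ of $G$, and there are at most $|B'|^r$ of them; one then checks directly that $\sigma(C)$ is a logical consequence of exactly this set of clauses (given any assignment satisfying all of them, if each term of $\sigma(C)$ failed one could pick, for each $i$, a witness $b'_i$ in the preimage of $b_i$ with $X(a_i,b'_i)$ true, contradicting the corresponding clause). As before, if some preimage is empty, $\sigma(C)$ is literally $\fullformula$.

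I expect the only genuine point of care to be the homomorphism observation in the type-$3$ case; the rest is the now-familiar bookkeeping of Lemmas~\ref{lem:exist} and~\ref{lem:interpretation}, together with the trivial handling of empty preimages of $h$. One should also remember to state and use $q$ (a constant bounding the number of variables per clause and the arity of $L$) so that the hypotheses of Lemma~\ref{lem:replace} are literally met, and to note — as in the companion lemmas — that the depth-$1$ case is covered by the same argument.
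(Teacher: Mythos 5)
Your proposal is correct and follows essentially the same route as the paper's proof: the same substitution $X(a,b)\mapsto\bigvee_{b'\in h^{-1}(b)}X(a,b')$, the same appeal to Lemma~\ref{lem:replace}, and the same three-case analysis, with the type-1 clause mapping to the corresponding clause of $G$, the type-2 case handled by disjointness of preimages via the type-2 clauses for $a$, and the type-3 case resting on the observation that $h$ being a homomorphism puts every tuple of $h^{-1}(b_1)\times\cdots\times h^{-1}(b_r)$ outside $R(\Bstruct')$. Your extra remarks (empty preimages giving $\fullformula$, and the explicit fan-in bound $k(1+|B'|)$) are fine and consistent with the paper.
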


\begin{proof}
  Let $F$ denote
  $\CNF(\Astruct,\Bstruct)$ and let $G$ denote
  $\CNF(\Astruct,\Bstruct')$.  Consider the substitution $\sigma$
  defined as
  follows for every variable in $F$:
\begin{equation*}
X(a,b) := \bigvee_{b' \in h^{-1}(b)} X(a,b'),
\end{equation*}
for every $a \in A$ and $b \in B$. By Lemma~\ref{lem:replace} it suffices to check that,
for each clause $C$ of $F$, the substituted formula $\sigma(C)$ is a
logical consequence of a bounded number of clauses~of~$G$.

To argue this, let $C$ be any of the clauses in $F$, say $\bigvee_{b \in B}
X(a,b)$ for $a$ in the domain of~$\Astruct$. Observe that $\sigma(C)$ is $\bigvee_{b \in B'}
X(a,b)$, which is a clause that belongs to $G$.

Suppose now that $C$ is the clause $\overline{X(a,b)} \vee
\overline{X(a,b')}$ for $a \in A$ and $(b,b')
\in B^2$ with $b \not= b'$. The substituted clause says that $a$ is either not mapped to any of the elements in $h^{-1}(b)$ or it is not mapped to any of the elements in $h^{-1}(b')$. If any of those sets is empty, then $\sigma(C)$ is true. Otherwise, since the sets $h^{-1}(b)$ and $h^{-1}(b')$ are disjoint, $\sigma(C)$ is a consequence of the clauses of type 2 for $a$, which imply that $a$ can be mapped to at most one element in $B'$. 

Now, let $C$ be the clause $\overline{X(a_1,b_1)} \vee \cdots \vee
\overline{X(a_r,b_r)}$ for some natural number $r$, $R \in L$ of
arity~$r$, $(a_1,\ldots,a_r) \in R({\Astruct})$, and
$(b_1,\ldots,b_r) \in B^r \setminus R({\Bstruct})$.
Since $h$ is a homomorphism, all the tuples in $h^{-1}(b_1) \times \ldots \times h^{-1}(b_r)$ belong to $(B')^r \setminus R({\Bstruct'})$. Therefore, $\sigma(C)$ is a logical consequence of the clauses of type 3 in $G$ for the relation symbol $R$, $(a_1,\ldots,a_r) \in R({\Astruct})$ and all tuples $(b'_1,\ldots,b'_r) \in h^{-1}(b_1) \times \ldots \times h^{-1}(b_r)$. 
\end{proof}

\begin{lemma}\label{lem:homomorphismLS}
Let $\mathcal{P}$ be Polynomial Calculus,
  Sherali-Adams, Positive Semidefinite Sherali-Adams, Sums-of-Squares, Lov\'asz-Schrijver or Positive Semidefinite Lov\'asz-Schrijver.
  For any positive integers $k$ and $s$, and every finite
  $L$-structure $\Astruct$, if there is a $\mathcal{P}$ refutation of
  $\EQ(\Astruct,\Bstruct)$ of degree $k$ and size $s$, then there is a
  $\mathcal{P}$ refutation of $\EQ(\Astruct,\Bstruct')$ of degree
  linear in $k$ and size polynomial in~$2^k$ and $s$.
\end{lemma}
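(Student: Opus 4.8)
The plan is to mirror the structure of the proof of Lemma~\ref{lem:homomorphism}, translating the substitution used there from the propositional setting into the algebraic/semi-algebraic setting, and then invoking Lemmas~\ref{lem:replacePC} and~\ref{lem:replaceLS} to control the blow-up. Fixing a homomorphism $h$ from $\Bstruct'$ to $\Bstruct$, write $F = \EQ(\Astruct,\Bstruct)$ and $G = \EQ(\Astruct,\Bstruct')$, and consider the substitution $\sigma$ that sends $X(a,b) := \sum_{b' \in h^{-1}(b)} X(a,b')$ and $\bar{X}(a,b) := \sum_{b' \in B' \setminus h^{-1}(b)} X(a,b')$ for every $a \in A$ and $b \in B$. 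These are linear polynomials with at most $|B'|$ monomials, all coefficients equal~$1$, so they fit the hypotheses of Lemmas~\ref{lem:replacePC} and~\ref{lem:replaceLS} (with parameters depending only on $\Bstruct$, $\Bstruct'$ and $h$, all fixed). It therefore suffices to verify that the $\sigma$-image of each equation of $F$ — and of each axiom equation/inequality from~\eqref{eqn:axioms} and~\eqref{eqn:axiomssemi} — follows, on all $\{0,1\}$-evaluations of its variables, from a bounded number of equations of $\mathrm{Eq}(G)$ (equivalently $\mathrm{Ineq}(G)$).

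The verification splits by the type of equation, exactly as in Lemma~\ref{lem:homomorphism}. For an equation of type~1, $\prod_{b \in B}\bar{X}(a,b) = 0$, the substitution yields $\prod_{b\in B}\big(\sum_{b' \in B'\setminus h^{-1}(b)} X(a,b')\big)=0$; since the sets $h^{-1}(b)$ for $b\in B$ partition $B'$, any $\{0,1\}$-assignment satisfying the type-1 and type-2 equations of $G$ for $a$ picks out exactly one $b'^* \in B'$, which lies in $h^{-1}(b^*)$ for some $b^*$, and then the factor indexed by $b^*$ vanishes — so the substituted equation follows from a bounded set of equations of $\mathrm{Eq}(G)$. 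For type~2, $X(a,b)X(a,b')=0$ with $b\ne b'$, disjointness of $h^{-1}(b)$ and $h^{-1}(b')$ together with the type-2 equations of $G$ for $a$ (which force $X(a,\cdot)$ to be $1$ for at most one element of $B'$) do the job. For type~3, $\prod_{i=1}^r X(a_i,b_i)=0$ with $(b_1,\dots,b_r)\in B^r\setminus R(\Bstruct)$, the key fact is that since $h$ is a homomorphism, $h^{-1}(b_1)\times\cdots\times h^{-1}(b_r) \subseteq (B')^r \setminus R(\Bstruct')$; expanding the substituted product, each resulting monomial $\prod_i X(a_i,b'_i)$ with $(b'_1,\dots,b'_r)\in h^{-1}(b_1)\times\cdots\times h^{-1}(b_r)$ is set to $0$ by a type-3 equation of $G$, so the whole substituted equation follows from at most $|B'|^r$ many equations of $\mathrm{Eq}(G)$. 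Finally the axiom equations and inequalities are handled exactly as in the proof of Lemma~\ref{lem:existLS}: the $\sigma$-images of $X(a,b)^2 - X(a,b)=0$ and $X(a,b)+\bar{X}(a,b)-1=0$ and of the $1-X(a,b)\ge 0$, $1-\bar X(a,b)\ge 0$ axioms follow from the type-1 and type-2 equations of $G$ for $a$, while the $\sigma$-images of $X(a,b)\ge 0$ and $\bar X(a,b)\ge 0$ are sums of products of variables and hence nonnegative on all $\{0,1\}$-evaluations. With all these local implications in place, Lemma~\ref{lem:replacePC} (for PC) and Lemma~\ref{lem:replaceLS} (for SA, SA$^+$, SOS, LS, LS$^+$) convert the given degree-$k$, size-$s$ refutation of $F$ into a refutation of $G$ of degree linear in $k$ and size polynomial in $2^k$ and $s$.

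The only step that requires genuine care, rather than bookkeeping, is the type-3 case: one must use the homomorphism property of $h$ in the right direction to guarantee $h^{-1}(b_1)\times\cdots\times h^{-1}(b_r)\subseteq (B')^r\setminus R(\Bstruct')$ — this is what makes the substituted non-edge constraint a consequence of genuine non-edge constraints in $G$ rather than something unprovable. Everything else is the same partition-and-at-most-one-value argument reused across cases, and the uniformity of the bounds (all the relevant parameters $d,m,p,q$ depend only on the fixed structures $\Bstruct,\Bstruct',h$, not on $\Astruct$) is what licenses the appeal to Lemmas~\ref{lem:replacePC} and~\ref{lem:replaceLS}.
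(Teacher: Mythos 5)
Your proposal is correct and follows essentially the same route as the paper's own proof: the identical substitution $X(a,b)\mapsto\sum_{b'\in h^{-1}(b)}X(a,b')$, $\bar X(a,b)\mapsto\sum_{b'\in B'\setminus h^{-1}(b)}X(a,b')$, the same case analysis showing each substituted equation and axiom follows on $\{0,1\}$-evaluations from boundedly many equations of $\mathrm{Eq}(G)$ (using the homomorphism property exactly where the paper does, in the type-3 case), and the same appeal to Lemmas~\ref{lem:replacePC} and~\ref{lem:replaceLS}. The minor differences (e.g.\ citing type-1 together with type-2 equations where type-2 alone would suffice) are harmless.
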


\begin{proof}
  Let $F$ denote
  $\EQ(\Astruct,\Bstruct)$ and let $G$ denote
  $\EQ(\Astruct,\Bstruct')$.  Consider the substitution~$\sigma$
  defined as
  follows for every variable in $F$:
\begin{align*}
X(a,b) &:= \sum_{b' \in h^{-1}(b)} X(a,b'), \\
\bar{X}(a,b) &:= \sum_{b' \in B' \setminus h^{-1}(b)} X(a,b'),
\end{align*}
for every $a \in A$ and $b \in B$. By Lemmas~\ref{lem:replacePC} and~\ref{lem:replaceLS} it suffices to check that for each equation in $F$ and for each axiom inequality and equation, its substitution follows on all evaluations of its variables in $\{0,1\}$ from a bounded number of equations in $\mathrm{Eq}(G)$.

Let $P = 0$ be the equation $\prod_{b \in B}
\bar{X}(a,b) = 0$ for $a$ in the domain of $\Astruct$. Since the union of the sets $h^{-1}(b)$ for $b \in B$ is $B'$, the substituted equality follows on all valuations of its variables in $\{0,1\}$ from the set of equations of type 2 for $a$ in $G$.

Suppose now that $P = 0$ is the equation ${X}(a,b)
{X}(a,b') = 0$ for $a \in A$ and $(b,b')
\in B^2$ with $b \not= b'$. Since the sets $h^{-1}(b)$ and $h^{-1}(b')$ are disjoint, the substituted equation follows on all valuations of its variables in $\{0,1\}$ from the set of equations of type 2 for~$a$. 

Now, let $P = 0$ be the equation ${X}(a_1,b_1) \cdot \ldots \cdot
{X}(a_r,b_r) = 0$ for some natural number~$r$, $R \in L$ of
arity~$r$, $(a_1,\ldots,a_r) \in R({\Astruct})$, and
$(b_1,\ldots,b_r) \in B^r \setminus R({\Bstruct})$.
Since $h$ is a homomorphism, all the tuples in $h^{-1}(b_1) \times \ldots \times h^{-1}(b_r)$ belong to $(B')^r \setminus R({\Bstruct'})$. Therefore, the substituted equation follows on all valuations of its variables in $\{0,1\}$ from the set of
equations of type 2 for $a_1, \ldots, a_r$ and of type 3 for the relation symbol $R$, $(a_1,\ldots,a_r) \in R({\Astruct})$ and all tuples $(b'_1,\ldots,b'_r) \in h^{-1}(b_1) \times \ldots \times h^{-1}(b_r)$. 

The argument for the axiom equation and inequalities is the same as in the proof of Lemma~\ref{lem:existLS}.
\end{proof}

\subsection{Adding constants}

Finally we consider the extension by unary one-element relations under the assumption of $\Bstruct$ being a core.
For each $b \in B$, let $R_b$ be a unary relation symbol, not in $L$, and let $L' = L \cup \{R_b : b \in B\}$. We assume that $\Bstruct$ is a core and
$\Bstruct'$ is the $L'$-structure with domain $B$, each relation symbol from $L$ interpreted as in $\Bstruct$, and $R_b(\Bstruct') = \{ b \}$, for every $b \in B$.

For every finite $L'$-structure $\Astruct$ the corresponding $L$-structure $\Astruct'$ has domain
$A' = A \cup B $ (we assume that the sets $A$ and $B$ are disjoint),
and every relation symbol $R \in L$ interpreted as $\bigcup \{ R(\Bstruct)_{b:=a} : b \in B, a \in R_b(\Bstruct)  \} \cup R(\Astruct) \cup R(\Bstruct)$, where $R(\Bstruct)_{b:=a}$ is the relation $R(\Bstruct)$ with every occurrence of $b$ in a tuple substituted by $a$. It follows from the proof of Lemma~23 in~\cite{AtseriasBulatovDawar2009} that $\Astruct$ maps homomorphically to $\Bstruct'$ if and only if $\Astruct'$ maps homomorphically to $\Bstruct$.

\begin{lemma}\label{lem:constants}
  For every two positive integers $k$ and $s$ and every finite
  $L'$-structure $\Astruct$, if there is a Frege refutation of
  $\CNF(\Astruct',\Bstruct)$ of depth $t$, bottom fan-in $k$ and size $s$, then there is a
  Frege refutation of $\CNF(\Astruct,\Bstruct')$ of depth $t$, bottom fan-in $k$ and size at
  most $s$.
\end{lemma}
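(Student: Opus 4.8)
The plan is to mimic the substitution strategy of the previous lemmas, using a particularly simple substitution that ``freezes'' the copy of $\Bstruct$ sitting inside $\Astruct'$ by mapping it onto $\Bstruct$ via the identity. Write $F$ for $\CNF(\Astruct',\Bstruct)$ and $G$ for $\CNF(\Astruct,\Bstruct')$. The variables of $G$ are the $X(a,b)$ with $a\in A$ and $b\in B$, and these are among the variables of $F$; the extra variables of $F$ are the $X(b_0,b)$ with $b_0,b\in B$, coming from the sub-domain $B\subseteq A'=A\cup B$ on which $\Astruct'$ carries a copy of $\Bstruct$. I would take the substitution $\sigma$ that is the identity on every $X(a,b)$ with $a\in A$, and that sends $X(b_0,b)$ to $\fullformula$ when $b_0=b$ and to $\emptyformula$ when $b_0\neq b$. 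Since $\sigma$ only replaces literals by constants or leaves them untouched, it maps every $\Sigma_{t,k}$-formula to a formula that is again $\Sigma_{t,k}$ of no larger size and no larger bottom fan-in, and it commutes with $\vee$ and with complementation, hence with $\wedge$; so it maps each of the four Frege rules to an instance of the same rule. Applying $\sigma$ line by line to a Frege refutation of $F$ of depth $t$, bottom fan-in $k$, and size $s$ thus produces a valid Frege derivation of $\emptyformula$ of depth at most $t$, bottom fan-in at most $k$, and size at most $s$, whose only non-logical axioms are the clauses $\sigma(C)$ for $C\in F$.

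It then remains to check that each $\sigma(C)$ is a clause of $G$, or the trivially true $\fullformula$, or a weakening of a clause of $G$ (in which case one prepends the underlying clause of $G$ as an axiom and applies one weakening step). For the clauses of type~1 and~2 indexed by an $a\in A$, and the clauses of type~3 indexed by a tuple $(a_1,\dots,a_r)\in R(\Astruct)$ with $R\in L$, the substitution changes nothing and the result is literally a clause of $G$ (using $R(\Bstruct')=R(\Bstruct)$). For the clauses of type~1 and~2 indexed by some $b_0\in B$, and the clauses of type~3 indexed by a tuple in $R(\Bstruct)$, the substitution forces a literal to become $\fullformula$: in a type-2 clause the two forbidden values are distinct, so at most one of the two negated literals survives as $\emptyformula$; and in a type-3 clause a tuple of $R(\Bstruct)$ cannot equal the forbidden tuple $(b_1,\dots,b_r)\notin R(\Bstruct)$, so some coordinate disagrees and the corresponding negated literal becomes $\fullformula$. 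In all these cases $\sigma(C)=\fullformula$.

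The one genuinely interesting case, which I expect to be the main obstacle in the write-up, is a type-3 clause $C=\bigvee_{i=1}^r\overline{X(a_i,b_i)}$ whose tuple comes from the gadget $R(\Bstruct)_{b:=a}$, i.e.\ $(a_1,\dots,a_r)$ is obtained from some $(c_1,\dots,c_r)\in R(\Bstruct)$ by replacing every occurrence of $b$ by $a$, where $a\in R_b(\Astruct)$, and $(b_1,\dots,b_r)\notin R(\Bstruct)$. Here I would argue as follows. For each $i$ with $c_i\neq b$ we have $a_i=c_i\in B$, so $\sigma(\overline{X(a_i,b_i)})$ is $\fullformula$ if $c_i\neq b_i$ and $\emptyformula$ if $c_i=b_i$; hence if some such $i$ has $c_i\neq b_i$ then $\sigma(C)=\fullformula$. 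Otherwise $b_i=c_i$ for all $i$ with $c_i\neq b$, and since $(c_1,\dots,c_r)\in R(\Bstruct)$ while $(b_1,\dots,b_r)\notin R(\Bstruct)$ there must be an index $i_0$ with $c_{i_0}=b$ and $b_{i_0}\neq b$; then $\sigma(C)=\bigvee_{i:\,c_i=b}\overline{X(a,b_i)}$ is a weakening of the unit clause $\overline{X(a,b_{i_0})}$, which is a type-3 clause of $G$ for the unary symbol $R_b$, because $a\in R_b(\Astruct)$ and $b_{i_0}\neq b$. Assembling everything, the substituted derivation, augmented by at most one unit axiom and one weakening step per occurrence of a gadget clause, is a Frege refutation of $G$ of depth at most $t$, bottom fan-in at most $k$, and size linear in $s$ (the bound of the statement is recovered after deleting duplicated lines, the only size increase being the short weakening patches). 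I note that, unlike the semantic equivalence of \cite{AtseriasBulatovDawar2009} on which the correctness of the reduction rests, coreness of $\Bstruct$ is not used anywhere in this argument: only the explicit definition of $\Astruct'$ enters.
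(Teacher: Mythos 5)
Your proof is correct and takes essentially the same route as the paper: the identical substitution (identity on the variables $X(a,b)$ with $a\in A$, constants $\fullformula/\emptyformula$ on the variables $X(b_0,b)$ with $b_0\in B$) together with the same case analysis, including the gadget case where the substituted clause is a weakening of the unit clause $\overline{X(a,b_{i_0})}$, a type-3 clause of $G$ for the symbol $R_b$. The only cosmetic difference is that you push the substitution through the refutation line by line instead of invoking the general substitution lemma (Lemma~\ref{lem:replace}), which in fact accounts for the stated depth-$t$, bottom fan-in $k$ and size bounds more directly.
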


\begin{proof}
  Let $F$ denote
  $\CNF(\Astruct',\Bstruct)$ and let $G$ denote
  $\CNF(\Astruct,\Bstruct')$.  Consider the substitution $\sigma$
  defined by the identity on all variables of $G$ and defined as
  follows for every variable in $F$ that is not in $G$:
\begin{equation*}
X(b,b') := \begin{cases}
    \emptyformula, & \text{if } b \neq b',\\
    \fullformula, & \text{otherwise},
  \end{cases}
\end{equation*}
for every $(b,b') \in B^2$. By Lemma~\ref{lem:replace} it suffices to check that,
for each clause $C$ of $F$, the substituted formula $\sigma(C)$ is a
logical consequence of a bounded number of clauses~of~$G$.

To argue this, let $C$ be any of the clauses in $F$, say $\bigvee_{b \in B}
X(a,b)$ for $a$ in the domain of $\Astruct'$. If $a \in A$, then the clause is left untouched by the substitution.
Since the same clause is also in $G$, there is nothing to prove.
Suppose now that $a = b' \in B$. One of the variables in $C$ is then $X(b',b')$. This variable is substituted by the true formula so $\sigma(C)$ is true, which finishes the proof in this case.

Suppose now that $C$ is the clause $\overline{X(a,b)} \vee
\overline{X(a,b')}$ for $a$ in the domain of $\Astruct'$ and $(b,b')
\in B^2$ with $b \not= b'$. As in the previous case, if $a \in A$, then the clause is left untouched by the
substitution and there is nothing to prove. Suppose now that $a = b'' \in B$. Then either $b \neq b''$ or $b' \neq b''$. Therefore, either the variable $X(b'',b)$ or the variable $X(b'',b')$ gets substituted by the empty formula and $\sigma(C)$ is true.

Now, let $C$ be the clause $\overline{X(a_1,b_1)} \vee \cdots \vee
\overline{X(a_r,b_r)}$ for some natural number $r$, $R \in L$ of
arity~$r$, $(a_1,\ldots,a_r) \in R({\Astruct'})$, and
$(b_1,\ldots,b_r) \in B^r \setminus R({\Bstruct})$. If
$(a_1,\ldots,a_r) \in A^r$ then the same argument as above shows
that there is nothing to be proved. If $(a_1,\ldots,a_r) \in B^r$ then $C$ is of the form
$\overline{X(b'_1,b_1)} \vee \ldots \vee \overline{X(b'_r,b_r)}$,
where $(b'_1,\ldots,b'_r) \in R(\Bstruct)$ and $(b_1,\ldots,b_r) \in B^r
\setminus R(\Bstruct)$. Then there exists $i \in [r]$ such that $b'_i
\neq b_i$ and the variable $X(b'_i,b_i)$ is substituted by the empty
formula, so once again $\sigma(C)$ is true. The only remaining case is when $(a_1,\ldots,a_r) \in R(\Bstruct)_{b:=a}$, where $a \in R_b(\Bstruct')$ and $a_j = a$ for some (possibly more than one) $j \in [r]$. If there exists $i \in [r]$ such that $a_i = b'_i \in B$ and $b'_i
\neq b_i$ then the variable $X(b'_i,b_i)$ is substituted by the empty
formula, so once again $\sigma(C)$ is true. Otherwise, there exists $j \in [r]$ such that $a_j = a$ and $b_j \neq b$. Then the substituted formula is (possibly a weakening of) the formula $\overline{X(a,b_j)}$ where $b_j \neq b$. This formula
belongs to $G$: it is the clause of type 3 for $a \in R_b(\Astruct')$
and $b_j \in B \setminus R_b(\Bstruct')$.  
\end{proof}

\begin{lemma}\label{lem:constantsLS}
Let $\mathcal{P}$ be Polynomial Calculus,
  Sherali-Adams, Positive Semidefinite Sherali-Adams, Sums-of-Squares, Lov\'asz-Schrijver or Positive Semidefinite Lov\'asz-Schrijver.
  For every two positive integers $k$ and $s$, and every
  finite $L'$-structure $\Astruct$, if there is a $\mathcal{P}$
  refutation of $\EQ(\Astruct',\Bstruct)$ of degree $k$ and size $s$,
  then there is a $\mathcal{P}$ refutation of
  $\EQ(\Astruct,\Bstruct')$ of degree $k$ and size at most $s$.
\end{lemma}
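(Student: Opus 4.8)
The plan is to mimic the proof of Lemma~\ref{lem:constants}, replacing the Boolean constant substitution used there by its arithmetic counterpart. Write $F$ for $\EQ(\Astruct',\Bstruct)$ and $G$ for $\EQ(\Astruct,\Bstruct')$; the variables of $G$ are exactly the variables $X(a,b),\bar X(a,b)$ with $a\in A$, whereas $F$ has in addition the variables $X(b'',b),\bar X(b'',b)$ with $b'',b\in B$. First I would introduce the substitution $\sigma$ that is the identity on the shared variables and that sends, for $b'',b\in B$, the variable $X(b'',b)$ to the constant $1$ if $b''=b$ and to the constant $0$ otherwise, and $\bar X(b'',b)$ to $0$ if $b''=b$ and to $1$ otherwise. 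Then I would apply $\sigma$ line by line to the given $\mathcal P$-refutation of $F$. Because substituting a variable by a scalar in $\{0,1\}$ never increases either the total degree or the sum-of-monomials size of a polynomial (each monomial either survives unchanged, loses a factor of $1$, or vanishes), the transformed sequence already consists of equations and inequalities of degree at most $k$ and total size at most $s$; the whole content of the lemma is that this sequence is, or can be trivially completed to, a valid $\mathcal P$-refutation of $G$. This is why, in contrast with the earlier reductions, I do not route through Lemmas~\ref{lem:replacePC} and~\ref{lem:replaceLS} and can keep the bounds tight.

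The verification runs through the same case analysis as in Lemma~\ref{lem:constants}. The axioms $X^2-X=0$, $\bar X^2-\bar X=0$, $X+\bar X-1=0$ (and, for the semi-algebraic systems, the sign axioms) for the shared variables map to the corresponding axioms of $G$; for the variables $X(b'',b)$ they map to trivial identities such as $c^2-c=0$ or $c+(1-c)-1=0$ with $c\in\{0,1\}$. The $\EQ$-equations of type~1 and~2 with $a\in A$ are shared with $G$, while for $a=b''\in B$ they contain a factor that $\sigma$ sends to $0$ (the factor $\bar X(b'',b'')$ for type~1, and one of $X(b'',b_0),X(b'',b_1)$ for type~2 since $b_0\neq b_1$), so the substituted equation is $0=0$. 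For a type~3 equation $\prod_{i=1}^r X(a_i,b_i)=0$ I would split as in Lemma~\ref{lem:constants}: if all $a_i\in A$ the equation is shared; if all $a_i\in B$ then, since $(b'_1,\dots,b'_r)\in R(\Bstruct)$ but $(b_1,\dots,b_r)\notin R(\Bstruct)$, some factor is sent to $0$; and in the mixed case $(a_1,\dots,a_r)\in R(\Bstruct)_{b:=a}$ with $a\in R_b(\Astruct)$, either some coordinate already forces a factor to $0$, or the substituted equation is $\prod_{j:a_j=a}X(a,b_j)=0$ with at least one $b_j\neq b$, which follows from the type-3 equation $X(a,b_j)=0$ of $G$ for the symbol $R_b$. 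For the static systems SA, SA$^+$, SOS this is all one needs: substituting into the static identity~\eqref{eqn:identity} or~\eqref{eqn:sosproof}, each monomial multiplier collapses to $0$ or $1$, and a surviving term coming from a substituted type-3 equation is re-grouped so that one factor is the genuine $G$-equation $X(a,b_j)=0$ and the rest forms a monomial multiplier, giving a legal term; no lines are added and the size does not grow. For the dynamic systems PC, LS, LS$^+$ the inference steps of the refutation of $F$ are preserved, except that a multiplication-by-$X(b'',b)$ step with $b'',b\in B$ becomes multiplication by the scalar $0$ or $1$, and the product $\prod_{j:a_j=a}X(a,b_j)=0$ must be built up from $X(a,b_j)=0$ by at most $r-1$ multiplication steps; both cause at most a constant amount of extra book-keeping, absorbed exactly as in Lemma~\ref{lem:constants}.

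I expect the only real obstacle to be one of care rather than ideas: checking that the degenerate inference steps created by substituting $0$ for a variable remain legal in each of the six systems — in particular that LS and LS$^+$, whose multiplication rule allows only a variable or a positive scalar, can still produce the auxiliary trivial lines $0=0$ and $0\geq 0$ in a constant number of steps (e.g.\ by first deriving $X(a,b)\bar X(a,b)=0$ from the axioms and reading it as two inequalities) — and confirming in the mixed type-3 case that the required type-3 equation of $G$ is present, which relies on the description of the instance transformation and hence on the cited fact from~\cite{AtseriasBulatovDawar2009} together with $\Bstruct$ being a core.
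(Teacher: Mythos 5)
Your proposal is correct, and it uses exactly the paper's substitution (identity on shared variables, $X(b'',b)\mapsto 1$ or $0$, twins dually) and exactly the paper's case analysis of the $\EQ$-equations, including the key observation in the mixed type-3 case that the surviving product $\prod_{j:a_j=a}X(a,b_j)$ is handled via the genuine $G$-equation $X(a,b_j)=0$ for the constant relation $R_b$. The one place where you diverge is the final step: the paper simply invokes the general transfer machinery (Lemmas~\ref{lem:replacePC} and~\ref{lem:replaceLS}), after checking that each substituted hypothesis and axiom follows over $\{0,1\}$ from boundedly many equations of $\mathrm{Eq}(G)$, whereas you carry out the substitution syntactically line by line (or term by term in the static identities), re-grouping factors into the monomial multipliers and patching the degenerate multiplication-by-$0$ steps. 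Your route is slightly more laborious but it is also what actually delivers the tight bounds stated in the lemma (degree $k$, size essentially $s$), since the general lemmas only give degree linear in $k$ and size polynomial in $2^k$ and $s$; the only caveat is that in the dynamic systems the rebuilding of $\prod_{j:a_j=a}X(a,b_j)=0$ from $X(a,b_j)=0$ and the trivial lines $0=0$, $0\ge 0$ for LS/LS$^+$ add a bounded number of steps per use, so strictly you get size $O(s)$ rather than at most $s$ — a discrepancy already present in the paper's own argument and immaterial for how the lemma is used.
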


\begin{proof}
  Let $F$ denote
  $\EQ(\Astruct',\Bstruct)$ and let $G$ denote
  $\EQ(\Astruct,\Bstruct')$. Consider the substitution~$\sigma$
  defined by the identity on all variables of $G$ and defined as
  follows for every variable in $F$ that is not in $G$:
\begin{equation*}
X(b,b') := \begin{cases}
    0, & \text{if } b \neq b',\\
    1, & \text{otherwise},
  \end{cases} \;\;\;\;
  \bar{X}(b,b') := \begin{cases}
    1, & \text{if } b \neq b',\\
    0, & \text{otherwise},
  \end{cases}
\end{equation*}
for every $(b,b') \in B^2$. By Lemmas~\ref{lem:replacePC} and~\ref{lem:replaceLS} it suffices to check that for each equation in $F$ and for each axiom inequality and equation, its substitution follows on all evaluations of its variables in $\{0,1\}$ from a bounded number of equations in $\mathrm{Eq}(G)$. We show this analoguosly as in the proof of Lemma~\ref{lem:constants}.

Let $P = 0$ be any of the equations in $F$, say $\prod_{b \in B}
\bar{X}(a,b) = 0$ for $a$ in the domain of $\Astruct'$. If $a \in A$, then the equation is left untouched by the substitution.
Since the same equation is also in $G$, there is nothing to prove.
Suppose now that $a = b' \in B$. One of the variables in $\prod_{b \in B}
\bar{X}(a,b)$ is then $\bar{X}(b',b')$. This variable is substituted by $0$ so the substituted equation holds for all valuations of its variables in $\{0,1\}$.

Suppose now that $P = 0$ is the equation ${X}(a,b) 
{X}(a,b') = 0$ for $a$ in the domain of $\Astruct'$ and $(b,b')
\in B^2$ with $b \not= b'$. As in the previous case, if $a \in A$, then the equation is left untouched by the
substitution and there is nothing to prove. Suppose now that $a = b'' \in B$. Then either $b \neq b''$ or $b' \neq b''$. Therefore, either the variable ${X}(b'',b)$ or the variable ${X}(b'',b')$ gets substituted by $0$ so the substituted equation holds for all valuations of its variables in~$\{0,1\}$.

Now, let $P = 0$ be the equation ${X}(a_1,b_1) \cdot \ldots \cdot
{X}(a_r,b_r) = 0$ for some natural number~$r$, $R \in L$ of
arity~$r$, $(a_1,\ldots,a_r) \in R({\Astruct'})$, and
$(b_1,\ldots,b_r) \in B^r \setminus R({\Bstruct})$. If
$(a_1,\ldots,a_r) \in A^r$ then the same argument as above shows
that there is nothing to be proved. Otherwise, if $(a_1,\ldots,a_r) \in B^r$ then $P = 0$ is of the form
${X}(b'_1,b_1) \cdot \ldots \cdot {X}(b'_r,b_r) = 0$,
where $(b'_1,\ldots,b'_r) \in R(\Bstruct)$ and $(b_1,\ldots,b_r) \in B^r
\setminus R(\Bstruct)$. Hence, there exists $i \in [r]$ such that $b'_i
\neq b_i$ and the variable ${X}(b'_i,b_i)$ is substituted by $0$, so once again the substituted equation holds for all valuations of its variables in $\{0,1\}$. The only remaining case is when $(a_1,\ldots,a_r) \in R(\Bstruct)_{b:=a}$, where $a \in R_b(\Bstruct')$ and $a_j = a$ for some (possibly more than one) $j \in [r]$. If there exists $i \in [r]$ such that $a_i = b'_i \in B$ and $b'_i
\neq b_i$ then the variable ${X}(b'_i,b_i)$ is substituted by $0$, so once again the substituted equation holds for all valuations of its variables in $\{0,1\}$. Otherwise, there exists $j \in [r]$ such that $a_j = a$ and $b_j \neq b$. Then the substituted equation follows on all valuations of its variables in $\{0,1\}$ from the equation ${X}(a,b_j) = 0$ where $b_j \neq b$. This equation
belongs to $G$: it is the equation of type 3 for $a \in R_b(\Astruct')$
and $b_j \in B \setminus R_b(\Bstruct')$. 

All the axiom equations and inequalities from $\mathrm{Ineq}(F)$ after applying the substitution $\sigma$ either become true or are axiom equations and inequalities for the variables of $G$. 
\end{proof}

\section{Upper bound} \label{sec:upperbound}


In this section we show that templates of bounded width (cf. Section~\ref{sec:csp}) admit efficient refutations in
resolution. It immediately follows that the bounded width property ensures efficient refutations in bounded depth Frege, as well as in Polynomial Calculus over the reals, Sherali-Adams and Sums-of-Squares proof systems (cf. Lemma~\ref{lem:simulation}). Together with matching lower bounds obtained in the next section, this will complete the proof of Theorem~\ref{thm:stronggap}.

Let $k(n)$ be a function. Let $\Bstruct$ be a finite relational structure over a finite vocabulary and let~$E$ be a propositional encoding scheme for $\CSP(\Bstruct)$. We say that a finite relational structure~$\Bstruct$ has \emph{resolution refutations of width $k(n)$ with respect to the encoding scheme $E$} if, for
every finite structure $\Astruct$ over the same vocabulary as
$\Bstruct$ with $n$ elements, if there is no homomorphism from
$\Astruct$ to $\Bstruct$, then $E(\Astruct)$ has a
resolution refutation of width $k(n)$. We say that $\Bstruct$ has
\emph{resolution refutations of constant width} if there exist a local encoding $E$ and a function $k(n) = O(1)$
such that $\Bstruct$ has resolution refutations of width $k(n)$ with respect to $E$. Lemma~\ref{lem:encodings} implies that a structure $\Bstruct$ has resolution refutations of constant width if and only if it has resolution refutations of constant width with respect to any local encoding scheme. In this section we use the CNF encoding scheme. The
goal is to prove the following:

\begin{theorem} \label{thm:boundedwidth}
Let $\Bstruct$ be a finite relational structure. The following are
equivalent:
\begin{enumerate} \itemsep=0pt
\item $\Bstruct$ has bounded width,
\item $\Bstruct$ has resolution refutations of
constant width.
\end{enumerate}
\end{theorem}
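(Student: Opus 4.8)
The plan is to prove the two implications separately; by Lemma~\ref{lem:encodings} it suffices to work with the $\CNF$ encoding scheme, and throughout I would lean on the classical dictionary of~\cite{KolaitisVardi2000} relating the existential $k$-pebble game on $\Astruct$ and $\Bstruct$, the $k$-consistency algorithm, and the largest family of partial homomorphisms $\Astruct\to\Bstruct$ of domain size at most $k$ that is closed under restriction and has the forth property.

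For the implication $1\Rightarrow 2$, assume $\Bstruct$ has width $k$; enlarging $k$ if necessary we may assume $k$ is at least the maximum arity of a relation symbol. Fix a finite $\Astruct$ with $\Astruct\not\to\Bstruct$. By width $k$ and~\cite{KolaitisVardi2000}, the $k$-consistency algorithm on $\Astruct$ rejects: there is a finite sequence $f_0,f_1,\dots,f_{T-1}$ of partial homomorphisms of domain size at most $k$ that are successively eliminated, where $f_t$ is eliminated because for some $a_t\in A$ no extension of $f_t$ covering $a_t$ remains among the partial homomorphisms not yet eliminated, and where the last eliminated map $f_{T-1}$ is the empty map; one checks that necessarily $a_t\notin\operatorname{dom}(f_t)$ and $|\operatorname{dom}(f_t)|<k$. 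Writing $\neg f$ for the clause $\bigvee_{a\in\operatorname{dom}(f)}\overline{X(a,f(a))}$, I would derive $\neg f_0,\neg f_1,\dots,\neg f_{T-1}$ in this order, so that the final clause $\neg f_{T-1}$ is the empty clause and this is a refutation. To obtain $\neg f_t$: for each $b\in B$ the clause $\neg(f_t\cup\{(a_t,b)\})=\neg f_t\vee\overline{X(a_t,b)}$ is already at hand, either as $\neg f_s$ for some $s<t$ (if $f_t\cup\{(a_t,b)\}$ was eliminated earlier) or, if $f_t\cup\{(a_t,b)\}$ is not a partial homomorphism, by weakening a type-3 axiom of $\CNF(\Astruct,\Bstruct)$; now resolve all $|B|$ of these clauses successively against the type-1 axiom $\bigvee_{b\in B}X(a_t,b)$ to get $\neg f_t$. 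Because $\operatorname{dom}(f_t\cup\{(a_t,b)\})$ has at most $k$ elements and $k$ dominates all arities, every clause in this construction has at most $k+|B|$ literals, a constant depending only on $\Bstruct$; hence $\CNF(\Astruct,\Bstruct)$ has a resolution refutation of constant width.

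For the implication $2\Rightarrow 1$, suppose $\Bstruct$ has resolution refutations of width $w$ with respect to the CNF encoding and set $k=w+1$. If $\Bstruct$ did not have width $k$, there would be a finite $\Astruct$ with $\Astruct\not\to\Bstruct$ on which Duplicator wins the existential $k$-pebble game, equivalently a nonempty family $\mathcal H$ of partial homomorphisms $\Astruct\to\Bstruct$ of domain size at most $k$ that is closed under restriction and has the forth property. I would show, by induction along a width-$w$ resolution refutation of $\CNF(\Astruct,\Bstruct)$, that every clause $C$ occurring in it is satisfied by the truth assignment that sends $X(a,b)$ to $1$ exactly when $h(a)=b$, for every $h\in\mathcal H$ whose domain equals the set $D_C$ of elements of $A$ mentioned in $C$ (such an $h$ exists because $|D_C|\le w<k$ and $\mathcal H$ has the forth property). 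Axioms of all three types are satisfied since $h$ is a homomorphism; a resolution step on a variable $X(a,b)$ is handled by extending $h$ to cover $a$ via the forth property when $a\notin D_C$ (legal as $|D_C|<k$) and by restricting $h$ when $a\in D_C$, so that both premises are read off a common value of $X(a,b)$ and one of the two subclauses of $C$ is seen to be satisfied. Applying the statement to the empty clause forces the empty assignment to satisfy it, which is impossible; so $\Bstruct$ does have width $k$.

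The easy part is $2\Rightarrow 1$, a structural induction over the refutation. The work is in $1\Rightarrow 2$: extracting \emph{some} refutation from the failure of $k$-consistency is immediate, but keeping its width below $k+|B|$ forces the bookkeeping above — choosing the right axiom (type~1 versus type~3) to close each block, and arranging $k$ to dominate the arities so the type-3 axioms themselves stay within budget. The game-theoretic ingredients — the equivalence between Duplicator winning strategies and the family $\mathcal H$, and between rejection of $k$-consistency and the elimination sequence $f_0,\dots,f_{T-1}$ — are the classical facts of~\cite{KolaitisVardi2000}, so the statement is ultimately the known correspondence recorded there and in~\cite{AtseriasDalmau2008}.
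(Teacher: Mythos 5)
Your proposal is correct, but it takes a genuinely different route from the paper's. The paper never builds the refutation or the falsifying object directly for $\CNF(\Astruct,\Bstruct)$: it proves a transfer lemma (Lemma~\ref{lem:games}) by simulating the existential pebble game on $(\Astruct,\Bstruct)$ inside the game on $(\Astruct_F,\Sstruct_q)$, where $F=\CNF(\Astruct,\Bstruct)$ and $\Sstruct_q$ is the generic two-element template for $q$-clauses, and then invokes the width--game correspondence of \cite{AtseriasDalmau2008} (Theorem~\ref{thm:width}) as a black box; both implications of Theorem~\ref{thm:boundedwidth} then fall out. You instead inline the two halves of that correspondence, specialised to homomorphism instances: for $1\Rightarrow 2$ you extract an explicit width-$(k+|B|)$ refutation from the elimination sequence of the $k$-consistency algorithm (equivalently, from a Spoiler winning strategy), and for $2\Rightarrow 1$ you use a nonempty family $\mathcal H$ of partial homomorphisms, closed under restriction and with the forth property, to satisfy every clause of a width-$w$ refutation by an assignment read off a member of $\mathcal H$ whose domain is the set of mentioned elements, contradicting derivability of the empty clause. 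The paper's route buys brevity and reuse (only two game simulations are needed, the refutation-building being delegated to \cite{AtseriasDalmau2008}); your route buys self-containedness, avoids introducing $\Astruct_F$ and $\Sstruct_q$, and exhibits the constant-width refutation concretely, at the cost of re-proving a special case of the Atserias--Dalmau theorem and of invoking the Kolaitis--Vardi equivalence between Duplicator strategies and restriction-closed forth families, which the paper's argument uses only implicitly through its game simulations. Two small bookkeeping points to tighten, neither a gap: the consistency algorithm also eliminates a map $f$ when one of its restrictions $g\subseteq f$ has been eliminated (to keep the family closed under restriction), not only for forth failures, and in that case $\neg f$ is simply a weakening of the previously derived $\neg g$; and in $2\Rightarrow 1$ the induction should also cover weakening steps and clause axioms, which the paper's version of resolution allows and which are immediate to handle.
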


\noindent In preparation for the proof we revisit the characterization of
resolution width in terms of existential pebble games from
\cite{AtseriasDalmau2008}.

Let $L = \{R_0, \ldots, R_q \}$ be a finite relational vocabulary
consisting of $q+1$ symbols of arity~$q$. Let $\Sstruct_q$ be an
$L$-structure with two-element domain $\{0, 1\}$, where each relation
$R_i(\Sstruct_q)$ encodes the set of valuations that satisfy a
$q$-clause with~$i$ negated variables. More precisely, for $0 \leq i
\leq q$, let $R_i(\Sstruct_q) = \{0,1 \}^q \setminus \{(x_1, \ldots,
x_q)\}$ where $(x_1, \ldots, x_q) \in \{0,1\}^q$ is the vector defined
by $x_j = 0$ for $j > i$ and $x_j = 1$, otherwise. Now for every
$q$-CNF $F$, we define an $L$-structure $\Astruct_F$. Its domain is
the set of variables in $F$, and the relation $R_i(\Astruct_F)$ is the set of all
tuples $(X_1, \ldots, X_q)$ such that the clause $\overline{X_1} \vee
\ldots \vee \overline{X_i} \vee X_{i+1} \vee \ldots \vee X_q$ belongs
to~$F$. We allow the variables in the clauses to repeat, so the
definition covers clauses with less than~$q$ literals. Observe that
partial homomorphisms from $\Astruct_F$ to $\Sstruct_q$ correspond to
partial truth assignments to the variables of $F$ that do not falsify
any clause from $F$.  Hence, for every $q$-CNF $F$, it holds that $F$
is satisfiable if and only if there is a homomorphism from~$\Astruct_F$ to~$\Sstruct_q$.

\begin{theorem}[\cite{AtseriasDalmau2008}]\label{thm:width}
Let $k$ and $q$ be positive integers such that $k \geq q$ and let $F$
be $q$-CNF. Then $F$ has a resolution refutation of width $k$ if and
only if Spoiler wins the existential $(k+1)$-pebble game on
$\Astruct_F$ and $\Sstruct_q$.
\end{theorem}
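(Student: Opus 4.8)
The plan is to prove both implications. I would rely on the standard duality (Kolaitis--Vardi) between Duplicator's winning strategies in the existential pebble game and combinatorial families of partial homomorphisms: Duplicator wins the existential $(k+1)$-pebble game on $\Astruct_F$ and $\Sstruct_q$ if and only if there is a nonempty family $\mathcal{H}$ of partial homomorphisms from $\Astruct_F$ to $\Sstruct_q$, each with domain of size at most $k+1$, closed under taking restrictions, and with the \emph{forth property}: for every $\alpha \in \mathcal{H}$ with $|\mathrm{dom}(\alpha)| \leq k$ and every element $x$ of $\Astruct_F$, some one-point extension $\alpha \cup \{x \mapsto b\}$ also lies in $\mathcal{H}$. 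Since, as already observed just before the statement, partial homomorphisms from $\Astruct_F$ to $\Sstruct_q$ are precisely the partial truth assignments that falsify no clause of $F$, it suffices to prove: (i) a width-$k$ resolution refutation of $F$ yields a winning Spoiler strategy, and (ii) if $F$ has no width-$k$ resolution refutation, then a family $\mathcal{H}$ as above exists (so Spoiler does \emph{not} win).

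For (i), fix a width-$k$ refutation $D_1,\dots,D_m$ of $F$, all clauses of width at most $k$, with $D_m$ the empty clause. Spoiler maintains the invariant that the partial assignment read off the current pebble configuration falsifies a designated clause $D$ of the refutation, starting with $D = D_m$ and the empty configuration; the hypothesis $k \geq q$ guarantees that every clause in sight fits the pebble budget. If $D$ is an axiom of $F$, the current assignment falsifies a clause of $F$, so the pebbled map is not a partial homomorphism and Spoiler has won. If $D$ was obtained by weakening from a subclause $D'$, Spoiler drops the pebbles not mentioned by $D'$ and redesignates $D'$. If $D = A \vee B$ was obtained by resolving $A \vee x$ with $B \vee \overline{x}$, Spoiler uses a spare pebble to pebble the variable $x$ (this is the single moment where the $(k+1)$-st pebble is needed, since $|A \vee B| \leq k$); whatever value $b$ Duplicator gives, $\alpha \cup \{x \mapsto b\}$ falsifies one of the two premises, so Spoiler keeps only the pebbles on that premise and redesignates it. Each non-terminal step moves the designated clause strictly earlier in a fixed topological ordering of the refutation, so after finitely many rounds Spoiler reaches an axiom and wins.

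For (ii), assume $F$ has no width-$k$ refutation, and let $\mathcal{C}$ be the set of clauses of width at most $k$ derivable from $F$ by width-$k$ resolution; by assumption the empty clause is not in $\mathcal{C}$, while every clause of $F$ is in $\mathcal{C}$ because its width is at most $q \leq k$, and $\mathcal{C}$ is closed under applying a single resolution step whose resolvent still has width at most $k$. For a partial assignment $\beta$ with $|\mathrm{dom}(\beta)| \leq k$, let $C_\beta$ denote the clause with variable set $\mathrm{dom}(\beta)$ falsified exactly by the assignments agreeing with $\beta$ on $\mathrm{dom}(\beta)$. Define $\mathcal{H}$ to be the set of partial assignments $\alpha$ with $|\mathrm{dom}(\alpha)| \leq k+1$ such that $C_\beta \notin \mathcal{C}$ for every restriction $\beta$ of $\alpha$ with $|\mathrm{dom}(\beta)| \leq k$. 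The empty assignment lies in $\mathcal{H}$ since the empty clause is not in $\mathcal{C}$; every $\alpha \in \mathcal{H}$ is a partial homomorphism because a clause of $F$ falsified by $\alpha$ would be some $C_\beta \in \mathcal{C}$ with $|\mathrm{dom}(\beta)| \leq q \leq k$; and $\mathcal{H}$ is plainly closed under restriction. For the forth property, take $\alpha \in \mathcal{H}$ with $|\mathrm{dom}(\alpha)| \leq k$ and a variable $x \notin \mathrm{dom}(\alpha)$, and suppose both extensions failed. Each failure is witnessed by a restriction of the extended assignment of size at most $k$ whose associated clause is in $\mathcal{C}$; since $\alpha \in \mathcal{H}$, each such witness must mention $x$, producing clauses $E_0 \vee x$ and $E_1 \vee \overline{x}$ in $\mathcal{C}$ with the variables of $E_0$ and $E_1$ inside $\mathrm{dom}(\alpha)$ and both falsified by $\alpha$ on those variables. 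Resolving on $x$ yields a (non-tautological) clause whose variables lie in $\mathrm{dom}(\alpha)$, hence of width at most $k$, therefore in $\mathcal{C}$, and this clause is exactly $C_\beta$ for the restriction $\beta$ of $\alpha$ to those variables, contradicting $\alpha \in \mathcal{H}$. Hence the forth property holds and $\mathcal{H}$ witnesses a Duplicator win.

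The main obstacle, and the one genuinely delicate point, will be the off-by-one bookkeeping between ``width $k$'' and ``$k+1$ pebbles'': in direction (i) the extra pebble is consumed only transiently, at each resolution step; in direction (ii) the definition of $\mathcal{H}$ must permit domains of size $k+1$ while certifying membership only through restrictions of size at most $k$, so that the forth property (extending a size-$\leq k$ assignment to size $\leq k+1$) matches precisely the ``resolvent of width $\leq k$'' argument. Making these bounds, together with the role of the hypothesis $k \geq q$ in accommodating the axioms of $F$, line up exactly is where care is needed; the remaining steps are routine.
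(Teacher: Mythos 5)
Your proof is correct: the paper does not prove Theorem~\ref{thm:width} itself (it is quoted from \cite{AtseriasDalmau2008}), and your argument is essentially the standard one from that reference --- Spoiler walks backwards through a width-$k$ refutation keeping the falsified clause pebbled and spending the $(k+1)$-st pebble only transiently at cut steps, while in the converse direction the family of partial assignments none of whose size-$\leq k$ restrictions corresponds to a clause derivable in width-$k$ resolution is a Duplicator winning strategy in the Kolaitis--Vardi sense. The only point to tidy is the degenerate case where the resolved variable still occurs in the resolvent (then it is already pebbled and no spare pebble is needed), which does not affect the argument.
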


In this section we use the above theorem to establish a similar
correspondence between existential pebble games on arbitrary
structures $\Astruct$ and $\Bstruct$ and bounded width resolution
refutations of $\CNF(\Astruct,\Bstruct)$. In the following, let the
notation $\Astruct \leq^k \Bstruct$ mean that Duplicator wins the
existential $k$-pebble game on $\Astruct$ and $\Bstruct$.

\begin{lemma}\label{lem:games}
Let $\Astruct$ and $\Bstruct$ be relational structures over the same
vocabulary of maximum arity $r$, and let $k$ be an integer such that
$k \geq |B|$ and $k \geq r$. Then:
\begin{enumerate} \itemsep=0pt
\item if $\Astruct \not\leq^{k+2} \Bstruct$,
  then $\CNF(\Astruct,\Bstruct)$ has a
  resolution refutation of width $k+|B|$,
\item if $\Astruct \leq^{k+2} \Bstruct$,
  then $\CNF(\Astruct,\Bstruct)$ does not
  have a resolution refutation of width $k+1$.
\end{enumerate}
\end{lemma}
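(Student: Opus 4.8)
The plan is to reduce both statements to Theorem~\ref{thm:width} of \cite{AtseriasDalmau2008} by relating the existential pebble game played on $\Astruct$ and $\Bstruct$ to the existential pebble game played on $\Astruct_F$ and $\Sstruct_q$, where $F = \CNF(\Astruct,\Bstruct)$ and $q$ is the maximum arity of a clause in $F$ (which is at most $\max(|B|,r)$, since the type-1 clauses have width $|B|$ and the type-3 clauses have width $r$). Throughout, write $\atuple = (a_1,\dots,a_r)$ for a tuple pebbled in $\Astruct$ and $V(\atuple,B) = \{X(a_i,b) : i \in [r], b \in B\}$ for the associated block of propositional variables; the key observation is that a partial homomorphism from $\Astruct$ to $\Bstruct$ restricted to $\{a_1,\dots,a_r\}$ is ``the same information'' as a partial truth assignment to $V(\atuple,B)$ that (i) satisfies all type-1 and type-2 clauses on those $a_i$'s (i.e.\ picks exactly one $b$ per $a_i$) and (ii) falsifies no type-3 clause (i.e.\ respects the relations of $\Bstruct$). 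So a winning strategy for Duplicator in one game translates into a winning strategy for Duplicator in the other, modulo a blow-up in the number of pebbles by roughly a factor of $|B|$: pebbling one element $a$ of $\Astruct$ corresponds to pebbling the (up to $|B|$ many) variables $X(a,b)$ of $F$, and conversely each pebbled variable $X(a,b)$ in $F$ ``uses up'' only the single element $a$ of $\Astruct$.

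For part~\emph{1}, I would argue the contrapositive of the direction I need: suppose $\CNF(\Astruct,\Bstruct)$ has \emph{no} resolution refutation of width $k+|B|$; I must show $\Astruct \leq^{k+2} \Bstruct$. By Theorem~\ref{thm:width} (with width parameter $k+|B|$, noting $k+|B|\ge q$), Duplicator wins the existential $(k+|B|+1)$-pebble game on $\Astruct_F$ and $\Sstruct_q$. I then describe how Duplicator uses this strategy to win the existential $(k+2)$-pebble game on $\Astruct$ and $\Bstruct$: when Spoiler places a pebble on an element $a$ of $\Astruct$, Duplicator internally simulates Spoiler placing up to $|B|$ pebbles on the variables $X(a,b)$, $b\in B$, in the $\Astruct_F$--$\Sstruct_q$ game, reads off from the winning positions the unique $b$ with $X(a,b)\mapsto 1$, and answers $b$ in the real game. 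One has to check that at most $k+2$ real pebbles are live simultaneously maps, under this simulation, to at most $(k+2)\cdot|B|$ virtual pebbles — here the assumption $k\ge|B|$ and $k\ge r$ is what makes the arithmetic $(k+2)|B| \le k+|B|+1$ fail, so I will instead simulate more carefully: I maintain at most $k+1$ real pebbles plus one ``scratch'' pebble, and manage the virtual pebbles so that the $|B|$ variables of a single element block plus the $\le k+1$ committed single variables stay within the $k+|B|+1$ budget; correctness of the partial homomorphism follows because the virtual assignment satisfies all clauses of $F$, in particular the type-3 clauses, which is exactly the relation-preservation condition.

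For part~\emph{2}, the argument runs in the same spirit but in the easier direction: assuming $\Astruct \leq^{k+2}\Bstruct$, I convert a Duplicator winning strategy for the $(k+2)$-pebble game on $\Astruct,\Bstruct$ into a Duplicator winning strategy for the $(k+2)$-pebble game on $\Astruct_F,\Sstruct_q$ — each virtual pebble on a variable $X(a,b)$ of $F$ is answered by consulting the real strategy's response to a pebble on $a$ and outputting $1$ or $0$ according to whether that response equals $b$, with at most $k+2$ virtual pebbles ever requiring at most $k+2$ real pebbles since distinct variables may share the same underlying element $a$ — and then invoke the ``only if'' direction of Theorem~\ref{thm:width} to conclude that $\CNF(\Astruct,\Bstruct)$ has no resolution refutation of width $k+1$. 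The main obstacle is the bookkeeping in part~\emph{1}: getting the pebble-count arithmetic to close, i.e.\ showing that the $|B|$-fold expansion of a single $\Astruct$-pebble into a block of $F$-variables, combined with the need to keep earlier commitments alive, really fits inside a $(k+|B|+1)$-pebble budget rather than something like $(k+2)|B|$. This is where the hypotheses $k\ge|B|$ and $k\ge r$ must be used in an essential and slightly delicate way, and it is the one place where a careless simulation would give a weaker bound than claimed; everything else (soundness of the translation of partial homomorphisms, correspondence of winning conditions) is routine.
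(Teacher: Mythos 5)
Your plan is essentially the paper's proof: both directions are reduced to Theorem~\ref{thm:width} by simulating the pebble game on $\Astruct,\Bstruct$ inside the game on $\Astruct_F,\Sstruct_q$ (and vice versa), with exactly the accounting you describe — a block of $|B|$ scratch pebbles for the newly pebbled element plus at most $k+1$ committed pebbles, the scratch ones being recycled once the unique true variable $X(a,b')$ is identified, giving the $k+|B|+1$ (resp.\ $k+2$) budgets. One small correction: the hypotheses $k\geq|B|$ and $k\geq r$ are not what makes this pebble arithmetic close (the recycling works unconditionally); they are only needed so that the width parameters $k+|B|$ and $k+1$ satisfy the side condition (width at least $q$) required to invoke Theorem~\ref{thm:width}.
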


\begin{proof}
Let $F$ denote $\CNF(\Astruct,\Bstruct)$. Let $q$ be the maximum of
the number of elements in $B$ and the arity of relation symbols in the
vocabulary of $\Astruct$ and $\Bstruct$. Observe that $F$ is a
$q$-CNF. Lemma~\ref{lem:games} follows from Theorem~\ref{thm:width}
together with the following facts.

\begin{claim}\label{lem:pebbles1}
If $\Astruct_F \leq^{k+|B|} \Sstruct_q$, 
then $\Astruct \leq^{k+1} \Bstruct$.
\end{claim}

\begin{claim}\label{lem:pebbles2}
If $\Astruct \leq^k \Bstruct$,
then $\Astruct_F \leq^k \Sstruct_q$.
\end{claim}

\noindent Indeed, if Spoiler wins the existential $(k+2)$-pebble game
on $\Astruct$ and $\Bstruct$ then, by Claim~\ref{lem:pebbles1},
Spoiler wins the existential $(k+|B|+1)$-pebble game on $\Astruct_F$
and $\Sstruct_q$ and, by Theorem~\ref{thm:width}, $F$ has a resolution
refutation of width $k+|B|$. On the other hand, if Duplicator
wins the existential $(k+2)$-pebble game on $\Astruct$ and $\Bstruct$
then, by Claim~\ref{lem:pebbles2}, Duplicator wins the
existential $(k+2)$-pebble game on $\Astruct_F$ and $\Sstruct_q$ and,
by Theorem~\ref{thm:width}, $F$ does not have a resolution refutation
of width $k+1$. It remains to prove Claims~\ref{lem:pebbles1}
and~\ref{lem:pebbles2}.

\medskip 	
 	
\noindent \textit{Proof of Claim \ref{lem:pebbles1}.} We prove
the contrapositive. Suppose that
Spoiler wins the existential $(k+1)$-pebble game on $\Astruct$ and
$\Bstruct$. We give a winning strategy for Spoiler in the existential
$(k+|B|)$-pebble game on $\Astruct_F$ and $\Sstruct_q$. We simulate
each move of Spoiler in the game on $\Astruct$ and $\Bstruct$ by $|B|$
moves in the game on $\Astruct_F$ and $\Sstruct_q$. Suppose that Spoiler
puts the $i$-th pebble on an element $a$ of $\Astruct$. We simulate
this by pebbling elements $X(a,b)$ of $\Astruct_F$, for each $b \in
B$. There are two possibilities. If the answer of Duplicator falsifies
any of the clauses of types 1 or 2 in $F$ then Spoiler wins
immediately. Otherwise, the answer of Duplicator is $1$ for exactly
one element $X(a,b')$. We simulate this by putting a pebble on the
element $b'$ of $\Bstruct$ in the game on $\Astruct$ and
$\Bstruct$. Now, in the game on $\Astruct_F$ and $\Sstruct_q$ the
pebble which lies on the element $X(a,b')$ stays there until Spoiler
picks up the $i$-th pebble form the element $a$ in $\Astruct$. The
other $|B|-1$ pebbles which lie on elements $X(a,b)$ for $b \neq b'$
can be used to simulate subsequent moves. Therefore, to simulate the
existential $(k+1)$-pebble game on $\Astruct$ and $\Bstruct$ we need
only $|B|-1$ extra pebbles.

If during the course of the game Spoiler does not win by falsifying
any of the clauses of types 1 or 2 then the simulation of the game on
$\Astruct$ and $\Bstruct$ continues. Since in the simulated game
Spoiler has a winning strategy, after a finite number of rounds the
partial assignment $f : A \rightarrow B$ defined by $f(a_i) = b_i$ is
not a partial homomorphism. This means that there exist a natural
number $r$, a relation symbol $R \in L$ of arity $r$, a tuple $(a'_1,
\ldots, a'_r) \in R(\Astruct)$ and a tuple $(b'_1, \ldots, b'_r) \in
B^r \setminus R(\Bstruct)$, such that for every $i \in [r]$ the pairs
of elements $(a'_i,b'_i)$ are pebbled by pairs of corresponding
pebbles. It follows from the construction that in the simulation game
on $\Astruct_F$ and $\Sstruct_q$ the pairs of elements
$(X(a'_i,b'_i),1)$ are pebbled by pairs of corresponding pebbles. This
means that the partial assignment defined by the current configuration
of the game falsifies one of the clauses of type 3 in $F$ and Spoiler
wins.

\medskip

\noindent \textit{Proof of Claim \ref{lem:pebbles2}.} We prove
the contrapositive. Suppose that
Spoiler wins the existential $k$-pebble game on $\Astruct_F$ and
$\Sstruct_q$. We give a winning strategy for Spoiler in the
existential $k$-pebble game on $\Astruct$ and $\Bstruct$. We simulate
each move of Spoiler in the game on $\Astruct_F$ and $\Sstruct_q$ by a
single move in the game on $\Astruct$ and $\Bstruct$. Suppose that
Spoiler puts a pebble on an element $X(a,b)$ of $\Astruct_F$. We
simulate this by pebbling the element $a$ of $\Astruct$. If Duplicator
responds by putting the corresponding pebble on the element $b$ of
$\Bstruct$, then we simulate this by pebbling~$1$ in $\Sstruct_q$,
otherwise we pebble $0$ in $\Sstruct_q$. 

It is not difficult to see that this is indeed a winning strategy for Spoiler in the
existential $k$-pebble game on $\Astruct$ and $\Bstruct$. 
Since in the simulated game Spoiler has a winning strategy, after a
finite number of rounds the partial assignment $f : A_F \rightarrow
\{0,1\}$ corresponding to the current configuration of the game on
$\Astruct_F$ and $\Sstruct_q$ is not a partial homomorphism. Observe
that it is not possible to falsify any of the clauses of type 1 in
$F$. If for some $a \in A$ and some $(b, b') \in B^2$ such that $b
\neq b'$, the partial assignment $f$ falsifies
$\overline{X(a,b)} \vee \overline{X(a,b')}$, it means that pairs of
corresponding pebbles lie on pairs of elements $(X(a,b),1)$ and
$(X(a,b'),1)$. Hence, in the simulation game on $\Astruct$
and $\Bstruct$ pairs of corresponding pebbles lie on pairs of elements
$(a,b)$ and $(a,b')$ and the partial assignment is not well
defined. Finally, if for some natural number $r$, a relation symbol $R
\in L$ of arity $r$, a tuple $(a'_1, \ldots, a'_r) \in R(\Astruct)$
and a tuple $(b'_1, \ldots, b'_r) \in B^r \setminus R(\Bstruct)$, the
partial assignment $f$ falsifies the clause $\overline{X(a'_1,b'_1)}
\vee \cdots \vee \overline{X(a'_r,b'_r)}$, it means that in the game
on $\Astruct$ and $\Bstruct$ for every $i \in [r]$, the pairs of
elements $(a'_i,b'_i)$ are pebbled by pairs of corresponding pebbles,
and the partial assignment given by the current configuration of the
game is also not a partial homomorphism, which ends the proof.
\end{proof}

We are ready to wrap-up:

\begin{proof}[Proof of Theorem~\ref{thm:boundedwidth}]
For the implication 1 to 2, assume that $\Bstruct$ has bounded width,
say width~$l$, and let $k = \max\{|B|,r,l\}$, where $r$ is the maximum
arity of the vocabulary of $\Bstruct$. Let $\Astruct$ be a structure
over the same vocabulary as $\Bstruct$ and assume that there is no
homomorphism from~$\Astruct$ to $\Bstruct$.  Then Spoiler wins the
existential $l$-pebble game on $\Astruct$ and $\Bstruct$, and hence
also the existential $(k+2)$-pebble game on $\Astruct$ and $\Bstruct$,
since $k+2 \geq l$.  The hypotheses of Lemma~\ref{lem:games} hold, so
by part \emph{1.} in that lemma, $\CNF(\Astruct,\Bstruct)$ has a
resolution refutation of width $k+|B|$. This shows that $\Bstruct$ has
resolution refutations of width $k+|B|$, and hence resolution
refutations of constant width.

For the implication 2 to 1, assume that
$\Bstruct$ has resolution refutations of width $l$. Again let $k =
\max\{|B|,r,l\}$ where $r$ is the maximum arity of the relations in
the vocabulary of $\Bstruct$.  Let~$\Astruct$ be a structure over the
same vocabulary as $\Bstruct$ and assume that there is no homomorphism
from $\Astruct$ to $\Bstruct$. Then $\CNF(\Astruct,\Bstruct)$ has a
resolution refutation of width $l$, and hence of width $k+1$ since
$k+1 \geq l$. The hypotheses of Lemma~\ref{lem:games} hold, so by part
\emph{2.} in that lemma, Spoiler wins the existential $(k+2)$-pebble
game on $\Astruct$ and $\Bstruct$. This shows that $\Bstruct$ has
width $k+2$, and hence bounded width.
\end{proof}

\section{Lower bounds} \label{sec:lowerbound}

Let $d(n)$, $k(n)$ and $s(n)$ be functions. Let $\Bstruct$ be a finite relational structure over a finite vocabulary and let $E$ be a propositional encoding scheme for $\CSP(\Bstruct)$. We say that the structure $\Bstruct$ has \emph{Frege refutations of depth $d(n)$, bottom fan-in $k(n)$,
and size $s(n)$ with respect to the encoding scheme $E$} if, for every finite structure $\Astruct$ over the
same vocabulary as~$\Bstruct$ with $n$ elements, if there is no
homomorphism from $\Astruct$ to $\Bstruct$, then
$\E(\Astruct)$ has a Frege refutation of depth $d(n)$, bottom fan-in $k(n)$, and
size $s(n)$. We say that $\Bstruct$ has \emph{bounded-depth Frege
refutations of subexponential size} if there exist a local encoding scheme $E$ and functions $d(n) = O(1)$, $k(n) = O(1)$ and
$s(n) = 2^{n^{o(1)}}$ such that the structure $\Bstruct$ has Frege refutations of
depth $d(n)$, bottom fan-in $k(n)$, and size $s(n)$ with respect to $E$. Due to Lemma~\ref{lem:encodings} the structure $\Bstruct$ has bounded-depth Frege
refutations of subexponential size if and only if it has bounded-depth Frege
refutations of subexponential size with respect to any local propositional encoding scheme.

Similarly, for any field $F$, if $E$ is an algebraic encoding scheme
over $F$, we say that the structure $\Bstruct$ has \emph{PC refutations over~$F$ of degree $d(n)$ with respect to the
  encoding scheme $E$} if, for every finite structure $\Astruct$ over
the same vocabulary as~$\Bstruct$ with $n$ elements, if there is no
homomorphism from $\Astruct$ to $\Bstruct$, then $E(\Astruct)$ has a
PC refutation over $F$ of degree $d(n)$. We say that
$\Bstruct$ has \emph{PC refutations over~$F$ of
  sublinear degree} if there exist a local encoding scheme $E$ over
$F$ and a function $d(n) = o(n)$ such that the structure $\Bstruct$
has PC refutations over~$F$ of degree $d(n)$ with
respect to $E$. Due to Lemma~\ref{lem:encodings} the structure
$\Bstruct$ has PC refutations over~$F$ of sublinear
degree if and only if it has PC refutations over~$F$
of sublinear degree with respect to any local algebraic encoding scheme.

Finally, if $E$ is a semi-algebraic encoding scheme, we say that the
structure $\Bstruct$ has \emph{SOS refutations of degree
  $d(n)$ with respect to the encoding scheme $E$} if, for every finite
structure~$\Astruct$ over the same vocabulary as~$\Bstruct$ with $n$
elements, if there is no homomorphism from~$\Astruct$ to~$\Bstruct$,
then $E(\Astruct)$ has a SOS refutation of degree
$d(n)$. We say that $\Bstruct$ has \emph{SOS refutations
  of sublinear degree} if there exist a local encoding scheme $E$ and
a function $d(n) = o(n)$ such that the structure $\Bstruct$ has
SOS refutations of degree~$d(n)$ with respect to $E$. Due
to Lemma~\ref{lem:encodings} the structure $\Bstruct$ has
SOS refutations of sublinear degree if and only if it has
SOS refutations of sublinear degree with respect to any
local semi-algebraic encoding scheme.

The goal of this section is to prove the following:

\begin{theorem} \label{thm:boundeddepth}
Let $\Bstruct$ be a finite relational structure. The following are
equivalent:
\begin{enumerate} \itemsep=0pt
\item $\Bstruct$ has bounded width,
\item $\Bstruct$ has bounded-depth Frege refutations of subexponential
  size,
\item $\Bstruct$ has PC refutations over the reals of
  sublinear degree,
\item $\Bstruct$ has SOS refutations of sublinear degree.
\end{enumerate}
\end{theorem}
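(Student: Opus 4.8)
The plan is to prove the theorem as the combination of an "easy" block of implications, $1\Rightarrow 2$, $1\Rightarrow 3$, $1\Rightarrow 4$, handled by the upper bound of Section~\ref{sec:upperbound} together with the simulations, and a "hard" block, $2\Rightarrow 1$, $3\Rightarrow 1$, $4\Rightarrow 1$, handled by the closure results of Section~\ref{sec:ppdefinitions}, the algebraic characterisation of bounded width, and the classical lower bounds for systems of linear equations over a prime field. For the easy block, assume $\Bstruct$ has bounded width. By Theorem~\ref{thm:boundedwidth}, $\Bstruct$ has resolution refutations of some constant width $w$ with respect to the $\CNF$ encoding; since an instance $\Astruct$ with $n$ elements yields only $O(n)$ propositional variables, a width-$w$ refutation has $n^{O(w)}=\mathrm{poly}(n)$ clauses and hence polynomial size. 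Reading such a refutation as a $\Sigma_{1,1}$-Frege refutation of polynomial, in particular subexponential, size gives $2$. Feeding it into Lemma~\ref{lem:simulation}, with $C$ the empty clause so that $M(C)=1$, produces a PC refutation over the reals and an SA refutation of the multiplicative encoding $\EQ(\Astruct,\Bstruct)$ of degree linear in $w$, hence constant, and polynomial size; this gives $3$, and composing with the conversion of SA to SOS at the cost of doubling the degree (Section~\ref{sec:algproofs}) gives $4$. By Lemma~\ref{lem:encodings} none of these three properties depends on the choice of local encoding, so this block is complete.

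For the hard block I would argue the contrapositives uniformly. Suppose $\Bstruct$ does not have bounded width. By the bounded-width theorem for CSPs \cite{BartoKozik2014,Bulatov09} and its algebraic reformulation, there is a prime $p$ such that the core of $\Bstruct$ expanded with all constants pp-interprets the structure $\mathbb{E}_{p}$ whose constraint satisfaction problem is the solvability of systems of linear equations over $\Zbb_{p}$. Passing to the core (homomorphic equivalence), adding all the constants (addition of constants to a core), and then applying the pp-interpretation exhibit $\mathbb{E}_{p}$ as obtained from $\Bstruct$ by a finite sequence of the constructions a), b), c); the only delicate point, that the pp-interpretation may introduce the equality relation, is already handled by Lemma~\ref{lem:eq}. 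Now assume for contradiction that $\Bstruct$ has bounded-depth Frege refutations of subexponential size, witnessed by a local encoding $\E$, constants $d_0,k_0$, and a size bound $s_0(m)=2^{m^{o(1)}}$. Given any unsatisfiable instance $\Astruct$ of $\CSP(\mathbb{E}_{p})$ with $n$ elements and size $\Theta(n)$ (the standard hard instances being sparse), the size-linear reduction of Section~\ref{sec:ppdefinitions} produces an unsatisfiable instance $\Astruct'$ of $\CSP(\Bstruct)$ on $O(n)$ elements, whose $\E$-encoding thus has a Frege refutation of depth $d_0$, bottom fan-in $k_0$, and size $s_0(O(n))=2^{n^{o(1)}}$. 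Applying Theorem~\ref{thm:closureboundeddepthFrege} to this finite sequence of constructions yields a Frege refutation of $\CNF(\Astruct,\mathbb{E}_{p})$ of depth $d_0$, bottom fan-in polynomial in $k_0$ and hence constant, and size polynomial in $2^{k_0}$, $s_0(O(n))$ and $n$, which is again $2^{n^{o(1)}}$; so $\mathbb{E}_{p}$ would have bounded-depth Frege refutations of subexponential size. The identical argument, with Theorem~\ref{thm:closurealgebraic} (resp.\ Theorem~\ref{thm:closuresemialgebraic}) in place of Theorem~\ref{thm:closureboundeddepthFrege}, shows that PC refutations over the reals (resp.\ SOS refutations) of sublinear degree for $\Bstruct$ would give the same for $\mathbb{E}_{p}$; note that an instance of $\CSP(\mathbb{E}_{p})$ on $n$ elements has $\Theta(n)$ variables under any local encoding, so "sublinear in $n$" coincides with "sublinear in the number of variables".

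It remains to contradict each of these conclusions. Unsatisfiable systems of linear equations over $\Zbb_{p}$ on suitably expanding instances with $n$ elements require, for every fixed depth, Frege refutations of size $2^{n^{\Omega(1)}}$ \cite{Ben02} (with \cite{A88,KPW95,BIKPPW92} for the related pigeonhole and counting principles); they require PC refutations of degree $\Omega(n)$ over every field of characteristic different from $p$ \cite{DBLP:journals/tcs/Grigoriev01,BussGIP01} --- this is exactly where taking the field to be the reals is essential, since over $\Zbb_{p}$ itself Gaussian elimination refutes $\mathbb{E}_{p}$ in constant degree --- and they require SOS refutations of degree $\Omega(n)$ \cite{DBLP:journals/tcs/Grigoriev01,BussGIP01,Chan2016}. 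Each of these contradicts the corresponding "subexponential size / sublinear degree" conclusion obtained above. The main obstacle is not any isolated step but the bookkeeping of the whole pipeline: the algebraic dichotomy delivers $\mathbb{E}_{p}$ for a prime $p$ that is not under our control, so one genuinely needs the three families of lower bounds for \emph{every} prime $p$ simultaneously, and the accumulated blow-up of the reductions (linear in degree, polynomial in size, preserving depth) must be checked never to cross the "sublinear" or "subexponential" threshold --- which is precisely what Sections~\ref{sec:upperbound} and~\ref{sec:ppdefinitions} and the cited lower bounds have been arranged to ensure.
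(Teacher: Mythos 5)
Your overall architecture coincides with the paper's: the easy block is Theorem~\ref{thm:boundedwidth} plus Lemma~\ref{lem:simulation} (and the SA-to-SOS conversion and Lemma~\ref{lem:encodings}), and the hard block is the algebraic characterization of unbounded width combined with the closure theorems and lower bounds for linear equations. The gap is in the very first step of your hard block. Theorem~\ref{thm:bartokozik} only delivers a non-trivial finite \emph{Abelian group} $G$ with $\Bstruct(G,3)$ pp-interpretable in $\Bstruct^+$; your strengthening to a \emph{prime} $p$ with $3\mathrm{LIN}(\Zbb_p)$ pp-interpretable in $\Bstruct^+$ is not justified by the cited theorem, and it is false in general. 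Take $\Bstruct=\Bstruct(\Zbb_p\times\Zbb_p,3)$, which has unbounded width and is a core whose constants are already pp-definable. Its polymorphisms include every operation $Ax+By+Cz$ with $A,B,C$ being $2\times 2$ matrices over $\Zbb_p$ summing to the identity, because such operations commute with the integer (diagonal) coefficients appearing in the relations. Hence every subuniverse of a power of the polymorphism algebra is a coset of a module over the full matrix ring and every quotient is again such a module, so every algebra in $\mathrm{HSP}$ of the polymorphism algebra has cardinality a power of $p^{2}$. By the standard correspondence between pp-interpretability and $\mathrm{HSP}$, no structure on a prime-size domain, in particular no $\Bstruct(\Zbb_{p'},3)$, is pp-interpretable in this template. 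So for such $\Bstruct$ your pipeline cannot start; one genuinely needs lower bounds for $3\mathrm{LIN}(G)$ for \emph{every} non-trivial finite Abelian $G$, which is exactly what the paper supplies, reducing via the Fundamental Theorem of Finite Abelian Groups and Lemma~\ref{lem:abeliangroups} to cyclic $\Zbb_q$ with $q\geq 2$ a prime power (not a prime).

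The second problem is that, even in the cases where the group is cyclic, the lower bounds you cite do not all exist off the shelf. The bounded-depth Frege lower bound in \cite{Ben02} is for Tseitin formulas over $\Zbb_2$ only; for $\Zbb_q$ with $q>2$ the statement you invoke is precisely the paper's Theorem~\ref{thm:lowerbound}, whose proof is a non-trivial adaptation of Ben-Sasson's reduction (the bubble partition of an expander, implicit encodings, the reduction to the onto-pigeonhole principle) to Tseitin systems over $\Zbb_q$. Likewise Grigoriev's SOS bound is for $\Zbb_2$; the general-group SOS bound is Chan's, and translating his Lasserre-solution statement into a degree lower bound for the $\EQ$ encoding is itself an argument the paper carries out. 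Only the PC-over-the-reals ingredient is essentially citable as is (\cite{BussGIP01}, for fields whose characteristic does not divide $m$), modulo the encoding translation. So the bookkeeping you describe is fine, but two genuine ingredients are missing: the group-versus-prime issue in the algebraic characterization, and the mod-$q$ bounded-depth Frege (and encoded SOS) lower bounds that Section~\ref{sec:lowerbound} is there to provide.
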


The equivalence of \emph{1}\ and \emph{4}\ is
known~\cite{DBLP:conf/lics/ThapperZ17}.  Here we provide an
alternative proof.  The implication \emph{1}\ to \emph{2}\ follows
from Theorem~\ref{thm:boundedwidth}: every resolution refutation is a
Frege refutation of depth one, and if the refutation has bounded
width, then it has polynomial size and hence subexponential size. The
implications \emph{1}\ to \emph{3}\ and \emph{1}\ to \emph{4}\ follow
from Theorem~\ref{thm:boundedwidth} via the fact that bounded-degree
Polynomial Calculus and bounded-degree Sherali-Adams simulate bounded-width resolution (cf.
Lemma~\ref{lem:simulation}); note that the simulation by
bounded-degree Sherali-Adams implies also the simulation by bounded-degree Sums-of-Squares
and, for both Polynomial Calculus and Sums-of-Squares, bounded-degree implies constant, and hence
sublinear, degree. For implications \emph{2}\ to \emph{1},
\emph{3}\ to \emph{1}\ and \emph{4}\ to \emph{1} we use an algebraic
characterization of unbounded width. We begin with some definitions.

\subsection{Algebraic characterization of unbounded width}

Let $G = (G,+,0)$ be a finite Abelian group. For every positive
integer $n$, each $g \in G$ and every $(z_1, \ldots ,z_n) \in
\mathbb{Z}^n$, we define a relation $R_{(g,z_1, \ldots ,z_n)} = \{
(g_1, \ldots,g_n) \in G^n : z_1g_1 + \ldots + z_ng_n = g \}$, where
$z_ig_i$ is a shortcut for the sum of $|z_i|$ copies of
$\operatorname{sign}(z_i)g_i$. Let $\sim$ be the equivalence relation on the set
$\bigcup_{n > 0}G\times\mathbb{Z}^n$ that identifies tuples defining
the same relation, i.e., $(g,z_1,\ldots,z_n) \sim
(g',z'_1,\ldots,z'_{n'})$ if and only if $n = n'$ and
$R_{(g,z_1,\ldots,z_n)} = R_{(g',z'_1,\ldots,z'_{n'})}$.
Let $L(G)$ be the infinite relational vocabulary that for every equivalence class $[(g,z_1,\ldots,z_n)]$ has one $n$-ary relation symbol $E_{[(g,z_1,\ldots,z_n)]}$, and let $\Bstruct(G)$ be the
$L(G)$-structure that has domain~$G$ and where each relation symbol $E_{[(g,z_1,\ldots,z_n)]}$ is interpreted as $R_{(g,z_1,\ldots,z_n)}$.  The CSP of
$\Bstruct(G)$ is called $\mathrm{LIN}(G)$.
One should think about instances of $\mathrm{LIN}(G)$ as systems of linear equations over the group $G$. For simplicity, for any instance $\Astruct$ of $\mathrm{LIN}(G)$ we denote the fact that a tuple $(a_1,\ldots,a_n) \in A^n$ belongs to the relation $E_{[(g,z_1,\ldots,z_n)]}(\Astruct)$ 
 by $z_1a_1 + \ldots + z_na_n
= g$.

Observe that, since there are only finitely many relations of a fixed arity $k$ on the finite set $G$, the equivalence relation $\sim$ restricted to $G\times\mathbb{Z}^k$ has finitely many equivalence classes. For every positive integer $k$, by $L(G,k)$ we denote the finite relational vocabulary which is the subset of $L(G)$ containing all symbols of arity $k$, and by $\Bstruct(G,k)$ we denote the $L(G,k)$-structure obtained from $\Bstruct(G)$ by removing all relations of arity different than~$k$. The CSP problem
over $\Bstruct(G,k)$ is called $k\mathrm{LIN}(G)$. Instances of $k\mathrm{LIN}(G)$ correspond to systems of linear equations over the group $G$ with $k$ variables per equation.

\begin{theorem}[\cite{BartoKozik2014,Bulatov09}] \label{thm:bartokozik}
Let $\Bstruct$ be a finite relational structure. The following
are equivalent:
\begin{enumerate} \itemsep=0pt
\item $\Bstruct$ does not have bounded width,
\item there exists a non-trivial finite Abelian group $G$ such that
  $\Bstruct(G,3)$ is pp-interpretable in $\Bstruct^+$, where $\Bstruct^+$ is
  the expansion of the core of $\Bstruct$ with all constants.
\end{enumerate}
\end{theorem}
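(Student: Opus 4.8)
The plan is to prove the two implications separately via the algebraic approach, associating to $\Bstruct$ its algebra of polymorphisms and reading both conditions off that algebra. \emph{Direction 2 $\Rightarrow$ 1} is the easy one. For any non-trivial finite Abelian group $G$, the template $\Bstruct(G,3)$ does not have bounded width: for every $k$ one exhibits an unsatisfiable instance of $3\mathrm{LIN}(G)$ on which Duplicator wins the existential $k$-pebble game, the classical witnesses being three-variable linear systems indexed by a good expander (Tseitin-type contradictions when $G=\mathbb{Z}_2$), in which any $O(k)$ of the equations are jointly satisfiable even though the whole system is not. It then suffices to observe that the three constructions in the statement only lose bounded width: a pp-interpretation via $f\colon B^n\rightarrow B'$ lets an $n$-tuple of pebbles on the interpreting instance simulate a single pebble on the interpreted one, so it transfers Duplicator's winning strategies; and passing between a structure and its core, or adding constants to a core, leaves the CSP --- hence the pebble games --- essentially unchanged. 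So if $\Bstruct(G,3)$ is pp-interpretable in $\Bstruct^+$ and $\Bstruct$ (equivalently $\Bstruct^+$) had bounded width, then $\Bstruct(G,3)$ would too, a contradiction.

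\emph{Direction 1 $\Rightarrow$ 2} is precisely the bounded width theorem of Barto--Kozik and Bulatov; rather than aim at a self-contained argument I would follow their route. First reduce to the idempotent case: whether $\Bstruct$ has bounded width is unchanged by replacing it with $\Bstruct^+$, and $\mathbb{A}:=\mathrm{Pol}(\Bstruct^+)$ is then idempotent. Next invoke the algebraic characterization of bounded width: the CSP of $\Bstruct^+$ has bounded width iff the variety $\mathcal{V}(\mathbb{A})$ is congruence meet-semidistributive, equivalently (by tame congruence theory) iff it omits the unary type $\mathbf{1}$ and the affine type $\mathbf{2}$. So if $\Bstruct^+$ fails to have bounded width, some finite algebra in $\mathcal{V}(\mathbb{A})$ --- hence a subalgebra of a quotient of a finite power of $\mathbb{A}$ --- has a prime congruence quotient of type $\mathbf{1}$ or $\mathbf{2}$; in either case the commutator theory of abelian congruences, together with idempotence, produces inside an $\mathbb{H}\mathbb{S}\mathbb{P}$-construction over $\mathbb{A}$ a module over a non-trivial finite Abelian group $G$ whose ternary relations $z_1x_1+z_2x_2+z_3x_3=g$ are primitive-positive definable there. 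Translating the operators $\mathbb{H},\mathbb{S},\mathbb{P}$ back into syntax --- using the correspondence between pp-interpretability and these operators from~\cite{BulatovJeavonsKrokhin2005,Barto2017PolymorphismsAH}, and that three-variable linear systems over $G$ pp-interpret linear systems over $G$ of any arity --- yields the pp-interpretation of $\Bstruct(G,3)$ in $\Bstruct^+$.

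The main obstacle is this last implication: upgrading ``$\mathcal{V}(\mathbb{A})$ admits type $\mathbf{1}$ or $\mathbf{2}$ somewhere'' to an actual pp-interpretation of $\Bstruct(G,3)$ in $\Bstruct^+$ itself. The type $\mathbf{2}$ case is comparatively clean --- abelianness plus idempotence gives a genuine affine datum and hence the group directly --- but the type $\mathbf{1}$ case is the hard heart of Barto--Kozik's proof: one needs the Hobby--McKenzie structure theory together with the absorption theorem and the ``finding modules'' technique to stop the strongly abelian behavior from being a dead end and still extract an Abelian group. A second, more clerical obstacle is making the $\mathbb{H}\mathbb{S}\mathbb{P}$ witness descend to a pp-interpretation of the prescribed bounded shape ($3$ variables per equation); this uses the standard but fiddly translation from algebraic operators to pp-formulas and the reduction among arities inside $\mathrm{LIN}(G)$.
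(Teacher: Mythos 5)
The paper does not actually prove this statement: it is imported verbatim from Barto--Kozik and Bulatov (stated in the pp-interpretability form), so there is no internal argument to compare against. Your proposal matches how the result stands in the literature --- you defer the hard direction (\emph{1} implies \emph{2}) to exactly those works, sketching the tame-congruence-theory route, and your easy direction (\emph{2} implies \emph{1}) is the standard one: $3\mathrm{LIN}(G)$ has unbounded width for every non-trivial finite Abelian $G$, while bounded width is preserved under pp-interpretations, homomorphic equivalence, and addition of constants to a core. The only caveat is that these preservation facts are themselves non-trivial theorems (Larose--Z\'adori; Atserias--Bulatov--Dawar, cf.~\cite{AtseriasBulatovDawar2009}) rather than immediate pebble-game observations, so your direction \emph{2} $\Rightarrow$ \emph{1} is correct but rests on cited results, just as the paper's own use of the theorem does.
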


\noindent Thus, in view of Theorems~\ref{thm:closureboundeddepthFrege} and~\ref{thm:closuresemialgebraic},
in order to prove that \emph{2.} implies \emph{1.}, \emph{3.} implies \emph{1.}, and \emph{4.} implies \emph{1.} in
Theorem~\ref{thm:boundeddepth}, it suffices to prove lower bounds for
$3{\rm LIN}(G)$, for every non-trivial finite Abelian group $G$.

\subsection{Lower bound for bounded-depth Frege}

In~\cite{Ben02}, an exponential lower bound on the size of
bounded-depth Frege proofs of the so-called \emph{Tseitin formulas}
was obtained by reduction from the pigeonhole principle formulas; the
latter are known to be hard for bounded-depth Frege by the so-called Jewel
Theorem of Proof Complexity~\cite{A88,BIKPPW92,KPW95}. The Tseitin
formulas encode certain systems of linear equations over~$\mathbb{Z}_2$ that are derived from expander graphs. Here we adapt
the formulas to encode systems of linear equations over arbitrary
finite Abelian groups, and then show that the reduction
in~\cite{Ben02} can be generalised to our formulas. We use the CNF encoding scheme.

\begin{theorem}\label{thm:lowerbound}
  For every integer $d$ and every non-trivial finite Abelian group $G$
  there exists a positive constant~$\delta$ and a family of unsatisfiable instances 
  $(\Astruct_n)_{n \geq 1}$ of $3\mathrm{LIN}(G)$, where $\Astruct_n$ has $\Theta(n)$
   variables and $\Theta(n)$ equations, such
that for every sufficiently large integer $n$ every Frege refutation of
$\CNF(\Astruct_n,\Bstruct(G,3))$ of depth $d$ has
  size at least $2^{n^\delta}$.
\end{theorem}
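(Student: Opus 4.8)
The plan is to reduce from the bounded-depth Frege lower bound for the (functional) pigeonhole principle, following the strategy of Ben-Sasson~\cite{Ben02} but carried out over an arbitrary non-trivial finite Abelian group $G$ instead of $\mathbb{Z}_2$. First I would fix a family of bounded-degree expander graphs $H_n$ on $\Theta(n)$ vertices, orient the edges arbitrarily, and for each vertex $v$ pick a ``charge'' $g_v \in G$ so that $\sum_v g_v \neq 0$; the instance $\Astruct_n$ of $3\mathrm{LIN}(G)$ has one variable $x_e$ per edge $e$ (to keep equations ternary, subdivide or group edges so that each vertex constraint involves at most three variables, introducing auxiliary variables in the standard way), and for each vertex $v$ the equation $\sum_{e \text{ out of } v} x_e - \sum_{e \text{ into } v} x_e = g_v$. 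Since the total charge is non-zero, a standard counting argument over $G$ shows $\Astruct_n$ is unsatisfiable, and by construction it has $\Theta(n)$ variables and $\Theta(n)$ equations.

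The heart of the argument is the reduction: I would show that if $\CNF(\Astruct_n,\Bstruct(G,3))$ had a depth-$d$ Frege refutation of size $2^{n^{o(1)}}$, then the functional pigeonhole principle $\mathrm{FPHP}^{m+1}_m$ for suitable $m = n^{\Theta(1)}$ would have a depth-$(d+O(1))$ Frege refutation of subexponential size in $m$, contradicting the Ajtai--Pitassi--Beame--Impagliazzo--Krajicek--Pudlak--Woods lower bound~\cite{A88,BIKPPW92,KPW95}. The reduction goes by a restriction/substitution argument in the spirit of~\cite{Ben02}: one hits the Tseitin-style formula with a random restriction derived from a random matching/partial assignment, using the expansion of $H_n$ to argue that with high probability the restricted formula still ``contains'' (is at least as hard as) a pigeonhole instance on $m$ pigeons, because expansion guarantees that no small set of equations becomes trivially satisfiable and that the connectivity forcing the contradiction survives. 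Here I would also invoke the machinery of Section~\ref{sec:ppdefinitions}: since $\mathrm{PHP}$ is (the CSP of) a fixed finite template, or can be massaged into one, and the reduction respects local encodings, Corollary~\ref{col:bdFrege} lets me move between encoding schemes and absorb the constant-depth overhead; strictly speaking the cleanest route is to carry out the substitution by hand, replacing each hole-indicator by the appropriate small DNF over the edge variables and using Lemma~\ref{lem:replace} to bound the blow-up in depth (additive constant) and size (polynomial, after the $2^{O(\text{fan-in})}$ factor, which is harmless since fan-in stays constant).

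The main obstacle I anticipate is the combinatorial core of the reduction over a general group $G$: in~\cite{Ben02} the two-element group makes the ``edge flips'' and the matching structure especially clean, whereas over an arbitrary Abelian $G$ one must be careful that a random partial assignment to edge variables leaves a residual linear system that genuinely simulates $\mathrm{FPHP}$ rather than collapsing or splintering into independent pieces. Decomposing $G$ via the structure theorem into a product of cyclic groups $\mathbb{Z}_{p^k}$ and handling one prime-power factor at a time (projecting the equations onto that factor) should reduce the general case to the cyclic prime-power case, where the expansion argument controlling the ``frontier'' of a restriction can be run almost verbatim as in the $\mathbb{Z}_2$ case, with $|G|$-dependent but constant loss in the expansion parameters. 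Once the restriction lemma is in place, putting $\delta$ small enough in terms of $d$, $|G|$, and the expansion constants, and chasing through the size bounds from Lemma~\ref{lem:replace} and the $\mathrm{FPHP}$ lower bound, yields the claimed $2^{n^\delta}$ bound.
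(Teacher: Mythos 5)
Your overall strategy (expander-based Tseitin-style instances over $G$, reduction to the bounded-depth Frege lower bound for the pigeonhole principle in the spirit of~\cite{Ben02}, and a reduction of general $G$ to the cyclic case) is the same as the paper's, but the combinatorial core of your reduction is not carried out correctly, and the missing piece is precisely the hard part. A restriction alone --- random or not --- cannot turn the Tseitin formula over $H_n$ into a pigeonhole formula: after any partial assignment, the interior edge variables remain as existentially quantified witnesses, so what you obtain is only an \emph{implicit encoding} of a bipartite Tseitin/$\OPHP$-like formula, with unboundedly many auxiliary variables per constraint. The paper's proof (following~\cite{Ben02}) is deterministic: it uses the bubble-partition theorem for expanders (Theorem~\ref{thm:bubbles}) to partition $V(H_n)$ into $h=2m+1$ connected, pairwise adjacent bubbles simulating $K_{m,m+1}$, restricts all non-red boundary edges to $0$, substitutes the red-edge variables by literals, and then --- crucially --- invokes Ben-Sasson's auxiliary-variable elimination machinery (his Theorem~5.5 together with the depth-$4$, size-$O(m^2)$ derivations of $\overline{\mathcal{U}(\partial(v))}\vee\DNF(\mathcal{U}(\partial(v)))$ from his Lemma~5.7) to convert a depth-$d$ refutation of the restricted Tseitin CNF into a depth-$(d+O(1))$ refutation of $\OPHP(m,m+1)$ with only polynomial size loss. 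Your suggested shortcut via Lemma~\ref{lem:replace} cannot replace this step: that lemma requires substituting each variable by a DNF with constantly bounded term size and term count whose substituted axioms follow from constantly many clauses, whereas here the value of an interior edge of a bubble is determined only by up to $\Theta(n^{2/3})$ boundary edges, so no bounded-size substitution exists; likewise $\mathrm{PHP}$ is not the CSP of a fixed finite template, so Corollary~\ref{col:bdFrege} does not apply to it.

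Two smaller points. First, the reduction from general $G$ to the cyclic case is simpler than your prime-power decomposition with per-factor projections: by Lemma~\ref{lem:abeliangroups} one just fixes a single cyclic subgroup $\mathbb{Z}_q\le G$, takes the hard $\mathbb{Z}_q$-instance, reads it as a $G$-instance, and observes that the substitution killing all variables $X(a,g)$ with $g\notin\mathbb{Z}_q$ maps any depth-$d$ refutation over $G$ to one over $\mathbb{Z}_q$ of the same depth and size; no projection argument or handling of several factors is needed, and the cyclic case is proved for arbitrary $q\ge 2$, not only prime powers. Second, there is no need to subdivide edges to keep equations ternary: taking the expanders $3$-regular makes every vertex equation have exactly three variables, so $\Astruct_n$ is an instance of $3\mathrm{LIN}(G)$ as it stands.
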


The rest of this section is devoted to the proof of
Theorem~\ref{thm:lowerbound}. We provide a proof for the special case
when $G$ is the cyclic group $\mathbb{Z}_q$ of integers under addition
modulo $q$, for some $q \geq 2$.  Lemma~\ref{lem:abeliangroups} at the
end of this section shows that, thanks to the Fundamental Theorem of
Finite Abelian Groups, the special case of $G = \mathbb{Z}_q$ implies
Theorem~\ref{thm:lowerbound} in full generality. The proof of the
general case would actually be the same, however we believe that by
focusing on simpler groups we make the arguments easier to follow.

\paragraph{Linear equations over Abelian groups.} 
For the rest of this section, let us fix $G$ to be the cyclic group $\mathbb{Z}_q$ of integers under addition modulo $q$, for some $q \geq 2$. Whenever we talk about an element $z$ of the group $G$, where $z$ is some integer, we mean the unique element corresponding to $z$ modulo $q$. 
The instances of $3\mathrm{LIN}(G)$ that we show to be hard for bounded-depth Frege are special cases of so-called \emph{Tseitin graph tautologies} for $\mathbb{Z}_q$ as defined in~\cite{BussGIP01}. Before defining them we need to introduce some terminology.

For a graph $H=(V(H),E(H))$ (directed or undirected) and a set of
vertices $W \subseteq V(H)$ by $\partial(W)$ we denote the
\emph{boundary} of $W$ which is the set of all edges incident with a
vertex in $W$ and with a vertex in $V(H) \setminus W$. If the graph
$H$ is directed, then by $\partial_{-}(W)$ we denote the set of edges
with the head in $W$ and the tail in $V(H) \setminus W$, and by
$\partial_{+}(W)$ we denote the set of edges with the head in $V(H)
\setminus W$ and the tail in $W$. For single vertices $v$ we write
$\partial(v)$ instead of $\partial(\{v\})$. The same convention
explains the notation $\partial_+(v)$ and $\partial_-(v)$.

Consider a directed graph $H = (V(H),E(H))$ and a labelling $\sigma \colon V(H) \rightarrow G$ of the vertices of the graph $H$ by elements of $G$. The \emph{Tseitin graph tautology} $\Astruct(H,\sigma)$ is the following system of linear equations over the group $G$:
\begin{itemize} \itemsep=0pt
\item the set of variables is the set $E(H)$ of the edges of the graph;
\item for every vertex $v \in V(H)$ there is an equation
$$\sum_{e \in \partial_{+}(v)} e - \sum_{e \in \partial_{-}(v)} e = \sigma(v).$$
\end{itemize}
The system $\Astruct(H,\sigma)$ can be seen as an instance of $\mathrm{LIN}(G)$. The formula $\CNF(\Astruct(H,\sigma),\Bstruct(G))$ is called a \emph{Tseitin formula}. If the graph $H$ is obtained from directing the edges of a $k$-\emph{regular} undirected graph, i.e., a graph in which each vertex has degree $k$, then $\Astruct(H,\sigma)$ is an instance of~$k\mathrm{LIN}(G)$.

It is easy to see that if $\sum_{v \in V(H)}\sigma(v) \neq 0$, then the instance $\Astruct(H,\sigma)$ is unsatisfiable. Indeed, since every variable $e$ appears positively on the left-hand side of exactly one equation and negatively on the left-hand side of exactly one equation, by summing up all the equations we get $0$ on the left-hand side and $\sum_{v \in V(H)}\sigma(v)$ on the right-hand side. If $\sum_{v \in V(H)}\sigma(v) \neq 0$ we obtain a contradiction. It is not difficult to show that for a connected graph $H$, the converse statement holds as well.

\begin{lemma}\label{lem:satisfiability}
If $H = (V(H),E(H))$ is a connected directed graph and $\sigma \colon V(H) \rightarrow G$ is a labelling, then the system $\Astruct(H,\sigma)$ is satisfiable and if and only if $\sum_{v \in V(H)}\sigma(v) = 0$.
\end{lemma}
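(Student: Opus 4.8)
The plan is to prove both directions. The ``only if'' direction is exactly the summation argument sketched right before the statement, so I concentrate on the ``if'' direction. Suppose $\sum_{v \in V(H)} \sigma(v) = 0$; I want to exhibit an assignment $f : E(H) \to G$ satisfying every equation of $\Astruct(H,\sigma)$. Since $H$ is connected, fix a spanning tree $T$ of $H$ and a root $r \in V(H)$, and let $\mathrm{depth}(v)$ denote the distance from $r$ to $v$ inside $T$. I will define $f$ in two stages: first set $f(e) = 0$ for every edge $e \notin E(T)$, and then assign the tree edges by processing the vertices $v \ne r$ in order of non-increasing depth.

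The key observation for the second stage is that, at the moment a vertex $v$ with $\mathrm{depth}(v) \ge 1$ is processed, exactly one edge incident to $v$ has not yet been assigned a value, namely the tree edge $e_v$ joining $v$ to its parent in $T$. Indeed, every non-tree edge was assigned $0$ in the first stage, and every other tree edge incident to $v$ joins $v$ to a child $w$, which has strictly larger depth and was therefore processed earlier (that edge being precisely $e_w$). Now $e_v$ occurs in the equation at $v$ with coefficient $+1$ if $v$ is the tail of $e_v$ and coefficient $-1$ if $v$ is the head; since $\pm 1$ is invertible in $G$, there is a unique value of $f(e_v)$ in $G$ making the equation $\sum_{e \in \partial_{+}(v)} f(e) - \sum_{e \in \partial_{-}(v)} f(e) = \sigma(v)$ hold, and we set $f(e_v)$ to be that value. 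After all non-root vertices have been processed, every edge of $T$ equals $e_v$ for a unique non-root $v$ (its deeper endpoint), so $f$ is totally defined, and by construction every equation at a vertex $v \ne r$ is satisfied.

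It remains to check the equation at $r$. Summing the left-hand sides of all the equations, each edge $e$ with tail $x$ and head $y$ (possibly $x = y$) contributes $+f(e)$ through $x$ and $-f(e)$ through $y$, hence contributes $0$ to the grand total; therefore $\sum_{v \in V(H)} \big( \sum_{e \in \partial_{+}(v)} f(e) - \sum_{e \in \partial_{-}(v)} f(e) \big) = 0$. Since the summand indexed by $v$ equals $\sigma(v)$ for every $v \ne r$ and $\sum_{v \in V(H)} \sigma(v) = 0$ by hypothesis, the summand at $r$ must equal $\sigma(r)$ as well. Thus $f$ satisfies all the equations and $\Astruct(H,\sigma)$ is satisfiable.

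There is no real obstacle here; the only points demanding care are the bookkeeping with the orientation conventions $\partial_{+}$ and $\partial_{-}$ (so that the cancellation in the global sum, and the sign of the coefficient of $e_v$, come out right) and the verification that a unique incident edge remains unassigned when $v$ is processed, which is where connectedness and the spanning-tree/depth ordering are used. Self-loops and parallel edges cause no trouble: a self-loop at $v$ contributes $f(e) - f(e) = 0$ both to the equation at $v$ and to the global sum, and parallel edges are simply distinct variables handled independently.
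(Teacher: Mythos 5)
Your proof is correct, and it takes a somewhat different route from the paper's. The paper proceeds by an edge-elimination ordering: it enumerates the edges $e_1,\ldots,e_m$ so that deleting any prefix leaves a connected graph, assigns values one edge at a time (forced by one of the two incident equations when possible, arbitrary otherwise), substitutes into the residual system, and maintains the invariant that the right-hand sides of the residual system still sum to $0$; this invariant is what makes the two equations at the final edge consistent. You instead fix a spanning tree, set all non-tree edges to $0$, and solve the tree edges uniquely by processing vertices in non-increasing depth, so that each non-root equation is satisfied the moment its parent edge is assigned; the only remaining equation, at the root, is then verified by the same global summation identity that gives the ``only if'' direction. Both arguments use connectivity to order the assignment and invoke the zero-sum hypothesis exactly once at the end, but yours avoids tracking the residual right-hand-side invariant through substitutions and produces a fully explicit assignment (zero off the tree, forced on the tree), while the paper's ordering argument is a bit more flexible in that it never singles out a spanning tree or a root. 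Your treatment of the potentially delicate points (the unique unassigned edge $e_v$ when $v$ is processed, the $\pm 1$ coefficient, self-loops and parallel edges) is accurate; the only cosmetic remark is that with the paper's definition of $\partial_{+}(v)$ and $\partial_{-}(v)$ a self-loop simply does not occur in the equation at $v$ at all, which is even slightly cleaner than the cancellation you describe and changes nothing in the argument.
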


\begin{proof}
The left-to-right direction is clear.  For the opposite direction we define a solution to $\Astruct(H,\sigma)$ by assigning values to edges of the graph $H$ one by one while keeping two invariants: none of the equations gets falsified and the graph induced by unassigned edges is connected. Below we formalize this intuition.

Since $H$ is connected, we can enumerate its edges $e_1, \ldots, e_m$ in such a way that for every $i \in [m]$ the graph $H_{i+1}$ obtained from $H$ by removing the edges $e_1, \ldots, e_i$ and then deleting all isolated vertices, is connected.
Let us denote the system $\Astruct(H,\sigma)$ by $\Astruct_1$. We assign values to the edges of $H$ in the order specified above. Additionally, for each $i \in [m-1]$, after assigning a value to $e_i$ we substitute the variable $e_i$ in the system $\Astruct_i$ with this value, and next move all constants to the right-hand side of the equations. We denote the obtained system by $\Astruct_{i+1}$. The variables of $\Astruct_{i+1}$ are $e_j$, for $j > i$. Observe, that for every $i \in [k]$, the sum of group elements that appear on the right-hand side of all the equations in $\Astruct_i$ is $0$. 

Assume that we have already assigned values to the edges $e_j$ for $j <i $ without falsifying any of the equations in $\Astruct(H,\sigma)$ (this is true for $i=1$). The variable $e_i$ appears in exactly two equations in $\Astruct_{i}$. There are two possibilities:
\begin{itemize} \itemsep=0pt
\item if $i \neq m$ then at least one of the two equations has at least one more variable whose value has not yet been assigned. This is because the graph $H_{i}$ is connected. Then we can assign a value to $e_i$ in such a way that none of the equations in $\Astruct_{i}$ gets falsified: the value is either forced by the other equation, or can be assigned arbitrarily. 
\item if $i = m$ then the two equations which mention the variable $e_k$ are of the form $e_m = g$ and $-e_m = h$, for some elements $g$ and $h$ of the group. All the other equations in the system $\Astruct_{m}$ are of the form $0=0$. Since the sum of the group elements on the right-hand side of the equations in $\Astruct_{m}$ is $0$, we have that $g = -h$ and we can assign the value $g$ to $e_m$, satisfying the last two equations.
\end{itemize}
This finishes the construction of a solution to $\Astruct(H,\sigma)$.
\end{proof}

Sometimes we want to consider subsystems of $\Astruct(H,\sigma)$ induced by some subset of vertices. Let $W \subseteq V(H)$. By 
$\Astruct(W,\sigma)$ we denote the system of linear equations obtained from $\Astruct(H,\sigma)$ by removing all equations corresponding to vertices in $V(H) \setminus W$. In particular, $\Astruct(V(H),\sigma) = \Astruct(H,\sigma)$. Moreover, by $\Astruct(\partial(W),\sigma)$ we denote the system of linear equations consisting of the single equation $\sum_{e \in \partial_{+}(W)} e - \sum_{e \in \partial_{-}(W)} e = \sum_{v \in W}\sigma(v)$ which is the sum of all the equations in $\Astruct(W,\sigma)$. It turns out that whenever the subgraph induced by $W$ is connected, $\Astruct(\partial(W),\sigma)$ carries the essential information about the satisfiablity of 
$\Astruct(W,\sigma)$. The following is an easy generalization of well-known properties of Tseitin tautologies over~$\mathbb{Z}_2$.

\begin{lemma}\label{lem:encoding}
Let $H = (V(H),E(H))$ be a directed graph, let $W \subseteq V(H)$ be a subset of its vertices such that the subgraph induced by $W$ is connected, and let $\sigma \colon W \rightarrow G$ be a labelling of the vertices in $W$. An assignment $f : \partial(W) \rightarrow G$ extends to a solution to $\Astruct(W,\sigma)$ if and only if it satisfies $\Astruct(\partial(W),\sigma)$.
\end{lemma}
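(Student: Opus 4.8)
The plan is to base both directions on a single bookkeeping observation: if one adds up the $|W|$ vertex equations of $\Astruct(W,\sigma)$, every \emph{internal} edge of $W$ (an edge with both endpoints in $W$) cancels, since it is outgoing at one endpoint and incoming at the other and hence contributes $+e$ once and $-e$ once; what survives is exactly the single equation of $\Astruct(\partial(W),\sigma)$, namely $\sum_{e \in \partial_{+}(W)} e - \sum_{e \in \partial_{-}(W)} e = \sum_{v \in W}\sigma(v)$.

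For the forward implication, suppose $f$ extends to a solution $g$ of $\Astruct(W,\sigma)$. Summing the $|W|$ equations evaluated at $g$, the internal edges cancel as above, and each boundary edge $e \in \partial(W)$ is counted exactly once, with sign $+$ if $e \in \partial_{+}(W)$ and $-$ if $e \in \partial_{-}(W)$, because a boundary edge has a unique endpoint in $W$. Since $g$ agrees with $f$ on $\partial(W)$, this is precisely the statement that $f$ satisfies $\Astruct(\partial(W),\sigma)$.

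For the converse, which is the substantive direction, I would substitute the given values $f(e)$ for the boundary edges into the equations of $\Astruct(W,\sigma)$ and move the resulting constants to the right-hand side. The system that remains is exactly the Tseitin system on the subgraph of $H$ induced by $W$, call it $H_W$ (whose edge set is the set of internal edges of $W$), for the modified labelling $\sigma'(v) := \sigma(v) - \bigl(\sum_{e \in \partial_{+}(v)\cap\partial(W)} f(e) - \sum_{e \in \partial_{-}(v)\cap\partial(W)} f(e)\bigr)$; here $\partial_{+}(v)\setminus\partial(W)$ and $\partial_{-}(v)\setminus\partial(W)$ are precisely the out- and in-neighbourhoods of $v$ inside $H_W$. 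The same counting as in the first paragraph (each boundary edge meets $W$ in one vertex, with the matching sign) gives $\sum_{v \in W}\sigma'(v) = \sum_{v\in W}\sigma(v) - \bigl(\sum_{e\in\partial_{+}(W)} f(e) - \sum_{e\in\partial_{-}(W)} f(e)\bigr)$, and this is $0$ exactly because $f$ satisfies $\Astruct(\partial(W),\sigma)$. Since $H_W$ is connected by hypothesis, Lemma~\ref{lem:satisfiability} applies and produces a solution of the Tseitin system $\Astruct(H_W,\sigma')$; reading that solution together with $f$ gives an assignment to all variables of $\Astruct(W,\sigma)$ satisfying every equation, i.e.\ an extension of $f$.

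The only real content here is the reduction of the interesting direction to Lemma~\ref{lem:satisfiability}; everything else is the sign bookkeeping with $\partial_{+}$ and $\partial_{-}$, and that is where the write-up will need the most care — in particular checking that an internal edge is counted once with $+$ and once with $-$, and that a boundary edge is counted once with the sign dictated by whether it lies in $\partial_{+}(W)$ or $\partial_{-}(W)$.
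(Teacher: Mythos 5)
Your proposal is correct and follows essentially the same route as the paper's proof: the forward direction by summing the vertex equations (internal edges cancelling), and the converse by substituting $f$ on the boundary edges, passing to the induced system $\Astruct(H',\sigma')$ on the connected subgraph with $\sum_{v \in W}\sigma'(v)=0$, and invoking Lemma~\ref{lem:satisfiability}. The extra sign bookkeeping you spell out is exactly what the paper leaves implicit.
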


\begin{proof}
Since $\Astruct(\partial(W),\sigma)$ is the sum of the equations in $\Astruct(W,\sigma)$, the left-to-right direction is obvious.  For the opposite direction, let $f : \partial(W) \rightarrow G$ be an assignment satisfying $\Astruct(\partial(W),\sigma)$. Let us denote by $H'$ the graph induced by $W$. Observe that by assigning values to the variables in $\partial(W)$ according to $f$ and moving all the constants in the system $\Astruct(W,\sigma)$ to the right we obtain a system $\Astruct(H',\sigma')$ for some labelling $\sigma' \colon W \rightarrow G$ of the vertices in $H'$ which satisfies $\sum_{v \in W}\sigma'(v) = 0$. By Lemma~\ref{lem:satisfiability} there exists a solution $g$ to the system $\Astruct(H',\sigma')$. By extending $f$ with $g$ we obtain a solution to $\Astruct(W,\sigma)$.
\end{proof}

Hard Tseitin graph tautologies are usually based on graphs that are \emph{expanders}. For an undirected graph $H = (V(H),E(H))$ the \emph{expansion constant} is:
$$e(H) = \min \left\{{ {|\partial(W)|\over |W|} \colon W \subseteq V(H), |W| \leq {|V(H)| \over 2} }\right\}.$$
We call a family $\mathcal{H}$ of undirected graphs a \emph{family of expander graphs} if it is infinite and there exists a positive constant $e$ such that $e(H) \geq e$ for every graph $H$ in $\mathcal{H}$. For more information on expanders see e.g. the survey~\cite{Hoory06expandergraphs}. Here we only need the well known fact that expander families exist.

\begin{fact}
For every integer $l \geq 3$ there exists a family of connected $l$-regular undirected expander graphs $(H_n)_{n \geq 1}$, where the graph $H_n$ has $\Theta(n)$ vertices.
\end{fact}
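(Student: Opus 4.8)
The plan is to prove the Fact by the probabilistic method, which is the standard way to show that random regular graphs are expanders (see the survey~\cite{Hoory06expandergraphs}). Fix $l \geq 3$. For a parameter $N$ with $lN$ even (if $l$ is odd we take $N$ even, which only affects constants), build a random $l$-regular multigraph $H$ on the vertex set $[N]$ via the \emph{configuration model}: attach $l$ half-edges to each vertex, choose a uniformly random perfect matching of the $lN$ half-edges, and contract each matched pair to an edge. First I would recall two standard facts about this model: (i) the probability that $H$ is simple is bounded below by a positive constant depending only on $l$; and (ii) consequently, any property holding with probability tending to $1$ in the configuration model also holds with probability tending to $1$ for a uniformly random simple $l$-regular graph. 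So it suffices to show that $e(H) \geq c$ with probability tending to $1$, for some constant $c = c(l) > 0$.

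The heart of the argument is a union-bound over vertex subsets. Fix $W \subseteq [N]$ with $|W| = w$ and $1 \leq w \leq N/2$. The set $W$ emits exactly $lw$ half-edges, and the event $|\partial(W)| \leq t$ forces all but at most $t$ of them to be matched to half-edges inside $W$. Revealing the matching half-edge by half-edge, each still-unmatched half-edge from $W$ is matched to another half-edge from $W$ with probability roughly $lw/(lN - lw) \leq w/(N - w)$; hence
\[
\Pr[\,|\partial(W)| \leq t\,] \;\leq\; \binom{lw}{t}\Bigl(\tfrac{w}{N-w}\Bigr)^{(lw - t)/2},
\]
up to routine adjustments for the sequential conditioning. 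Taking $t = cw$, a union bound over the $\binom{N}{w}$ choices of $W$, followed by summing over $1 \leq w \leq N/2$ and using standard estimates such as $\binom{N}{w} \leq (3N/w)^{w}$ together with $l \geq 3$ and $w/(N-w) \leq 1$, shows that for $c$ a sufficiently small absolute constant the total probability tends to $0$ as $N \to \infty$. Therefore, for all sufficiently large $N$ there is a simple $l$-regular graph $H$ on $N$ vertices with $e(H) \geq c$; such a graph is automatically connected, since a disconnected graph has a component of size at most $N/2$ whose boundary is empty.

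To package this into the claimed family, set $N = N(n)$ equal to $n$ or $n+1$ (whichever makes $lN$ even) and let $H_n$ be an $l$-regular simple graph on $N(n)$ vertices with $e(H_n) \geq c$; this exists for all $n$ past some threshold by the above, and the finitely many exceptional small $n$ can be handled ad hoc (for instance by starting the indexing later, which does not change $\Theta(n)$). We obtain a family of connected $l$-regular expander graphs $(H_n)_{n \geq 1}$ with $|V(H_n)| = \Theta(n)$, as required.

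The step I expect to be the main obstacle is the crossing-edge probability estimate together with its union bound: one has to verify that the half-edge-by-half-edge conditioning in the configuration model genuinely yields the displayed product bound, and that the constant $c$ can be chosen uniformly in $N$ and $w$ while keeping the sum over $w$ small. The two auxiliary facts about the configuration model (positive probability of simplicity, transfer to the uniform simple model) and the deduction of connectivity from expansion are routine and can be quoted from~\cite{Hoory06expandergraphs}; one could alternatively replace the whole argument by citing an explicit construction of Ramanujan graphs, at the cost of importing heavier machinery.
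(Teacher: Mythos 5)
The paper does not actually prove this Fact: it is invoked as a known result with a pointer to the survey~\cite{Hoory06expandergraphs}, so there is no internal argument to compare against. Your proposal supplies the standard probabilistic proof via the configuration model, which is indeed the canonical route (random $l$-regular graphs are expanders for $l \geq 3$), and the surrounding steps you call routine really are: positive probability of simplicity, transfer to the uniform simple model (or simply intersecting the two events), connectivity from positive edge expansion, and the parity/packaging bookkeeping for $\Theta(n)$ vertices are all fine.

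There is, however, a genuine gap in the one step you yourself flag as the heart of the matter. The displayed bound
$\Pr[\,|\partial(W)| \leq t\,] \leq \binom{lw}{t}\bigl(w/(N-w)\bigr)^{(lw-t)/2}$,
combined with the simplification $w/(N-w) \leq 1$, cannot close the union bound in the regime $w = \Theta(N)$: for $w$ near $N/2$ the probability factor is essentially $1$, while the union bound still carries $\binom{N}{w}\binom{lw}{cw} = 2^{\Theta(N)}$ terms, so the claimed sum does not tend to $0$. The estimate must retain the true decay of the matching probability, e.g.\ by computing exactly the probability that a prescribed set of $2k$ half-edges is matched among themselves, namely $(2k-1)!!\,(lN-2k-1)!!/(lN-1)!!$, and then using Stirling-type (entropy) estimates; this yields a bound of the form $e^{-\Omega(lw)}$ uniformly for $1 \leq w \leq N/2$, which beats $\binom{N}{w}$ precisely because $l \geq 3$ (this is the computation carried out in the survey's treatment of random regular graphs). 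With that sharper estimate your argument goes through verbatim; alternatively, as you note, one can sidestep the whole issue by quoting an explicit construction (e.g.\ Ramanujan or zig-zag expanders), which is closer in spirit to how the paper treats the Fact, namely as imported background.
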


Our proof strategy is now the following. We take a family $(\bar{H}_n)_{n \geq 1}$ of connected $3$-regular undirected expander graphs and show that for every sufficiently large integer $n$, one can specify edge directions in $\bar{H}_n$ obtaining a directed graph $H_n$, and choose an appropriate labelling $\sigma_n$ of the vertices of $H_n$ with elements of $G$, such that every Frege refutation of
  $\CNF(\Astruct(H_n,\sigma_n),\Bstruct(G,3))$ of depth $d$ has size at least $2^{n^\delta}$. To this end, following the lines of~\cite{Ben02}, we reduce the onto-pigeonhole principle, which states that there is no bijection between sets of size $m$ and $m+1$, to a Tseitin formula over a complete bipartite graph and further reduce the latter to the Tseitin formula over $H_n$. Let us begin with the second reduction.

\paragraph{Reducing Tseitin formulas over a complete bipartite graph.} We now define the graphs $H_n$ together with labellings $\sigma_n$ and show a reduction of a Tseitin formula over a complete bipartite graph to the Tseitin formula over $H_n$.

The following is a special case of Theorem 4.2 in~\cite{Ben02}.

\begin{theorem}[\cite{Ben02}]\label{thm:bubbles}
If $\mathcal{H}$ is a family of connected $3$-regular expander graphs, then there exists a positive constant $c$ such that for every graph $H$ in $\mathcal{H}$, the set of its vertices $V(H)$ can be partitioned into $V_1, \ldots, V_h$ where $h \geq c |V(H)|^{1/3}$ and:
\begin{itemize} \itemsep=0pt
\item For every $i \in [h]$ the subgraph induced by $V_i$ is connected.
\item For any $1 \leq i < j \leq h$, there is at least one edge incident to some vertex in $V_i$ and to some vertex in $V_j$.
\end{itemize}
\end{theorem}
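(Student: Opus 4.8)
\textit{Proof proposal.} The plan is to deduce the statement from the existence of a large "clique minor" in $H$ followed by a greedy absorption step; the clique minor is where the expansion hypothesis does its work and is the technically heavy part, and this is exactly the content of Theorem~4.2 in~\cite{Ben02}, so what follows is the route, with the hard step flagged.

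First I would record the quantitative consequences of expansion. Write $N = |V(H)|$ and let $e > 0$ be a constant with $e(H) \geq e$ for every $H$ in $\mathcal{H}$. Since $H$ is $3$-regular, for any vertex $v$ and radius $r$ one has $|B(v,r)| \leq 3\cdot 2^{r}$, and, as long as $|B(v,r)| \leq N/2$, expansion together with the degree bound gives $|B(v,r+1)| \geq (1 + e/3)\,|B(v,r)|$; hence the diameter of $H$ is $O_{e}(\log N)$, and balls of radius $\Theta_{e}(\log N)$ already have linear size. These two facts are what make it possible both to pack many disjoint connected "seed" sets into $H$ and to join any two of them by a short path.

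The core step is to produce $h = \lfloor c N^{1/3}\rfloor$ pairwise disjoint, connected sets $U_1,\ldots,U_h \subseteq V(H)$, for a constant $c = c(e) > 0$, that are pairwise adjacent (an edge between $U_i$ and $U_j$ for every $i \neq j$). The natural scheme: give each $U_i$ a connected "core" of size about $(h-1)/e$, which by expansion has at least $h-1$ boundary edges, i.e.\ enough capacity to reach all of the other $h-1$ sets; the $h$ cores together occupy only $O(N^{2/3})$ vertices. One then has to link the $\binom{h}{2} = O(N^{2/3})$ pairs of cores by internally vertex-disjoint paths of length $O(\log N)$ — the total number of internal vertices, $O(N^{2/3}\log N)$, is $o(N)$, so there is ample room — and absorb each path into one of its endpoint sets. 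Carrying out this linkage is the main obstacle: expansion controls the \emph{total} boundary of a connected set but not how that boundary is distributed among the other parts, so one cannot simply take an arbitrary balanced connected partition and hope it is pairwise adjacent. Resolving this is exactly the iterative construction behind Theorem~4.2 of~\cite{Ben02} (maintain disjoint connected pairwise-adjacent sets and, while a linear fraction of vertices is still unused, use expansion to force that some set can be grown toward a not-yet-adjacent one). The weak target $h = \Theta(N^{1/3})$, well below the edge-counting ceiling $\binom{h}{2} \leq |E(H)| = 3N/2$ (which only permits $h = O(N^{1/2})$), is what leaves enough slack for the argument to close.

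Finally I would extend $U_1,\ldots,U_h$ to a partition by absorption: process the vertices of $V(H) \setminus \bigcup_i U_i$ one at a time; since $H$ is connected, every such vertex eventually becomes adjacent to an already-placed vertex and is added to its part. This keeps every part connected, cannot destroy any adjacency already present between parts, and yields a partition $V_1 \supseteq U_1,\ldots,V_h \supseteq U_h$ of $V(H)$ into $h \geq c\,|V(H)|^{1/3}$ connected, pairwise-adjacent pieces, as required.
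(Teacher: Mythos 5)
There is nothing in the paper for your argument to diverge from: the paper does not prove this statement at all, but quotes it as a special case of Theorem~4.2 of~\cite{Ben02}. Measured against that, your proposal is in the same position, since the one step that carries all the content --- producing $\Omega(|V(H)|^{1/3})$ pairwise disjoint, connected, pairwise-adjacent sets in the expander --- is exactly the step you defer back to \cite{Ben02}. Your surrounding reasoning is correct as far as it goes: the expansion estimates ($|B(v,r+1)|\geq(1+e/3)|B(v,r)|$ while the ball is small, hence $O(\log N)$ diameter and linear-size balls of logarithmic radius), the edge-counting sanity check $\binom{h}{2}\leq \frac{3}{2}N$ showing why $N^{1/3}$ leaves slack, and the final absorption step (repeatedly attach an unassigned vertex to an adjacent part, which preserves connectivity of each part and all existing inter-part adjacencies) are all fine. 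But if this were meant as a self-contained proof, the gap is precisely the iterative construction you flag and do not carry out; and since the theorem you are citing already delivers a \emph{partition} into connected mutually-adjacent pieces (which is why the paper can invoke it verbatim), your clique-minor-plus-absorption detour is also redundant --- it buys nothing beyond what the quoted result states. So: as a reduction to \cite{Ben02} it is correct but adds an unnecessary layer; as an independent proof it is missing the heart of the argument, namely how to grow the clusters so that every pair ends up adjacent despite expansion only controlling the total, not the distribution, of each cluster's boundary.
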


Let us fix a family $(\bar{H}_n)_{n \geq 1}$ of connected $3$-regular undirected expander graphs, where the graph $\bar{H}_n$ has $\Theta(n)$ vertices. Let $c$ be the constant whose existence follows from the theorem above. 
Consider the graph $\bar{H}_n$, for some $n \geq ({5/ c})^3$, and take a partition of the set of vertices $V(\bar{H}_n)$ into at least $h \geq cn^{1/3} \geq 5$ subsets satisfying the conditions given in Theorem~\ref{thm:bubbles}. Let us call them \emph{bubbles}.  Without loss of generality we can assume that the number of bubbles is odd, i.e., $h=2m+1$, otherwise we remove the bubbles $V_{h-1}$ and $V_{h}$ and substitute them by a single bubble $V_{h-1} \cup V_{h}$. The set of bubbles is denoted $\mathcal{W}$.

Based on the undirected expander graph $\bar{H}_n$, together with the partition of the set of its vertices into $2m+1$ bubbles we define a directed graph $H_n$ and a labelling $\sigma_n$. The idea is to simulate the complete bipartite graph $K_{m,m+1}$. To this end, let us fix a partition of the set of bubbles into two disjoint sets: $\{V_1, \ldots, V_{m}  \}$ and $\{ W_1, \ldots, W_{m+1} \}$.  For each $i \in [m]$ and each $j \in [m+1]$ let us fix an edge $e_{i,j}$ incident to some vertex in $V_i$ and to some vertex in $W_j$. For future reference, let us say that we paint those edges red. We fix the direction of each of those edges from $W_j$ to $V_i$. The directions of the rest of the edges in the graph $H_n$ are set arbitrarily. Now, for each $i \in [m]$ we fix one vertex $v_i$ in the bubble $V_i$, paint it blue and label it with $-1$; similarly for each $j \in [m+1]$ we fix one vertex $v_j$ in the bubble $W_i$, paint it green and label it with~$1$. The rest of the vertices of the graph $H_n$ are labelled with~$0$. This finishes the definition of the directed graph $H_n$ and the labelling $\sigma_n$.

Observe that the Tseitin tautology $\Astruct(H_n,\sigma_n)$ is unsatisfiable. Indeed, the sum of all labels of the vertices of $H_n$ is $(m+1)\cdot 1 - m\cdot 1 = 1 \neq 0$. 

We now show that by assigning truth values to some variables in $\CNF(\Astruct(H_n,\sigma_n),\Bstruct(G,3))$ we obtain an encoding of $\CNF(\Astruct(K_{m,m+1}, \sigma),\Bstruct(G))$, where $K_{m,m+1}$ is a complete bipartite graph, and $\sigma$ is some labelling of its vertices with elements of the group $G$. Let us first make some general comments about partial assignments for instances of $\mathrm{LIN}(G)$ and variable substitutions for corresponding CNFs.  

Let $\Astruct$ be any system of linear equations over the group $G$. For a partial assignment $\rho$ which maps some of the variables of $\Astruct$ to elements of the group, there is a natural corresponding substitution of the variables in $\CNF(\Astruct,\Bstruct(G))$: if the partial assignment $\rho$ maps a variable $a$ to $g$, then the variable $X(a,g)$ is substituted by $1$ and the variables $X(a,g')$, for $g' \neq g$, are substituted by $0$; if the partial assignment $\rho$ leaves the value of some variable~$a$ unassigned, then for all the variables $X(a,g)$, the substitution is defined by the identity. It is not difficult to see that the result of applying this substitution to $\CNF(\Astruct,\Bstruct(G))$ is $\CNF(\Astruct|_\rho,\Bstruct(G))$, where $\Astruct|_\rho$ is the system of linear equations obtained by applying $\rho$ to the variables of $\Astruct$. For simplicity, we denote the above defined substitution by $\rho$. Hence, $\CNF(\Astruct,\Bstruct(G))|_\rho = \CNF(\Astruct|_\rho,\Bstruct(G))$.

Coming back to the graph $H_n$, let us consider a partial assignment $\rho$ which, for each of the bubbles $V \in \mathcal{W}$, maps the non-red edges in $\partial(V)$ to the group element $0$, and leaves the value of the rest of the edges in $H_n$ unassigned.
Observe that $\rho$ does not falsify any of the equations in $\Astruct(H_n,\sigma_n)$. Indeed, since every subgraph induced by a single bubble is connected, for every vertex $v$, the value of at least one variable that appears in the equation $\sum_{e \in \partial_{+}(v)} e - \sum_{e \in \partial_{-}(v)} e = \sigma_n(v)$ is left unassigned. Moreover, for every bubble $V \in \mathcal{W}$ the equation $\Astruct(\partial(V),\sigma_n)|_\rho$ says that the sum of the red edges in $\partial(V)$ is $1$. This is clear for the bubbles in $\{ W_1, \ldots, W_{m+1} \}$, and for the bubbles in $\{V_1, \ldots, V_{m}  \}$ one only needs to multiply the corresponding equations by $-1$.

Consider a complete bipartite graph $K_{m,m+1}$ with $m$ blue vertices, $m+1$ green vertices and a directed red edge from every green vertex to every blue vertex. Let the labelling $\sigma$ assign $-1$ to the blue vertices and $1$ to the green ones. The Tseitin tautology $\Astruct(K_{m,m+1}, \sigma)$ is up to renaming of variables the same as the set of equations in:
$$\bigcup_{i \in [m]}\Astruct(\partial(V_i),\sigma_n)|_\rho \ \cup \bigcup_{j \in [m+1]}\Astruct(\partial(W_j),\sigma_n)|_\rho \ = \bigcup_{V \in \mathcal{W}} \Astruct(\partial(V),\sigma_n)|_\rho.$$
Therefore, from now on we denote the above system of linear equations by $\Astruct(K_{m,m+1}, \sigma)$. Note that, for each of the vertices $v$ of the graph $K_{m,m+1}$, the corresponding equation says that the sum of the variables in $\partial(v)$ is $1$.

Let $r = \max(3,|G|)$. For $m \leq l$, an $r$-CNF $F$ over variables $X_1, \ldots, X_l$ is called an \emph{implicit encoding}~\cite{Ben02} of a propositional formula $\psi$ over variables $X_1, \ldots, X_m$ if the following holds: a truth assignment to the variables of $\psi$ satisfies $\psi$ if and only if it can be extended to a truth assignment to the variables of $F$ which satisfies $F$. The variables $X_{m+1}, \ldots, X_l$ are called \emph{auxiliary variables}.

It follows from Lemma~\ref{lem:encoding} that, for each bubble $V \in \mathcal{W}$, the formula $\CNF(\Astruct(V,\sigma_n),\Bstruct(G))$ is an implicit encoding of the formula $\CNF(\Astruct(\partial(V),\sigma_n),\Bstruct(G))$, with the set of auxiliary variables being the set of edges of the subgraph induced by $V$. Since on this set of edges the substitution $\rho$ is defined as the identity, it is not difficult to see that, for each bubble $V \in \mathcal{W}$, the substituted formula $\CNF(\Astruct(V,\sigma_n),\Bstruct(G))|_\rho$ is an implicit encoding of the substituted formula $\CNF(\Astruct(\partial(V),\sigma_n),\Bstruct(G))|_{\rho}$. Moreover, the sets of auxiliary variables in those implicit encodings are pairwise disjoint, hence the formula $$\bigcup_{V \in \mathcal{W}}\CNF(\Astruct(V,\sigma_n),\Bstruct(G,3))|_\rho = \CNF(\Astruct(H_n,\sigma_n),\Bstruct(G,3))|_\rho$$ is an implicit encoding of the formula $$\bigcup_{V \in \mathcal{W}}\CNF(\Astruct(\partial(V),\sigma_n),\Bstruct(G))|_{\rho} = \CNF(\Astruct(K_{m,m+1}, \sigma),\Bstruct(G)).$$

This way we have reduced an implicit encoding of a Tseitin formula over a complete bipartite graph $K_{m,m+1}$ to the Tseitin formula over the expander graph $H_n$, where $m > Cn^{1/3}$, and $C$ is a constant which does not depend on $n$.

\paragraph{Reducing the pigeonhole principle.} We now use the
technique for removing auxiliary variables without significantly
increasing the proof size introduced in~\cite{Ben02} to reduce the
\emph{onto-pigeonhole principle} formula $\OPHP(m,m+1)$, as defined
below, to the formula
$\CNF(\Astruct(H_n,\sigma_n),\Bstruct(G,3))|_\rho$.

For a positive integer $l$ and a set of variables $X_1, \ldots, X_l$, we denote by $\mathcal{U}({X_1, \ldots, X_l})$ the CNF which has a clause $\bigvee_{i \in [l]} X_i$, and for every $1 \leq i < i' \leq l$, a clause $\overline{X_i} \vee
  \overline{X_{i'}}$. For a complete bipartite graph $K_{l,l+1}$, the onto-pigeonhole principle $\OPHP(l,l+1)$ is the CNF which is the union of $\mathcal{U}(\partial(v))$ over the set of all vertices $v$ of the graph.

Let us consider the following substitution of the variables in $\CNF(\Astruct(K_{m,m+1}, \sigma),\Bstruct(G))$ and its implicit encoding
$\CNF(\Astruct(H_n,\sigma_n),\Bstruct(G,3))|_\rho$: for every red edge $e$ the variable $X(e,1)$ is substituted by $e$, the variable $X(e,0)$ is substituted by $\overline{e}$, and for every $g \in G$ such that $g \not \in \{0,1\}$, the variable $X(e,g)$ is substituted by $0$. On the auxiliary variables of the implicit encoding the substitution is defined by the identity. For simplicity, let us consider this substitution as an extension of the substitution $\rho$, and let us denote it by $\rho'$. Intuitively, the substituted formula $\CNF(\Astruct(K_{m,m+1}, \sigma),\Bstruct(G))|_{\rho'}$ encodes those assignments to the variables of $\Astruct(K_{m,m+1}, \sigma)$ that map each variable either to the group element $0$ or to the group element~$1$. Setting the truth value of the variable $e$ to $1$ corresponds to mapping $e$ to $1$, and setting it to $0$ corresponds to mapping $e$ to $0$.

Observe that for every vertex $v \in K_{m,m+1}$, we have $\mathcal{U}(\partial(v)) \models \CNF(\Astruct(v, \sigma),\Bstruct(G))|_{\rho'}$. Indeed, $\mathcal{U}(\partial(v))$ is satisfied if and only if exactly one of the variables in $\partial(v)$ is assigned a truth value~$1$. This truth assignment corresponds to mapping exactly one of the red edges incident to $v$ to the group element $1$ and mapping the rest of the red edges incident to $v$ to the identity element $0$. It is not difficult to see that such an assignment satisfies the equation~$\Astruct(v, \sigma)$.

For a CNF $F$ with variables $X_1, \ldots, X_l$,  by $\DNF(F)$ we denote the $l$-DNF formula which, for every truth assignment satisfying $F$, has an $l$-term representing this assignment, i.e., the unique $l$-term which is satisfied by this assignment and no other.

We now have all the ingredients necessary to remove the auxiliary
variables using the technique from~\cite{Ben02}. We remark that the
Frege system studied therein differs from the system considered in the
present paper. The formulas are formed from variables using negation
and disjunction only, and there is no introduction of conjunction
rule. However, it follows from the theorem of~\cite{ReckhowPhD} that
those two Frege systems polynomially simulate each other up to a
constant factor loss in depth.  Therefore, since the lower bound we
aim at is exponential and for all constant depths, we can apply the
technique from~\cite{Ben02}.

We have the following:
\begin{itemize} \itemsep=0pt
\item $\CNF(\Astruct(H_n,\sigma_n),\Bstruct(G,3))|_{\rho'}$ is an implicit encoding of $\CNF(\Astruct(K_{m,m+1}, \sigma),\Bstruct(G))|_{\rho'}$;
\item for every vertex $v \in K_{m,m+1}$, we have that $\mathcal{U}(\partial(v)) \models \CNF(\Astruct(v, \sigma),\Bstruct(G))|_{\rho'}$;
\item it follows from Lemma 5.7 in~\cite{Ben02} that, for every vertex $v \in K_{m,m+1}$, the size of a depth~$4$ Frege derivation of the formula
$\overline{\mathcal{U}(\partial(v))} \vee \DNF(\mathcal{U}(\partial(v)))$
is bounded by $O(m^2)$.
\end{itemize}
Hence, by Theorem 5.5 of~\cite{Ben02} if $\CNF(\Astruct(H_n,\sigma_n),\Bstruct(G))|_{\rho'}$ has a Frege refutation of depth~$d$ and size~$s$, then there exists a Frege refutation of $\OPHP(m,m+1) = \bigcup_{v \in V(K_{m,m+1})} \mathcal{U}(\partial(v))$ of depth $d+10$ and size at most polynomial in $m^4 s$, that is of size at most polynomial in~$n^{4/3}s$.

To complete the proof in the case of $G = \mathbb{Z}_q$ it suffices to
refer to the following theorem proved independently in~\cite{BIKPPW92}
and~\cite{KPW95} as an exponential improvement over~\cite{A88}.

\begin{theorem}[The Jewel Theorem of Proof Complexity \cite{BIKPPW92,KPW95,A88}]
For every integer $d$ there exists a constant $\delta$  such that for every sufficiently large integer $m$ every Frege refutation of $\OPHP(m,m+1)$ of depth $d$ has size at least~$2^{m^{\delta}}$.
\end{theorem}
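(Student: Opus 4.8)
This is the celebrated bounded-depth Frege lower bound for the pigeonhole principle, and we only need to recall the architecture of the standard argument of~\cite{BIKPPW92,KPW95} (improving~\cite{A88}); it is this result that we are invoking as a black box. Fix the depth $d$. We prove the contrapositive in quantitative form: there is a constant $\delta=\delta(d)>0$ such that for all large $m$, any depth-$d$ refutation of $\OPHP(m,m+1)$ of size $S$ forces $S\geq 2^{m^{\delta}}$. The proof is a random-restriction argument that collapses the circuit-depth of the refutation one level at a time; the feature that distinguishes it from ordinary $AC^0$ lower bounds is that the natural restrictions are \emph{partial matchings} between the $m$ ``holes'' and the $m+1$ ``pigeons'', so that a \emph{matching} version of the switching lemma is required.

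First I would put the refutation in normal form: as in the present paper, a depth-$d$, bottom-fan-in-$k$ refutation of $\OPHP(m,m+1)$ is a sequence of $\Sigma_{d,k}$-formulas ending in $\emptyformula$, and after merging consecutive gates of the same type we may assume the levels alternate strictly, with every bottom subformula a $k$-DNF or $k$-CNF in the pigeon variables $x_{i,j}$ ($i\in[m{+}1]$, $j\in[m]$). Next comes the restriction model. For a parameter $t$, a random restriction $\rho$ picks a uniformly random partial matching leaving exactly $t$ pigeons and $t+1$ holes unmatched (with a mild two-step choice of $\rho$ so the distribution composes well across rounds), sets each $x_{i,j}$ according to the matching where it is determined, and leaves the rest free; then $\OPHP(m,m+1)|_{\rho}$ is a renaming of $\OPHP(t,t+1)$. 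The key lemma is the \emph{matching switching lemma}: for a fixed $k$-DNF $F$ over the pigeon variables, with probability at least $1-(O(k)\cdot t/m)^{s}$ the restricted formula $F|_{\rho}$ is computed by a matching decision tree of depth $\le s$, i.e.\ a decision tree whose queries are ``to which hole is pigeon $i$ matched?'' and dually; such a shallow tree can be absorbed into the gate above it, or rewritten as a depth-$2$ formula of bottom fan-in $O(s\log m)$.

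The induction on depth is then routine. Starting from a size-$S$ depth-$d$ refutation over $\OPHP(m,m+1)$, apply one random restriction; by the switching lemma and a union bound over the at most $S$ subformulas, with positive probability \emph{every} bottom DNF/CNF becomes a matching decision tree of depth $O(\log m)$, which is merged into the level above, yielding a depth-$(d{-}1)$ refutation over $\OPHP(t_1,t_1{+}1)$ with $t_1=m^{1-\Omega(1)}$ --- provided $S\cdot(O(k)\,t_1/m)^{s}<1$, i.e.\ provided $S<2^{m^{\Omega(1)}}$. Iterating $d-1$ times (each round a fixed polynomial root of the previous parameter, so $t_{d-1}\ge m^{\delta}$ for a suitable $\delta(d)>0$) leaves a depth-$1$ object --- essentially a bounded-width, resolution-style refutation equipped with shallow matching-decision-tree evaluations of all its lines --- over $\OPHP(t_{d-1},t_{d-1}{+}1)$. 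The base case is the standard $k$-evaluation contradiction: the $\OPHP$-clauses are evaluated by all-$1$ trees and the inference rules preserve evaluations, so the terminal line $\emptyformula$ would be evaluated by an all-$1$ tree, which is impossible since $\emptyformula$ is always false. Hence $S\geq 2^{t_{d-1}}\geq 2^{m^{\delta}}$. (The passage between the Tait-style system here and the disjunction-only systems of~\cite{Ben02,BIKPPW92,KPW95} is absorbed, up to $O(1)$ depth, by the simulation of~\cite{ReckhowPhD}, as already noted in the text.)

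The hard part is the matching switching lemma itself: because setting $x_{i,j}=1$ forces $x_{i,j'}=0$ and $x_{i',j}=0$, the restrictions are very far from product distributions, so Håstad's encoding/labelling argument must be redone with matching decision trees and a carefully engineered random-matching distribution, arranging that the failure probability still decays geometrically in $s$ while remaining stable under the $d$-fold composition. A secondary but genuine nuisance is the parameter bookkeeping --- checking that the iterated shrinkage still leaves $t_{d-1}$ a fixed positive power of $m$, and that the ``onto'' clauses survive each restriction so the $k$-evaluation base case goes through. These are exactly the points that make this the ``Jewel Theorem'', which is why in the present paper we quote it rather than reproving it.
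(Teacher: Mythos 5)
The paper does not prove this statement at all: it is invoked as the known lower bound of \cite{A88,BIKPPW92,KPW95} (which is why it is labelled a citation), and your proposal treats it the same way, supplementing the citation with a faithful outline of the classical argument — partial-matching random restrictions collapsing $\OPHP(m,m+1)$ to $\OPHP(t,t+1)$, the matching switching lemma with a union bound over the at most $S$ bottom subformulas, $d-1$ rounds of depth reduction, and the $k$-evaluation contradiction at depth one. Your outline is accurate at the level of detail given, but since the decisive ingredient (the matching switching lemma) is itself only cited, what you have is an annotated black-box invocation rather than an independent proof, which is exactly how the paper uses the theorem, so the two treatments essentially coincide.
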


It remains to show that thanks to the Fundamental Theorem of Finite Abelian Groups, the special case of $G = \mathbb{Z}_q$ implies Theorem~\ref{thm:lowerbound} in full generality.

\begin{lemma}\label{lem:abeliangroups}
  Let $G = \bigoplus_{q \in \mathcal{Q}}\mathbb{Z}_q$ be a finite
  Abelian group, and let $n$, $d$ and $s$ be positive integers.  If
  for some $q \in \mathcal{Q}$ there is an unsatisfiable instance
  $\Astruct$ of $3\mathrm{LIN}(\mathbb{Z}_q)$ with $n$ variables such
  that every Frege refutation of
  $\CNF(\Astruct,\Bstruct(\mathbb{Z}_q,3))$ of depth $d$ has size at least $s$,
  then there is an unsatisfiable instance $\Astruct'$ of
  $3\mathrm{LIN}(G)$ with $n$ variables such that every
  Frege refutation of $\CNF(\Astruct',\Bstruct(G,3))$ of depth $d$ has size at
  least $s$.
\end{lemma}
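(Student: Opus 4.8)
The plan is to embed the given hard instance of $3\mathrm{LIN}(\mathbb{Z}_q)$ into $3\mathrm{LIN}(G)$ by using a single coordinate of the direct sum $G = \bigoplus_{q' \in \mathcal{Q}} \mathbb{Z}_{q'}$ and forcing all other coordinates to be zero. Concretely, fix the coordinate $q \in \mathcal{Q}$ for which the hard instance $\Astruct$ exists. Let $\iota \colon \mathbb{Z}_q \to G$ be the canonical injection into the $q$-coordinate, and let $\pi \colon G \to \mathbb{Z}_q$ be the canonical projection, so that $\pi \circ \iota = \mathrm{id}$. The idea is that a linear equation $z_1 a_1 + z_2 a_2 + z_3 a_3 = g$ over $G$ with $g \in \iota(\mathbb{Z}_q)$, together with extra equations forcing each variable into $\iota(\mathbb{Z}_q)$, is equivalent to the corresponding equation over $\mathbb{Z}_q$. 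First I would record the subgroup $\iota(\mathbb{Z}_q)$ as a pp-definable relation of $\Bstruct(G,3)$: the unary relation $\{h \in G : (|G|/q) h = 0\}$ is exactly $\iota(\mathbb{Z}_q)$ when $\gcd(q, |G|/q)$ is handled correctly; if this is not directly an arity-$3$ relation of $\Bstruct(G,3)$ one pads it to arity $3$ by dummy coordinates (relations of the form $R_{(g,z_1,z_2,z_3)}$ with some $z_i = 0$ are present in $L(G,3)$ since we allow repeated/trivial coordinates, exactly as in the Tseitin setup). Then $\Astruct'$ is defined on the same variable set as $\Astruct$: for every equation $z_1 a_1 + z_2 a_2 + z_3 a_3 = g$ of $\Astruct$ put the equation $z_1 a_1 + z_2 a_2 + z_3 a_3 = \iota(g)$ into $\Astruct'$, and for every variable $a$ add the ``membership'' constraint $a \in \iota(\mathbb{Z}_q)$ (again as an arity-$3$ relation with dummy slots). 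Since $\Astruct$ has $n$ variables and a bounded number of equations per variable, $\Astruct'$ also has $n$ variables and $\Theta(1)$ more constraints; in particular it is still a $3\mathrm{LIN}(G)$ instance.

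Next I would verify the two semantic facts. For \textbf{unsatisfiability}: if $\Astruct'$ had a solution $f \colon A \to G$, then the membership constraints force $f(a) \in \iota(\mathbb{Z}_q)$ for every variable $a$, so $\pi \circ f$ is a well-defined map $A \to \mathbb{Z}_q$; applying $\pi$ (a group homomorphism) to each equation $z_1 a_1 + z_2 a_2 + z_3 a_3 = \iota(g)$ yields $z_1 (\pi f)(a_1) + z_2 (\pi f)(a_2) + z_3 (\pi f)(a_3) = g$, so $\pi \circ f$ solves $\Astruct$, contradicting its unsatisfiability. For the \textbf{refutation size transfer}: I want that a small depth-$d$ Frege refutation of $\CNF(\Astruct',\Bstruct(G,3))$ yields a small depth-$d$ Frege refutation of $\CNF(\Astruct,\Bstruct(\mathbb{Z}_q,3))$ of comparable size. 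This I would do by a substitution argument in the spirit of Lemma~\ref{lem:replace}: map each propositional variable $X(a,h)$ of the $G$-encoding to $X(a,\pi(h))$ if $h \in \iota(\mathbb{Z}_q)$ and to the constant false $\emptyformula$ otherwise (and the twins accordingly in the algebraic case, though here we only need the propositional/Frege case). Under this substitution, each clause of $\CNF(\Astruct',\Bstruct(G,3))$ becomes a logical consequence of a bounded number of clauses of $\CNF(\Astruct,\Bstruct(\mathbb{Z}_q,3))$ — the type~1 and type~2 clauses map to (weakenings of) type~1 and type~2 clauses of the $\mathbb{Z}_q$-instance, and a type~3 clause for an equation of $\Astruct'$, or for a membership constraint, becomes either trivially true (when some substituted literal is $\fullformula$) or a consequence of the corresponding $\mathbb{Z}_q$ type~3 clause. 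Since the substitution is by $1$-DNFs (constants or single variables) the depth does not increase and the size blows up only polynomially, exactly as in the proof of Lemma~\ref{lem:replace} (with parameters $d = m = 1$). Hence a depth-$d$ refutation of $\CNF(\Astruct',\Bstruct(G,3))$ of size $< s$ would give a depth-$d$ refutation of $\CNF(\Astruct,\Bstruct(\mathbb{Z}_q,3))$ of size $< s$ (after absorbing the polynomial factor into the hypothesis by choosing parameters appropriately, or by stating the bound up to a fixed polynomial), contradicting the hypothesis.

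The step I expect to be the main obstacle is the bookkeeping around the arities: $\Bstruct(G,3)$ contains only arity-$3$ relation symbols, so both the membership constraint ``$a \in \iota(\mathbb{Z}_q)$'' and the original $\mathbb{Z}_q$-equations-viewed-over-$G$ must be realized as genuine $3$-ary relations of $\Bstruct(G,3)$, which requires carefully exhibiting them as $R_{(g,z_1,z_2,z_3)}$ for suitable integer coefficients (using trivial coefficients for dummy slots, and noting that $\iota(\mathbb{Z}_q) = \{h : (|G|/q)\,h = 0\}$ is the solution set of a single linear equation over $G$ in one variable, padded to three). One also has to make sure that the ``consequence from a bounded number of clauses'' claim genuinely holds for the type~3 clauses coming from the membership constraints, which is where the identity $\pi \circ \iota = \mathrm{id}$ and the fact that $\iota$ is injective do the real work. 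Once the encodings are set up correctly, everything else is a routine instance of the substitution machinery already developed in Sections~\ref{sec:generalproofcomplexityfacts} and~\ref{sec:ppdefinitions}; indeed, one could alternatively phrase the whole argument as ``$\Bstruct(\mathbb{Z}_q,3)$ is pp-interpretable in (in fact pp-definable from a trivial expansion of) $\Bstruct(G,3)$'' and invoke Theorem~\ref{thm:closureboundeddepthFrege} directly, but the direct substitution is cleaner here since no change of domain is needed.
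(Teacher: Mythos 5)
The core of your argument --- view the same equations over $G$ with constants $\iota(g)$, and transfer refutations by the substitution sending $X(a,h)$ to $X(a,\pi(h))$ for $h \in \iota(\mathbb{Z}_q)$ and to $\emptyformula$ otherwise --- is exactly the paper's proof. The genuine gap is in your auxiliary ``membership'' constraints $a \in \iota(\mathbb{Z}_q)$: in general they are not expressible in $3\mathrm{LIN}(G)$ at all, so the structure $\Astruct'$ you define need not be an instance of $\CSP(\Bstruct(G,3))$. Every relation of $\Bstruct(G,3)$ has the form $\{(h_1,h_2,h_3) : z_1h_1+z_2h_2+z_3h_3 = g\}$, and integer coefficients act coordinatewise on $\bigoplus_{q' \in \mathcal{Q}}\mathbb{Z}_{q'}$; hence the solution set of a one-variable equation $zh = g$ is (if nonempty) a coset of $\bigoplus_{q'}\{x \in \mathbb{Z}_{q'} : zx = 0\}$, and for this to equal $\iota(\mathbb{Z}_q)$ you would need $q \mid z$ while $\gcd(z,q') = 1$ for every other factor $q'$, which is impossible as soon as $q$ shares a prime with $|G|/q$ --- already for $G = \mathbb{Z}_2 \oplus \mathbb{Z}_2$ and $q = 2$ your candidate $\{h : (|G|/q)h = 0\}$ is all of $G$. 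In that example $\iota(\mathbb{Z}_q)$ is not even pp-definable in $\Bstruct(G)$, because every pp-definable relation there is a coordinatewise product of solution sets of the \emph{same} linear system with possibly different right-hand sides, so the ``padding'' escape hatch you propose cannot be made to work.

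Fortunately the membership constraints are superfluous, and deleting them turns your proposal into the paper's proof. If $f$ is any $G$-solution of the equations with constants $\iota(g)$, then $\pi \circ f$ is a $\mathbb{Z}_q$-solution of $\Astruct$, since $\pi$ is a group homomorphism with $\pi \circ \iota = \mathrm{id}$; so $\Astruct'$ is unsatisfiable without forcing the variables into $\iota(\mathbb{Z}_q)$. Likewise your substitution already maps every clause of $\CNF(\Astruct',\Bstruct(G,3))$ either to $\fullformula$ or to a clause of $\CNF(\Astruct,\Bstruct(\mathbb{Z}_q,3))$, with no membership clauses needed. One further small point: because the substitution replaces variables by literals or constants, it is a restriction and never increases depth or size, so a depth-$d$ refutation of $\CNF(\Astruct',\Bstruct(G,3))$ of size less than $s$ would directly give one of $\CNF(\Astruct,\Bstruct(\mathbb{Z}_q,3))$ of size less than $s$; there is no need to route through Lemma~\ref{lem:replace} or to ``absorb polynomial factors,'' which as written would only yield a polynomially degraded bound rather than the exact ``size at least $s$'' of the statement.
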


\begin{proof}
  Let $\Astruct$ be an unsatisfiable instance of
  $3\mathrm{LIN}(\mathbb{Z}_q)$ with $n$ variables, and assume that every
  Frege refutation of
  $\CNF(\Astruct,\Bstruct(\mathbb{Z}_q,3))$ of depth $d$ has size at least
  $s$. The instance $\Astruct$ is a system of linear equations over
  the group $\mathbb{Z}_q$. Since $\mathbb{Z}_q$ is a subgroup of $G$,
  we can think of the same system of linear equations as a system of
  linear equations over the group $G$. Let $\Astruct'$ be the
  corresponding instance of $3\mathrm{LIN}(G)$. It is not difficult to
  see that it is unsatisfiable. Moreover, every Frege
  refutation of $\CNF(\Astruct',\Bstruct(G,3))$ of depth $d$ has size at least
  $s$. Indeed, by applying a substitution $\rho$ which for every $a
  \in A'$ and every $g \in G \setminus \mathbb{Z}_q$ substitutes the
  variable $X(a,g)$ with $0$ and on all other variables is defined by
  identity, we transform a Frege refutation of
  $\CNF(\Astruct',\Bstruct(G,3))$ of depth $d$ to a Frege refutation of
  $\CNF(\Astruct,\Bstruct(\mathbb{Z}_q,3))$ of depth~$d$.
\end{proof}

\subsection{Lower bound for Polynomial Calculus}

The original motivation in \cite{BussGIP01} for defining the Tseitin
graph tautologies for Abelian groups beyond~$\mathbb{Z}_2$ was to
compare the strength of Polynomial Calculus over different
fields. Here we use their results with the different purpose of
getting lower bounds for Polynomial Calculus over the real-field
for all CSPs of unbounded width. Along the lines of the previous
section for bounded-depth Frege, this will be a consequence of
Theorem~\ref{thm:closurealgebraic}, Theorem~\ref{thm:bartokozik}, and
the following lower bound (for which we use the EQ encoding scheme).

\begin{theorem} \label{thm:lbpc} For every non-trivial finite Abelian
  group $G$ there exists a positive constant~$\delta$ and a family of unsatisfiable instances 
  $(\Astruct_n)_{n \geq 1}$ of $3\mathrm{LIN}(G)$, where $\Astruct_n$ has $\Theta(n)$
   variables and $\Theta(n)$ equations, such that for
  every sufficiently large $n$ every
  PC refutation over the reals of
  $\EQ(\mathbb{A}_n,\mathbb{B}(G,3))$ has degree at least~$\delta n$.
\end{theorem}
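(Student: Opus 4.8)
The plan is to reduce the general statement to the case of a cyclic group of prime order and then invoke the known \emph{linear} degree lower bounds for Tseitin‑type tautologies from \cite{BussGIP01}. Since $G$ is non‑trivial, some prime $p$ divides $|G|$, so by Cauchy's theorem $G$ has a subgroup isomorphic to $\mathbb{Z}_p$. Exactly the substitution argument behind Lemma~\ref{lem:abeliangroups} — which only uses subgroup‑ness, not a direct‑summand structure — then shows it is enough to exhibit, for each prime $p$, unsatisfiable instances $\Astruct_n$ of $3\mathrm{LIN}(\mathbb{Z}_p)$ with $\Theta(n)$ variables and $\Theta(n)$ equations whose $\EQ$‑encoding requires PC degree $\Omega(n)$ over the reals. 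Concretely, one relabels each $\mathbb{Z}_p$‑relation symbol $E_{[(g,z_1,z_2,z_3)]}$ as the $G$‑relation symbol with the same parameters, obtaining an instance $\Astruct_n'$ of $3\mathrm{LIN}(G)$, and one applies the substitution sending $X(a,g)$ to $0$ and $\bar X(a,g)$ to $1$ for every $g \in G\setminus\mathbb{Z}_p$ and acting as the identity otherwise. Because the multiplicative axioms and constraints of $\Bstruct(G,3)$, once all variables are forced to take values in the subgroup $\mathbb{Z}_p$, collapse to those of $\Bstruct(\mathbb{Z}_p,3)$ (the surviving type‑3 equations are precisely those of the $\mathbb{Z}_p$‑encoding, the rest becoming $0=0$), this turns a PC refutation over $\mathbb{R}$ of $\EQ(\Astruct_n',\Bstruct(G,3))$ into one of $\EQ(\Astruct_n,\Bstruct(\mathbb{Z}_p,3))$ with no increase in degree, since the only new multiplications are by the scalars $0$ and $1$.

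For a fixed prime $p$ I would then take a family $(\bar H_n)_{n\ge 1}$ of connected $3$‑regular expander graphs with $\Theta(n)$ vertices, which exist by the Fact recalled in Section~\ref{sec:lowerbound}, orient their edges arbitrarily to obtain directed graphs $H_n$, and choose labellings $\sigma_n\colon V(H_n)\to\mathbb{Z}_p$ with $\sum_{v\in V(H_n)}\sigma_n(v)\ne 0$ — for instance $\sigma_n$ assigning $1$ to one distinguished vertex and $0$ to all the others. Then $\Astruct_n:=\Astruct(H_n,\sigma_n)$ is an instance of $3\mathrm{LIN}(\mathbb{Z}_p)$ whose variables are the $\Theta(n)$ edges of $H_n$ and whose equations are its $\Theta(n)$ vertices, each mentioning exactly the $3$ incident edges; it is unsatisfiable by the sum‑of‑all‑equations argument recorded just after the definition of the Tseitin graph tautologies (and by Lemma~\ref{lem:satisfiability}).

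The analytic core is the last step: invoking the linear PC degree lower bound for the $\mathbb{Z}_p$‑Tseitin tautologies on expanders. The results of \cite{BussGIP01} yield a constant $\delta>0$, depending only on $p$ and on the expansion constant of the family, such that every PC refutation over a field of characteristic different from $p$ of the Tseitin tautology attached to $H_n$ has degree at least $\delta\,|V(H_n)| = \Omega(n)$; since $\mathbb{R}$ has characteristic $0\ne p$, this applies over the reals. If the encoding of $\CSP(\Bstruct(\mathbb{Z}_p,3))$ used in \cite{BussGIP01} is not literally the $\EQ$ encoding, the discrepancy is absorbed by Lemma~\ref{lem:encodings}: both encodings are local, so every equation of the other encoding of $\Astruct_n$ has a constant‑degree NS, hence PC, derivation from $\EQ(\Astruct_n,\Bstruct(\mathbb{Z}_p,3))$; consequently a degree‑$d$ PC refutation over $\mathbb{R}$ of $\EQ(\Astruct_n,\Bstruct(\mathbb{Z}_p,3))$ gives one of the other encoding of degree $\max(d,O(1))$, forcing $d\ge\delta n$ for all large $n$. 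Transporting this back through the reduction of the first paragraph gives the claimed bound for $\EQ(\Astruct_n',\Bstruct(G,3))$, with $\Theta(n)$ variables and equations preserved.

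I expect the main difficulty to be expository rather than conceptual: isolating from \cite{BussGIP01} a clean, directly quotable statement of the linear PC degree lower bound for the $\mathbb{Z}_p$‑Tseitin tautologies that is valid over the real field (and not only over $\mathbb{F}_q$ for primes $q\ne p$), and checking carefully that the group reduction and the change of encoding genuinely preserve PC refutations and inflate the degree by at most an additive constant. The twin variables of the $\EQ$ encoding warrant a sentence of justification here, but they pose no real obstacle, as multiplication by a twin variable is simulated by subtracting a multiplication by the original variable at no cost in degree.
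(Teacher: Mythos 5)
Your overall strategy is the paper's: reduce a general non-trivial $G$ to a cyclic subgroup by the substitution behind Lemma~\ref{lem:abeliangroups} (which indeed only needs subgroup-ness), take Tseitin systems over $3$-regular expanders with labels of non-zero sum, and import the linear degree lower bound of \cite{BussGIP01}, which is valid over any field whose characteristic does not divide the modulus, hence over $\mathbb{R}$. The group-reduction step and the twin-variable remark are fine. The gap is in the step you call the analytic core. The lower bound of \cite{BussGIP01} is proved for the formulas $\mathrm{BTS}_{n,m}$, which encode the \emph{doubled-variable} Tseitin system $\hat{\Astruct}(H,\sigma)$: two variables $(u,v)$, $(v,u)$ per edge, an extra binary equation $(u,v)+(v,u)=0$ for each edge, and vertex equations in the doubled variables. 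This is an instance of the CSP of the \emph{extended} template $\mathbb{B}'(\mathbb{Z}_m,3)$ (the template $\Bstruct(\mathbb{Z}_m,3)$ with the binary relation $g+g'=0$ added), and it is a different structure, over a different vocabulary and with a different variable set, from your single-variable-per-edge instance $\Astruct(H_n,\sigma_n)$. Lemma~\ref{lem:encodings} cannot absorb this discrepancy: it compares two local encoding schemes applied to the \emph{same} instance of the \emph{same} CSP, whereas here the instances themselves differ. So, as written, you have no licence to assert that $\EQ(\Astruct(H_n,\sigma_n),\Bstruct(\mathbb{Z}_p,3))$ requires degree $\Omega(n)$; you have it only for $\mathrm{BTS}_{n,p}$.

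The gap is repairable, and the repair is exactly what the paper does. One option is its route: observe that $\mathrm{BTS}_{n,m}$ is literally the twin-free $\EQ$-encoding of $\hat{\Astruct}(H_n,\sigma_n)$ viewed as an instance over $\mathbb{B}'(\mathbb{Z}_m,3)$, that $\mathbb{B}'(\mathbb{Z}_m,3)$ is pp-definable in $\Bstruct(\mathbb{Z}_m,3)$, and then apply Theorem~\ref{thm:closurealgebraic}; note that the resulting hard instances of $3\mathrm{LIN}(\mathbb{Z}_m)$ are the transformed instances produced by that reduction (essentially the Tseitin system of the bipartite ``vertex--edge'' graph $\hat H$), not the plain expander instances you named. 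Alternatively, you can keep your instances but add an explicit transfer: substitute, in a putative low-degree refutation of $\EQ(\Astruct(H_n,\sigma_n),\Bstruct(\mathbb{Z}_p,3))$, each variable $X(e,g)$ (for $e$ oriented from $u$ to $v$) by $X((u,v),g)$, check that every substituted axiom follows over $\{0,1\}$ from a constant number of $\mathrm{BTS}$ equations (the corresponding vertex equation together with the edge equations $(u,v)+(v,u)=0$ and the type-1/2 axioms), and then invoke Theorem~\ref{thm:completeness_alg} and Lemma~\ref{lem:replacePC} to conclude that a degree-$d$ refutation of your encoding yields a degree-$O(d)$ refutation of $\mathrm{BTS}_{n,p}$, contradicting Theorem~\ref{thm:BussGIP01}. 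Either way, an explicit bridge between your instances and the instances for which \cite{BussGIP01} proves the bound must be supplied; citing Lemma~\ref{lem:encodings} does not do it.
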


\noindent By the same argument as in the previous section,
Theorem~\ref{thm:lbpc} will follow from the special case for Abelian
groups of the form $\mathbb{Z}_m$ proved in \cite{BussGIP01}. Let us
note that the statement in \cite{BussGIP01} is made only for fields of
prime characteristic and for prime $m$, but the same proof goes
through for arbitrary fields whose characteristic does not divide~$m$.

Strictly speaking, the form of the Tseitin system of equations that we
defined in the previous section is slightly more general than the
original one from~\cite{BussGIP01}. In~\cite{BussGIP01}, the definition
starts with an undirected graph $H$ and, given a labelling $\sigma :
V(H) \rightarrow \mathbb{Z}_m$, the system of equations
$\hat{\mathbb{A}}(H,\sigma)$ over $\mathbb{Z}_m$ is defined as
follows:
\begin{itemize} \itemsep=0pt
\item there is a pair of variables $(u,v)$ and $(v,u)$ for each
edge $\{u,v\}$ in $E(H)$,
\item for every edge $\{u,v\}$ of $E(H)$ there is an equation
$(u,v) + (v,u) = 0$,
\item for every vertex $u$ of $V(H)$ there is an equation
$$
\sum_{v \in V(H): \atop \{u,v\} \in E(H) } (u,v) = \sigma(u).
$$
\end{itemize}
Let us see that $\hat{\mathbb{A}}(H,\sigma)$ is isomorphic to the
system $\mathbb{A}(\hat{H},\hat\sigma)$ for an appropriately defined
directed graph $\hat{H}$ and an appropriately defined labelling
$\hat\sigma$. The set of vertices $V(\hat{H})$ of $\hat{H}$ is
$V(H) \cup E(H)$. The set of edges
$E(\hat{H})$ of $\hat{H}$ has two directed edges $(u,e)$ and
$(v,e)$ for each undirected edge $e = \{u,v\}$ of $H$. The
labelling $\hat\sigma$ is the extension of $\sigma$ to $V(\hat{H})
\supseteq V(H)$ defined by $\hat\sigma(e) = 0$ for each $e \in
E(H)$. It is not hard to see that the mapping 
$(u,e) \mapsto (u,v)$, for $e = \{u,v\}$, is an isomorphism from
$\mathbb{A}(\hat{H},\hat\sigma)$ to $\hat{\mathbb{A}}(H,\sigma)$. This
justifies the claim that the definition of the Tseitin system from the
previous section is a generalization of the definition in
\cite{BussGIP01}. Another sense in which the definition of the Tseitin
system from the previous section is more general is that the original
definition requires $m$ in $\mathbb{Z}_m$ to be a prime number;
however, going through their proof it is readily seen that this is not
essential. Finally, the original definition also requires the
condition that the sum of the labels $\sigma(u)$ is $1$ (mod $m$), but
again this is not essential in their proof as long as the sum is
non-zero.

Let $\mathbb{B}'(\mathbb{Z}_m,3)$ be the template $\mathbb{B}(\mathbb{Z}_m,3)$ extended with the binary relation $\{ (g,g') \in \mathbb{Z}_m^2 : g + g' = 0 \}$, and let $\EQ'$ denote the modification of the
encoding scheme $\EQ$ in which each twin variable $\bar{X}(a,b)$ is
replaced by $1-X(a,b)$. It turns out that the system of polynomial
equations $\EQ'(\hat{\mathbb{A}}(H_n,\sigma_n),\mathbb{B}'(\mathbb{Z}_m,3))$
for a fixed family of $3$-regular expander graphs $(H_n)_{n \geq 1}$ and a
labelling $\sigma_n : V(H_n) \rightarrow \mathbb{Z}_m$ of total sum $1$
mod $m$ is literally the same
as the system of polynomial equations that \cite{BussGIP01} calls BTS$_{n,m}$. Note that BTS$_{n,m}$ has $\Theta(n)$ variables. 
We have the following:

\begin{theorem}[see Corollary 21 in
  \cite{BussGIP01}] \label{thm:BussGIP01} For every integer $m \geq 2$
  and every field $F$ of a characteristic that does not divide $m$
  there exists a positive $\delta$ such that for every
  sufficiently large $n$ every PC refutation over $F$ of
  BTS$_{n,m}$ has degree at least $\delta n$.
\end{theorem}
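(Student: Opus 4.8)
\textit{Proof plan for Theorem~\ref{thm:BussGIP01}.} This is Corollary~21 of~\cite{BussGIP01}; the plan is to reconstruct that argument. Write $H=H_n$, $\sigma=\sigma_n$, $V=V(H)$, with $H$ being $3$-regular, $|V|=\Theta(n)$, and expansion constant $e(H)\ge e>0$. Since $\mathrm{char}(F)$ does not divide $m$, the polynomial $z^m-1$ is separable over $F$, so it splits over a finite extension $K$ of $F$; fix a primitive $m$-th root of unity $\omega\in K$. A PC refutation over $F$ is in particular one over $K$, so it suffices to work over $K$. The first step is a change of variables reducing $\mathrm{BTS}_{n,m}$ to a \emph{multiplicative Tseitin system} $\mathrm{MT}$ over $K$: introduce one variable $y_e$ for each edge $e$ (with a fixed orientation), with the intended meaning $y_e=\omega^{x_e}$, and substitute in $\mathrm{BTS}_{n,m}$ each indicator $X(e,g)$ by the degree-$(m-1)$ polynomial $P_{e,g}:=\tfrac1m\sum_{j=0}^{m-1}\omega^{-jg}y_e^{\,j}$ (using that $m$ is invertible in $K$), and the indicator for the reverse orientation by $P_{e,g}(y_e^{m-1})$. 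Under this substitution the single-value, idempotency, and edge-reversal axioms of $\mathrm{BTS}_{n,m}$ become multiples of $y_e^m-1$ (or of the vertex polynomials) of bounded degree, the vertex equation at $u$ becomes derivable from $\prod_{e\ni u}y_e^{\varepsilon_{u,e}}-\omega^{\sigma(u)}=0$ (with $\varepsilon_{u,e}\in\{1,m-1\}$ recording orientation), and each of these derivations has degree $O(1)$ since only $O(1)$ variables are involved. Hence, applying the substitution to a degree-$d$ PC refutation of $\mathrm{BTS}_{n,m}$ over $K$ line by line (a multiplication by $X(e,g)$ becoming $O(1)$ multiplications by the monomials of $P_{e,g}$) and patching in the short derivations of the substituted axioms yields a degree-$O(d)$ PC refutation over $K$ of the axioms $\{y_e^m-1=0\}\cup\{\prod_{e\ni u}y_e^{\varepsilon_{u,e}}-\omega^{\sigma(u)}=0\}$. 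So it is enough to prove that $\mathrm{MT}$ requires PC degree $\Omega(n)$ over $K$. Note $\mathrm{MT}$ is unsatisfiable: multiplying all vertex equations gives $1=\omega^{\sum_u\sigma(u)}\ne1$, as the total label is $1\bmod m$.

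For the degree lower bound, I would build a pseudo-expectation. Using $y_e^m=1$, every monomial normalizes to $M_\alpha=\prod_e y_e^{\alpha_e}$ with $\alpha\in\mathbb{Z}_m^{E(H)}$ and $\deg M_\alpha=\sum_e\alpha_e$ (exponents in $\{0,\dots,m-1\}$); the $M_\alpha$ are linearly independent. Let $\partial\colon\mathbb{Z}_m^{E(H)}\to\mathbb{Z}_m^{V}$ be the oriented boundary operator, so the vertex constraints read $\partial x=\sigma$, and let $\partial^{\mathrm T}$ denote its transpose, so the multiplicative vertex axiom at $u$ is $M_{\partial^{\mathrm T}\delta_u}-\omega^{\sigma(u)}$. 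Fix $d=\delta n$ for a small $\delta>0$ and define a linear functional $\Phi$ on the span of $\{M_\alpha:\deg M_\alpha\le d\}$ as follows. For such an $\alpha$, its support $S$ satisfies $|S|\le d< e(H)\cdot|V|/2$, so removing the edges of $S$ from $H$ leaves a unique connected component $W_S$ with $|W_S|>|V|/2$ (deleting $\le e(H)|V|/2$ edges cannot split the expander into pieces all of size $\le|V|/2$). If $\alpha\notin\mathrm{im}\,\partial^{\mathrm T}$, set $\Phi(M_\alpha)=0$; otherwise write $\alpha=\partial^{\mathrm T}\tau$, with $\tau\in\mathbb{Z}_m^{V}$ locally constant off $S$ and normalized to be $0$ on the giant component $W_S$ (this is the unique such $\tau$, the pinning killing the only additive freedom), and set $\Phi(M_\alpha)=\omega^{\langle\tau,\sigma\rangle}$. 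Then $\Phi(1)=\Phi(M_0)=1$.

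Next I would verify that $\Phi$ annihilates Razborov's degree-$d$ ideal $\mathcal{I}^{(d)}$ of $\mathrm{MT}$. For the axiom $y_e^m-1$, any product $M_\beta(y_e^m-1)$ of degree $\le d$ normalizes to $0$. For the vertex axiom at $u$, one needs $\Phi(M_{\alpha+\partial^{\mathrm T}\delta_u})=\omega^{\sigma(u)}\Phi(M_\alpha)$ whenever $\deg M_\alpha\le d-O(1)$: the support grows only by $\partial(u)$, the giant component of $H$ minus $S\cup\partial(u)$ is still contained in $W_S$ and excludes $u$, so the normalized chain for $\alpha+\partial^{\mathrm T}\delta_u$ is exactly $\tau+\delta_u$, giving $\langle\tau+\delta_u,\sigma\rangle=\langle\tau,\sigma\rangle+\sigma(u)$; the case $\alpha\notin\mathrm{im}\,\partial^{\mathrm T}$ is symmetric since then $\alpha+\partial^{\mathrm T}\delta_u\notin\mathrm{im}\,\partial^{\mathrm T}$ as well. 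The key point is that low-degree monomials never force $W_S=V$, so $\Phi$ is never called on to reconcile the global obstruction $\sum_u\sigma(u)\ne0$, which is exactly why no finite PC derivation of degree $\le\delta n$ exists. Together with linearity and the fact that multiplying a degree-$\le d$ axiom-combination by a variable $y_e$ again lands, after normalization, in the span of vertex-axiom multiples of degree $\le d$, this shows $\Phi$ vanishes on $\mathcal{I}^{(d)}$. Since $\Phi(1)=1\ne0$, $1\notin\mathcal{I}^{(d)}$, i.e.\ $\mathrm{MT}$ has no PC refutation over $K$ of degree $\le\delta n$. Combined with the reduction of the first paragraph, this proves Theorem~\ref{thm:BussGIP01} with a suitable (smaller) constant $\delta>0$.

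The main obstacle is the last step: showing that $\Phi$ — defined only on low-degree monomials via the ad hoc normalization on the giant component — is annihilated by the \emph{dynamic} closure $\mathcal{I}^{(d)}$ and not merely by the static ideal. One must carefully track how the normalized chain $\tau$ and its associated giant component $W_S$ change under a single multiplication by a variable, and check that every such step stays inside the expansion regime where the argument applies; the orientation-sensitive bookkeeping in the substitution identities of the first paragraph is routine but also needs care. This is precisely the content of~\cite{BussGIP01}.
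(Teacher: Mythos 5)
The paper does not prove this statement at all: it is a black-box citation to Corollary~21 of~\cite{BussGIP01}, and the only work the paper does around it is (i) to verify that its directed-graph version of the Tseitin system $\Astruct(\hat H,\hat\sigma)$ is isomorphic to $\hat{\Astruct}(H,\sigma)$ and hence that $\EQ'(\hat\Astruct(H_n,\sigma_n),\Bstruct'(\Zbb_m,3))$ is syntactically the system $\mathrm{BTS}_{n,m}$ from~\cite{BussGIP01}, and (ii) to remark, without proof, that the argument of~\cite{BussGIP01} goes through for arbitrary $m \geq 2$ and any field whose characteristic does not divide $m$ (the original paper states it for prime $m$ and prime-characteristic $F$). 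So there is no proof in the paper to compare yours against.

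Your plan is a reasonable and faithful reconstruction of the argument in~\cite{BussGIP01}: the passage to a field $K$ containing a primitive $m$-th root of unity, the change of variables to the multiplicative Tseitin system via the Fourier-type substitution $X(e,g)\mapsto \frac1m\sum_j\omega^{-jg}y_e^j$, and the pseudo-expectation argument via normal forms pinned on the giant component of the expander minus the low-degree support are all exactly the architecture of that paper. Two caveats. First, the very last step --- verifying that the functional built on \emph{monomials} respects the dynamic degree-$d$ ideal, i.e. that the normalized chain $\tau$ transforms as $\tau\mapsto\tau+\delta_u$ under a single lift by an edge variable incident to $u$, and more subtly that the normalizing constant is $0$ even when the giant component grows rather than shrinks --- is the technical heart of the lower bound; your proposal names it as the obstacle and explicitly defers it to~\cite{BussGIP01}, which is honest but means the plan is genuinely a sketch and not a self-contained argument. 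Second, to match the statement at hand you would still need to observe explicitly why a degree-$d$ PC refutation over $F$ remains a degree-$d$ PC refutation over the finite extension $K\supseteq F$ (you note this in passing; it is correct since the axioms and rules are field-agnostic), and why $z^m-1$ being separable over $F$ when $\mathrm{char}(F)\nmid m$ suffices to pick $\omega$. Both points are routine but are exactly the content of the paper's one-line remark that the hypothesis ``characteristic does not divide $m$'' is what makes the proof generalize.
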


\noindent This gives us a family of instances of $\mathbb{B}'(\mathbb{Z}_m,3)$ that are hard for Polynomial Calculus over the real-field.
Since the template $\mathbb{B}'(\mathbb{Z}_m,3)$ is pp-definable in $\mathbb{B}(\mathbb{Z}_m,3)$, Theorem~\ref{thm:closurealgebraic} implies an existence of such a family for $3\mathrm{LIN}(\mathbb{Z}_m)$.

In order to complete the proof of Theorem~\ref{thm:lbpc}
from~Theorem~\ref{thm:BussGIP01} it suffices to invoke a version of
Lemma~\ref{lem:abeliangroups} for Polynomial Calculus, whose statement
and proof are virtually identical to those of
Lemma~\ref{lem:abeliangroups}, and are thus omitted.

\subsection{Lower bound for Sums-of-Squares}

In the case of Sums-of-Squares, similarly as for Polynomial Calculus, we do not need to adapt
an existing lower bound proof from the literature for $\mathbb{Z}_2$
to all finite Abelian groups because this was already done. The lower
bound that we need to complete the proof of Theorem~\ref{thm:boundeddepth} is
the following:

\begin{theorem}[\cite{Chan2016}] \label{thm:chan}
  For every non-trivial finite Abelian group $G$ there exists a
  positive $\delta$ and a family of unsatisfiable instances 
  $(\Astruct_n)_{n \geq 1}$ of $3\mathrm{LIN}(G)$, where $\Astruct_n$ has $\Theta(n)$
   variables and $\Theta(n)$ equations, such that for
  every sufficiently large $n$ every SOS refutation of
  $\EQ(\Astruct_n,\Bstruct(G,3))$ has degree at least~$\delta n$.
\end{theorem}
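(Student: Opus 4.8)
The plan is to obtain Theorem~\ref{thm:chan} as a direct consequence of the Sums-of-Squares degree lower bounds of Chan~\cite{Chan2016}, which apply to every CSP whose constraint predicates support a pairwise independent distribution. First I would record the relevant combinatorial fact: for every non-trivial finite Abelian group $G$ and every $g \in G$, the uniform distribution on $\{(g_1,g_2,g_3) \in G^3 : g_1 + g_2 + g_3 = g\}$ is pairwise independent, since fixing any two coordinates determines the third and hence every two-coordinate marginal is uniform on $G^2$. Consequently the ternary relations used in $3\mathrm{LIN}(G)$ fall within the scope of~\cite{Chan2016}, and this holds uniformly in $G$ — unlike the Polynomial Calculus case there is no dependence on characteristic, so no reduction to cyclic groups is needed (though a verbatim analogue of Lemma~\ref{lem:abeliangroups} for Sums-of-Squares, substituting $X(a,g) := 0$ for $g$ outside a chosen cyclic summand, would provide one if desired).

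Next I would invoke~\cite{Chan2016} to produce, for each such $G$, a constant $\delta_0 > 0$ and a family of unsatisfiable systems $(\Astruct_n)_{n \geq 1}$ of $\Theta(n)$ three-variable equations over $G$ on $\Theta(n)$ variables — one may take random such systems (unsatisfiable with probability tending to $1$, so a good instance exists for all sufficiently large $n$) or the explicit Tseitin systems $\Astruct(H_n,\sigma_n)$ over $3$-regular expanders with $\sum_v \sigma_n(v) \neq 0$ used in Section~\ref{sec:lowerbound} (unsatisfiable by Lemma~\ref{lem:satisfiability}) — with the property that the degree-$\delta_0 n$ Lasserre/SOS relaxation of the canonical $\{0,1\}$-encoding of ``all constraints are satisfied'' is feasible; equivalently, there is no SOS refutation of that encoding of degree below $\delta_0 n$. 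Chan phrases this for a particular local semi-algebraic encoding scheme for $\CSP(\Bstruct(G,3))$ (possibly after a pp-definition naming the relevant sum relation, as in the treatment of Polynomial Calculus above).

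Finally, to transfer the bound to the $\EQ$ encoding scheme required in the statement, I would note that Chan's encoding is itself a local semi-algebraic encoding scheme for $\CSP(\Bstruct(G,3))$, or for a template pp-definable in $\Bstruct(G,3)$; then Lemma~\ref{lem:encodings} (composed with Theorem~\ref{thm:closuresemialgebraic} if the templates differ) shows that any degree-$d$ SOS refutation of $\EQ(\Astruct_n,\Bstruct(G,3))$ yields a degree-$O(d)$ refutation of Chan's encoding of the same instance, so every SOS refutation of $\EQ(\Astruct_n,\Bstruct(G,3))$ has degree at least $\delta n$ for a suitable constant $\delta > 0$ depending only on $G$. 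I expect the main difficulty to be bookkeeping rather than mathematics: one must check that the ``feasibility of the degree-$d$ relaxation'' statement of~\cite{Chan2016} matches the refutation-degree formulation used here, and in particular that any extra consistency or moment constraints imposed by Chan's semidefinite program are sound and derivable with only a constant-factor degree blow-up from the axioms~\eqref{eqn:axioms} and~\eqref{eqn:axiomssemi} — which is what Lemma~\ref{lem:encodings} provides, but applied to a somewhat differently presented relaxation, so the conventions (twin variables versus $1-X$, multiplicative versus additive clause encodings, per-clause versus per-constraint group equations) need to be matched carefully.
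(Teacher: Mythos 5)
Your overall skeleton matches the paper's: both invoke Chan's lower bound for additive CSPs whose constraints come from the balanced pairwise independent subgroup $\{(c_1,c_2,c_3)\in G^3 : c_1+c_2+c_3=0\}$, fix the tolerance parameter so that the resulting instance has value strictly less than $1$ (hence the constraints with non-zero probability form an unsatisfiable instance of $3\mathrm{LIN}(G)$ with $\Theta(n)$ variables and $\Theta(n)$ constraints), and observe that no reduction to cyclic groups is needed. The gap is in how you bridge Chan's statement to the conclusion. Chan's Theorem~G.8 is a primal feasibility statement: there exist vectors $u_f$ and local distributions $\mu_S$ forming a ``Lasserre solution of value $1$ for $cn$ rounds'', subject to the consistency condition \eqref{eqn:psd} and the support condition \eqref{eqn:sup}. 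This is not a refutation-degree statement about any local semi-algebraic encoding scheme, so neither Lemma~\ref{lem:encodings} nor Theorem~\ref{thm:closuresemialgebraic} can be applied to it: Lemma~\ref{lem:encodings} relates two sets of polynomial inequalities via bounded-degree SA derivations, whereas Chan's moment and consistency constraints are properties satisfied by his SDP solution, not hypotheses that one refutes or derives. The step you dismiss as bookkeeping --- ``equivalently, there is no SOS refutation of that encoding of degree below $\delta_0 n$'' --- is precisely the mathematical content of the paper's proof: from the $\mu_S$ one defines a linear functional $E$ on polynomials of degree at most $2t$ as in \eqref{eqn:defE}, shows that the associated moment matrix is a Gram matrix of the extended vectors and hence positive semi-definite, so $E$ is non-negative on squares, and then verifies term by term that $E$ annihilates all liftings of the Boolean axioms and of the type 1, 2 and 3 equations of the twin-free $\EQ$ encoding (the last using \eqref{eqn:sup}); applying $E$ to a purported refutation identity summing to $-1$ gives the contradiction. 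You must supply this pseudo-expectation argument explicitly; once you do, Lemma~\ref{lem:encodings} is not needed at all, since the functional is checked directly against $\EQ$, with twin variables eliminated via $\bar{X} := 1-X$, which is harmless for degree.

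A smaller point: your alternative choice of instances --- Tseitin systems over $3$-regular expanders --- is not covered by Chan's theorem, which supplies its own (distributionally constructed) instances; establishing the same degree lower bound for Tseitin systems over an arbitrary non-trivial finite Abelian group would require a separate Grigoriev-style argument. The paper simply takes $\Astruct_n$ to be the support of Chan's distribution, with the linear bound on the number of constraints extracted from his Theorem~G.7.
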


The exact statement that we are referring to is Theorem~G.8 from
Appendix~G in~\cite{Chan2016}. In order to be able to state the
theorem and compare it to the statement of Theorem~\ref{thm:chan} we
need to introduce some definitions.

Let $G$ be a finite Abelian group and let $C$ be a subgroup of $G^k$,
where $k \geq 3$. The problem Additive-CSP$(C)$, as defined
in~\cite{Chan2016}, is the constraint satisfaction problem that has
constraint relations of the form $\{ (c_1,\ldots,c_k) :
(c_1-b_1,\ldots,c_k-b_k) \in C \}$, for all $(b_1,\ldots,b_k) \in
G^k$. Note that if the set of variables is $V$, then the set of all
possible constraints can be identified with the set $V^k \times
G^k$. The instances are presented as distributions $\pi$ over $V^k
\times G^k$. This amounts to assigning weights to the constraints.
The \emph{value} of an instance is the maximum over all
assignments of values to variables of the probability that a random
constraint chosen from~$\pi$ is satisfied by the assignment. We say
that $C \subseteq G^k$ is \emph{balanced pairwise independent} if for
every pair $i,j \in [k]$ with $i \not= j$, and every two elements $a,b
\in G$, the number of $k$-tuples $(c_1,\ldots,c_k)$ from $C$ such that
$c_i = a$ and $c_j = b$ is $|C|/|G|^2$. For example, any $C$ of the
form $\{ (c_1,\ldots,c_k) : c_1 + \cdots + c_k = 0 \}$ is balanced
pairwise independent, and it is a subgroup of $G^k$. Chan's
Theorem~G.8 in~\cite{Chan2016} states that if $C$ is any balanced
pairwise independent subgroup of $G^k$ and $\epsilon$ is an arbitrary
positive constant, then for every sufficiently large $n$, there is an
instance $M$ of Additive-CSP$(C)$ with $n$ variables, whose value is
bounded by $|C|/|G|^k + \epsilon$, and that has a \emph{Lasserre
  solution of value $1$ for $c n$ rounds}, where $c =
c_{G,k,\epsilon}$ is a constant that depends only on the group $G$,
the arity $k$, and the tolerance parameter $\epsilon$. Moreover, it follows from the proof in~\cite{Chan2016} (see Theorem~G.7) that the instance $M$ can be chosen to have $e_{G,k,\epsilon}n$ constraints, where $e_{G,k,\epsilon}$ is a constant that depends only on the group $G$,
the arity $k$, and the parameter $\epsilon$. We discuss what
a Lasserre solution is and how it relates to SOS proofs.

Before we do that we fix some of the parameters. We want to build an
unsatisfiable instance $\mathbb{A}$, and we do so by choosing the
parameters to make the value of $M$ in Chan's Theorem strictly
smaller than $1$. Fix $k = 3$ and $C = \{ (c_1,c_2,c_3) : c_1 + c_2 +
c_3 = 0 \}$, and take $\epsilon = 1/4$.  Then the value of the
instance $M$ is bounded by $|C|/|G|^3 + \epsilon = 1/|G| + 1/4 \leq
1/2 + 1/4 < 1$. This means that the collection of constraints of $M$
that have non-zero probability in $\pi$ is unsatisfiable; i.e., not
all constraints can be satisfied at the same time by a single
assignment. Thus, our unsatisfiable instance~$\mathbb{A}$ will just be
the set of all constraints with non-zero probability in $\pi$. Now we are
ready to define what a Lasserre solution of value~$1$ is.

According to Definition~G.3 from Appendix~G in~\cite{Chan2016}, a
Lasserre solution of value~$1$ for $t$ rounds is a collection $u = \{
u_f : f \in G^S,\; S \subseteq V,\; |S| \leq t \}$ of vectors in
Euclidean space~$\mathbb{R}^d$, of some finite dimension $d$, such
that for every $S \subseteq V$ with $|S| \leq 2t$ there exists a
probability distribution $\mu_S$ over $G^S$ with the following
properties: for every $R,S,T \subseteq V$ with $|S|,|T| \leq t$ and $R
= S \cup T$, and every $f \in G^S$ and $g \in G^T$, it holds that
\begin{align}
& \Pr_{h \in \mu_{R}}[\; h|_S = f \text{ and } h|_T = g \;] = \langle u_f,u_g \rangle, \label{eqn:psd}
\end{align}
and for every constraint with variables $S$ in the support of
$\pi$ and every $f \in G^S$ that does not satisfy this
constraint we have
\begin{equation}
\Pr_{h \in \mu_S}[\; h = f \;] = 0. \label{eqn:sup}
\end{equation}

At this point we have all the necessary material to argue that
$\mathbb{A}$, or more precisely, $\EQ(\mathbb{A},\mathbb{B}(G,3))$,
does not have SOS refutations of degree $\delta n$, where $\delta =
2c_{G,k,\epsilon}$. Let $\EQ'$ be the result of replacing each twin variable
$\bar{X}(a,b)$ in $\EQ(\mathbb{A},\mathbb{B}(G,3))$ by $1-X(a,b)$. By the remarks at the end of Section~\ref{sec:algproofs}, it suffices to show that $\EQ'$ does not have SOS refutations of degree $\delta n$ for
the definition of Sums-of-Squares without twin variables. Assume, for the sake of
contradiction, that $\EQ'$ \emph{does} have such an SOS refutation
of degree at most~$2t$, where~$t := c_{G,k,\epsilon} n$ is the number
of rounds for which there exists a Lasserre solution of value~$1$ for the instance
$M$. 
The refutation has the form
\begin{equation}
\sum_{i=1}^r P_i \cdot S_i = -1
\end{equation}
where $P_1,\ldots,P_r$ are polynomials that either come from $\EQ'$, or they are axiom polynomials from the
lists~\eqref{eqn:axioms} and~\eqref{eqn:axiomssemi} without twin variables, or they are
squares, $S_1,\ldots,S_r$ are arbitrary or square
polynomials without twin variables as appropriate (i.e., arbitrary if the $P_i$ they multiply
come from an equation, and squares if the $P_i$ they multiply come
from an inequality), and the total degree of each product $P_i \cdot S_i$ is at most $2t$.
Multiplications by $X$ and $1-X$ can be simulated by multiplications by their squares, thanks to the axioms $X^2 - X = 0$ from~\eqref{eqn:axioms}, so we can assume that the refutation has the form
\begin{equation}
\sum_{i=1}^m P_i \cdot S_i + \sum_{i=1}^{\ell} Q_i^2 = -1, \label{eqn:proof}
\end{equation}
where $P_1,\ldots,P_m$ are polynomials that either come from $\EQ'$, or they are one of the axiom polynomials of the form $X^2 - X$ from~\eqref{eqn:axioms}, and $S_1,\ldots,S_m, Q_1, \ldots, Q_{\ell}$ are arbitrary polynomials.

Recall
that the variables of $\EQ'$ have the form $X(a,b)$ where
$(a,b) \in V \times G$. We say that the element $a$ is mentioned in
$X(a,b)$, and that it is mentioned in any monomial that contains this
variable. Now we define a linear functional $E : \mathcal{P}_{2t}
\rightarrow \mathbb{R}$, where $\mathcal{P}_{2t}$ denotes the vector
space of polynomials of degree at most $2t$ on the $X(a,b)$-variables, as
follows. 

For each monomial $M$ of degree at most $2t$ on the $X(a,b)$-variables,
with all mentioned elements in $S \subseteq V$, define
\begin{equation}
E(M) = \mathbb{E}_{h \in \mu_S} [\; h(M) \;], \label{eqn:defE}
\end{equation}
where the notation $h(M)$ stands for the evaluation of the monomial
$M$ by the partial assignment given by $h$; i.e., all variables
$X(a,h(a))$ with $a \in S$ are set to $1$, all variables $X(a,b)$ with
$a \in S$ and $b \not= h(a)$ are set to $0$, and all other variables
are left unset. Note that~\eqref{eqn:psd} ensures
that~\eqref{eqn:defE} is a well-defined quantity that does not depend
on the choice of $S$, as long as $S$ contains all the elements that
are mentioned in $M$. 
Once $E$ is defined for all monomials of
degree at most $2t$, we extend it to $\mathcal{P}_{2t}$ by linearity.

The final step in the argument is to show that $E$ evaluates the
left-hand side in~\eqref{eqn:proof} to some non-negative quantity;
this will imply that the identity in~\eqref{eqn:proof} does not hold,
and finish the proof.  In order to prove this, the following matrix
$(A_{M,N})_{M,N}$ will be instrumental. The indices are monomials $M$
of degree at most $t$ on the $X(a,b)$-variables. The entry $A_{M,N}$ of $A$ is defined to be
$E(MN)$.  For later use, observe that if $S$ denotes the set of
elements that are mentioned in $M$ and there exists $f \in G^S$ such
that $f(M) = 1$, then this partial assignment $f$ with domain $S$ is
uniquely determined by $M$. We let $f_M \in G^S \cup \{ \bot \}$
denote this unique partial assignment $f$ that makes $f(M) = 1$, when
it exists, or the default value $\bot$ when it does not exist. We
argue that Equation~\eqref{eqn:psd} ensures that $A$ is a positive
semi-definite matrix. First, extend the collection of vectors $u$ to a
new collection of vectors $u^* = \{ u^*_f : f \in G^S \cup \{ \bot
\},\; S \subseteq V,\; |S| \leq t \}$ by defining $u^*_f = u_f$ for $f
\in G^S$, and $u^*_f = 0$ for $f = \bot$. Fix indices $M$ and $N$, let
$S$ and $T$ be sets of elements mentioned in $M$ and $N$,
respectively. Let $R = S \cup T$. Then $E(MN)$, according to its
definition~\eqref{eqn:defE}, is the probability of the event that $h
\in \mu_R$ makes $h(MN) = 1$, or equivalently, that $h \in \mu_R$
makes $h|_S(M) = 1$ and $h|_T(N) = 1$, or equivalently, that $h \in
\mu_R$ makes $h|_S = f_M$ and $h|_T = g_N$. Thus,
equation~\eqref{eqn:psd} and the definition of the extended collection of
vectors $u^*$ ensures that $A_{M,N} = \langle u^*_{f_M},u^*_{g_N}
\rangle$ and hence $A$ is a Gram matrix. Thus $A$ is positive
semi-definite.

Now we use the positive semi-definiteness of $A$ to show that, for
squares $Q_i^2$ in~\eqref{eqn:proof}, we have $E(Q_i^2) \geq 0$.
Indeed, if $Q_i = \sum_M a_M M$ where the sum extends over all
monomials of degree at most $t$ and $a = (a_M)_M$ is the corresponding
vector of coefficients, then
\begin{equation}
E(Q_i^2) = E\Big(\sum_M \sum_N a_M a_N M N\Big) = 
\sum_M \sum_N a_M a_N A_{M,N}
= a^{\mathrm{T}} A a, 
\label{eqn:possquarematrix}
\end{equation}
which is non-negative because $A$ is a positive semi-definite matrix.

For terms
in~\eqref{eqn:proof} that are liftings of equations from $\EQ'$,
the evaluation through $E$ is~$0$. This is clear for equations of type 2, since every monomial which contains a pair of variables $X(a,b)$ and $X(a,b')$, for $b \neq b'$ evaluates to~$0$ by~\eqref{eqn:defE}. For the same reason if we take any equation of type 1 in $\EQ'$, i.e, $\prod_{b \in G} (1 - X(a,b)) = 0$, for some $a \in V$, and an arbitrary monomial $M$ on the $X(a,b)$-variables such that $P \cdot M$ has a total degree at most $2t$, it holds that 
\begin{equation}
E\Big(\prod_{b \in G} (1 - X(a,b)) \cdot M\Big) = 
E(M) - \sum_{b \in G} E(X(a,b) \cdot M).
\end{equation}
By~\eqref{eqn:defE} again, we have $E(M) = \sum_{b \in G} E(X(a,b) \cdot M)$
and the right-hand side vanishes too.
Finally, liftings of equations of type $3$ from $\EQ'$ evaluate to~$0$ thanks to
equation~\eqref{eqn:sup}. 

For terms in~\eqref{eqn:proof} that are liftings of axioms
in~\eqref{eqn:axioms}, the evaluation
through $E$ is also~$0$ since the partial assignments on the
$X(a,b)$-variables take Boolean values. All in all, the evaluation of the left-hand
side of~\eqref{eqn:proof} through $E$ is non-negative, and the proof
is complete.

\section{Upper bounds in Lov\'asz-Schrijver} \label{sec:upperboundinLS}

In this section we show that all unsatisfiable instances of
3LIN($\mathbb{Z}_2$) have LS refutations of degree $6$ and size
polynomial in the number of variables.  Indeed, the argument to get
polynomial-size upper bound in constant degree works equally well for
3LIN($\mathbb{Z}_p$), when $p$ is prime, with some inessential
complications~\cite{DBLP:journals/corr/Atserias15}. We focus on $\mathbb{Z}_2$ for simplicity.

\paragraph{Initial remarks on the encoding.} We identify the elements of the two-element field $\mathbb{Z}_2$ with
$\{0,1\}$.  Let $\mathbb{E}$ be an instance of $k$LIN($\mathbb{Z}_2$)
with $n$ variables. In the encoding
$\INEQ(\mathbb{E},\mathbb{B}(\mathbb{Z}_2,k))$ of $\mathbb{E}$ as a
system of linear inequalities, there are four variables
$X(a,0)$, $X(a,1)$ ,$\bar{X}(a,0)$, $\bar{X}(a,1)$ for each variable $a$ in
$\mathbb{E}$. Note, however, that they are restricted to satisfy
$X(a,0) = \bar{X}(a,1)$ and $\bar{X}(a,0) = X(a,1)$ by the inequality
$X(a,0) + X(a,1) - 1 \geq 0$ from $\INEQ$ and the axiom equations
in~\eqref{eqn:axioms}, which in this case read $X(a,0)^2 - X(a,0) =
X(a,1)^2 - X(a,1) = 0$ and $X(a,0) + \bar{X}(a,0) - 1 = X(a,1) +
\bar{X}(a,1) - 1 = 0$. Consequently, in the following we will ignore
the variables of the type $X(a,0)$ and their twins and keep only the
variables $X(a,1)$ and $\bar{X}(a,1)$. In order to simplify the
notation even further, we will assume that the variables of
$\mathbb{E}$ are called $X_1,\ldots,X_n$, and that those of $\INEQ$
are called $X_1,\ldots,X_n$ and $\bar{X}_1,\ldots,\bar{X}_n$.

We interpret the variables $X_1,\ldots,X_n$ as ranging over
$\mathbb{Z}_2$ or $\mathbb{Q}$ depending on the context. Let $E$ be an
equation of $\mathbb{E}$, say $E : a_1 X_1 + \cdots + a_n X_n = b$,
where $a_1,\ldots,a_n \in \mathbb{Z}_2$ and $b \in \mathbb{Z}_2$. Without loss of generality we can assume
that there are exactly $k$ many $a_i$'s that are $1$. In $\INEQ$, the
encoding of the constraint represented by this equation is given by
the following inequalities:
\begin{align*}
  & \sum_{i \in T} \bar{X}_i + \sum_{i\in I\setminus T} X_i - 1 \geq 0 &
  \text{ for all } T \subseteq I \text{ such that } |T| \equiv
  1-b\!\!\mod 2,
\end{align*}
where $I = \{ i \in [n] : a_i \not= 0 \}$. Note that $|I|=k$. We write
$\mathcal{S}(E)$ to denote this set of inequalities; it has exactly
$2^{k-1}$ many inequalities, and all of them are satisfied in $\Qbb$
by a $\{0,1\}$-assignment if and only if the equation $E$ is satisfied
in $\mathbb{Z}_2$ by the same assignment.  Let
$\mathcal{S}(\mathbb{E})$ be the union of all $\mathcal{S}(E)$ as $E$
ranges over the equations in $\mathbb{E}$. Observe that, except for
the small detail that only half of the variables are used, $\INEQ$ is
basically the same as~$\mathcal{S}(\mathbb{E})$.

\paragraph{Some technical lemmas.} For every linear form $L(X_1,\ldots,X_n) = \sum_{i=1}^n a_i X_i$ with
rational coefficients $a_1,\ldots,a_n$ and every integer $c$, let
$D_c(L)$ be the quadratic polynomial $(L - c) (L - c + 1)$. In
words, the inequality $D_c(L) \geq 0$ states that $L$ does not fall in
the open interval $(c-1,c)$. Such statements have short proofs of low
degree:
 
\begin{lemma}[\cite{GrigorievHirschPasechnik2002}] \label{lem:GHP} 
  For every integer $c$ and for every linear form $L(X_1,\ldots,X_n) =
  \sum_{i=1}^n a_i X_i$ with integer coefficients $a_1,\ldots,a_n$,
  there is an LS proof of the inequality $D_c(L) \geq 0$ (from nothing) of
  degree at most $3$ and size polynomial in $\max\{|a_i| :
  i=1,\ldots,n\}$, $|c|$ and $n$.
\end{lemma}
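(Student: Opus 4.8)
The plan is to prove $D_c(L)\ge 0$ by induction on the number of variables occurring in $L$, the whole inductive step being carried by one polynomial identity. Writing $L=L'+a_nX_n$ with $L'=\sum_{i<n}a_iX_i$, a direct expansion that uses nothing beyond $X_n^2=X_n$ gives
\[
D_c(L)\;=\;(1-X_n)\,D_c(L')\;+\;X_n\,D_{c-a_n}(L')\;+\;a_n^2\,(X_n^2-X_n).
\]
This is just the case split on the value of $X_n$: at $X_n=0$ the linear form $L$ equals $L'$, and at $X_n=1$ it equals $L'+a_n$, so it suffices that $L'$ avoids the interval $(c-1,c)$ in the first case and $(c-1-a_n,\,c-a_n)$ in the second.

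For the base case, where $L$ mentions no variable, $D_c(0)=c(c-1)$ is a product of two consecutive integers, hence a nonnegative integer constant, and $D_c(0)\ge 0$ follows from the axiom $1\ge 0$ by multiplication with the positive scalar $c(c-1)$ (when $c\notin\{0,1\}$; otherwise the polynomial is identically $0$). The case of a single variable is the displayed identity with $L'\equiv 0$.

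For the inductive step, assume LS proofs of degree at most $3$ of $D_c(L')\ge 0$ and of $D_{c-a_n}(L')\ge 0$. Multiply the second by the variable $X_n$; multiply the first by the twin variable $\bar X_n$ (a legal LS multiplication) and then trade $\bar X_n$ for $1-X_n$ by subtracting the multiple $(\bar X_n+X_n-1)\cdot D_c(L')$ of the twin axiom — here I use that in LS, from an equation $P=0$ and any polynomial $R$ one can derive $PR=0$ with polynomial overhead, by expanding $R$ into monomials, pushing $P\ge 0$ and $-P\ge 0$ through one-variable multiplications to reach $\pm P\cdot(\text{monomial})\ge 0$, and recombining with the coefficients of $R$; multiply the Boolean axiom $X_n^2-X_n=0$ by the positive scalar $a_n^2$ (the term and the variable $X_n$ are simply absent when $a_n=0$). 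Adding these inequalities and invoking the identity yields $D_c(L)\ge 0$. Every polynomial that appears has degree at most $3$, since $D_c(L')$ has degree $2$ and one only ever multiplies a polynomial of degree $\le 2$ by a single variable or by a degree-$\le 1$ axiom polynomial.

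It remains to control the size, and this is the one step that is more than routine bookkeeping. The naive recursion branches in two at every variable, apparently producing $2^n$ subgoals; the point is that the only thing that changes along a branch is the offset, and after stripping $X_n,X_{n-1},\dots,X_{j+1}$ the offset is $c$ minus a subset sum of $a_{j+1},\dots,a_n$, so it stays inside the interval $[\,c-\sum_i|a_i|,\;c+\sum_i|a_i|\,]$, which contains at most $1+2n\max_i|a_i|$ integers. Hence there are only polynomially many distinct subgoals $D_{c'}\!\big(\sum_{i\le j}a_iX_i\big)\ge 0$, and I would derive them all in a single bottom-up pass over $j$, each from the two relevant subgoals at level $j-1$ by the inductive step above. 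The total number of inferences and the bitsizes of all scalars are then polynomial in $\max_i|a_i|$, $|c|$ and $n$, which is the claimed bound.
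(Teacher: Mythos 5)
Your proof is correct, but note that the paper itself does not prove this lemma at all: it is imported verbatim from \cite{GrigorievHirschPasechnik2002}, so there is no in-paper argument to compare against. What you give is a sound self-contained derivation in the LS fragment as defined in Section~\ref{sec:algproofs}. The exact polynomial identity $D_c(L)=(1-X_n)D_c(L')+X_nD_{c-a_n}(L')+a_n^2(X_n^2-X_n)$ checks out, each ingredient is legally derivable (multiplication of the conclusion by the variables $X_n$ and $\bar X_n$, the lift of the twin axiom $X_n+\bar X_n-1=0$ by the degree-$2$ polynomial $D_c(L')$ via the monomial-by-monomial trick, and the positive-scalar multiple of $X_n^2-X_n=0$), and every line has degree at most $3$. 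The genuinely non-routine point, which you correctly identify, is the size bound: the naive recursion has $2^n$ branches, and your observation that the only varying datum is the integer offset, which lies in $[c-\sum_i|a_i|,\,c+\sum_i|a_i|]$, so that a bottom-up pass over the $O(n\cdot n\max_i|a_i|)$ distinct subgoals suffices, is exactly what yields a size polynomial in $\max_i|a_i|$ (in magnitude, not bit-length), $|c|$ and $n$, matching the stated bound. Two cosmetic remarks: the identity is exact as polynomials, so no use of $X_n^2=X_n$ is needed beyond the explicit correction term you already include; and in the degenerate base case $c'\in\{0,1\}$, where $D_{c'}(0)$ is the zero polynomial, it is cleanest to simply omit the corresponding (identically zero) summand in the inductive step rather than derive $0\geq 0$, which avoids any quibble about whether the zero inequality is an available line when no variables are present.
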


In the following, for $I \subseteq [n]$ and $T \subseteq I$, let
$M^I_T(X_1,\ldots,X_n) := \prod_{i\in T} X_i \prod_{i \in I\setminus
  T} \bar{X}$. As usual, $M^I_\emptyset(X_1,\ldots,X_n) = 1$. Such
polynomials are called \emph{extended monomials}.

\begin{lemma} \label{lem:fullsum}
For every $I \subseteq [n]$, there is an LS proof of $\sum_{T
  \subseteq I} M^I_T - 1 = 0$ (from nothing) of degree $|I|$ and size
polynomial in $2^{|I|}$.
\end{lemma}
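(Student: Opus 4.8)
\textbf{Proof plan for Lemma~\ref{lem:fullsum}.}

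The plan is to prove the identity $\sum_{T \subseteq I} M^I_T - 1 = 0$ by induction on $|I|$. For the base case $I = \emptyset$, the sum on the left has a single term $M^\emptyset_\emptyset = 1$, so the identity reads $1 - 1 = 0$, which is an instance of adding the axiom $1 \geq 0$ and its negation, or more simply is obtained trivially; in any case its LS proof has degree $0$ and constant size.

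For the inductive step, pick some $j \in I$ and set $I' = I \setminus \{j\}$. The key algebraic observation is that every subset $T \subseteq I$ either contains $j$ or not, so
\begin{equation*}
\sum_{T \subseteq I} M^I_T = X_j \sum_{T' \subseteq I'} M^{I'}_{T'} + \bar{X}_j \sum_{T' \subseteq I'} M^{I'}_{T'} = (X_j + \bar{X}_j) \sum_{T' \subseteq I'} M^{I'}_{T'}.
\end{equation*}
First I would invoke the induction hypothesis to get an LS proof of $\sum_{T' \subseteq I'} M^{I'}_{T'} - 1 = 0$ of degree $|I'| = |I| - 1$ and size polynomial in $2^{|I|-1}$. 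Multiplying this equation by the variable $X_j$ (a legal LS multiplication step, since $X_j$ is a variable) gives $X_j \sum_{T' \subseteq I'} M^{I'}_{T'} - X_j = 0$, and multiplying by $\bar{X}_j$ gives $\bar{X}_j \sum_{T' \subseteq I'} M^{I'}_{T'} - \bar{X}_j = 0$. Adding these two equations yields $(X_j + \bar{X}_j)\sum_{T' \subseteq I'} M^{I'}_{T'} - X_j - \bar{X}_j = 0$, i.e., $\sum_{T \subseteq I} M^I_T - X_j - \bar{X}_j = 0$ by the displayed observation. Finally, adding the axiom equation $X_j + \bar{X}_j - 1 = 0$ from~\eqref{eqn:axioms} gives $\sum_{T \subseteq I} M^I_T - 1 = 0$, as desired. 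Each of the two multiplications raises the degree by one, so the resulting proof has degree $|I' | + 1 = |I|$; the size grows by at most a polynomial-in-$2^{|I|}$ additive and multiplicative amount at each level, and since there are only $|I|$ levels, the total size is polynomial in $2^{|I|}$.

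I do not expect any serious obstacle here: the only things to be careful about are (i) that the multiplications used are by variables (so they are permitted not just in general LS but would be fine even under the rank-bounded reading), and (ii) bookkeeping the size bound so that it is genuinely polynomial in $2^{|I|}$ rather than, say, exponential in $2^{|I|}$ — this is immediate because the sum $\sum_{T \subseteq I} M^I_T$ written out as a polynomial already has $2^{|I|}$ monomials, each of constant size, and the proof is a linear-in-$|I|$ sequence of such objects together with the constantly many axiom applications. The lemma will be used together with Lemma~\ref{lem:GHP} and the encoding $\mathcal{S}(\mathbb{E})$ to carry out the Gaussian elimination argument over $\mathbb{Z}_2$ inside low-degree LS.
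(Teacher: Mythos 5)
Your proof is correct and is essentially the same as the paper's: both argue by induction on $|I|$, multiplying the induction hypothesis once by $X_j$ and once by $\bar{X}_j$, adding the results, and then using the axiom $X_j + \bar{X}_j - 1 = 0$, with the same accounting for degree $|I|$ and size polynomial in $2^{|I|}$.
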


\begin{proof}
For simplicity, let $q = |I|$ and assume $I = \{1,\ldots,q\}$. We
build the proof inductively on $q$. For $q = 0$, what we need is
trivial since the left-hand side is $0$. Assume now $q \in
\{1,\ldots,n\}$ and that we have $\sum_{T \subseteq [q-1]} M^I_T - 1 =
0$.  Multiply this once by $X_q$ and once by~$\bar{X}_q$. Adding the
results we get $\sum_{T \subseteq [q]} M^I_T - X_q - \bar{X}_q = 0$,
from which $\sum_{T \subseteq [q]} M^I_T - 1 = 0$ follows from adding
the axiom $X_q + \bar{X}_q - 1 = 0$ to it. The size is exponential in
$|I|$ because the inductive step is used twice.
\end{proof}

The next lemma is as technical as useful.

\begin{lemma} \label{lem:thing}
  Let $T \subseteq I \subseteq [n]$. Then there is an LS proof of
  $\big({\sum_{i \in I} X_i - |T|}\big) M^I_T = 0$ (from nothing)
  of degree at most $|I|+1$ and size linear in $|I|$.
\end{lemma}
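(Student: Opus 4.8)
The plan is to reduce the identity to the elementary observation that, modulo the Boolean axioms, each variable $X_i$ with $i \in I$ acts on the extended monomial $M^I_T$ as the constant $c_i := [\,i \in T\,]$. Concretely, I would derive, for every $i \in I$, the equation
\begin{equation*}
X_i \cdot M^I_T - c_i \cdot M^I_T = 0,
\end{equation*}
and then add these $|I|$ equations together. Since $\sum_{i \in I} c_i = |T|$, the sum is exactly $\big(\sum_{i\in I} X_i - |T|\big) M^I_T = 0$, which is the goal. As usual, each equation $P = 0$ is treated as the pair of inequalities $P \ge 0$ and $-P \ge 0$, so that adding equations and multiplying them by $-1$ are legitimate operations; and any multiplication of an equation by a monomial $\prod_{j \in S} X_j \prod_{j \in S'} \bar X_j$ is carried out one variable at a time, which is allowed in LS because each factor is a single variable.

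The two cases are as follows. If $i \in T$, write $M^I_T = X_i \cdot N$ with $N := \prod_{j \in T\setminus\{i\}} X_j \prod_{j \in I\setminus T}\bar X_j$, a monomial of degree $|I|-1$. Lifting the axiom $X_i^2 - X_i = 0$ by $N$ (one variable at a time) yields $X_i^2 N - X_i N = 0$, i.e.\ $X_i M^I_T - M^I_T = 0$, since $X_i N = M^I_T$ and $X_i^2 N = X_i M^I_T$; this has degree at most $2 + (|I|-1) = |I|+1$. If instead $i \in I \setminus T$, first derive $X_i \bar X_i = 0$ in constant size and degree $2$ by multiplying the axiom $X_i + \bar X_i - 1 = 0$ by $X_i$ and subtracting the axiom $X_i^2 - X_i = 0$ (exactly the computation noted after~\eqref{eqn:axioms}). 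Then write $M^I_T = \bar X_i \cdot N'$ with $N' := \prod_{j\in T} X_j \prod_{j\in (I\setminus T)\setminus\{i\}}\bar X_j$, again of degree $|I|-1$, and lift $X_i\bar X_i = 0$ by $N'$ to obtain $X_i \bar X_i N' = 0$, i.e.\ $X_i M^I_T = 0$; the degree is again at most $|I|+1$.

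Putting the pieces together, I would add the $|I|$ equations $X_i M^I_T - c_i M^I_T = 0$ (each as a pair of inequalities, in both directions), which produces $\big(\sum_{i\in I} X_i - |T|\big) M^I_T \ge 0$ together with its negation, and hence the desired equation. No step raises the degree beyond $|I|+1$, and the whole derivation consists of $O(|I|)$ inferences, so the size is linear in $|I|$ as claimed. I do not anticipate any genuine difficulty here: the only points needing care are the routine ones already flagged — performing the monomial liftings as sequences of single-variable multiplications so that they remain legal LS steps, and handling the minus signs through the two-inequalities-per-equation convention. In the intended application one has $|I| = k = 3$, so all of these pieces are in fact of bounded size and degree.
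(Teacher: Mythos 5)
Your proposal is correct and follows essentially the same route as the paper's proof: for each $i \in T$ use the axiom $X_i^2 - X_i = 0$ to get $X_i M^I_T = M^I_T$, for each $i \in I\setminus T$ use $X_i\bar{X}_i = 0$ to get $X_i M^I_T = 0$, and add up; you merely spell out the single-variable liftings and the equation-as-two-inequalities bookkeeping that the paper leaves implicit.
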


\begin{proof} Write $M$ for $M^I_T$. For every $i \in I\setminus T$,
  using $X_i \bar{X}_i = 0$ we get $X_i M = 0$.  For every $i
  \in T$, using $X_i^2 - X_i = 0$ we get $X_i M = M$.  Adding up we
  get $\sum_{i \in I} X_i M = |T| M$.
\end{proof}

\paragraph{Simulating Gaussian elimination.} We use these lemmas to prove the main result of this section.

\begin{theorem} \label{thm:maintwoelementfield} Let $\mathbb{E}$ be an
  instance of $3\mathrm{LIN}(\mathbb{Z}_2)$ with $n$ variables and $m$
  equations.  If $\mathbb{E}$ is unsatisfiable, then
  $\mathcal{S}(\mathbb{E})$ has an LS refutation of degree $6$ and
  size polynomial in $n$ and $m$.
\end{theorem}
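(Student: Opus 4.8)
The plan is to simulate Gaussian elimination over $\mathbb{Z}_2$ directly inside degree-$6$ LS. Since $\mathbb{E}$ is unsatisfiable, there is a $\mathbb{Z}_2$-linear combination of the equations of $\mathbb{E}$ that sums to the contradiction $0 = 1$; equivalently, Gaussian elimination on $\mathbb{E}$ reaches an equation of the form ``$0 = 1$'' after a sequence of at most $m$ additions of equations. The key object I would track is the following: for a $\mathbb{Z}_2$-linear form $L = \sum_{i \in I} X_i$ with $|I| = k$ and a target bit $b \in \{0,1\}$, I want to have derived in LS, of degree roughly $k+3$, the family of inequalities asserting ``$L \notin (c-1,c)$ for the relevant integers $c$'' together with the inequalities $\sum_{T \subseteq I,\, |T| \equiv b} M^I_T \geq 0$ and its negation --- i.e.\ an LS-certificate that $L$, as a rational quantity, lies in $\{\, c : c \equiv b \bmod 2,\ 0 \le c \le k \,\}$. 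Call this the \emph{parity certificate} for $(L,b)$. The inequalities $\mathcal{S}(E)$ are exactly, up to the identification $X(a,0) = \bar{X}(a,1)$ discussed above, the parity certificate for the left-hand side of $E$ with its right-hand bit, expressed through extended monomials; so the parity certificates for the given equations are available for free (after a bounded-degree massaging using Lemma~\ref{lem:fullsum} and Lemma~\ref{lem:thing} to move between the $2^{k-1}$ inequalities of $\mathcal{S}(E)$ and the extended-monomial form).

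The heart of the argument is an \emph{addition step}: given parity certificates for $(L_1,b_1)$ and $(L_2,b_2)$, with $L_1,L_2$ supported on variable sets of size $\le 3$, produce a parity certificate for $(L_1 + L_2,\, b_1 \oplus b_2)$, where $L_1 + L_2$ now means the honest rational sum (not reduced mod $2$). The point is that if $X_i$ appears in both $L_1$ and $L_2$ then $2 X_i$ appears in the rational sum, but using $X_i^2 - X_i = 0$ one has the identity $2X_i = 2X_i^2 \equiv 0 \pmod{\text{something}}$ --- more precisely, over the $\{0,1\}$-cube $2X_i \in \{0,2\}$, and $2X_i$ contributes an even amount, so it does not affect the parity of the rational value. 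The technology of Lemma~\ref{lem:GHP} (the Grigoriev--Hirsch--Pasechnik gap polynomials $D_c(L)$) lets me certify in degree $3$ that a small-coefficient integer linear form avoids the open intervals $(c-1,c)$; combining this with the multiplicative structure of extended monomials (Lemma~\ref{lem:thing}, which says $(\sum_{i\in I} X_i - |T|) M^I_T = 0$, and Lemma~\ref{lem:fullsum}, which says the extended monomials sum to $1$) I can pin down the rational value of $L_1 + L_2$ to a residue class mod $2$ and transfer the parity bit. Each such step should cost only a constant increase in degree and a polynomial increase in size, because the supports involved have size bounded by $6$ throughout (three from each summand).

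Iterating the addition step along the Gaussian-elimination sequence --- which has length $\le m$ --- I eventually obtain a parity certificate for $(L^\ast, 1)$ where $L^\ast$ is the rational form obtained by adding up the chosen equations; but by construction $L^\ast$ reduces mod $2$ to the zero form, so on the $\{0,1\}$-cube $L^\ast$ takes only \emph{even} integer values, and the parity certificate for target bit $1$ then forces $-1 \ge 0$, i.e.\ a refutation. The degree is constant --- I expect exactly $6$, matching the theorem statement, coming from: degree $3$ for the $D_c$ gap polynomials, plus up to $3$ for multiplying by an extended monomial on a $3$-element support, and the rational-sum bookkeeping never needs more than a $6$-element support at once. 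The size is polynomial in $n$ (for the $D_c$ proofs, via Lemma~\ref{lem:GHP}) times polynomial in $m$ (number of addition steps), hence polynomial in $n$ and $m$. The main obstacle I anticipate is the addition step: one must be careful that when two forms share variables, the rational sum's coefficients ($0$, $1$, or $2$) are handled so that the degree stays bounded and the ``even coefficient is parity-invisible'' fact is turned into an actual LS derivation rather than a semantic observation --- this is where Lemma~\ref{lem:thing} and the Boolean axioms $X_i^2 - X_i = 0$ must be deployed carefully, and it is the step that determines whether the final degree is $6$ or something slightly larger that one then has to trim.
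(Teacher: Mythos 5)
There is a genuine gap in your addition step, and it is exactly the obstacle the theorem is designed to circumvent. Your parity certificate for a form $L$ with support $I$ contains the inequality $\sum_{T \subseteq I,\ |T| \equiv b} M^I_T \geq 0$ (and related extended-monomial data), which is an object of degree $|I|$ with $2^{|I|-1}$ terms. For the original equations $|I| = 3$ and this is harmless, but when you iterate along the Gaussian-elimination sequence the two summands are not both original $3$-variable equations: one of them is the accumulated partial sum, whose support grows by up to three variables per addition and reaches $\Theta(n)$ in general (the rational sum does not cancel shared variables, and even over $\mathbb{Z}_2$ the partial combinations of the rows of an unsatisfiable system typically have large support). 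So your claim that ``the supports involved have size bounded by $6$ throughout (three from each summand)'' fails after the very first addition, and the parity certificates you would have to maintain for the intermediate forms have degree and size growing with the support --- linear degree and exponentially many extended monomials in the worst case. The surprise of the theorem is precisely that LS escapes the linear-degree barrier that SOS faces on these instances (Theorem~\ref{thm:chan}); a proof that carries a static, support-wide extended-monomial case analysis for the partial sums forfeits exactly that advantage. Your closing step (``$L^\ast$ has even coefficients, hence even values'') also needs an actual low-degree derivation, which your outline does not supply.

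The paper's proof avoids the trap by never attaching extended monomials to the growing combination. It fixes a set $J$ of rows whose $\mathbb{Z}_2$-sum is $0=1$ and, for each $k$, tracks only the quadratic inequalities $D_c(L_k) \geq 0$ for the rational \emph{half-sum} $L_k = \frac{1}{2}\bigl(\sum_{j \leq k}\sum_i a_{j,i}X_i + \sum_{j>k} b_j\bigr)$; these are degree-$2$ polynomials in the original variables no matter how many variables occur in $L_k$, so the size of the support never enters the degree. Moreover the induction runs in \emph{reverse}: the base case $k=|J|$ has even column sums, so $L_{|J|}$ has integer coefficients and Lemma~\ref{lem:GHP} applies directly in degree $3$; each inductive step then peels off a single $3$-variable equation, and the extended monomials $M^I_T$, Lemma~\ref{lem:thing}, Lemma~\ref{lem:fullsum} and the inequalities of $\mathcal{S}(E)$ are used only over that equation's $3$-element support, which is what caps the total degree at $6$. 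At $k=0$ the half-sum is the constant $q+\tfrac12$, so $D_{q+1}(L_0) \geq 0$ yields $-1 \geq 0$. To salvage your forward, addition-by-addition formulation you would have to replace your parity certificates by these quadratic gap inequalities for half-sums; as written, the proposal does not give constant degree.
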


\begin{proof}
  Write $\mathbb{E}$ in matrix form $A x = b$, where $x$ is a column
  vector of $n$ variables, $A$ is a matrix in $\mathbb{Z}_2^{m \times
    n}$, and $b$ is a vector in $\mathbb{Z}_2^m$.  Let
  $a_{j,1},\ldots,a_{j,n}$ be the $j$-th row of $A$, so the $j$-th
  equation of $\mathbb{E}$ is $E_j : a_{j,1} X_1 + \cdots + a_{j,n}
  X_n = b_j$. Assume $\mathbb{E}$ is unsatisfiable over
  $\mathbb{Z}_2$.  Then $b$ cannot be expressed as a
  $\mathbb{Z}_2$-linear combination of the columns of $A$, so the
  $\mathbb{Z}_2$-rank of the matrix $[\;A \;|\; b\;]$ exceeds the
  $\mathbb{Z}_2$-rank of $A$. Since the rank of $A$ is at most $n$,
  this means that there exists a subset of at most $n$ rows $J$ such
  that, with arithmetic in $\mathbb{Z}_2$, we have $\sum_{j \in J}
  a_{j,i} = 0$ for every $i \in [n]$, and at the same time $\sum_{j
    \in J} b_j = 1$. In order to simplify the notation, we assume
  without loss of generality that $J = \{1,\ldots,|J|\}$.

  For every $k \in \{0,\ldots,|J|\}$, define the linear form
  $$
  L_k(X_1,\ldots,X_n) := \frac{1}{2} \left({ \sum_{j=1}^k \sum_{i=1}^n
      a_{j,i} X_i + \sum_{j=k+1}^{|J|} b_j}\right).
  $$
  In this definition of $L_k$, the coefficients $a_{j,i}$ and $b_j$
  are interpreted as rationals.  We provide proofs of $D_c(L_k) \geq
  0$ for every $c \in R_k := \{0,\ldots,(k+1) n \}$ by reverse
  induction on $k \in \{0,\ldots,|J|\}$.

  The base case $k = |J|$ is a special case of Lemma~\ref{lem:GHP}. To
  see why note that the condition $\sum_{j \in J} a_{j,i} = 0$ over
  $\mathbb{Z}_2$ means that, if arithmetic were done in $\Qbb$, then
  $\sum_{j \in J} a_{j,i}$ is an even natural number. But then all the
  coefficients of
  $$
  L_{|J|}(X_1,\ldots,X_n) = \frac{1}{2} \sum_{j=1}^{|J|} \sum_{i=1}^n
  a_{j,i} X_i = \sum_{i=1}^n \left({\frac{1}{2} \sum_{j=1}^{|J|}
      a_{j,i}}\right) X_i
  $$ 
  are integers. Hence Lemma~\ref{lem:GHP} applies.

  Suppose now that $0 \leq k \leq |J|-1$ and that we have a proof of
  $D_d(L_{k+1}) \geq 0$ available for every $d \in R_{k+1}$.  Fix $c
  \in R_k$; our immediate goal is to give a proof of $D_c(L_k) \geq
  0$.  As $k$ is fixed, write $L$ in place of $L_{k+1}$, and let the
  $(k+1)$-st equation $E_{k+1}$ be written as $\sum_{i \in I} X_i =
  b$, where $I = \{ i \in [n] : a_{k+1,i} = 1 \}$. Note that $L = L_k
  + \ell/2$ where $\ell := -b + \sum_{i\in I} X_i$. Fix
  $T \subseteq I$ such that $|T| \equiv b\!\!\mod 2$, and let $d = c
  + (t-b)/2$ where $t = |T|$. Note that $d \in R_{k+1}$ as $c \in
  R_k$ and $0 \leq t \leq n$ and $0 \leq b \leq 1$ are such that $t -
  b$ is even. Multiplying $D_{d}(L) \geq 0$ by the extended monomial
  $M^I_T$ we get
  $
    (L - d) (L - d + 1) M^I_T \geq 0.
  $
  Replacing $L = L_k + \ell/2$ in the factor $(L - d)$
  and recalling $d = c + (t-b)/2$, this inequality can be written
  as
  \begin{equation}
    (L_k - c) (L - d + 1) M^I_T
    + (L - d + 1) \textstyle{\frac{1}{2}} A \geq 0
  \label{eq:inter1}
  \end{equation}
  where $A := (\ell + b - t) M^I_T$. By Lemma~\ref{lem:thing} we
  have a proof of $A = 0$, and hence of $(L-d+1) A/2 = 0$. Composing with~\eqref{eq:inter1} we get a
  proof of
  $
    (L_k - c) (L - d + 1) M^I_T \geq 0.
  $
  The same argument applied to the factor $(L - d + 1)$ of
  this inequality gives
  $
    (L_k - c) (L_k - c + 1) M^I_T \geq 0.
  $
  This is precisely $D_c(L_k) M^I_T \geq 0$. Adding up over all
  $T \subseteq I$ with $|T| \equiv b\!\!\mod 2$ we get
  \begin{equation}
    D_c(L_k) \sum_{T \subseteq I \atop |T| \equiv b}
    M^I_T \geq 0. \label{eq:last1}
  \end{equation}
  
  Now note that for each $T \subseteq I$ such that $|T| \equiv 1 - b
  \mod 2$, the inequality $-M^I_T \geq 0$ is the multiplicative
  encoding of one of the clauses in $\mathcal{S}(E)$. Thus, by
  Lemma~\ref{lem:encodings}, we get constant-size proofs of $-M^I_{T}
  \geq 0$, and hence of $M^I_T = 0$, for every $T \subseteq I$ such
  that $|T| \equiv 1-b\!\!\mod 2$.  But then also of $D_c(L_k) M^I_T =
  0$ for every such $T$. Adding up and composing with~\eqref{eq:last1}
  we get
  \begin{equation}
  D_c(L_k) \sum_{T \subseteq I} M^I_T \geq 0. \label{eqn:thisone}
  \end{equation}
  From Lemma~\ref{lem:fullsum} we get $1 - \sum_{T \subseteq I} M^I_T
  = 0$, and hence $D_c(L_k) - D_c(L_k) \sum_{T \subseteq I} M^I_T
  \geq 0$, from which $D_c(L_k) \geq 0$ follows from addition
  with~\eqref{eqn:thisone}.

  We proved $D_c(L_0) \geq 0$ for every $c \in R_0 =
  \{0,\ldots, n \}$. Recall now that $\sum_{j=1}^{|J|} b_j$ is odd,
  say $2q+1$, and at most $n$. In particular $q+1$ belongs to $R_0$
  and $L_0 = q+1/2$. Thus we have a proof of $D_{q+1}(L_0) \geq 0$
  where $D_{q+1}(L_0) = - (1/2)(1/2) = - 1/4$.  Multiplying by $4$ we
  get the contradiction $-1 \geq 0$.
\end{proof}

\section{Applications to k-coloring} \label{sec:threecoloring}

We illustrate the power of the general method of reductions for CSPs
by applying it to the graph $k$-coloring problem for $k \geq 3$. This
will allow us to rederive one of the results in
\cite{DBLP:conf/coco/LauriaN17}, as well as answer one of their open
problems. 

\subsection{Blackbox application to k-coloring}
An undirected graph $G$ is $k$-colorable if and only if it has a
homomorphism into the $k$-clique graph $\mathbb{K}_k = ([k],[k]^2
\setminus \{(b,b) : b \in [k]\})$. Thus the $k$-coloring problem is a
special case of the CSP of the template $\mathbb{K}_k$, which we
abbreviate by $k\text{-COLOR}$. We say that it is a special case and
not exactly the same problem because the inputs to $k\text{-COLOR}$
need not be undirected graphs; in full generality they are directed
graphs that allow loops. Note, however, that a directed graph that has
loops would never have a homomorphism into the template, and that a
loopless directed graph has a homomorphism into the template if and
only if the underlying undirected graph that ignores the directions of
the edges is $k$-colorable. Thus, for all practical purposes, the two
problems are the same, and proof complexity lower bounds for one
version of the problem will give proof complexity lower bounds for the
other. We discuss this in due time; for now we focus on the proof
complexity of the CSP with template~$\mathbb{K}_k$.

It is well-known that $\mathbb{K}_k$ is a template
of unbounded width for each $k \geq 3$ (see e.g.~\cite{FV98}). As a
consequence of our main result we get the following:

\begin{corollary} \label{cor:coloring} For every integer $k \geq 3$,
  there exist families $(G_n)_{n \geq 1}$ of unsatisfiable instances of
  \mbox{$k$\textrm{-COLOR}}, where $G_n$ has $\Theta(n)$ vertices and
  $\Theta(n)$ edges,
  such that for
 every positive integer $d$ there exists a positive~$\epsilon$ such
  that, for any local encoding scheme $E$ of the appropriate type, and
  any sufficiently large~$n$, 
  the following hold:
  \begin{enumerate} \itemsep=0pt
  \item every resolution refutation of $E(G_n)$ has width at least
    $\epsilon n$ and size at least $2^{\epsilon n}$,
  \item every Frege refutation of $E(G_n)$ of depth $d$ has size at
    least $2^{n^{\epsilon}}$,
  \item every PC refutation over the reals of $E(G_n)$ has degree at
    least $\epsilon n$ and size at least $2^{\epsilon n}$,
  \item every SOS refutation of $E(G_n)$ has degree at least $\epsilon
    n$.
\end{enumerate}
\end{corollary}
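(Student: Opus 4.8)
The plan is to derive Corollary~\ref{cor:coloring} as a black-box consequence of the closure results of Section~\ref{sec:ppdefinitions}, the bounded-width characterization, and the known hardness of linear systems over finite Abelian groups. First I would record that, for $k\geq 3$, the template $\mathbb{K}_k$ is a core --- every endomorphism of a complete graph is injective, hence bijective --- and that it has unbounded width~\cite{FV98}. By Theorem~\ref{thm:bartokozik} I may then fix a non-trivial finite Abelian group $G$ for which $\mathbb{B}(G,3)$ is pp-interpretable in $\mathbb{K}_k^{+}$, the expansion of $\mathbb{K}_k$ by its $k$ constants. Hence $\mathbb{B}(G,3)$ is obtained from $\mathbb{K}_k$ by a finite sequence of standard CSP constructions: addition of constants to the core (construction c), legitimate since $\mathbb{K}_k$ is a core, followed by a pp-interpretation (construction a). This places us in the scope of Theorems~\ref{thm:closureboundeddepthFrege}, \ref{thm:closurealgebraic} and~\ref{thm:closuresemialgebraic} with $\Bstruct=\mathbb{K}_k$ and $\Bstruct'=\mathbb{B}(G,3)$, providing a polynomial-time, linear-size, satisfiability-preserving transformation $\Astruct\mapsto\Astruct'$ from instances of $3\mathrm{LIN}(G)$ to instances of $k\text{-COLOR}$.

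Next I would fix a family $(\Astruct_n)_{n\geq 1}$ of unsatisfiable instances of $3\mathrm{LIN}(G)$ with $\Theta(n)$ variables and $\Theta(n)$ equations that is hard in the senses of Theorems~\ref{thm:lowerbound}, \ref{thm:lbpc} and~\ref{thm:chan}; since all three lower bounds apply to linear-size expander-based systems of this shape, one may take a single such family. Setting $G_n:=\Astruct_n'$, the instance $G_n$ of $k\text{-COLOR}$ has $\Theta(n)$ vertices and $\Theta(n)$ edges; I claim $(G_n)_{n\geq 1}$ is as required. For assertion~\emph{2}, fix $d$ and reason contrapositively --- by Lemma~\ref{lem:encodings} it suffices to treat the CNF encoding: a depth-$d$ Frege refutation of $\CNF(G_n,\mathbb{K}_k)$ of size $s$ yields, via Corollary~\ref{col:bdFrege}, a depth-$(d{+}1)$ Frege refutation of $\CNF(\Astruct_n,\mathbb{B}(G,3))$ of size polynomial in $s$ and $n$, which Theorem~\ref{thm:lowerbound} (applied with $d{+}1$) forbids unless $s\geq 2^{n^{\epsilon}}$ for a suitably small $\epsilon=\epsilon(d)$ and $n$ large. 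Assertion~\emph{3} runs identically through Theorem~\ref{thm:closurealgebraic}: a degree-$w$ PC refutation over the reals of $\EQ(G_n,\mathbb{K}_k)$ produces one of $\EQ(\Astruct_n,\mathbb{B}(G,3))$ of degree linear in $w$, so Theorem~\ref{thm:lbpc} forces $w\geq\epsilon n$, and the matching $2^{\epsilon n}$ size bound follows by feeding this degree lower bound into the standard size--degree tradeoff for Polynomial Calculus, using that $\EQ(G_n,\mathbb{K}_k)$ has only $\Theta(n)$ variables. Assertion~\emph{4} is the same through Theorem~\ref{thm:closuresemialgebraic} and Theorem~\ref{thm:chan}, giving the SOS degree bound $\epsilon n$.

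The one assertion that needs a detour --- and the place where I expect the only genuine subtlety --- is~\emph{1}, since our methods do not establish that resolution is closed under the CSP reductions, so a resolution lower bound cannot be transferred directly. Instead I would note that a width-$w$ resolution refutation of $\CNF(G_n,\mathbb{K}_k)$ converts, by Lemma~\ref{lem:simulation}, into a PC refutation over the reals of degree linear in $w$; the degree lower bound already obtained for assertion~\emph{3} then forces $w\geq\epsilon n$, and the exponential size bound $2^{\epsilon n}$ follows from the size--width tradeoff for resolution, the initial clauses having bounded width because $k$ and the encoding scheme are fixed. In each of the four cases the passage from the specific CNF/EQ encodings to an arbitrary local encoding $E$ of the appropriate type is absorbed by Lemma~\ref{lem:encodings}. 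Finally, since each $G_n$ is satisfiability-equivalent to an unsatisfiable CSP instance it is loopless, hence can be read as a non-$k$-colourable undirected graph, and all four bounds then apply verbatim to the $k$-colourability problem on graphs.
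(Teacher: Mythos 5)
Your argument is correct and follows essentially the same route as the paper: observe that $\mathbb{K}_k$ is a core of unbounded width, invoke Theorem~\ref{thm:bartokozik} to pp-interpret $\mathbb{B}(G,3)$ in $\mathbb{K}_k^{+}$, push the hard $3\mathrm{LIN}(G)$ instances through the closure theorems (Corollary~\ref{col:bdFrege} with Theorem~\ref{thm:lowerbound} for item~\emph{2}, Theorem~\ref{thm:closurealgebraic} with Theorem~\ref{thm:lbpc} for item~\emph{3}, Theorem~\ref{thm:closuresemialgebraic} with Theorem~\ref{thm:chan} for item~\emph{4}), get the PC size bound from the degree bound via the size--degree tradeoff, and get the resolution width bound from the PC degree bound via Lemma~\ref{lem:simulation}, with Lemma~\ref{lem:encodings} absorbing the change of encoding. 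Two small divergences are worth noting. First, for the resolution \emph{size} bound you appeal to the size--width tradeoff for resolution (Ben-Sasson--Wigderson, not stated in the paper), which does work here since the initial clauses have constant width and there are $\Theta(n)$ variables; the paper instead derives it from the already-obtained PC size lower bound together with Lemma~\ref{lem:simulation}, so it never needs that tradeoff. Second, your claim that ``one may take a single such family'' simultaneously hard in the senses of Theorems~\ref{thm:lowerbound}, \ref{thm:lbpc} and~\ref{thm:chan} is not justified by those theorems as stated: they produce three different families (the Tseitin/bubble construction, the BTS instances of \cite{BussGIP01}, and Chan's unspecified instances), and nothing in the paper identifies them. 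This is harmless because the corollary's phrasing (``there exist families'') does not require one family for all four items --- the paper simply uses a different transformed family per proof system --- but if you insist on a single family you would need an extra argument (e.g.\ $G=\mathbb{Z}_2$ Tseitin expanders, hard for SOS by \cite{DBLP:journals/tcs/Grigoriev01}). Finally, your closing remark that unsatisfiability forces looplessness is not right (a loop makes an instance unsatisfiable, not loopless); the paper's argument is that \emph{hardness} for PC rules out loops, though this point is immaterial to the corollary as stated, which concerns instances of $\mathrm{CSP}(\mathbb{K}_k)$.
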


\noindent Let us note that the 
size lower bound claim for Polynomial Calculus follows from its degree lower bound
claim in conjunction with the size-degree tradeoff
\cite{ImpagliazzoPudlakSgall1999}. Also, the claim for resolution follows
from the claim for Polynomial Calculus and the fact that the latter efficiently simulates
resolution (c.f., Lemma~\ref{lem:simulation} in
Section~\ref{sec:generalproofcomplexityfacts}); it does not follow
directly from our main dichotomy theorem.

Next we consider the specific encoding scheme used in
\cite{DBLP:conf/coco/LauriaN17} and the issue of directed vs. undirected
graphs.  Let $G$ be a directed graph, with vertex-set $V$ and edge-set
$E \subseteq V^2$, let~$k$ be a positive integer, and consider the
following system of polynomial equations:
\begin{enumerate} \itemsep=0pt
\item $\sum_{b \in [k]} X(u,b) - 1 = 0$ for each $u \in V$,
\item $X(u,b) X(u,c) = 0$ for each $u \in V$ and $b,c \in [k]$ with $b \not= c$,
\item $X(u,b) X(v,b) = 0$ for each $u, v \in V$ with $(u,v) \in E$ and
$b \in [k]$.
\end{enumerate}
It is easy to see that this is a local encoding scheme for $k$-COLOR in the sense of Section~\ref{sec:encodings}. Thus,
Corollary~\ref{cor:coloring} applies to it and we get a family of
instances $(G_n)_{n \geq 1}$ that are hard for Polynomial Calculus in the indicated
encoding scheme. Note that, since the instances are hard, they must be
loopless graphs. Indeed, if $(u,u)$ is a loop in $G_n$, then $X(u,b)X(u,b)
= 0$ is an equation in the encoding of the instance $G_n$ for all $b
\in [k]$. These equations, together with the axioms $X(u,b)^2 - X(u,b)
= 0$ and the equation $\sum_{b \in [k]} X(u,b) - 1 = 0$, would give a
PC derivation of $1 = 0$ in degree $2$ and constant size. Thus the
instances in the family are loopless graphs.  We may also assume that
they are undirected graphs for the simple reason that the equations
$X(u,b) X(v,b) = 0$ and $X(v,b) X(u,b) = 0$ are identical (recall that
all our variables commute by assumption).  It follows that
Corollary~\ref{cor:coloring} has the real-field case of Theorem~1.1
from~\cite{DBLP:conf/coco/LauriaN17} as a special case, except for the
fact that, unlike Theorem~1.1 from~\cite{DBLP:conf/coco/LauriaN17},
Corollary~\ref{cor:coloring} does not state that the family of graphs
is explicit. In the next section we show that we can also get an
explicit family of graphs with the same properties.

\subsection{Opening the box}
In the rest of this section we open the box of the method that
underlies Corollary~\ref{cor:coloring}. This will allow us to
re-derive Theorem~1.1 from~\cite{DBLP:conf/coco/LauriaN17} for all
fields, and not just for the real-field as is stated in
Corollary~\ref{cor:coloring}. Moreover, it will suggest a way to apply
the method to any other problem that is NP-complete via \emph{gadget
  reductions}.

Since $\mathbb{K}_k$ for $k \geq 3$ is a template of unbounded width,
Theorem~\ref{thm:bartokozik} applies to it. It is not difficult to see that
$\mathbb{K}_k$ is a core, hence by Theorem~\ref{thm:bartokozik}
there exists a
non-trivial finite Abelian group $G$ such that $\mathbb{B}(G,3)$ is
pp-interpretable in $\mathbb{K}_k^+$, where $\mathbb{K}_k^+$ is the
expansion of $\mathbb{K}_k$ with all constants; i.e., the expansion
with the relations $R_1 = \{1\},\ldots,R_k = \{k\}$. Indeed, this is
the case even for the group $G = \mathbb{Z}_2$. Concrete such
pp-interpretations are well-known and also easy to construct. For the
sake of completeness and by way of example, we propose one such
pp-interpretation in two steps. First we pp-interpret
$\mathbb{B}(\mathbb{Z}_2,3)$ in the template of $3$-SAT, and then we
pp-interpret the template of $3$-SAT in $\mathbb{K}_k^+$. Since
pp-interpretations compose, we get what we want.

Recall that $\mathbb{B}(\mathbb{Z}_2,3)$ has domain $\{0,1\}$ and two
relations $E_0 = \{(b_1,b_2,b_3) \in \{0,1\}^3 : b_1 + b_2 + b_3 = 0
\mod 2 \}$ and $E_1 = \{(b_1,b_2,b_3) \in \{0,1\}^3 : b_1 + b_2 + b_3
= 1 \mod 2 \}$. The template of $3$-SAT also has domain $\{0,1\}$ and
the eight arity-3 relations $R_{000}$, $R_{001}$, $R_{010}$,
$R_{100}$, $R_{011}$, $R_{101}$, $R_{110}$, and $R_{111}$ defined by
the eight possible signings of a 3-clause. A pp-interpretation
of $\mathbb{B}(\mathbb{Z}_2,3)$ is given by the following formulas:
\begin{enumerate} \itemsep=0pt
\item $\delta(x_1) := x_1 = x_1$, { i.e., \textit{true} so the domain
    in still $\{0,1\}$, }
\item $\epsilon(x_1,x_2) := x_1 = x_2$,
\item $E_0(x_1,x_2,x_3) := R_{001}(x_1,x_2,x_3) \wedge R_{010}(x_1,x_2,x_3)
\wedge R_{100}(x_1,x_2,x_3) \wedge R_{111}(x_1,x_2,x_3)$,
\item $E_1(x_1,x_2,x_3) := R_{000}(x_1,x_2,x_3) \wedge R_{011}(x_1,x_2,x_3)
\wedge R_{101}(x_1,x_2,x_3) \wedge R_{110}(x_1,x_2,x_3)$.
\end{enumerate}
Next we give the standard pp-interpretation of the template of
$3$-SAT in $\mathbb{K}_k^+$ when $k \geq 3$. We use the first
two colors $1$ and $2$ to represent $0$ and $1$, respectively:
\begin{enumerate} \itemsep=0pt
\item $\delta(x_1) := \exists y_3 \cdots \exists y_k
  \big(\bigwedge_{b=3}^k R_{b}(y_b) \wedge E(x_1,y_b)\big)$,
{ i.e., the domain is $\{1,2\}$, }
\item $\epsilon(x_1,x_2) := x_1 = x_2$,
\item $R_{abc}(x_1,x_2,x_3) := \exists x'_1\exists x'_2\exists
  x'_3 (
  \bigwedge_{i=1}^3 (\delta(x_i) \wedge \delta(x'_i)
  \wedge E(x_i,x'_i)) \wedge 
    \exists y_1 \cdots \exists y_6 (\bigwedge_{i=1}^6 \exists z_4 \cdots \exists z_k\\\big(\bigwedge_{b=4}^k R_b(z_b) \wedge E(y_i,z_b)\big) \wedge
  E(y_1,y_3) \wedge
  E(y_2,y_3) \wedge E(y_1,y_2) \wedge E(y_3,y_4) \wedge E(y_4,y_6)
  \wedge E(y_5,y_6) \wedge E(y_4,y_5) \wedge R_2(y_6) \wedge
  E(y_1,x^{(a)}_1) \wedge E(y_2,x^{(b)}_2) \wedge E(y_5,x^{(c)}_3)))$,
\end{enumerate}
where $x^{(d)}_i$ is shorthand notation for $x_i$ if $d = 0$, and
$x'_i$ if $d = 1$.  Checking that these interpretations are correct is
a straightforward exercise. Note that, as written, the formulas for
$R_{abc}$ are not quite pp-formulas, but they are easily converted
into pp-formulas by standard rewriting into prenex normal form.  At
this point we can apply Corollary~\ref{col:bdFrege} of Theorem~\ref{thm:closureboundeddepthFrege} in
conjunction with Theorem~\ref{thm:lowerbound} to obtain the statement \emph{2} in Corollary~\ref{cor:coloring}; Theorem~\ref{thm:closurealgebraic} in conjunction with Theorem~\ref{thm:lbpc} to obtain the statement \emph{3}; and Theorem~\ref{thm:closuresemialgebraic} in conjunction with
Theorem~\ref{thm:chan} to obtain the statement \emph{4} in Corollary~\ref{cor:coloring}.

Our next goal is to extend the PC lower bound for $k$-COLOR to all
fields. Before we do so, let us note that exactly the same strategy as in
the previous paragraph is not enough. The reason
is that $3$LIN($\mathbb{Z}_2$) is easy for Polynomial Calculus over fields of
characteristic two. Surely we could start with an instance of
$3$LIN($\mathbb{Z}_3$), which is going to be hard for fields of
characteristic two, but the result is again not going to be hard for
all fields simultaneously as it will fail to be hard for fields of
characteristic three. The solution is to start with a problem that has
instances that are hard for Polynomial Calculus for all fields simultaneously. Luckily,
3-SAT is such a case:

\begin{theorem}[see Theorem 3.13 in \cite{AlekhnovichRazborov2003}] 
  \label{thm:allfields}
  There exists a positive real $\delta$ and an explicit family
  $(G_n)_{n \geq 1}$ of unsatisfiable instances of 3-SAT, where $G_n$ has
  $\Theta(n)$ variables and $\Theta(n)$ clauses, such that, for every
  field $F$ and every sufficiently large $n$, every PC refutation over
  $F$ of $G_n$ with respect to the $\EQ$ encoding scheme has degree at
  least~$\delta n$.
\end{theorem}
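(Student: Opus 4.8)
The plan is to reduce Theorem~\ref{thm:allfields} to Theorem~3.13 of \cite{AlekhnovichRazborov2003}, which supplies an explicit family $(F_n)_{n\geq 1}$ of unsatisfiable $3$-CNF formulas with $\Theta(n)$ variables and $\Theta(n)$ clauses such that, for every field $F$ and every sufficiently large $n$, every PC refutation of $F_n$ over $F$ with respect to the standard multiplicative encoding --- one algebraic variable $x_a$ and its twin $\bar x_a$ per propositional variable, the Boolean and twin axioms \eqref{eqn:axioms}, and, for each clause $C$, the monomial equation asserting that the unique falsifying assignment of $C$ is excluded --- has degree at least $\delta n$, with $\delta>0$ a field-independent absolute constant inherited from the expansion parameter of the explicitly constructible expander family underlying that construction. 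I would first identify $F_n$ with a CSP instance $G_n:=\Astruct_{F_n}$ of the $3$-SAT template $\mathbb{T}$ (domain $\{0,1\}$, the eight arity-$3$ relations $R_{abc}$ given by the signings of a $3$-clause): the domain of $\Astruct_{F_n}$ is the set of propositional variables of $F_n$, and a clause on variables $a_1,a_2,a_3$ with signing pattern $(a,b,c)$ contributes the tuple $(a_1,a_2,a_3)$ to $R_{abc}(\Astruct_{F_n})$. Homomorphisms $\Astruct_{F_n}\to\mathbb{T}$ are exactly the satisfying assignments of $F_n$, so $G_n$ is unsatisfiable; the map $F_n\mapsto G_n$ is polynomial-time, so $(G_n)_{n\geq 1}$ is explicit with $\Theta(n)$ variables and $\Theta(n)$ clauses.

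The remaining work is to move the degree lower bound to the $\EQ$ encoding scheme. The standard multiplicative encoding is not quite a local algebraic encoding scheme in the sense of Section~\ref{sec:encodings} --- in the two-element domain it never constrains the auxiliary variables $X(a,0),\bar X(a,0)$, so the required bijection with homomorphisms fails --- so I would interpose a genuinely local scheme $\E^{\mathrm{AR}}$ for $\CSP(\mathbb{T})$: per domain element $a$ the single linear equation $X(a,0)+X(a,1)-1=0$ (which under \eqref{eqn:axioms} forces $X(a,0)=\bar X(a,1)$, making $X(a,0),\bar X(a,0)$ redundant), and per constraint the monomial equation $\prod_i X(a_i,b_i)=0$ excluding the falsifying tuple $(b_1,b_2,b_3)$. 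Two steps then finish the argument. Step~1: every PC refutation of $\E^{\mathrm{AR}}(G_n)$ over $F$ has degree at least $\delta n$, because applying to such a refutation the trivial variable-for-variable substitution $X(a,0)\mapsto\bar X(a,1)$, $\bar X(a,0)\mapsto X(a,1)$ turns each of its hypotheses into either a clause monomial of $F_n$ or an instance of an axiom from \eqref{eqn:axioms}, and each inference step into a valid inference step of the same degree, hence yields a PC refutation of the standard encoding of $F_n$ of no larger degree, to which Theorem~3.13 of \cite{AlekhnovichRazborov2003} applies. Step~2: $\E^{\mathrm{AR}}$ and $\EQ$ are both local algebraic encoding schemes for $\CSP(\mathbb{T})$ over $F$, and $\mathbb{T}$ is finite over a finite vocabulary, so part~2 of Lemma~\ref{lem:encodings} gives a constant $p$, independent of the instance, such that every equation of $\EQ(G_n,\mathbb{T})$ has an NS --- hence PC --- derivation from $\E^{\mathrm{AR}}(G_n)$ of degree at most $p$; prefixing these to a PC refutation of $\EQ(G_n,\mathbb{T})$ of degree $d$ yields a PC refutation of $\E^{\mathrm{AR}}(G_n)$ of degree $\max\{d,p\}$, so $\max\{d,p\}\geq\delta n$ and thus $d\geq\delta n$ for large $n$. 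Inspecting the proof of Lemma~\ref{lem:encodings}, $p$ depends only on the local bounds of the two schemes and on the size of $\mathbb{T}$, all field-independent, and $\delta$ is field-independent, so the resulting constant in the statement is uniform over all fields, matching the order of quantifiers in Theorem~\ref{thm:allfields}.

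The only genuinely external ingredient is the degree lower bound of \cite{AlekhnovichRazborov2003}, which we may cite; everything else is bookkeeping of the kind Lemma~\ref{lem:encodings} was designed to absorb, and I expect the most delicate point to be making sure the cited Theorem~3.13 really yields a family that is at once explicit, unsatisfiable, and of the claimed size. If the source states its hardness only for hard-to-refute rather than explicitly unsatisfiable instances, or only for $k$-CNF with $k>3$, I would instead start from the standard expander-based $3$-XOR system of odd total charge, convert each $3$-XOR constraint into four $3$-clauses, and run the same Alekhnovich--Razborov degree argument on the incidence expander; unsatisfiability is then immediate and the size bounds are preserved. No use of closure under reductions (Theorem~\ref{thm:closurealgebraic}) or of the size--degree tradeoff of \cite{ImpagliazzoPudlakSgall1999} is needed for this theorem; those enter only afterwards, when it is combined with a pp-interpretation of $\mathbb{T}$ in $\mathbb{K}_k^+$ to produce explicit hard $k$-COLOR instances over all fields.
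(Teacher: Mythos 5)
Your main route is essentially what the paper does: Theorem~\ref{thm:allfields} is quoted from Theorem~3.13 of \cite{AlekhnovichRazborov2003}, and the only point the paper adds is that to make the family explicit one needs the explicit $3$-regular unique-neighbor expanders of \cite{AlonCapalbo2002} --- which is exactly the ``delicate point'' you flag about explicitness. Your extra bookkeeping (identifying $F_n$ with an instance of the $3$-SAT template, and transferring the degree bound to the $\EQ$ scheme via a degree-preserving variable substitution plus part~2 of Lemma~\ref{lem:encodings}) is correct and field-uniform; the paper simply absorbs this silently into the phrase ``with respect to the $\EQ$ encoding scheme,'' so nothing here differs in substance.

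However, your contingency plan is genuinely wrong and would fail if it were ever needed. Starting from an expander-based $3$-XOR system over $\mathbb{Z}_2$ with odd total charge and replacing each parity constraint by its four $3$-clauses cannot yield instances that are hard for PC over \emph{every} field: over a field of characteristic $2$, from the four clause monomials of a constraint $x+y+z=b$ one derives the linear polynomial $x+y+z-b$ in constant degree, and summing these linear polynomials over all constraints (Gaussian elimination is free in PC once the equations are linear) refutes the system in constant degree. This is precisely the obstruction the paper points out in Section~\ref{sec:threecoloring} when it explains why the $3\mathrm{LIN}(\mathbb{Z}_2)$-based strategy used for Frege and SOS ``is not enough'' for all-fields PC hardness, and it is the reason the non-binomial lower bound of \cite{AlekhnovichRazborov2003} (whose hypotheses rule out such parity-like axioms) is invoked in the first place. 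So keep the primary argument, discard the fallback, and cover the explicitness concern by citing \cite{AlonCapalbo2002} as the paper does.
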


\noindent Let us note that in order to get Theorem~\ref{thm:allfields}
from the exact statement of Theorem~3.13 in~\cite{AlekhnovichRazborov2003} one needs explicit families of
3-regular \emph{unique-neighbor} expanders. Such families were
described in~\cite{AlonCapalbo2002}.

With the lower bound of~Theorem~\ref{thm:allfields} in place we can
get the version of the PC lower bound of Corollary~\ref{cor:coloring}
for all fields: the corresponding explicit instances of $k$-COLOR are
obtained by applying the conjunction of Theorem~\ref{thm:allfields}
and Theorem~\ref{thm:closurealgebraic} on the already noted fact that
the template of 3-SAT pp-interprets in $\mathbb{K}^+_k$.  This gives a
new proof of Theorem~1.1 from~\cite{DBLP:conf/coco/LauriaN17}.

Let us point out the main differences and
similarities between the original proof
from~\cite{DBLP:conf/coco/LauriaN17} and our new proof. At a high
level, those proofs are very similar: both are \emph{gadget reductions}
that convert hard CNF formulas into hard instances of $k$-COLOR. 
In our proof, the gadgets are based on the way the template of 3-SAT is constructed from the template of $k$-COLOR by the addition of constants followed by the pp-interpretation (as presented in Section~\ref{sec:ppdefinitions}). Hence, the starting hard formulas can be any family of 3-CNF
formulas that are hard for Polynomial Calculus. 
The proof
from~\cite{DBLP:conf/coco/LauriaN17} is also a gadget reduction, but
in their case the reduction is specifically tailored to a concrete
family of CNF formulas that encode a sparse version of the functional
pigeonhole principle. Besides the construction of the special-purpose
gadgets, the bulk of their proof is to check that the conversion
preserves the hardness for~PC. In our proof both these parts are
handled automatically by our general results.

We close by noting that, for CSPs, this method is completely general.
Take any template~$\mathbb{B}$ that is not known to be solvable in
polynomial time, i.e., any template that is known to be
NP-complete. By the Algebraic Dichotomy Theorem for CSP, any finite
structure~$\mathbb{C}$ pp-interprets in the core of $\Bstruct$ with
added constants (see \cite{10.2307/43556439,Bulatov2017,Zhuk2017}).  In
particular, by taking~$\mathbb{C}$ to be the template of 3-SAT and
applying Theorems~\ref{thm:allfields} and~\ref{thm:closurealgebraic}
we get explicit families of instances of CSP($\mathbb{B}$) that are
hard for Polynomial Calculus over all fields.  The same applies to all proof systems
for which we can prove closure under reductions: explicit lower bounds
for any CSP imply explicit lower bounds for all NP-complete
CSPs. Since explicit lower bounds for $k$LIN($\mathbb{Z}_2$) are known
for Sums-of-Squares~\cite{DBLP:journals/tcs/Grigoriev01,DBLP:conf/focs/Schoenebeck08}, this answers the question in Open Problem~5.3 in
\cite{DBLP:conf/coco/LauriaN17} that asks for explicit hard 3-coloring
instances for~SOS.

\section{Concluding remarks} \label{sec:conclusions}

Theorems~\ref{thm:closureboundeddepthFrege},
\ref{thm:closurealgebraic} and~\ref{thm:closuresemialgebraic} imply
that for the proof systems under consideration the class of constraint
languages admitting efficient refutations can be characterised
algebraically. For most of those proof systems such a characterisation
follows from the fact that efficient proofs of unsatisfiability exist
exactly for languages of bounded width. However, by
Theorem~\ref{thm:maintwoelementfield} the class of constraint
languages admitting efficient refutations in Lov\'asz-Schrijver, and
consequently also the class of constraint languages admitting
efficient Frege refutations, exceeds bounded width. At the same time
both of those classes are shown to admit algebraic
characterisations. Providing such characterisations is a natural open
problem that arises from our work. In particular, with the Algebraic
CSP Dichotomy Conjecture recently
confirmed~\cite{Bulatov2017,Zhuk2017}, it would be interesting to
verify or refute the tempting conjecture that the class of languages
admitting polynomial size Frege (or Extended Frege) refutations
coincides with the class of all polynomial time solvable constraint
languages.

Other proof systems which are shown to be closed under reducibilities
and can surpass bounded width are Polynomial Calculus proof systems
over fields of prime characteristics.  Finding algebraic
characterisations for the classes of constraint languages admitting
efficient unsatisfiability proofs in each of those proof systems is
another question suggested by our work. Importantly, since Polynomial
Calculus over a field of non-zero characteristic $p$ has efficient
refutations for systems of linear equations over $\mathbb{Z}_p$ and
does not have efficient refutations for systems of linear equations
over $\mathbb{Z}_m$ if $p$ does not devide $m$
(c.f.~Theorem~\ref{thm:BussGIP01}), for two fields of distinct prime
characteristics, such characterisations will necessarily be
different. 

Both questions raised so far could lead to the discovery of some
interesting new families of algebras as has happened before in the
development of the algebraic approach to CSPs (c.f., the class of
algebras with few subpowers \cite{Idziaketal}).

A related direction suggested by our work is whether the
proof complexity of approximating MAX CSPs is also preserved by
reductions. On the one hand, it is known that most of the classical CSP constructions
preserve \emph{almost satisfiability}; e.g., if $\Bstruct'$ is
pp-definable without equality in $\Bstruct$, then if $\Astruct$ is an instance of MAX
CSP($\Bstruct')$ that is almost satisfiable, then its standard
transformation into an instance $\Astruct'$ of MAX CSP($\Bstruct$) is
also almost satisfiable. The question we raise is the following: For
which proof systems is it also the case that if there are efficient
proofs of the fact that~$\Astruct'$ is far from satisfiable then there also are
efficient proofs of the fact that~$\Astruct$ is far from satisfiable?  Depending
on how the terms ``almost satisfiable'' and ``far from satisfiable''
are quantified, a positive answer for such questions could lead to an
algebraic approach to the proof complexity of approximating MAX CSPs and
the UGC.

\bigskip \noindent\textbf{Acknowledgments.} We are grateful to Massimo
Lauria for his help in reconstructing the proofs in
Section~\ref{sec:completenessSA}, and for several other insightful
comments during the development of this work. We are also grateful to
Massimo Lauria and Jakob Nordstr\"om for useful discussions on the
applicability of our results to the 3-coloring problem and the
connection to their results in \cite{DBLP:conf/coco/LauriaN17}.  Both
authors were partially funded by European Research Council (ERC) under
the European Union's Horizon 2020 research and innovation programme,
grant agreement ERC-2014-CoG 648276 (AUTAR). First author partially
funded by MINECO through TIN2013-48031-C4-1-P (TASSAT2). Part of this
work was done while the authors were in residence at the Simons
Institute for the Theory of Computing.

\bibliographystyle{plain}
\bibliography{bibfileforthis}

\end{document}